\renewcommand\footnotetextcopyrightpermission[1]{} 
\newcommand\mynobreakpar{\par\nobreak\@afterheading} 
\newenvironment{myitemize}{\mynobreakpar\begin{itemize}}{\end{itemize}}
\newtheorem{theorem}{Theorem}
\numberwithin{theorem}{section}
\newtheorem{lemma}[theorem]{Lemma}
\newtheorem{corollary}[theorem]{Corollary}
\newtheorem{definition}[theorem]{Definition}
\newtheorem{claim}[theorem]{Claim}
\newtheorem{remark}[theorem]{Remark}
\newtheorem{notation}[theorem]{Notation}
\newtheorem{observation}[theorem]{Observation}
\newtheorem{proposition}[theorem]{Proposition}
\newtheorem*{theorem*}{Theorem}
\newtheorem*{lemma*}{Lemma}
\newcommand{\View}{\mathrm{View}}	
\newcommand{\Lap}{\mathrm{Lap}}
\newcommand{\R}{\mathbb{R}}
\newcommand{\N}{\mathbb{N}}
\newcommand{\init}{\text{init}}
\newcommand{\halt}{\text{halt}}
\newcommand{\varnull}{\text{null}}
\newcommand{\partext}{\text{par}}
\newcommand{\params}{\texttt{params}}
\newcommand{\E}{\mathbb{E}}
\newcommand{\eps}{\epsilon}
\newcommand{\circstar}{\!\circ^*\!}
\newcommand{\ttL}{{\texttt{L}}}
\newcommand{\ttR}{{\texttt{R}}}
\newcommand{\filt}{\mathrm{Filt}}
\newcommand{\comp}{{\textrm{Comp}}}
\newcommand{\concomp}{{\textrm{ConComp}}}
\newcommand{\extconcomp}{{\textrm{ExtConComp}}}
\newcommand{\iecc}{{\textrm{IECC}}}
\newcommand{\supp}{\mathrm{supp}}	
\newcommand{\rml}{\mathrm{L}}
\newcommand{\V}{\mathcal{V}}
\newcommand{\I}{\mathcal{I}}
\newcommand{\RR}{\mathrm{RR}}
\newcommand{\irr}{\mathrm{IRR}}
\newcommand{\tv}{\mathrm{TV}}
\newcommand{\svt}{\textrm{SVT}}
\newcommand{\zo}{\{0,1\}}
\newcommand{\cM}{\mathcal{M}}
\newcommand{\cN}{\mathcal{N}}
\newcommand{\cP}{\mathcal{P}}
\newcommand{\cT}{\mathcal{T}}
\newcommand{\cA}{\mathcal{A}}
\newcommand{\cX}{\mathcal{X}}
\newcommand{\cY}{\mathcal{Y}}
\newcommand{\calF}{\mathcal{F}}
\newcommand{\calY}{\mathcal{Y}}
\newcommand{\calX}{\mathcal{X}}
\newcommand{\calS}{\mathcal{S}}
\newcommand{\rmlower}{\mathrm{Lower}}
\newcommand{\prop}{\mathrm{prop}}
\newcommand{\myomit}[1]{{}}
\newcommand{\sinit}{\texttt{SVT-init}}
\newcommand{\countinit}{\texttt{d-counter-init}}
\newcommand{\countupdate}{\texttt{d-counter-update}}
\newcommand{\histinit}{\texttt{Hist-init}}
\newcommand{\histupdate}{\texttt{Hist-update}}
\newcommand{\call}{\texttt{SVT-update}}
\newcommand{\thr}{\texttt{Thresh}}
\newcommand{\tth}{\texttt{d-counter}}
\newcommand{\tthss}{\texttt{HSS}}
\newcommand{\negspace}{\!\!\!\!}
\newcommand{\nospaceeq}{\!=\!}
\title{Concurrent Composition for Differentially Private Continual Mechanisms}
\begin{document}
\title{Concurrent Composition for Differentially Private Continual Mechanisms}

\author{Monika Henzinger}
\affiliation{%
  \institution{Institute of Science and Technology}
  \city{Klosterneuburg}
  \country{Austria}
}

\author{Roodabeh Safavi}
\affiliation{%
  \institution{Institute of Science and Technology}
  \city{Klosterneuburg}
  \country{Austria}
}

\author{Salil Vadhan}
\affiliation{%
  \institution{Harvard University}
  \city{Cambridge}
  \country{United States}
}

\begin{abstract}
    Many intended uses of differential privacy involve a \textit{continual mechanism} that is set up to run continuously over a long period of time, making more statistical releases as either queries come in or the dataset is updated.  In this paper, we give the first general treatment of privacy against {\em adaptive} adversaries for mechanisms that support dataset updates and a variety of queries, all arbitrarily interleaved. It also models a very general notion of neighboring, that includes both event-level and user-level privacy. We prove several \textit{concurrent} composition theorems for continual mechanisms, which ensure privacy even when an adversary can interleave its queries and dataset updates to the different composed mechanisms.  Previous concurrent composition theorems for differential privacy were only for the case when the dataset is static, with no adaptive updates. We also give the first interactive and continual generalizations of the ``parallel composition theorem'' for noninteractive differential privacy.  Specifically, we show that the analogue of the noninteractive parallel composition theorem holds if either there are no adaptive dataset updates or each of the composed mechanisms satisfies pure differential privacy, but it fails to hold for composing approximately differentially private mechanisms with dataset updates. Thus, we prove a tight new composition theorem for this case. In addition, we prove concurrent filter compositions theorems for the scenarios in which the privacy parameters are adaptively chosen. We extend these results to other measures of differential privacy, including R\'enyi DP and $f$-DP.
    
    We then formalize a set of general conditions on a continual mechanism $\mathcal{M}$ that runs multiple continual sub-mechanisms such that the privacy guarantees of $\mathcal{M}$ follow directly using the above concurrent composition theorems on the sub-mechanisms, without further privacy loss. This enables us to give a simpler and modular privacy analysis of a recent continual histogram mechanism of Henzinger, Sricharan, and Steiner. In the case of approximate DP, ours is the first proof that shows that its privacy holds against adaptive adversaries. We also provide a framework that simplifies the analysis of local differential privacy when the protocol includes multi-round server-user interactions. Using this result, we simplify the privacy analysis of the core decomposition protocol of Dhulipala, Henzinger, Li, Liu, Sricharan, and Zhu~\cite{dhulipala2025near}.
\end{abstract}

\maketitle

\section{Introduction}
Differential privacy~\cite{dwork2006calibrating} is the now-standard framework for protecting privacy when performing statistical analysis or machine learning on sensitive datasets about individuals.  In addition to having a rich and rapidly growing scientific literature, differential privacy has also seen large-scale adoption by technology companies and government agencies.

The familiar formulation of differential privacy is for a noninteractive mechanism $\cM : \cX\rightarrow \cY$, which takes a dataset $x\in \cX$ and produces a statistical release (e.g. a collection of statistics or the parameters of a machine learning model) $y\in \cY$.
Differential privacy requires that for every two datasets $x$ and $x'$ that ``differ on one individual's data,'' the output distributions $\cM(x)$ and $\cM(x')$ are ``close'' to each other.  Formally, we have a {\em neighboring relation} $\sim$ on $\cX$, which specifies which datasets ``differ on one individual's data,'' or more generally what information requires privacy protection.  We say that $\cM$ is {\em $(\eps,\delta)$-DP with respect to $\sim$} if for all $x\sim x'$, we have:
\begin{equation} \label{ineq:indisting}
\forall S\subseteq \cY\qquad \Pr[\cM(x)\in S]\leq e^\eps\cdot \Pr[\cM(x')\in S]+\delta.
\end{equation}
Typically we think of $\eps$ as a small constant, e.g. $\eps=.1$, while $\delta$ is cryptographically negligible.  The case that $\delta=0$ is referred to as {\em pure} differential privacy, and otherwise Inequality~\ref{ineq:indisting} is known as {\em approximate} differential privacy.
This simple definition is very useful and easy to work with, but does not capture many scenarios in which we want to apply differential privacy.

\medskip\noindent\textbf{Differential Privacy Over Time.}
Many intended uses of differential privacy involve a system that is set up to run continuously over a long period of time, making more statistical releases as either queries come in or the dataset is updated, for example with people leaving and entering.  (The latter case, with dataset updates, is known as differential privacy under {\em continual observation}, and was introduced in the concurrent works of Dwork, Naor, Pitassi, and Rothblum~\cite{Dwork2010} and Chan, Shi, and Song~\cite{DBLP:journals/tissec/ChanSS11} -- see~\cite{DBLP:conf/esa/HenzingerS25} for a survey.)
These settings introduce a new attack surface, where an adversary may be {\em adaptive}, issuing queries or influencing dataset updates based on the responses it receives.  

When $\delta>0$, there are mechanisms that are differentially private against oblivious adversaries but not against adaptive ones.\footnote{A simple artificial example is the following: the mechanism's first release can include a uniformly random number $r$ from $\{1,2,\ldots,\lceil 1/\delta\rceil\}$, and then if the adversary's next query is $r$, the mechanism publishes the entire dataset in the clear.}  One natural example is the ``advanced composition theorem'' for differential privacy~\cite{dwork2010boosting}, which is known to fail if the adversary can select the privacy-loss parameters $(\eps_i,\delta_i)$ of the composed mechanisms adaptively~\cite{rogers2016privacy}.  

Until recently, adaptivity has been treated in an ad hoc manner in the differential privacy literature.  For example, the aforementioned advanced composition theorem of \cite{dwork2010boosting} was formalized in a way that allows an adversary to adaptively choose the mechanisms and pairs of adjacent datasets, just not the privacy-loss parameters, and new composition theorems that allow for adaptive choices of privacy-loss parameters (known as privacy {\em filters} and privacy {\em odometers}) were given by \cite{rogers2016privacy,whitehouse2023fully}. 
Other early examples where adaptive queries were analyzed include the Sparse Vector Technique (SVT)~\cite{dwork2009complexity,HardtRo10,lyu2016understanding} and the Private Multiplicative Weights (PMW) algorithm~\cite{HardtRo10}.  

A few years ago, Vadhan and Wang~\cite{vadhan2021concurrent} initiated a systematic study of differentially private mechanisms $\cM$ that support adaptive queries.  Specifically, they formalized the notion of an {\em interactive mechanism} $\cM$, which begins with a dataset as input and maintains a potentially secret state as it answers queries that come in over time.  The use of a secret state, as utilized in both SVT and PMW, allows for correlating the answers to queries over time, and often allows for the privacy-loss budget to degrade much more slowly than composing noninteractive mechanisms would allow. 
They defined $\cM$ to be an {\em $(\eps,\delta)$-DP interactive mechanism} if for every pair of adjacent datasets $x\sim x'$ and every adaptive adversary $\cA$, the {\em view} of $\cA$ when interacting with $\cM(x)$ is $(\eps,\delta)$-indistinguishable (in the sense of Inequality~(\ref{ineq:indisting})) from its view when interacting with $\cM(x')$, where the {\em view} of $\cA$ consists of the random coin tosses of $\cA$ and the transcript of the interaction.

Vadhan and Wang~\cite{vadhan2021concurrent} asked whether our existing composition theorems for differential privacy, where the mechanisms being composed are noninteractive mechanisms, extend to the {\em concurrent composition} of interactive mechanisms. In concurrent composition of interactive mechanisms, an adaptive adversary can interleave its queries to the different interactive mechanisms being composed.  This is a realistic attack scenario when multiple interactive differentially private mechanisms are deployed in practice, and is also a useful primitive for analyzing differentially private algorithms that use several interactive mechanisms as building blocks.
Vadhan and Wang showed that the basic composition theorem for pure differential privacy~\cite{dwork2006calibrating} extends to concurrent composition. This result was later generalized by Lyu~\cite{lyu2022composition} to all composition theorems for approximate differential privacy. Subsequent works gave concurrent composition theorems for other flavors of differential privacy, such as R\'enyi DP~\cite{lyu2022composition} and $f$-DP~\cite{vadhan2022concurrent}.

Note that the extension of composition theorems for noninteractive mechanisms to concurrent composition theorems for interactive mechanisms is non-trivial: Let us call an interactive mechanism $M$ $M$-adaptive private if it is differentially private against an adversary that was only interactive with $M$. Consider an adversary that first interacts adaptively with $M_1$, then $M_2$ and then with $M_1$ again. Note that for the second sequence of interactions with $M_1$ the adversary takes into account its prior interaction with $M_1$ as well as with $M_2$. Thus, the adversary has more information about the data than if it had only interacted with $M_1$ before and the fact that $M_1$ is $M_1$-adaptive private does not mean that it is private against this interleaved composition of $M_1$ and $M_2$.

Haney, Shoemate, Tian, Vadhan, Vyrros, Wang, and Xu~\cite{haney2023concurrent} study the concurrent extension of composition theorems for noninteractive mechanisms where the adversary instantiates arbitrary mechanisms with adaptive privacy parameters. This setting is referred to as \emph{filter composition}, where a \emph{filter }is a function $\filt$ that maps the privacy parameters $(\eps_i, \delta_i)$ of the created mechanisms to an element of $\{\top, \bot\}$. Suppose an adversary adaptively creates noninteractive mechanisms such that $\filt$ applied to the privacy parameters of the created mechanisms equals $\top$ at every point in time. The $\filt$-filter composition of noninteractive mechanisms is said to satisfy $(\eps, \delta)$-DP if the view of every such adversary is $(\eps, \delta)$-indistinguishable when all created mechanisms are run on dataset $x$ or its neighboring dataset $x'$. Haney et al.~extend this definition to the concurrent $\filt$-filter composition of interactive mechanisms, where the adversary may not only create new mechanisms but also interact with existing interactive ones. They prove that if the $\filt$-filter composition of noninteractive mechanisms is $(\eps, \delta)$-DP, then the concurrent $\filt$-filter composition of interactive mechanisms is also $(\eps, \delta)$-DP.

All of the above works are for the case where the {\em queries} to the mechanisms are adaptive, but the datasets are static, given as input to the mechanisms at the start of the interaction.\footnote{One can extend the concurrent composition theorems to the case where, similarly to the advanced composition theorem~\cite{dwork2010boosting}, an adversary can adaptively select a pair of adjacent datasets $(x_i,x_i')$ for the $i$'th mechanism $\cM_i$, but once the interactive mechanism $\cM_i$ is started, its dataset is fixed and no further updates are allowed. Thus, whenever $(x_{i+1},x_{i+1}')$ are given, a new interactive mechanism $\cM_{i+1}$ has to be started that does not know  the secret state of $\cM_i$.}
However, in some real-world applications, both queries and dataset updates may be issued adaptively, i.e., the dynamic updates to the dataset might be chosen adaptively. 
For instance, Ghazi et al.~\cite{ghazi2025privacy} study an advertising problem in which the server receives both private data and queries over time. They design algorithms that accept dataset updates while applying noninteractive mechanisms—such as the discrete Laplace mechanism—to answer queries and analyze the privacy guarantees for their algorithm using existing results on privacy filters for composing noninteractive mechanisms with adaptive privacy-loss parameters. This independent work\footnote{The first version of this work was available on arXiv prior to \cite{ghazi2025privacy}.} shows that if a mechanism receiving both queries and dataset updates produces a differentially private output at each time step $i$ (with adaptive parameters $\eps_i,\delta_i$ at step $i$) such that $\sum_i \eps_i \leq \eps$ and $\sum_i \delta_i \leq \delta$, then the overall mechanism is $(\eps,\delta)$-DP.  
This guarantee can be seen as a special case of our filter composition theorem, where the composition rule is simple summation and each composed mechanism is restricted to a noninteractive mechanism corresponding to a single step. 
While some continual mechanisms do indeed employ independent noninteractive mechanisms at each step, many important examples—such as the Sparse Vector Technique (SVT)—generate outputs that are correlated across time. For these mechanisms, privacy must be analyzed over the entire output sequence rather than on a step-by-step basis to get the desired guarantees. (If the $\eps$-DP SVT mechanism were analyzed on a step-by-step basis, then the privacy guarantee would degrade to $\Theta(q\cdot \eps)$-DP, where $q$ is the number of queries issued, defeating the entire point of SVT, which is to have a privacy guarantee and noise scale that is independent of the number of queries.)  
As a consequence, in the concurrent composition of continual mechanisms (or even interactive mechanisms), one cannot simply reduce the analysis to the composition of per-step noninteractive mechanisms. Indeed, this is why concurrent composition theorems for interactive mechanisms took several years to establish \cite{vadhan2021concurrent,lyu2022composition,vadhan2022concurrent,haney2023concurrent}. Thus, the results of Ghazi et al.~\cite{ghazi2025privacy} do not apply to the more general setting of (filter) concurrent composition of continual mechanisms.\footnote{It is Ghazi et al.'s definition of Individual DP (IDP, their Definition 4.1) and their composition theorem that is framed in terms of IDP that restricts to step-by-step privacy analyses corresponding to composition of noninteractive mechanisms. Their definition of $(\eps,\delta)$-DP interactive mechanisms (their Definition 3.6) is somewhat more general than their definition of IDP, but it still does not allow for exploiting correlated randomness across queries to reduce privacy loss, as in SVT. (This is because their Definition 3.4 requires the next state of the mechanism to be a function of only the dataset and the query, and not of the randomness used to generate the answer.)}

Jain, Raskhodnikova, Sivakumar, and Smith~\cite{pricedpco} recently gave a formalization of adaptive adversaries under continual observation (where there are dataset updates but the queries are fixed)  
for the special case of {\em event-level} privacy, where two streams of updates are considered adjacent if they differ on only one update. Denisov, McMahan, Rush, Smith and Thakurta~\cite{denisov2022improved} extended this definition to arbitrary neighbor relations on update sequences. The lack of a general toolkit for reasoning about adaptivity in dataset updates has led to complicated and ad hoc proofs of privacy of continual mechanisms (see e.g.~\cite{henzingerSS2023}).  Rectifying this situation is what motivated our work.

\medskip\noindent\textbf{Comparison with algorithms and data structures.}  We note that the different flavors of  mechanisms discussed above are differential privacy analogues of familiar variants of randomized algorithms and data structures.  A noninteractive mechanism is simply a {\em static} (or {\em batch}) {\em algorithm}.  An interactive mechanism corresponds to a {\em static data structure}.  Our focus in this paper is on continual mechanisms, which allow both queries and updates, and these can be viewed as {\em dynamic data structures}.  Adaptivity is a common concern for the correctness properties of randomized data structures; here we are concerned instead with its effect on privacy properties.  

\medskip\noindent\textbf{Our contributions}
\textbf{(A)} We give a formalism that captures privacy against adaptive adversaries for {\em continual mechanisms}, which support both queries and dataset updates (Definition~\ref{def:dp-cm}). This generalizes the previously studied cases of interactive mechanisms (which have static datasets) and continual observation (where mechanisms answer fixed queries). See Figure~\ref{fig:im-co-cm} in Appendix~\ref{app:figures} for an illustration.
We note that specific mechanisms, like variants of SVT, have been proposed to handle both dataset updates and queries~\cite{JainSmWa24}, but without offering a definition of privacy against adversaries that can adaptively select both the updates and queries.

\noindent\textbf{(B)} We prove concurrent composition theorems for continual mechanisms, where an adaptive adversary may interleave queries and updates across mechanisms based on the entire history of observed outputs. As discussed above for the concurrent composition of interactive mechanisms, the concurrent setting involves additional adaptivity that is not supported by composition theorems for noninteractive mechanisms.\footnote{Existing results on the concurrent composition of {\em interactive} mechanisms do not apply to the concurrent composition of {\em continual} mechanisms, since in our model the adversary adaptively selects both (a) the mechanisms receiving neighboring inputs {(including their privacy parameters and neighboring definition)} and (b) the neighboring update sequences they receive over time, which is not the case for interactive mechanisms.} We show tight privacy bounds for the following useful forms of concurrent composition of continual mechanisms. Further explanations are provided in Appendix~\ref{app:variants-concomp}.

\begin{enumerate}
    \item We show that when the privacy-loss parameters are fixed, the standard composition theorems for noninteractive approximate differential privacy extend to the \emph{concurrent composition of continual mechanisms} (Theorem~\ref{thm:comp-fixed-param-cm}). This result also holds for R\'enyi differential privacy (Corollary~\ref{cor:comp-fixed-param-rdp}) and for $f$-DP\footnote{The $f$-DP results in this paper assume \emph{finite communication}: for every created continual mechanism, there exists a constant $C$ bounding the sizes of its query and answer spaces and the number of queries it can answer before halting. This assumption was also made for interactive mechanisms in~\cite{vadhan2022concurrent}.} (Corollary~\ref{cor:comp-fixed-param-f-dp}).
\end{enumerate}
\begin{enumerate}\setcounter{enumi}{1}
    \item We extend the classical ``parallel composition theorem'' for noninteractive mechanisms to the concurrent parallel composition of an \emph{unbounded}\footnote{Throughout this paper, we use the term “unbounded number” to refer to a quantity of unknown (potentially arbitrary) size.} number of adaptively chosen \emph{interactive mechanisms} with an adaptively chosen pair of neighboring datasets (Corollary~\ref{cor:parallel-fixed-param-comp-im}). We show similar extensions for R\'enyi differential privacy (Corollary~\ref{cor:parallel-fixed-param-comp-im-rdp}) and for $f$-DP (Corollary~\ref{cor:parallel-comp-f-dp-im}).
    \item We extend the above concurrent parallel composition of interactive mechanisms to continual mechanisms when the composed mechanisms satisfy pure differential privacy (Corollary~\ref{cor:parallel-comp-pure}). For approximate DP, we construct a counterexample demonstrating that \emph{the concurrent parallel composition of an unbounded number of continual mechanisms} with adaptively chosen dataset updates is not private (Theorem~\ref{thm:counter-example}). Since $(\eps, \delta)$-DP is a special case of $f$-DP, this lower bound also applies to $f$-DP. We further construct an analogous counterexample for the case where the composed mechanisms satisfy R\'enyi DP instead of approximate DP (Theorem~\ref{thm:counter-example-rdp}).
    \item By imposing an upper bound on $\prod_i (1 - \delta_i)$, where $\delta_i$ is the approximate privacy parameter of the $i$-th created mechanism, we derive a new \emph{tight} composition theorem for concurrent parallel composition of continual mechanisms (Theorem~\ref{thm:parallel-comp-approx}).
    \item As an immediate consequence, our new theorem extends the classical parallel composition theorem for noninteractive mechanisms to the concurrent composition of an \emph{unbounded number of adaptively chosen continual mechanisms} satisfying \emph{pure} differential privacy (Corollary~\ref{cor:parallel-comp-pure}).
    \item We show that, for every filter $\filt$, the concurrent $\filt$-filter composition of \emph{continual mechanisms} enjoys the same privacy guarantees as the $\filt$-filter composition of noninteractive mechanisms (Theorem~\ref{thm:filter-comp-cm}). This result is also proved for R\'enyi DP (Corollary~\ref{cor:filter-comp-rdp}) and $f$-DP (Corollary~\ref{cor:filter-comp-f-dp}).
\end{enumerate}
A summary of the positive results for the aforementioned types of concurrent composition of $(\eps,\delta)$-DP interactive and continual mechanisms is provided in Table~\ref{tab:summary}.

\begin{table}[ht]
    \centering
    \small
    \setlength{\tabcolsep}{6pt}
    \renewcommand{\arraystretch}{1.4}
    \begin{tabular}{|c|c|c|c|}
    \hline
     & \shortstack{\rule{0pt}{2ex}Concurrent Composition\\with Fixed Parameters}
     & \shortstack{Concurrent\\Parallel Composition}
     & \shortstack{Concurrent\\Filter Composition} \\
    \hline
    CMs
    &
    \begin{tabular}[c]{@{}c@{}}
    Theorem~\ref{thm:comp-fixed-param-cm} for $(\eps,\delta)$-DP
    \end{tabular}
    &
    \begin{tabular}[c]{@{}c@{}}
    Corollary~\ref{cor:parallel-comp-pure} for $\eps$-DP\\
    Theorem~\ref{thm:parallel-comp-approx} for $(\eps,\delta)$-DP
    \end{tabular}
    &
    \begin{tabular}[c]{@{}c@{}}
    Theorem~\ref{thm:filter-comp-cm} for $(\eps,\delta)$-DP
    \end{tabular}
    \\
    \hline
    IMs
    &
    \begin{tabular}[c]{@{}c@{}}
    \cite{vadhan2021concurrent} for $\eps$-DP\\
    \cite{vadhan2022concurrent,lyu2022composition} for $(\eps,\delta)$-DP
    \end{tabular}
    &
    \begin{tabular}[c]{@{}c@{}}
    Corollary~\ref{cor:parallel-fixed-param-comp-im} for $(\eps,\delta)$-DP
    \end{tabular}
    &
    \begin{tabular}[c]{@{}c@{}}
    \cite{haney2023concurrent} for $(\eps,\delta)$-DP
    \end{tabular}
    \\
    \hline
    \end{tabular}
    \caption{Concurrent composition results for $(\eps,\delta)$-DP interactive mechanisms (IMs) and continual mechanisms (CMs).}\label{tab:summary}
    \Description{Concurrent composition results for $(\eps,\delta)$-DP interactive mechanisms (IMs) and continual mechanisms (CMs).}
\end{table}
\noindent\textbf{(C)} We formalize a set of general conditions on a continual mechanism $\cM$ that runs multiple continual sub-mechanisms such that the privacy guarantees of $\cM$ follow directly by using the above concurrent composition theorems on the sub-mechanisms, \emph{without further privacy loss} (Theorem~\ref{thm:dp-complex-mech}). These conditions are also reformulated as a set of properties for pseudocode describing continual mechanisms (Corollary~\ref{cor:pseudocode}).

\noindent\textbf{(D)} We use our concurrent composition theorems for continual mechanisms, with the aforementioned general conditions, to give a simpler and more modular privacy analysis of a recent continual histogram mechanism~\cite{henzingerSS2023}.  In the case of approximate DP, ours is the first proof that the privacy of the mechanism holds against adaptive adversaries.

\noindent\textbf{(E)} We model \emph{local protocols} in the \emph{local differential privacy model} as continual mechanisms with a modular structure, and demonstrate how our concurrent composition theorems for continual mechanisms—particularly the concurrent parallel composition theorem for interactive mechanisms—can simplify their privacy analysis (Theorem~\ref{thm:ldp-to-concomp} and Corollary~\ref{cor:ldp-parallel-concomp}). As an example, we give a simple privacy analysis for the core decomposition local protocol proposed by Dhulipala, Henzinger, Li, Liu, Sricharan, and Zhu~\cite{dhulipala2025near}.

We elaborate on all of these contributions below.

\medskip 
\paragraph{\bf A general formalism for privacy against adaptive adversaries.}
Our basic object of study is a {\em continual mechanism} $\cM$, which is an interactive mechanism that is not given any input at the start, but simply receives and sends messages in a potentially randomized and stateful manner.  These messages can represent, for instance, queries or dataset updates.  To model static datasets (as in standard interactive differential privacy), the entire dataset can be given to $\cM$ as its first message.

Towards defining privacy, we consider adaptive adversaries $\cA$ that  send messages $m_i$ for $i=1, 2, 3, \dots$ that are always parsed as pairs $m_i=(m_i[0],m_i[1])$.  Privacy is defined with respect to a {\em verification function} $f$ that takes a sequence of such pairs $(m_1,m_2,\ldots,m_t)$ and outputs either $\top$ or $\bot$.  When $f(m_1,m_2,\ldots,m_t)=\top$, we consider the two sequences $m[0]=(m_1[0],m_2[0],\dots,m_t[0])$ and $m[1]=(m_1[1],m_2[1],\dots,m_t[1])$ to be {\em neighboring}.

Now, by the choice of the verification function $f$, we can easily capture previously studied notions of privacy and more.   For example, for a standard interactive mechanism, $f(m_1,m_2,\ldots,m_t)$ will check that $m_1[0]\sim m_1[1]$ are a pair of neighboring datasets, and that $m_i[0]=m_i[1]$ for all $i>1$ (these represent the adaptive queries).  For continual observation under event-level privacy, we interpret all of the $m_i[b]$'s as dataset updates, and $f$ will require that $m[0]$ and $m[1]$ differ in at most one coordinate $i\in [k]$. 
For continual observation under ``user-level'' privacy, $f$ will require that the entries in which $m[0]$ and $m[1]$ differ correspond to only one user.
For continual mechanisms,
we can interleave adaptive queries among the dataset updates by having $f$ require that all $m_i[0]=m_i[1]$ whenever either one is a query (rather than a dataset update). 

To go from a verification function $f$ to a definition of privacy against adaptive adversaries, we
utilize the notion of an {\em interactive post-processing mechanism} (IPM)~\cite{haney2023concurrent}, which is a stateful procedure that can stand between an interactive mechanism and an adversary, or two interactive or continual mechanisms, or even two IPMs, and interact with both of them adaptively, sending and receiving messages to the mechanisms on its left and right.  Specifically, as illustrated in Figure~\ref{fig:v-i-m}, we place two IPMs between the adversary $\cA$ and the continual mechanism $\cM$.  The first is a {\em verifier} $\V[f]$ that, every time it receives a new message pair $m_k$ from $\cA$, applies $f$ to the entire history of message pairs $(m_1,\ldots,m_k)$ sent by $\cA$,  and forwards $m_k$ onward only if $f$ returns $\top$.  Otherwise $\V[f]$ halts the interaction.  The second IPM is the {\em identifier} $\I$, which starts out with a secret bit $b\in \zo$, and every time it receives a message pair $(m_i[0],m_i[1])$ from $\V[f]$, it forwards $m_i[b]$ on to the mechanism $\cM$. The identifier $\I$ with bit $b$ is shown by $\I(b)$.  When $\cM$ returns an answer $a_i$, $\I$ and $\V[f]$ just forward it back to $\cA$.  We denote the interactive mechanism representing the combination of $\cM$, $I$, and $\V[f]$ by $\V[f]\circstar \I\circstar \cM$.

\begin{figure}[ht]
    \centering
    \scalebox{0.84}{
    \begin{tikzpicture}[
    node distance=1.5cm and 2.5cm,
    rect/.style={rectangle, draw, minimum width=2cm, minimum height=3.5cm, align=center, font=\large},
    arrow/.style={-{Stealth}, thick}
]

\node[rect] (adv1) {\large $\mathcal{A}$};
\node[rect, right=of adv1] (vf) {\large$\mathcal{V}[f]$};
\node[rect, right=of vf] (ib) {\large$\mathcal{I}(b)$};
\node[rect, right=of ib] (m) {\large$\mathcal{M}$};

\draw[arrow] ([yshift=1.25cm]adv1.east) -- node[above,yshift=-3.5pt] { $m_1\!=\!(m_1^0, m_1^1)$} ([yshift=1.25cm]vf.west);
\draw[arrow]  ([yshift=0.75cm]vf.west) -- node[above,yshift=-3.5pt] { $m_1'$} ([yshift=0.75cm]adv1.east);
\draw[arrow] ([yshift=0.25cm]adv1.east) -- node[above,yshift=-3.5pt] { $m_2\!=\!(m_2^0, m_2^1)$} ([yshift=0.25cm]vf.west);
\draw[arrow] ([yshift=-0.25cm]vf.west) -- node[above,yshift=-3.5pt] {$m_2'$} ([yshift=-0.25cm]adv1.east);
\draw[arrow] ([yshift=-0.75cm]adv1.east) -- node[above,yshift=-3.5pt] {$m_3$ (invalid)} ([yshift=-0.75cm]vf.west);
\draw[arrow] ([yshift=-1.25cm]vf.west) -- node[above,yshift=-3.5pt] {$\halt$} ([yshift=-1.25cm]adv1.east);

\draw[arrow] ([yshift=1.25cm]vf.east) -- node[above,yshift=-3.5pt] { $m_1\!=\!(m_1^0, m_1^1)$} ([yshift=1.25cm]ib.west);
\draw[arrow]  ([yshift=0.75cm]ib.west) -- node[above,yshift=-3.5pt] { $m_1'$} ([yshift=0.75cm]vf.east);
\draw[arrow] ([yshift=0.25cm]vf.east) -- node[above,yshift=-3.5pt] { $m_2\!=\!(m_2^0, m_2^1)$} ([yshift=0.25cm]ib.west);
\draw[arrow] ([yshift=-0.25cm]ib.west) -- node[above,yshift=-3.5pt] {$m_2'$} ([yshift=-0.25cm]vf.east);

\draw[arrow] ([yshift=1.25cm]ib.east) -- node[above,yshift=-3.5pt] { $m_1^b$} ([yshift=1.25cm]m.west);
\draw[arrow]  ([yshift=0.75cm]m.west) -- node[above,yshift=-3.5pt] { $m_1'$} ([yshift=0.75cm]ib.east);
\draw[arrow] ([yshift=0.25cm]ib.east) -- node[above,yshift=-3.5pt] { $m_2^b$} ([yshift=0.25cm]m.west);
\draw[arrow] ([yshift=-0.25cm]m.west) -- node[above,yshift=-3.5pt] {$m_2'$} ([yshift=-0.25cm]ib.east);

\end{tikzpicture}
    }
    \caption{Illustration of the interactions between $\cA$, $\V[f]$, $\I(b)$, and $\cM$}
    \label{fig:v-i-m}
    \Description{Illustration of the interactions between $\cA$, $\V[f]$, $\I(b)$, and $\cM$}
\end{figure}
This leads to the following definition of differential privacy for continual mechanisms:
\begin{definition}[DP for continual mechanisms]
Let $f$ be a verification function. A continual mechanism $\cM$ is
{\em $(\eps,\delta)$-DP with respect to $f$} if for every adaptive adversary $\cA$, the view of $\cA$ when interacting with $\V[f]\circstar \I(0)\circstar \cM$ is $(\eps,\delta)$-indistinguishable from its view when interacting with $\V[f]\circstar \I(1)\circstar \cM$.
\end{definition}

The idea of defining security against adaptive adversaries by having the adversary send pairs of messages and trying to distinguish whether the first message in each pair or the second message in each pair is being used goes back to the notion of {\em left-or-right indistinguishability} in cryptography~\cite{BellareDJR97}.  What is different here is its combination with a  verification function $f$ to enforce the neighboring condition, which is necessary to capture differential privacy. 

As far as we know, this definition captures all of the previous definitions of privacy against adaptive adversaries as special cases (by appropriate choices of $f$), as well as modeling adaptive adversaries in even more scenarios (such as continual observation with user-level privacy, and mixtures of adaptively-chosen dataset updates and queries).

Furthermore, it allows us to reduce the analysis of general continual mechanisms to that of the previously studied notion of interactive mechanisms.  Indeed, by inspection, we observe that a \emph{continual} mechanism $\cM$ is {$(\eps,\delta)$-DP with respect to $f$} if and only if the \emph{interactive} mechanism $\cM'(\cdot) = \V[f]\circstar \I(\cdot)\circstar \cM$ is $(\eps,\delta)$-DP on the dataset space $\zo$ (where $0\sim 1$).

\medskip

\paragraph{\bf Concurrent Composition Theorems for Continual Mechanisms.}  With the above definitions in hand,  we can formulate and prove new concurrent composition theorems for continual mechanisms. 
Recall that with concurrent composition of interactive mechanisms, an adaptive adversary (or honest analyst) can interleave its queries to the different interactive mechanisms being composed; for concurrent composition of {\em continual} mechanisms, we also allow the dataset {\em updates} to be interleaved with the queries and each other. See Figure~\ref{fig:concurrent-k-ms} for illustration.
\begin{figure}
    \centering
    \begin{tikzpicture}[
    node distance=1.5cm and 2cm,
    rect/.style={rectangle, draw, minimum width=1.2cm, minimum height=2.5cm, align=center},
    subrect/.style={rectangle, draw, minimum width=1cm, minimum height=0.8cm, align=center},
    arrow/.style={-{Stealth}, thick}
]

\node[rect] (adv) {\large $\mathcal{A}$};
\node[draw, rectangle, right=2.6cm of adv, minimum width=2.8cm, minimum height=4.1cm] (bigm) {};

\node[subrect] (m1) at ([xshift=0.8cm,yshift=1.5cm]bigm.center) {$\mathcal{M}_1$};
\node (dots1) at ([xshift=0.8cm,yshift=1.0cm]bigm.center) {$\vdots$};
\node[subrect] (mj) at ([xshift=0.8cm,yshift=0.3cm]bigm.center) {$\mathcal{M}_j$};
\node (dots2) at ([xshift=0.8cm,yshift=-0.5cm]bigm.center) {$\vdots$};
\node[subrect] (mk) at ([xshift=0.8cm,yshift=-1.5cm]bigm.center) {$\mathcal{M}_k$};

\draw[arrow] ([yshift=1.0cm]adv.east) -- node[above,yshift=-3.5pt] { $m_1\!=\!(m_1^*, j)$} ([yshift=1.0cm]bigm.west);
\draw[arrow]  ([yshift=0.5cm]bigm.west) -- node[above,yshift=-3.5pt] { $m_1'$} ([yshift=0.5cm]adv.east);
\draw[arrow] ([yshift=0cm]adv.east) -- node[above,yshift=-3.5pt] { $m_2\!=\!(m_2^*, k)$} ([yshift=0cm]bigm.west);
\draw[arrow] ([yshift=-0.5cm]bigm.west) -- node[above,yshift=-3.5pt] {$m_2'$} ([yshift=-0.5cm]adv.east);
\node (dots1) at ([xshift=1.3cm,yshift=-0.8cm]adv.east) {$\vdots$};

\draw[arrow] ([yshift=1.0cm]bigm.west) -- ([yshift=0.25cm]mj.west) node[midway,sloped,above,yshift=-3.5pt] {$m_1^*$};
\draw[arrow] ([yshift=-0.25cm]mj.west) -- ([yshift=0.5cm]bigm.west) node[midway,sloped,above,yshift=-3.5pt] {$m_1'$};

\draw[arrow] ([yshift=0cm]bigm.west) -- ([yshift=0.25cm]mk.west) node[midway,sloped,above,yshift=-3.5pt] {$m_2^*$};
\draw[arrow] ([yshift=-0.25cm]mk.west) -- ([yshift=-0.5cm]bigm.west) node[midway,sloped,above,yshift=-3.5pt] {$m_2'$};

\end{tikzpicture}
    \caption{Illustration of the interactions between an honest adversary $\cA$ and the concurrent composition of $k$ continual mechanisms}
    \label{fig:concurrent-k-ms}
    \Description{Illustration of the interactions between an honest adversary $\cA$ and the concurrent composition of $k$ continual mechanisms}
\end{figure}

Our first concurrent composition theorem is as follows:
\begin{theorem}[concurrent composition of continual mechanisms, informally stated] \label{thm:extconcomp-intro}
    For every fixed sequence $(\eps_1,\delta_1),\ldots,(\eps_k,\delta_k)$ of privacy-loss parameters and verification functions $f_1,\ldots,f_k$, the concurrent composition of continual mechanisms $\cM_i$ that are $(\eps_i,\delta_i)$-DP w.r.t. $f_i$ is $(\eps,\delta)$-DP for the same $(\eps,\delta)$ that holds for composing noninteractive mechanisms that are $(\eps_i,\delta_i)$-DP.  
\end{theorem}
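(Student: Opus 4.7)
The plan is to reduce the theorem to the already-established concurrent composition theorems for \emph{interactive} mechanisms with fixed privacy parameters, exploiting the equivalence noted just above: a continual mechanism $\cM$ is $(\eps,\delta)$-DP with respect to a verification function $f$ iff the interactive mechanism $\cM'(\cdot) := \V[f]\circstar \I(\cdot)\circstar \cM$, viewed on dataset space $\zo$ with neighboring $0\sim 1$, is $(\eps,\delta)$-DP in the standard interactive sense.

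First, I would apply this equivalence to each composed mechanism: define $\cM'_i(\cdot) := \V[f_i]\circstar \I(\cdot)\circstar \cM_i$, so that each $\cM'_i$ is an $(\eps_i,\delta_i)$-DP interactive mechanism with dataset space $\zo$. Second, I would exhibit an exact correspondence between adversary views in the two concurrent composition settings: given any adversary $\cA$ attacking the concurrent composition of $\cM_1,\ldots,\cM_k$ with identifier bit $b$, construct an adversary $\cA'$ attacking the concurrent composition of the interactive mechanisms $\cM'_1,\ldots,\cM'_k$ on the single shared dataset $b$. The simulation is syntactic: whenever $\cA$ sends a message pair $(m[0],m[1])$ destined for the $i$-th mechanism, $\cA'$ forwards it verbatim to $\cM'_i$, whose internal $\V[f_i]$ performs the verification and whose internal $\I(b)$ passes $m[b]$ onward to $\cM_i$. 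By inspection, the view of $\cA$ in the $b=0$ (resp.\ $b=1$) world is identically distributed to the view of $\cA'$ when all $\cM'_i$'s are run on dataset $0$ (resp.\ $1$). Third, I would invoke the existing concurrent composition theorems for interactive mechanisms with fixed privacy parameters --- Lyu's theorem in the approximate DP case and Vadhan--Wang's in the pure DP case --- each of which provides the same $(\eps,\delta)$ bound as for noninteractive composition of $(\eps_i,\delta_i)$-DP mechanisms.

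The main subtlety will be ensuring that the adversary simulation faithfully preserves all of the adaptivity in the continual setting, where $\cA$ may arbitrarily interleave its message pairs across the different $\cM_i$'s based on the full observed transcript. Since each $\cM'_i$ is itself an interactive mechanism receiving a fresh message stream from $\cA'$, arbitrary interleavings across the $\cM'_i$'s are automatically supported by the definition of concurrent composition of interactive mechanisms, so this step reduces to a syntactic routing argument rather than a probabilistic one. A further point worth checking is that halt-messages produced by $\V[f_i]$ upon verification failure are part of the well-defined output behavior of $\cM'_i$ and therefore appear identically in both the continual and interactive views, so the view-equivalence holds message-by-message. With these observations in place, the concurrent composition bound for continual mechanisms inherits the noninteractive bound with no additional privacy loss.
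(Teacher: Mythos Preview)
Your approach is correct and shares the paper's core move: both proofs convert each continual mechanism $\cM_i$ into the interactive mechanism $\cM'_i(b)=\V[f_i]\circstar\I(b)\circstar\cM_i$ on dataset space $\{0,1\}$, and both establish that the continual concurrent composition is (up to an outer post-processing IPM with a singleton initial state) equivalent to the concurrent composition of $\cM'_1(b),\dots,\cM'_k(b)$---the paper names this intermediate object $\iecc(b)$. The one substantive difference is the next step: you invoke Lyu's concurrent composition theorem for interactive mechanisms as a black box, whereas the paper unrolls that theorem by applying its key ingredient (Lemma~\ref{lem:cm-post-rr-lyu}, simulating each $\cM'_i(b)$ by post-processing $\RR_{\eps_i,\delta_i}(b)$) and reducing all the way down to $\comp[\RR_{\eps_1,\delta_1},\dots,\RR_{\eps_k,\delta_k}]$. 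Your route is cleaner and more modular for the informal statement; the paper's route is taken because Lyu's theorem as stated in~\cite{lyu2022composition} carries hidden assumptions (finite answer set, a fixed upper bound on the number of rounds) that the paper needs to relax in Section~\ref{sec:post-irr}, so a black-box invocation would not yield the formal Theorem~\ref{thm:comp-fixed-cm} in the generality the paper wants.
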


Notably, this result allows the adversary to adaptively and concurrently choose the neighboring sequences fed to each of the $\cM_i$'s based on the responses received from the other mechanisms. Analogous results are shown for R\'enyi DP in Corollary~\ref{cor:comp-fixed-param-rdp} and, under the finite-communication assumption, for $f$-DP in Corollary~\ref{cor:comp-fixed-param-f-dp}.

A key step in the proof of Theorem~\ref{thm:extconcomp-intro} is to show that for any $(\eps_i, \delta_i)$-DP continual mechanism $\cM$, the interactive mechanism $\cM'(\cdot) = \V[f]\circstar \I(\cdot)\circstar \cM$ can be simulated by post-processing the randomized response mechanism $\RR_{\eps_i, \delta_i}(\cdot)$. Lyu~\cite{lyu2022composition} proves such a result under the assumptions that the answer space of $\cM$ is finite and that there exists a predetermined upper bound $T$ on the number of interactions. Vadhan and Zhang~\cite{vadhan2022concurrent} additionally assume that the query set is finite. We remove the latter assumption and relax the former to having an infinite (but discrete) answer space.

Next, we explore the concurrent and continual analogue of {\em parallel composition} of differential privacy.  To review parallel composition, suppose that $\cM_1,\ldots,\cM_\ell$ are each $(\eps_0,\delta_0)$-DP noninteractive mechanisms with respect to a neighbor relation $\sim^*$, where each $\cM_i$  expects a dataset $x_i$ as input, and for a parameter $k\in\{1,\ldots,\ell\}$, we define a neighboring relation $\sim^{(k)}$ on $\ell$-tuples as follows:
$(x_1,\ldots,x_\ell)\sim^{(k)} (x_1',\ldots,x_\ell')$ if and only if $x_i=x_i'$ for all but at most $k$ indices $i\in[\ell]$, for which $x_i \sim^* x_i'$.
We refer to the mechanism 
$\cM'(\cdot)=(\cM_1(\cdot),\ldots,\cM_\ell(\cdot))$ with neighbor relation $\sim^{(k)}$ as the {\em $k$-sparse parallel composition} of $\cM_1,\ldots,\cM_\ell$. 
If $\cM_1,\ldots,\cM_\ell$ are noninteractive mechanisms, then it is known that $\cM'$ satisfies the same level of $(\eps,\delta)$-DP as the (standard) composition of $k$ (rather than $\ell$) $(\eps_0,\delta_0)$-DP mechanisms.  

We extend this notion to the \emph{concurrent $k$-sparse parallel composition of continual mechanisms}, where an adversary (or an honest analyst) may adaptively create continual mechanisms (with no bound $\ell$ on the number of mechanism), where each mechanism satisfies $(\eps_0, \delta_0)$-DP with respect to a verification function $f^*$, and interact with them concurrently. Formally, instead of fixing one dataset pair $(x_i, x_i')$ for each mechanism in advance, the adversary issues pairs of messages $m = (m^0, m^1)$ to mechanisms over time. All but at most $k$ mechanisms receive identical message pairs throughout the interaction, while for each of the $k$ exceptional mechanisms, the sequence of message pairs $\sigma$ satisfies $f^*(\sigma) = \top$. If messages are data records to be inserted in the dataset, then this concurrent composition captures adaptive partitioning of data among mechanisms over time. Note that an adversary can decide to designate a mechanism $\cM_i$ as one of the $k$ “exceptional” mechanisms, i.e., issue at least one pair of non-identical messages to $\cM_i$, after potentially interacting with other mechanisms and sending multiple identical message pairs to $\cM_i$ itself.

First, we extend the parallel composition of noninteractive mechanisms to concurrent composition of interactive mechanisms (where there are adaptive queries but datasets $x_i$ are static, rather than dynamically updated).\footnote{As an intermediate step, Qiu et al.\cite{qiu2022differential} analyzes the composition of infinitely many SVT instances (sharing a secret dataset) that are executed sequentially, where each instance begins only after the previous one halts, and where a differing element between two neighboring datasets can affect the output of at most one instance. Their result is weaker than ours as it does not allow for concurrent interaction of the adversary with the interactive mechanisms.}
We recall that an interactive mechanism is modeled as a continual mechanism receiving a dataset as its first message and queries as subsequent messages. (The verification function $f^*$ for this mechanism ensures that the first message is a pair of neighbor datasets, while the subsequent pairs are identical queries.) A formal statement of the following theorem is given in Corollary~\ref{cor:parallel-fixed-param-comp-im}, which more generally composes interactive mechanisms $\cM_i$ with differing $(\eps_i,\delta_i)$ parameters. 

\begin{theorem}[concurrent parallel composition of interactive mechanisms]\label{thm:concurrent-parall-comp-IM-intro}
    For every $\eps_0>0$ and $0\leq \delta_0\leq 1$ and $k\in \N$, and every $(\eps_0,\delta_0)$-DP interactive mechanism $\cM_0$, the concurrent $k$-sparse parallel composition of an unbounded number of copies of $\cM_0$ satisfies the same $(\eps,\delta)$-DP guarantees as the composition of $k$ noninteractive $(\eps_0,\delta_0)$-DP mechanisms.
\end{theorem}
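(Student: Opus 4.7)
The plan is to reduce the concurrent $k$-sparse parallel composition of an unbounded number of copies of $\cM_0$ to the concurrent composition of exactly $k$ copies of $\cM_0$ on adaptively chosen neighboring datasets, and then invoke Theorem~\ref{thm:extconcomp-intro} in its specialization to interactive mechanisms. The central observation is that any copy instantiated with $x_i[0]=x_i[1]$ produces an output distribution identical under the two bits $b\in\zo$ of the identifier, so it contributes no privacy loss and can be simulated by a party that does not know $b$. Only the (at most $k$) copies with $x_i[0]\neq x_i[1]$ can carry information about $b$.

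Concretely, I would construct an interactive post-processing mechanism $\cP$ placed between the verifier $\V[f_k]$ (where $f_k$ is the verification function enforcing $k$-sparse parallel neighboring) and the identifier $\I(b)$, which in turn communicates with the concurrent composition $\mathcal{C}_k$ of exactly $k$ interactive instances of $\cM_0$ on neighboring datasets. When the adversary instantiates a new copy via a pair $(x_i[0], x_i[1])$, $\cP$ inspects the pair: if $x_i[0]=x_i[1]$, it spawns a fresh local simulation of $\cM_0$ on $x_i[0]$ and answers all subsequent queries to that copy from the local instance, without involving $\I(b)$ or $\mathcal{C}_k$; if $x_i[0]\neq x_i[1]$, it allocates a fresh unused slot $j\in[k]$ in $\mathcal{C}_k$, forwards the pair through $\I(b)$ to that slot, and thereafter routes all queries to this copy through slot $j$. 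Because $f_k$ enforces at most $k$ differing instantiations, $\cP$ never runs out of slots; any unused slots can be pre-initialized with an arbitrary fixed adjacent pair that is never queried.

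One then verifies by induction on the step number that the view of $\cA$ interacting with $\V[f_k]\circstar \cP \circstar \I(b) \circstar \mathcal{C}_k$ is distributed identically to its view in the original concurrent $k$-sparse parallel composition: the local simulations faithfully reproduce $\cM_0$ on the shared dataset $x_i[0]=x_i[1]$, and the allocated slots of $\mathcal{C}_k$ faithfully reproduce the ``differing'' copies on input $x_i[b]$. By Theorem~\ref{thm:extconcomp-intro}, $\mathcal{C}_k$ satisfies $(\eps,\delta)$-DP with respect to its verification function for the $(\eps,\delta)$ obtained by composing $k$ noninteractive $(\eps_0,\delta_0)$-DP mechanisms. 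Composing the IPM $\V[f_k]\circstar \cP$ on the adversary side is a post-processing step that preserves this guarantee, so the view of $\cA$ is $(\eps,\delta)$-indistinguishable between $b=0$ and $b=1$ in both the reduced and the original interactions.

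The main obstacle is formalizing the distributional equivalence between the two interactions under adaptive interleaving, along with a careful definition of $\cP$'s state (the local simulations plus the slot allocation table) to ensure it stays consistent with the transcript at every step. I expect the adaptivity to cause no real difficulty, because the classification of each new copy as ``equal'' or ``differing'' is determined at instantiation time from the pair that $\V[f_k]$ has already inspected, and subsequent routing depends only on this classification and on the copy index accompanying each query; the inductive step then reduces to the tautology that $\cM_0$ run on $x_i[0]$ and on $x_i[1]$ yields the same distribution whenever $x_i[0]=x_i[1]$.
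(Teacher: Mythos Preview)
Your proposal is correct and follows essentially the same route as the paper. The paper proves a more general statement (Theorem~\ref{thm:parallel-fixed-param-comp-fpc}) for mechanisms whose verification functions are \emph{first-pair consistent}, of which the interactive-mechanism case is the prototype; its proof likewise classifies each instantiated copy at its first message, simulates the ``identical'' copies locally via an IPM that does not access the secret bit, and routes the (at most $k$) ``differing'' copies through $k$ slots, the only difference being that the paper pushes the reduction one level deeper to $k$ randomized responses via Lyu's lemma rather than stopping at the black-box concurrent composition $\mathcal{C}_k$ you invoke via Theorem~\ref{thm:extconcomp-intro}.
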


We prove analogous results for R\'enyi DP in Corollary~\ref{cor:parallel-fixed-param-comp-im-rdp} and, under the finite-communication assumption, for $f$-DP in Corollary~\ref{cor:parallel-comp-f-dp-im}.
Then we turn to the case of {\em continual mechanisms}, where there can also be adaptive dataset updates.  We prove a negative result for approximate DP ($\delta_0>0$):

\begin{theorem}[counterexample for the concurrent parallel composition of approximate DP continual mechanisms, informally stated] \label{thm:counterexp-parallelcomp-approxDP-intro}
    For every $\delta_0>0$, there is an $(0,\delta_0)$-DP continual mechanism $\cM_0$ such that for every $\ell\in \N$, the concurrent 1-sparse parallel composition of $\ell$ copies of $\cM_0$ is not $(0,\delta)$-DP for any $\delta<1-(1-\delta_0)^\ell$.
\end{theorem}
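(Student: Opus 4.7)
The plan is to exhibit a simple continual mechanism $\cM_0$ together with an adversary $\cA$ whose distinguishing advantage against the concurrent $1$-sparse parallel composition of $\ell$ copies of $\cM_0$ is exactly $1-(1-\delta_0)^\ell$. I will define $\cM_0$ to accept two kinds of messages---a ``reveal-flag'' query and a ``write $u$'' update for $u\in\{0,1\}$---and behave as follows: on the first reveal query, sample a fresh bit $r\sim\mathrm{Bernoulli}(\delta_0)$ and return $r$; on each subsequent write, return $u$ if $r=1$ and $\bot$ otherwise. The neighboring relation on $\cM_0$'s input streams is event-level, treating the reveal query as non-sensitive and write messages as events. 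Verifying that $\cM_0$ is $(0,\delta_0)$-DP reduces to a one-line computation: on two streams that differ in a single write, the entire transcript coincides whenever $r=0$ (probability $1-\delta_0$), so the total-variation distance between the two output distributions is $\delta_0$.

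Next, I design an adversary $\cA$ for the concurrent $1$-sparse parallel composition of $\ell$ independent copies $\cM_1,\ldots,\cM_\ell$ of $\cM_0$. In its first phase, for each $i=1,\ldots,\ell$ the adversary sends the \emph{diagonal} pair $m_i=(\text{reveal }\cM_i,\text{reveal }\cM_i)$ and records the returned flag $r_i$. In the second phase the adversary computes $i^\star=\min\{i:r_i=1\}$; if no such index exists it aborts with guess $0$, and otherwise it sends the single pair $m_{\ell+1}=(\text{write }0\text{ to }\cM_{i^\star},\;\text{write }1\text{ to }\cM_{i^\star})$, observes the response $y$, and outputs guess $1$ iff $y=1$. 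I then verify the $1$-sparse condition in both worlds: the first $\ell$ pairs are diagonal and contribute no per-mechanism difference, while the final pair uses the \emph{same} target $i^\star=i^\star_b$ on both sides, so exactly one mechanism ($\cM_{i^\star_b}$) has differing input streams, and its two streams differ in a single write---hence neighboring under $\cM_0$'s event-level relation. Thus $\V[f]$ accepts in both runs.

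The distinguishing calculation is then straightforward. In the $b=0$ run, whenever $i^\star_0$ exists $\I(0)$ forwards the ``write $0$'' component to $\cM_{i^\star_0}$ (whose flag is $1$ by construction), producing $y=0$; otherwise $\cA$ has already aborted with guess $0$. Either way $\Pr[\cA\to 1\mid b=0]=0$. In the $b=1$ run, whenever $i^\star_1$ exists $\I(1)$ forwards ``write $1$'' to $\cM_{i^\star_1}$, yielding $y=1$ and guess $1$; since the $\ell$ flags observed in the first phase of the $b=1$ run are i.i.d.\ $\mathrm{Bernoulli}(\delta_0)$, the probability that $i^\star_1$ exists is $1-(1-\delta_0)^\ell$. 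The $(0,\delta)$-DP inequality applied to the event $\{\cA\to 1\}$ therefore forces $\delta\geq 1-(1-\delta_0)^\ell$, contradicting any smaller $\delta$.

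The main subtlety---and the key step to get right in the write-up---is the $1$-sparse verification under adaptivity. Since $i^\star$ is computed from observations whose realizations differ across the two worlds, a naive attacker would route the differing write to $i^\star$ only on the ``active'' side of each pair, which would immediately make more than one mechanism differ across the two sides and cause $\V[f]$ to reject. The crucial design choice is to couple the two sides of any single pair to the \emph{same} target, so that each mechanism's two input streams are either identical or differ only at the final write. This decoupling between the adaptive choice of $i^\star$ (which is allowed) and the per-mechanism sparsity accounting (which is preserved) is what converts one $\delta_0$-budget into $\ell$ independent shots at the failure event, and it also pinpoints why the obstruction is specific to continual mechanisms: if the dataset were static, as in the interactive-mechanism setting of Theorem~\ref{thm:concurrent-parall-comp-IM-intro}, there would be no write updates to route adaptively in this way.
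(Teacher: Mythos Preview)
Your proposal is correct and takes essentially the same approach as the paper's proof (Theorem~\ref{thm:counter-example}): the mechanism $\cM_0$ you describe---a Bernoulli($\delta_0$) ``leak flag'' followed by echoing the input when the flag is set---is the same as the paper's $\cM_\delta$, and your adversary's strategy of probing each copy for its flag and then routing the single differing write to a leaky copy is exactly the paper's attack. The only cosmetic differences are that you query all $\ell$ copies upfront while the paper alternates creation and probing, and that you state the $\ell$-copy distinguishing advantage $1-(1-\delta_0)^\ell$ directly whereas the paper's formal theorem passes to the unbounded limit; both yield the informal statement immediately.
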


This theorem says that, at least in terms of the $\delta$ parameter, concurrent parallel composition of continual mechanisms does not provide any benefit over non-parallel composition, since the bound of $1-(1-\delta_0)^\ell$ is the bound that we would get by just applying composition over all $\ell$ mechanisms. The high-level idea of this counterexample is to define a continual mechanism $\cM_0$ that, upon receiving its first input message (regardless of the value), outputs $\bot$ with probability $\delta$ and $\top$ otherwise. If the first output is $\bot$, then on its second input the mechanism simply returns that input, revealing it entirely. Consider an adversary $\cA$ that creates $\ell$ instances of $\cM_0$, sends the pair $(0,0)$ to each instance, and then sends the pair $(0,1)$ to the first mechanism whose response was $\bot$. With probability $1-(1-\delta)^\ell$, such a mechanism exists and its second response reveals its input bit, which equals the secret bit used by the identifier to filter the adversary’s message pairs. An analogous counterexample for R\'enyi DP is provided in Theorem~\ref{thm:counter-example-rdp}.

Inspired by this counterexample, we prove a weaker version of Theorem~\ref{thm:concurrent-parall-comp-IM-intro} for the concurrent parallel composition of continual mechanisms, stated below. The formal version of this result is given in Theorem~\ref{thm:parallel-comp-approx} and applies to $(\eps_i, \delta_i)$-DP continual mechanisms with possibly different $(\eps_i,\delta_i)$ parameters.

\begin{theorem}[concurrent parallel composition of approx DP continual mechanisms, informally stated] \label{thm:parallelcomp-approxDP-intro}
    For every $\eps_0> 0$, $0\leq \delta'\leq 1$, and $k\in \N$, the concurrent $k$-sparse parallel composition of $(\eps_0,\delta_i)$-DP continual mechanisms, where the parameters $\delta_i$ are chosen adaptively such that $\prod_i (1-\delta_i) \geq 1-\delta'$, satisfies $(\eps,1-(1-\delta)(1-\delta'))$-DP if the composition of $k$ noninteractive $(\eps_0,0)$-DP mechanisms is $(\eps,\delta)$-DP.
\end{theorem}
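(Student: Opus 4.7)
The plan is to reduce the approximate-DP case to the pure-DP case by paying a total cost of $\delta'$ for the ``bad events'' of each constituent mechanism. Specifically, for each mechanism $\cM_i$ adaptively instantiated by the adversary with parameter $(\eps_0,\delta_i)$, I will isolate a ``good event'' of probability at least $1-\delta_i$ on which $\cM_i$ behaves as an $\eps_0$-pure DP mechanism; a union bound over all instantiated mechanisms then yields a bad event of total probability at most $\delta'$, and conditioning on its complement lets us invoke the pure DP concurrent $k$-sparse parallel composition theorem.

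Step 1 (pure-DP decomposition). Using the equivalence noted earlier in the paper, the $(\eps_0,\delta_i)$-DP continual mechanism $\cM_i$ with verification function $f_i$ corresponds to the $(\eps_0,\delta_i)$-DP interactive mechanism $\cM_i'(\cdot)=\V[f_i]\circstar\I(\cdot)\circstar\cM_i$ on the dataset space $\zo$. By the standard coupling characterization of approximate DP (appropriately lifted to interactive mechanisms, along the lines of the randomized-response simulation refined in our proof of Theorem~\ref{thm:extconcomp-intro}), there is a joint distribution on the views under $b=0$ and $b=1$ such that, with probability at least $1-\delta_i$, the two coupled transcripts satisfy the pointwise $\eps_0$-pure DP bound. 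Mechanisms with $x_i[0]=x_i[1]$ contribute nothing to either the bad event or the pure DP loss, so the relevant sum is effectively over the (at most $k$) mechanisms whose inputs differ under the two bits.

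Step 2 (union bound with adaptively chosen $\delta_i$). Because the $\delta_i$ are chosen adaptively, the bad event for mechanism $\cM_i$ must be defined compatibly with the adversary's history of observations. I will construct the coupling sequentially, in the order the adversary interacts with the mechanisms, treating $(\delta_i)$ as a stopping-time/filter sequence. With the constraint $\sum_i\delta_i\leq\delta'$, the union of the per-mechanism bad events is then bounded by $\delta'$ regardless of how the adversary adaptively allocates its budget — this is the same accounting used in our filter composition theorems.

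Step 3 (pure DP parallel composition and combination). Conditioned on the complement of the bad event, the entire interaction is indistinguishable from the concurrent $k$-sparse parallel composition of $\eps_0$-pure DP continual mechanisms. Applying the pure DP special case of concurrent $k$-sparse parallel composition for continual mechanisms (which, as noted among our contributions, follows by an argument analogous to Theorem~\ref{thm:concurrent-parall-comp-IM-intro} but for continual mechanisms, using Theorem~\ref{thm:extconcomp-intro} restricted to the $\leq k$ mechanisms with differing inputs), we obtain $(\eps,\delta)$-DP where $(\eps,\delta)$ matches the composition of $k$ noninteractive $\eps_0$-pure DP mechanisms. Finally, the standard fact that $(\eps,\delta)$-indistinguishability on an event of probability at least $1-\delta'$ upgrades to unconditional $(\eps,\delta+\delta')$-indistinguishability yields the claimed bound. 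The main obstacle is Step~2: making the per-mechanism ``good event'' well-defined when the $\delta_i$ is itself a random variable chosen after observing prior outputs, and verifying that the sequential coupling construction composes across the concurrent interleaving without inflating either the pure DP loss or the bad-event probability beyond $\delta'$.
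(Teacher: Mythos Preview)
Your high-level decomposition (isolate a bad event of total mass $\leq\delta'$, then on the complement apply a pure-DP parallel composition) is morally what the paper does, but you have misidentified where the real work lies, and Step~3 as stated is circular.

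The circularity: in Step~3 you invoke the pure-DP concurrent $k$-sparse parallel composition for \emph{continual} mechanisms, claiming it ``follows by an argument analogous to Theorem~\ref{thm:concurrent-parall-comp-IM-intro}.'' It does not. Theorem~\ref{thm:concurrent-parall-comp-IM-intro} is for interactive mechanisms, where the pair of neighboring datasets is fixed at creation time (first-pair consistent); the whole point of the continual setting is that the adversary decides \emph{after seeing outputs} which of the unboundedly many mechanisms will receive differing updates. You cannot ``restrict to the $\leq k$ mechanisms with differing inputs'' because their identity is itself a random variable determined by the interaction. In the paper, Corollary~\ref{cor:parallelcomp-pureDP-intro} is obtained \emph{from} Theorem~\ref{thm:parallelcomp-approxDP-intro} by setting $\delta'=0$, not the other way around.

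The missing ingredient that resolves both Step~1 and Step~3 simultaneously is Lemma~\ref{lem:new} (informally Lemma~1.6): every $(\eps,\delta)$-DP continual mechanism can be simulated by an IPM that post-processes $\irr_{\eps,\delta}(b)$, with the crucial property that as long as only identical query pairs have been received, the simulator has touched only the $\RR_{0,\delta}$ output and not the $\RR_{\eps,0}$ output. This is not a ``standard coupling characterization appropriately lifted''; it is the paper's main new technical lemma, requiring a new construction of the simulator (Section~\ref{sec:post-irr}). With it in hand, the paper does not condition on a good event at all: it simulates every created mechanism via $\irr$, observes that only the $\leq k$ mechanisms receiving non-identical pairs ever consume an $\RR_{\eps_0,0}$ sample, composes those $k$ samples (giving $(\eps,\delta)$), separately composes all the $\RR_{0,\delta_j'}$ samples via a filter-composition lemma (giving $(0,\delta')$), and combines by basic composition. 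Your Steps~1--2 are a vague sketch of this, but without Lemma~\ref{lem:new} you have no mechanism for deferring the $\eps$-cost until a differing pair actually arrives, and without that the adaptive choice of the $k$ ``active'' mechanisms cannot be handled.
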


The proof of Theorem~\ref{thm:parallelcomp-approxDP-intro} relies on a key property of our new post-processing construction (recall that $\RR_{\eps,\delta}(b)$ denotes the randomized response mechanism with parameters $\eps$ and $\delta$ and input $b\in \{0,1\}$):

\begin{proposition}[post-processing of randomized response mechanisms, informal statement]\label{prop:post-rr-intro}
    For every $(\eps, \delta)$-DP interactive mechanism $\cM$ and every pair of neighboring datasets $x_0, x_1$, there exists an interactive post-processing mechanism $\cP$ that receives the outputs of $\RR_{\eps, 0}(b)$ and $\RR_{0, \delta}(b)$ and simulates $\cM(x_b)$ identically. As long as the answer distributions of $\cM(x_0)$ and $\cM(x_1)$ to the queries asked so far are identical, $\cP$ does not access the outcome of $\RR_{\eps, 0}(b)$.
\end{proposition}

The first sentence of this proposition was proven by Lyu~\cite{lyu2022composition} and Vadhan and Zhang~\cite{vadhan2022concurrent} with finiteness condition on the communication. The novelty in our result is that $\cP$ does not access the outcome of $\RR_{\eps, 0}(b)$ until $\cM(x_0)$ and $\cM(x_1)$ differ.
Our construction builds on Lyu's proof,  
where an IPM is initially given the outcome of $\RR_{\eps,\delta}(b)$. Let $\cM'$ be a continual mechanism that is $(\eps, \delta)$-DP with respect to a verification function $f$. Consider the interaction between $\V[f]\circstar \I \circstar \cM'$ and an adversary. As long as the adversary sends pairs of identical messages, the answer distributions of  $\V[f]\circstar \I(0)\circstar \cM'$ and $\V[f]\circstar \I(1)\circstar \cM'$ will be identical. Thus, by replacing $\cM(x_b)$ in Proposition~\ref{prop:post-rr-intro} with $\V[f]\circstar \I(b)\circstar \cM'$, we conclude that there exists an IPM $\cP$ that simulates $\V[f]\circstar \I(b)\circstar \cM'$ using only the output of $\RR_{0,\delta}(b)$ until the adversary issues a pair of non-identical messages. At that point, $\cP$ requires the output of $\RR_{\eps,0}(b)$ to continue its simulation.  This result is the core idea to show that in the concurrent $k$-sparse parallel composition of \emph{continual} mechanisms, the $\eps$ parameters of the (potentially unboundedly many) composed mechanisms that are always assigned pairs of identical messages do not contribute to the overall privacy loss.

In the concurrent parallel composition of \emph{interactive} mechanisms, the adversary sends to each mechanism a pair of identical or neighboring datasets as first message and identical query pairs as subsequent messages. The key idea in the proof of Theorem~\ref{thm:concurrent-parall-comp-IM-intro} is that we can determine whether a mechanism is one of the $k$ exceptional ones solely from its first pair, before the mechanism produces any output. Recall that all messages of the adversary are filtered by a secret bit $b\in\{0,1\}$. If a mechanism is non-exceptional, whether the bit $b$ is zero or one makes no difference, and the first element of every pair can always be forwarded. In this proof, we simulate a mechanism by post-processing of a randomized-response mechanism with input $b$ only if it is exceptional, and show that we incur no privacy cost—neither in $\eps$ nor in $\delta$—from the non-exceptional mechanisms. 

Setting $\delta'=0$ in Theorem~\ref{thm:parallelcomp-approxDP-intro} yields the following corollary, stated formally (and for differing $\eps_i$) in Corollary~\ref{cor:parallelcomp-pureDP-intro}.

\begin{corollary}[concurrent parallel composition of pure DP continual mechanisms, informally stated] \label{cor:parallelcomp-pureDP-intro}
    For every $\eps_0> 0$ and integer $k\in \N$, the concurrent $k$-sparse parallel composition of $(\eps_0,0)$-DP continual mechanisms satisfies the same $(\eps,\delta)$-DP guarantees as the composition of $k$ noninteractive $(\eps_0,0)$-DP mechanisms.\footnote{In some composition theorems (e.g., advanced composition), composing pure DP mechanisms yields $(\eps,\delta)$-DP with $\delta>0$.}
\end{corollary}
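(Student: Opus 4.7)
The plan is to obtain the corollary as an immediate specialization of Theorem~\ref{thm:parallelcomp-approxDP-intro}. Given any collection of continual mechanisms $\cM_1, \cM_2, \ldots$ participating in the concurrent $k$-sparse parallel composition, each of which is $(\eps_0, 0)$-DP, I would regard them as $(\eps_0, \delta_i)$-DP continual mechanisms with $\delta_i = 0$ for every $i$. The adaptive constraint $\sum_i \delta_i \leq \delta'$ appearing in Theorem~\ref{thm:parallelcomp-approxDP-intro} then holds trivially with $\delta' = 0$, so one application of that theorem yields that the composition is $(\eps, \delta + \delta') = (\eps, \delta)$-DP whenever the composition of $k$ noninteractive $(\eps_0, 0)$-DP mechanisms is $(\eps, \delta)$-DP, which is exactly the claim.

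Because the quantifier over $(\eps, \delta)$ in Theorem~\ref{thm:parallelcomp-approxDP-intro} is implicit, this single application simultaneously delivers every composition bound available for $k$ noninteractive $(\eps_0, 0)$-DP mechanisms, including both the basic-composition bound $(k\eps_0, 0)$ that preserves pure DP, and any advanced-composition bound with $\delta > 0$. Hence the corollary captures the full spectrum of pure-DP composition guarantees transferred to the concurrent continual setting, with no additional loss beyond what is already incurred in the noninteractive case.

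The main ``obstacle'', such as it is, is entirely absorbed into Theorem~\ref{thm:parallelcomp-approxDP-intro} and the post-processing lemma for randomized response on which it rests; the corollary itself requires only the substitution $\delta' = 0$. If one wanted a self-contained proof in the pure-DP regime, the simplification available would be that, since no mechanism contributes any $\delta_i > 0$, the $\RR_{0,\delta}(b)$ ``failure'' channel in the post-processing construction is never activated, and the entire concurrent interaction can be simulated from post-processing the outputs of the $\RR_{\eps_0, 0}(b)$ channels alone, recovering the standard pure-DP composition calculation. As a corollary statement, however, invoking Theorem~\ref{thm:parallelcomp-approxDP-intro} with $\delta' = 0$ suffices.
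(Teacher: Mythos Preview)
Your proposal is correct and matches the paper's approach exactly: the paper derives this corollary by setting $\delta'=0$ in Theorem~\ref{thm:parallelcomp-approxDP-intro}, precisely as you do.
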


{
\medskip\sloppy
\paragraph{\bf Concurrent filter composition theorems for continual mechanisms.}
Let $\filt:\left([0,\infty)\times [0,1]\right)^*\to \{\top,\bot\}$ be a filter. Consider adversaries that create continual mechanisms with adaptive privacy parameters such that $\filt(\sigma)=\top$ at every step, where $\sigma$ is the sequence of privacy parameters $(\eps_i, \delta_i)$ of the created mechanisms $\cM_i$ up to that point. Suppose each $\cM_i$ is $(\eps_i,\delta_i)$-DP with respect to a verification function $f_i$. In addition to creating new mechanisms, these adversaries may adaptively generate pairs of inputs for existing mechanisms such that $f_i$ applied to the sequence of input pairs for each $\cM_i$ evaluates to $\top$. The concurrent $\filt$-filter composition of continual mechanisms is said to be $(\eps, \delta)$-DP if the view of such adversaries is $(\eps, \delta)$-indistinguishable when all mechanisms receive either the first inputs from their respective pairs or all receive the second ones.

\begin{theorem}[Concurrent filter composition for continual mechanisms]
    Let $\filt:\left([0,\infty)\times [0,1]\right)^*\to \{\top,\bot\}$ be a filter. The concurrent $\filt$-filter composition of continual mechanisms satisfies the same $(\eps, \delta)$-DP guarantees as the $\filt$-filter composition of noninteractive mechanisms.
\end{theorem}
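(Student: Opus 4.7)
The plan is to reduce the theorem to the concurrent $\filt$-filter composition theorem of Haney et al.\ for \emph{interactive} mechanisms. The principal tool is the equivalence highlighted immediately after the definition of DP for continual mechanisms: a continual mechanism $\cM$ is $(\eps,\delta)$-DP with respect to a verification function $f$ if and only if the interactive mechanism $\cM'(\cdot) = \V[f]\circstar \I(\cdot)\circstar \cM$, taken on dataset space $\zo$ with $0\sim 1$, is $(\eps,\delta)$-DP in the standard sense. This lets me replace each continual sub-mechanism in the composition by an interactive mechanism of identical privacy parameters, without altering the view of any adversary.

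First I would build a simulator transforming an adaptive adversary $\cA$ for the concurrent $\filt$-filter composition of continual mechanisms into an adversary $\cA'$ for the concurrent $\filt$-filter composition of interactive mechanisms. Whenever $\cA$ spawns a new continual mechanism $\cM_i$ with parameters $(\eps_i,\delta_i)$ and verification function $f_i$, the simulator has $\cA'$ spawn the associated interactive mechanism $\cM_i'(\cdot) = \V[f_i]\circstar \I(\cdot)\circstar \cM_i$ with the \emph{same} parameters and declares the fixed adjacent-dataset pair $(0,1)\in\zo\times\zo$. Whenever $\cA$ sends a message pair $(m_k[0],m_k[1])$ to a previously created $\cM_i$, $\cA'$ forwards it verbatim as a query to $\cM_i'$, and relays any answer back to $\cA$. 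Because the sequence of privacy parameters submitted to the filter is unchanged by this rewriting, $\cA'$ remains $\filt$-valid exactly when $\cA$ is. By Haney et al.'s theorem, $\cA'$'s two views are $(\eps,\delta)$-indistinguishable for the same $(\eps,\delta)$ governing the $\filt$-filter composition of non-interactive mechanisms, and this bound transfers to $\cA$ once view equivalence is established.

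The main obstacle is verifying that the simulation preserves the concurrent, secret-bit structure across all adaptively created mechanisms. The key observation is that a single global bit suffices: because we always declare the pair $(0,1)$ for each interactive mechanism, the global secret bit $b$ in Haney et al.'s framework selects dataset $b$ for every $\cM_i'$, which in turn causes the internal identifier $\I(b)$ inside $\cM_i'$ to forward $m_k[b]$ from each message pair to the underlying continual mechanism $\cM_i$ — exactly matching the behavior of the single global identifier in the continual composition. A direct induction on the number of interaction rounds then shows that for each $b\in\zo$ the distributions of $\cA$'s view (continual setting, global identifier bit $b$) and $\cA'$'s view (interactive setting, global bit $b$) coincide, since at every round the verifier invocations, the bit selection, and the forwarded messages line up by construction. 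With this alignment in place, the remainder of the argument is bookkeeping, and the final $(\eps,\delta)$ bound follows immediately from applying Haney et al.'s concurrent filter composition theorem to the interactive-mechanism experiment.
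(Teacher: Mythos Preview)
Your approach is correct, and it is genuinely different from the route the paper takes. The paper does \emph{not} invoke Haney et al.'s concurrent filter composition theorem for interactive mechanisms as a black box. Instead, it gives a direct reduction from the $f^{\filt}$-concurrent composition of continual mechanisms to the $f^{\filt}_{\RR}$-concurrent composition of randomized-response mechanisms: it adapts the construction from its Theorem on concurrent composition of fixed CMs, building an IPM $\cP$ that, upon each creation query $(\cM_i,s_i,\eps_i,\delta_i,f_i)$, instantiates the simulator $\cT_i$ from Lyu's lemma (Lemma~\ref{lem:cm-post-rr-lyu}), immediately spawns $\RR_{\eps_i,\delta_i}$ inside the right-hand $\extconcomp$, queries it on the pair $(0,1)$ to obtain a sample $r_i$, and thereafter answers all queries to $\cM_i$ via $\cT_i$ fed with $r_i$. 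Equivalence of $\V[f^{\filt}]\circstar\I(b)\circstar\extconcomp$ with $\V[f^{\filt}]\circstar\cP\circstar\V[f^{\filt}_{\RR}]\circstar\I(b)\circstar\extconcomp$ then yields the claim.

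Your reduction is arguably cleaner and more modular: the observation that $\cM_i'(\cdot)=\V[f_i]\circstar\I(\cdot)\circstar\cM_i$ is an interactive mechanism on $\zo$ with the same privacy parameters, together with the uniform choice of neighboring pair $(0,1)$, collapses the continual filter experiment to an instance of the interactive one, after which Haney et al.\ does all the work. The paper's approach, by contrast, is self-contained and in particular relies on its own strengthened version of Lyu's simulation lemma (no bound on the number of rounds, countably infinite answer set); your route inherits whatever technical hypotheses Haney et al.\ carries, which may be slightly more restrictive. One minor gap in your write-up: the halting semantics differ---in the continual experiment $\V[f^{\filt}]$ halts the \emph{entire} interaction when some $f_i$-validity or suitability check fails, whereas in your simulation only the internal $\V[f_i]$ inside $\cM_i'$ halts. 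You should have the simulator $\cA'$ perform the $f^{\filt}$-validity check locally and emit $\halt$ to $\cA$ itself; this is easy bookkeeping but worth stating explicitly to make the round-by-round view-equivalence induction airtight.
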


This result is also shown for R\'enyi DP in Corollary~\ref{cor:filter-comp-rdp} and for $f$-DP, under the finite-communication assumption, in Corollary~\ref{cor:filter-comp-f-dp}.
}

\medskip
\paragraph{\bf General Conditions for Privacy Analysis of Continual Mechanisms.}  
Composition theorems for differential privacy have two main uses.  One is to bound the accumulated privacy loss when DP systems are used repeatedly over time.  Another is to facilitate the design of complex DP algorithms from simpler ones, for example in modular software designs.  The concurrent composition theorems for interactive mechanisms have already found an application in the privacy analysis of matching algorithms in local edge differential privacy model~\cite{DinitzLiLiZh25}.  (That work also has a result in the continual observation model, but the privacy is only proven for oblivious adversaries, which allows it to be proven without using a composition theorem for continual mechanisms.) Our concurrent composition theorem allows for a simpler and more modular privacy analysis of continual mechanisms built out of other continual mechanisms.

There are various applications where a continual mechanism executes multiple differentially private mechanisms, such as the Sparse Vector Technique (SVT) and continual counters, as sub-mechanisms and concurrently interacts with them. Let $\cM$ denote such a composite mechanism. Upon receiving an input, $\cM$ may create new sub-mechanisms, interact multiple times with the existing ones, and then return an output. $\cM$ generates inputs for its sub-mechanisms based on its own inputs and the previous outputs of all sub-mechanisms. (See Figure~\ref{fig:complex-mech} for an illustration.)
Analyzing the privacy of such mechanism designs is challenging because the creation and inputs of sub-mechanisms are interdependent and adaptive. Even if the inputs of $\cM$ are non-adaptive, the inputs to its sub-mechanisms may become adaptive: Let $\cM_1$ and $\cM_2$ denote the sub-mechanisms of $\cM$. If the inputs of $\cM_1$ depend on the outputs of $\cM_2$ and the inputs of $\cM_2$ depend on the outputs of $\cM_1$, then $\cM_1$ receives inputs correlated with its own past outputs. Later, we see an example of this situation, where a histogram mechanism creates $d$ private continual counters at the initialization and an adaptive number of  SVTs and Laplace mechanisms over time. In that design, the inputs of the counters depend on the output of the (active instance of the) SVT and vice versa.

\begin{figure}
    \centering
    \scalebox{0.9}{
    \begin{tikzpicture}[
    node distance=1.5cm and 2cm,
    rect/.style={rectangle, draw, minimum width=1.2cm, minimum height=2cm, align=center},
    subrect/.style={rectangle, draw, minimum width=1cm, minimum height=0.8cm, align=center},
    arrow/.style={-{Stealth}, thick}
]

\node[rect] (adv) {\large $\mathcal{A}$};
\node[draw, rectangle, right=2.6cm of adv, minimum width=3.5cm, minimum height=4.1cm] (bigm) {};
\node[draw, rectangle, right=2.6cm of adv, minimum width=0.8cm, minimum height=2.5cm] (p) {$\mathcal{P}$};

\node[subrect] (m1) at ([xshift=-0.6cm,yshift=1.6cm]bigm.east) {$\mathcal{M}_1$};
\node (dots1) at ([xshift=-0.6cm,yshift=1.0cm]bigm.east) {$\vdots$};
\node[draw, rectangle, minimum width=1cm, minimum height=1.4cm] (mj) at ([xshift=-0.6cm,yshift=-0.1cm]bigm.east) {$\mathcal{M}_j$};
\node (dots2) at ([xshift=-0.6cm,yshift=-0.9cm]bigm.east) {$\vdots$};
\node[subrect] (mk) at ([xshift=-0.6cm,yshift=-1.6cm]bigm.east) {$\mathcal{M}_k$};

\draw[arrow]  ([yshift=0.5cm]adv.east) -- node[above,yshift=-3.5pt] { $m_1$} ([yshift=0.5cm]bigm.west);
\draw[arrow] ([yshift=0cm]bigm.west) -- node[above,yshift=-3.5pt] { $m_1'$} ([yshift=0cm]adv.east);
\node (dots1) at ([xshift=1.3cm,yshift=-0.5cm]adv.east) {$\vdots$};

\draw[arrow] ([yshift=1.0cm]p.east) -- ([yshift=0.6cm]mj.west) node[midway,sloped,above,yshift=-4.5pt] {$m_{1, 1}$};
\draw[arrow] ([yshift=0.2cm]mj.west) -- ([yshift=0.6cm]p.east) node[midway,sloped,above,yshift=-4.5pt] {$m_{1, 1}'$};
\draw[arrow] ([yshift=0.2cm]p.east) -- ([yshift=-0.2cm]mj.west) node[midway,sloped,above,yshift=-4.5pt] {$m_{1, 2}$};
\draw[arrow] ([yshift=-0.6cm]mj.west) -- ([yshift=-0.2cm]p.east) node[midway,sloped,above,yshift=-4.5pt] {$m_{1, 2}'$};

\draw[arrow] ([yshift=-0.6cm]p.east) -- ([yshift=0.2cm]mk.west) node[midway,sloped,above,yshift=-4.5pt] {$m_{1,3}$};
\draw[arrow] ([yshift=-0.2cm]mk.west) -- ([yshift=-1cm]p.east) node[midway,sloped,above,yshift=-4.5pt] {$m_{1,3}'$};

\end{tikzpicture}
    }
    \caption{Illustration of the interactions between an honest adversary $\cA$ and a mechanism with $k$ sub-mechanisms}
    \label{fig:complex-mech}
    \Description{Illustration of the interactions between an honest adversary $\cA$ and a mechanism with $k$ sub-mechanisms}
\end{figure}
In such modular designs, we model $\cM$ as the \emph{post-processing} of a specific concurrent composition of its (continual) sub-mechanisms. Since $\cM$ accesses the raw (secret) data itself, it could in principle incur additional privacy loss beyond that of the concurrent composition of the sub-mechanisms. To rule out such privacy loss, we introduce a set of conditions on $\cM$ and prove that, under these conditions, the privacy guarantee of $\cM$ is exactly that of the concurrent composition of its sub-mechanisms.  Intuitively, $\cM$ must restrict its use of raw inputs solely to generating inputs for differentially private sub-mechanisms, while all other actions, such as creating new sub-mechanisms or determining output values, must depend only on the privatized outputs of the sub-mechanisms. This will ensure that $\cM$ does not use any ``privacy budget'' in addition to the ``privacy budget'' used by the sub-mechanisms. 

Continual mechanisms are typically presented using pseudocode (rather than as formal state machines exchanging messages, which are used by our conditions on $\cM$), where the pseudocode receives an input, updates its variables, and returns an output. We therefore reformulate our conditions in this setting. We call a variable an \emph{untainted variable} if it is either the output of a private sub-mechanism or computed solely based on other untainted variables. Our conditions are now as follows:  
\begin{myitemize}
  \item \emph{Destination condition:} Decisions about whether to produce an output, instantiate a new sub-mechanism (and of which type), or interact with an existing one (and which one) must be based only on untainted variables.  
  \item \emph{Response condition:} The values of $\cM$'s outputs must be determined solely by untainted variables.  
  \item \emph{Mapping condition:} Fixing the values of untainted variables, neighboring inputs to $\cM$ must be mapped to input sequences for the sub-mechanisms with a structure for which concurrent composition is private (This condition is referred to as the $f \to f'$ property in Section~\ref{sec:analyze-complex-cm}.) 
\end{myitemize}

We show that if these conditions are satisfied, then the privacy guarantee of $\cM$ is the same as that of the underlying concurrent composition. For example, if $\cM$ executes $k$ continual sub-mechanisms and ensures that its neighboring inputs yield neighboring input sequences for each sub-mechanism, then the privacy guarantee of $\cM$ matches that of the concurrent composition of its sub-mechanisms as given in Theorem~\ref{thm:extconcomp-intro}. The same conclusion holds for the concurrent $k$-sparse parallel composition of sub-mechanisms, as established in Theorem~\ref{thm:concurrent-parall-comp-IM-intro} and Corollary~\ref{cor:parallelcomp-pureDP-intro}.

\medskip
\paragraph{\bf An Algorithmic Application.}
We give a new privacy proof for the {\em continual monotone histogram mechanism} of Henzinger, Sricharan, and Steiner (HSS)~\cite{henzingerSS2023}.  This mechanism involves concurrent executions of a simpler continual histogram mechanism and multiple instantiations of the Sparse Vector Technique mechanism, and the original privacy analysis was rather intricate even for oblivious adversaries (because the mechanism itself is adaptive in its queries and dataset updates to the building-block mechanisms). Combining our concurrent composition theorems (both Theorem~\ref{thm:extconcomp-intro} and Corollary~\ref{cor:parallelcomp-pureDP-intro}), we obtain a much simpler privacy proof.  Moreover, for the case of approximate DP, the original proof was only for oblivious adversaries, whereas our proof also applies to adaptive adversaries.

\medskip
\paragraph{\bf An Application in Local Differential Privacy (LDP)}
Consider an untrusted server that concurrently interacts with multiple users, each holding a private dataset, with the goal of computing a function on the users’ aggregated data. A \emph{local protocol} specifies the interaction rules between the server and the users, where both the server and the users are modeled as interactive mechanisms. Differential privacy for a local protocol is defined in terms of the distribution of the server’s \emph{view}, consisting of its internal randomness and the transcript of all exchanged messages. Specifically, a protocol~$P$ is said to satisfy $(\eps,\delta)$-LDP if the server’s view is $(\eps,\delta)$-indistinguishable for every pair of neighboring sequences of user datasets.

A useful notion of neighboring in local differential privacy (LDP) considers two assignments of datasets to users that are identical for all but at most $k$ users, where the exceptional users datasets are neighboring with respect to a given binary relation on user datasets. An example of this is a setting in which each user represents a hospital and each patient may have records in at most $k$ hospitals.
Analyzing the privacy of local protocols in which the server interacts concurrently with users over multiple rounds can be challenging. We observe that such interactions can be modeled as a \emph{concurrent parallel composition} of interactive mechanisms, and our new privacy results for this concurrent composition directly yield a simplified privacy analysis for local protocols:

\begin{theorem}[informal statement]\label{thm:intro-ldp}
    Let $P$ be a local protocol with $n$ users, and let $\sim^*$ be a neighbor relation on (individual) user datasets. Define two sequences of $n$ datasets to be neighbors if they are identical for all but at most $k\in\N$ users, where each differing user’s datasets are neighbors with respect to~$\sim^*$. Suppose the interactive mechanism corresponding to every user is $(\eps_0,\delta_0)$-DP with respect to~$\sim^*$ against adaptive adversaries. If the composition of $k$ noninteractive $(\eps_0,\delta_0)$-DP mechanisms is $(\eps,\delta)$-DP, then the protocol~$P$ is $(\eps,\delta)$-LDP.
\end{theorem}

In the formal version of Theorem~\ref{thm:intro-ldp} (see Corollary~\ref{cor:ldp-parallel-concomp}), we incorporate the fact that the server in~$P$ restricts the queries sent to each user and enforces that they satisfy certain properties. Consequently, in Theorem~\ref{thm:intro-ldp}, each user’s privacy guarantee only needs to hold for adversaries whose queries obey these properties. Furthermore, we provide a more general statement of Theorem~\ref{thm:intro-ldp} for an arbitrary choice of neighbor relation on sequences of user datasets in Theorem~\ref{thm:ldp-to-concomp}. The concurrent composition theorem used in the general result must be selected based on the given neighbor relation.

We illustrate an application of Theorem~\ref{thm:intro-ldp} by simplifying the privacy analysis of the local protocol proposed by Dhulipala et al.~\cite{dhulipala2025near} for the core decomposition problem. 
In that work, a sequence of datasets $x=(x_1, \dots, x_n)$ represents a graph with $n$ vertices, where $x_i$ denotes the set of vertices adjacent to vertex $i$. (Users are vertices.) Differential privacy is defined with respect to \emph{edge-neighboring}: two dataset sequences $x$ and $x'$ are considered neighbors if their corresponding graphs differ by a single edge. The edge-neighboring relation can be generalized to a relation $\sim$, where $x \sim x'$ if and only if the adjacency sets $x_i$ and $x_i'$ are identical for all vertices $i$ but at most $2$ of them, and for those vertices, $x_i$ and $x_i'$ are \emph{add-remove neighbors}, meaning that they differ in the presence or absence of a single (adjacent) vertex. By defining $\sim^*$ in Theorem~\ref{thm:intro-ldp} to be the add-remove neighbor relation and setting $k=2$, we obtain a simpler privacy analysis for the protocol in~\cite{dhulipala2025near}.

We note that Theorem~\ref{thm:intro-ldp} and its generalizations are stated for protocols in which each user holds a static dataset. In many real-world settings, however, users may join or leave, and their data may evolve over time. In such cases, users must be modeled more generally as continual mechanisms rather than interactive mechanisms. The privacy guarantees of these protocols can still be computed using our concurrent composition theorems for continual mechanisms.

We also note that in the above definition, all components (i.e., the server and the users) are assumed to be honest, meaning that they follow the protocol, and non-colluding, meaning that they do not share their internal states outside the protocol. In the presence of possible dishonesty or collusion, all users except the target users who hold the private data, together with the server, must be treated as a single adversary that interacts concurrently with the target users. In this case, privacy of the local protocol follows from the concurrent composition of the interactive and continual mechanisms, depending on whether the datasets evolve over time.

\smallskip 
The paper is organized as follows.
Section~\ref{sec:preliminaries} introduces all necessary notation, including noninteractive mechanisms, interactive mechanisms, and interactive post-processing mechanism.
Section~\ref{sec:cm} defines continual mechanisms, identifiers $\I$, verification functions $f$, verifiers $\V$ and differential privacy for continual mechanisms.
Concurrent composition of continual mechanisms is defined in Section~\ref{sec:concomp-cm} and our first concurrent composition theorem is proven.
Section~\ref{sec:parallel} introduces concurrent $k$-sparse parallel composition of continual mechanisms and interactive mechanisms, proves the theorems for this type of composition and shows the counterexample. In Section~\ref{sec:analyze-complex-cm}, we state the conditions and the theorem for analyzing the privacy of complex continual mechanisms that use noninteractive, interactive and continual sub-mechanisms. In Section~\ref{sec:application}, we apply our theorem to analyze the privacy of the continual histogram mechanism of~\cite{henzingerSS2023}. In Section~\ref{sec:ldp}, we show how local differential privacy can be analyzed using our concurrent composition theorems. In Section~\ref{sec:app:core-decomposition}, we simplify the privacy analysis of the core decomposition local protocol by~\cite{dhulipala2025near}. In Section~\ref{sec:post-irr} and Section~\ref{sec:filter-comp-rr}, we prove two technical intermediate lemmas, required for the previous results. In Section~\ref{sec:filter}, we introduce concurrent filter composition of continual mechanisms and reduce it to the filter composition for noninteractive mechanisms. Finally, in Section~\ref{sec:rdp} and Section~\ref{sec:f-dp}, we provide analogous results for R\'enyi DP and $f$-DP, respectively.
\section{Preliminaries}\label{sec:preliminaries}
\subsection{Non-interactive Mechanisms (NIM)}\label{subsec:preliminaries-nim}

Let $\calX$ be a family of datasets. A {\em randomized mechanism} $\cM:\calX\to \calY$ is a randomized function that maps a dataset $x\in\calX$  to an element of $\calY$.
We call $\cM$ a {\em non-interactive mechanism (NIM)} since it halts immediately after returning an output.

To define differential privacy, we require the definition of indistinguishability for random variables:

\begin{definition}[$(\eps,\delta)$-indistinguishability \cite{kasiviswanathan2014semantics}]\label{def:indistinguishability}
    For $\eps\geq 0$ and $0\leq \delta\leq 1$, two random variables $X_1$ and $X_2$ over the same domain $\calY$ are \emph{$(\eps, \delta)$-indistinguishable} if for every measurable subset $Y\subseteq \calY$,
    $$\Pr[X_1\in Y] \leq e^\eps \cdot \Pr[X_2\in Y] + \delta,$$
    and
    $$\Pr[X_2\in Y] \leq e^\eps \cdot \Pr[X_1\in Y] + \delta.$$
\end{definition}

Differential privacy is defined relative to a {\em neighbor relation}, which is a binary relation on the family of datasets $\calX$. For example, we often consider two tabular datasets $x,x'$ to be neighboring if they differ only by the presence or absence of a single record. Differential privacy for NIMs is defined as follows:

\begin{definition}[$(\eps,\delta)$-DP for NIMs \cite{dwork2006calibrating}]\label{def:dp-nim}
    For $\eps\geq 0$ and $0\leq \delta\leq 1$, a randomized mechanism $\cM:\calX\to \calY$ is \emph{$(\eps, \delta)$-differentially private} (or \emph{$(\eps, \delta)$-DP} for short) with respect to (w.r.t.) a neighbor relation $\sim$ if for every two datasets $x_0,x_1\in \calX$ satisfying $x_0\sim x_1$, the random variables $\cM(x_0)$ and $\cM(x_1)$ are $(\eps, \delta)$-indistinguishable.
\end{definition}

An example of an $(\eps, \delta)$-DP NIM is the approximate randomized response, defined as follows:
\begin{definition}[$\RR_{\eps, \delta}$]\label{def:rr}
    For $\eps\geq 0$ and $0\leq \delta\leq 1$, the \emph{approx randomized response} $\RR_{\eps, \delta}$ is a non-interactive mechanism that takes a bit $b\in\{0,1\}$ as input and returns a pair according to the following distribution:
    \begin{equation*}
        \RR_{\eps, \delta}(b) = 
        \begin{cases}
            (\top, b) &\quad\text{w.p. $\delta$}\\
            (\bot, b) &\quad\text{w.p. $(1-\delta)\frac{e^\eps}{1+e^\eps}$}\\
            (\bot, 1-b) &\quad\text{w.p. $(1-\delta)\frac{1}{1+e^\eps}$}
        \end{cases}
    \end{equation*}
\end{definition}

\begin{lemma}[\cite{KairouzOV15}]\label{lem:rr-is-dp}
    For every $\eps\geq 0$ and $0\leq \delta\leq 1$, the approx randomized response $\RR_{\eps, \delta}$ is $(\eps, \delta)$-DP.
\end{lemma}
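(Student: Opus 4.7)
The plan is to verify the inequality of Definition~\ref{def:indistinguishability} directly on the four-element output space $\{\top,\bot\}\times\{0,1\}$, exploiting a clean split of any event $Y$ into its ``$\top$-part'' and its ``$\bot$-part''. By the symmetry of $\RR_{\eps,\delta}$ in the two inputs (swapping $b\leftrightarrow 1-b$ just exchanges the two outputs $(\bot,0)$ and $(\bot,1)$, and the two outputs $(\top,0)$ and $(\top,1)$), it will suffice to show one direction of the indistinguishability bound, say $\Pr[\RR_{\eps,\delta}(0)\in Y]\leq e^\eps\cdot\Pr[\RR_{\eps,\delta}(1)\in Y]+\delta$ for every $Y\subseteq\{\top,\bot\}\times\{0,1\}$; the other direction then follows by the same argument applied to the relabeled mechanism.

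First I would write down the point probabilities from Definition~\ref{def:rr}, organized as two rows (one per input $b\in\{0,1\}$) and four columns (one per output). For the $\top$-part $Y_\top := Y\cap(\{\top\}\times\{0,1\})$, the probability mass under either input is at most $\delta$, since the only $\top$-output under input $b$ is $(\top,b)$, which has probability exactly $\delta$. This is where the additive $\delta$ slack comes from. For the $\bot$-part $Y_\bot := Y\cap(\{\bot\}\times\{0,1\})$, I would check the ratio $\Pr[\RR_{\eps,\delta}(0)=y]/\Pr[\RR_{\eps,\delta}(1)=y]$ for each of the two singletons $y\in\{(\bot,0),(\bot,1)\}$: one ratio equals $e^\eps$ and the other equals $e^{-\eps}$, both bounded by $e^\eps$. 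Consequently, for any subset $Y_\bot$ of these two singletons, summing the numerator and denominator preserves the bound, giving $\Pr[\RR_{\eps,\delta}(0)\in Y_\bot]\leq e^\eps\cdot\Pr[\RR_{\eps,\delta}(1)\in Y_\bot]$.

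To finish, I would combine the two cases: since $Y=Y_\top\sqcup Y_\bot$ is a disjoint union,
\[
\Pr[\RR_{\eps,\delta}(0)\in Y]
=\Pr[\RR_{\eps,\delta}(0)\in Y_\top]+\Pr[\RR_{\eps,\delta}(0)\in Y_\bot]
\leq \delta+e^\eps\cdot\Pr[\RR_{\eps,\delta}(1)\in Y_\bot]
\leq e^\eps\cdot\Pr[\RR_{\eps,\delta}(1)\in Y]+\delta,
\]
where in the last step I use $\Pr[\RR_{\eps,\delta}(1)\in Y_\bot]\leq \Pr[\RR_{\eps,\delta}(1)\in Y]$. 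Applying the symmetric argument with the roles of $0$ and $1$ swapped yields the reverse inequality, completing the verification of Definition~\ref{def:dp-nim} under the trivial neighbor relation $0\sim 1$ on $\{0,1\}$.

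There is no real obstacle here — the only thing that requires care is the bookkeeping of which output is the ``$\top$-leak'' under which input, so that the additive $\delta$ slack is only charged once and is not confused with the multiplicative $e^\eps$ bound on the $\bot$-part. A minor sanity check I would perform before writing the final version is that the four point probabilities for each input sum to $\delta+(1-\delta)(e^\eps+1)/(1+e^\eps)=1$, confirming that $\RR_{\eps,\delta}$ is a well-defined probability distribution.
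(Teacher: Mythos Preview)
Your proof is correct. The paper does not actually prove this lemma; it simply cites it from \cite{KairouzOV15} without argument, so there is no in-paper proof to compare against. Your direct verification via the $\top$/$\bot$ split is a clean and standard way to establish the bound.
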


The reason \cite{KairouzOV15} introduced $\RR_{\eps, \delta}$ is that it is "universal" among $(\eps, \delta)$-DP NIMs in the following sense:

\begin{theorem}[\cite{KairouzOV15}]
    A mechanism $\cM:\calX\to \calY$ is $(\eps, \delta)$-DP w.r.t. a neighbor relation $\sim$ if and only if for any $x_0\sim x_1$, there exists a randomized ``post-processor'' $\cP:\calY\to \zo\times \{\top, \bot\}$ such that for both $b\in\zo$, $\cM(x_b)$ is identically distributed to $\cP(\RR_{\eps, \delta}(b))$.
\end{theorem}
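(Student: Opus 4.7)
The plan is to prove both implications separately. For the easy direction, suppose that for every $x_0 \sim x_1$ there is a randomized post-processor $\cP$ with $\cM(x_b)\equiv \cP(\RR_{\eps,\delta}(b))$ for $b\in\zo$. Since $\RR_{\eps,\delta}$ is $(\eps,\delta)$-DP (Lemma~\ref{lem:rr-is-dp}) and $(\eps,\delta)$-indistinguishability is preserved under (randomized) post-processing, $\cM(x_0)$ and $\cM(x_1)$ are $(\eps,\delta)$-indistinguishable, so $\cM$ is $(\eps,\delta)$-DP by Definition~\ref{def:dp-nim}.

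For the hard direction, fix a neighboring pair $x_0\sim x_1$ and let $\mu_b:=\cM(x_b)$, viewed as probability distributions on $\calY$. The first step is a \emph{mixture decomposition lemma}: there exist distributions $\tau_0,\tau_1,\sigma_0,\sigma_1$ on $\calY$ with
\[
\mu_b \;=\; \delta\, \tau_b \,+\, (1-\delta)\, \sigma_b \qquad\text{for } b\in\zo,
\]
where $\sigma_0,\sigma_1$ are $(\eps,0)$-indistinguishable. The construction is based on the Hahn--Jordan decomposition of the signed measures $\mu_0 - e^\eps \mu_1$ and $\mu_1 - e^\eps \mu_0$; by the $(\eps,\delta)$-DP hypothesis, the total mass of the positive part of each is at most $\delta$, and these positive parts (rescaled and padded as necessary to have mass exactly $\delta$) become $\tau_0$ and $\tau_1$ respectively, while the residual measures rescale to the $(\eps,0)$-close pair $\sigma_0,\sigma_1$.

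Given the decomposition, define $\cP$ on input $(s,c)$, where $s\in\{\top,\bot\}$ and $c\in\zo$, as follows: if $s=\top$, output a sample from $\tau_c$; if $s=\bot$, output a sample from $D_c:= (e^\eps \sigma_c - \sigma_{1-c})/(e^\eps - 1)$. Nonnegativity of $D_c$ is precisely the $(\eps,0)$-indistinguishability $\sigma_{1-c}\le e^\eps \sigma_c$, and $\int D_c = 1$ since $\int \sigma_c = 1$. A direct calculation using Definition~\ref{def:rr} yields
\[
\cP(\RR_{\eps,\delta}(b)) \;=\; \delta\, \tau_b \,+\, (1-\delta)\!\left(\tfrac{e^\eps}{1+e^\eps} D_b + \tfrac{1}{1+e^\eps} D_{1-b}\right) \;=\; \delta\, \tau_b + (1-\delta)\, \sigma_b \;=\; \mu_b,
\]
where the middle equality holds because $D_0, D_1$ were chosen precisely so that $(e^\eps D_b + D_{1-b})/(1+e^\eps) = \sigma_b$ for both $b\in\zo$.

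The main obstacle is the decomposition lemma: in a general measurable space $\calY$ it requires Radon--Nikodym derivatives with respect to a dominating measure (e.g.\ $\mu_0 + \mu_1$), and a careful padding argument when $\int (d\mu_0 - e^\eps\, d\mu_1)_+ < \delta$, so that $\tau_0, \tau_1$ carry total mass exactly $\delta$ without disturbing the $(\eps,0)$-closeness of the residual $\sigma_0,\sigma_1$. A secondary edge case is $\eps=0$, where the formula for $D_c$ degenerates; there the decomposition reduces to the standard coupling characterization of statistical distance at most $\delta$, and $\cP$ on input $(\bot,c)$ can simply output a sample from $\sigma_0=\sigma_1$ directly (ignoring $c$).
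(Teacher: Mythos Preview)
The paper does not supply its own proof of this theorem; it is quoted as a result of \cite{KairouzOV15}. (Note also that the type signature in the statement is a typo: for $\cP(\RR_{\eps,\delta}(b))$ to be a $\calY$-valued random variable identically distributed to $\cM(x_b)$, the post-processor must have type $\cP:\{\top,\bot\}\times\zo\to\calY$, not $\calY\to\zo\times\{\top,\bot\}$. Your argument addresses the intended statement.)

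Your proof is correct in structure and is the standard constructive route. The formula $D_c=(e^\eps\sigma_c-\sigma_{1-c})/(e^\eps-1)$ is exactly the device behind \cite{KairouzOV15}, and the same identity reappears later in this paper when the authors prove the interactive analogue: compare your $D_c$ with the definition of $\psi_t^b$ in Section~\ref{sec:post-irr}, and your $\tau_b$ with their $\phi_t^b$. So while the paper gives no proof here, its Section~\ref{sec:post-irr} construction, specialized to one round, is precisely your argument.

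One caution: the ``mixture decomposition lemma'' you isolate is not merely an obstacle but is \emph{equivalent} to the theorem (given the decomposition, your $D_c$ step finishes the job; conversely, given any $\cP$, setting $\tau_b=\cP((\top,b))$ and $\sigma_b=\frac{e^\eps}{1+e^\eps}\cP((\bot,b))+\frac{1}{1+e^\eps}\cP((\bot,1-b))$ recovers the decomposition). Your Hahn--Jordan sketch is the right starting point, but the padding step is genuinely delicate: after passing to $\lambda_b=\min(\mu_b,e^\eps\mu_{1-b})$, the two residuals generically have \emph{unequal} masses $1-a_0\neq 1-a_1$, and naive rescaling or symmetric trimming can break the $(\eps,0)$-closeness. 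One clean way to carry this out is exactly what the paper does in Section~\ref{subsec:construction}: set up, pointwise, a constraint-satisfaction problem for $(\phi^0,\phi^1)$ with the lower bound $\phi^b\ge\rml^b$ (your positive part), the upper bound from \ref{item:cond2} (your $(\eps,0)$-closeness), and the mass constraint \ref{item:cond3}, and take a maximal feasible solution. The original \cite{KairouzOV15} argument sidesteps the explicit padding altogether by invoking Blackwell's comparison-of-experiments theorem, which is a genuinely different (and less constructive) route than yours.
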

\subsection{Interactive Mechanisms (IM)}\label{subsec:preliminaries-im}

Unlike an NIM, an \emph{interactive mechanism (IM)} continues interacting with the analyst and answers multiple adaptively asked queries. An IM is defined formally as follows.

\begin{definition}[Interactive Mechanism (IM)]\label{def:im}
    An \emph{interactive mechanism (IM)} is a randomized state function $\cM:S_\cM \times Q_\cM \rightarrow S_\cM\times A_\cM$, where
    \begin{itemize}\itemindent=-17pt
        \item $S_\cM$ is the state space of $\cM$,
        \item $Q_\cM$ is the set of possible incoming messages or queries for $\cM$,
        \item $A_\cM$ is the set of possible outgoing messages or answers generated by $\cM$.
    \end{itemize}
    Upon receiving a query $m\in Q_\cM$, according to some distribution, $\cM$ updates its current state $s\in S_\cM$ to a new state $s'\in S_\cM$ and generates an answer $m'\in A_\cM$, which is described by $(s', m')\gets\cM(s, m)$.
    The set $S_\cM^\init\subseteq S_\cM$ denotes the set of possible initial states for $\cM$. 
    For $s^\init\in S_\cM^\init$, $\cM(s^\init)$ represents the interactive mechanism $\cM$ with the initial state $s^\init$.
\end{definition}

\begin{remark}
    In the differential privacy literature, a query typically refers to a question about the dataset held by a mechanism. However, in this paper, we use query as a shorthand for query message, meaning any input message to a mechanism that requires a response.
\end{remark}

The state of an IM $\cM$ serves as its memory. The initial state of $\cM$ can include secret information, such as a dataset $x\in\calX$, or it can simply be an empty memory. Thus, in some cases, the size of the initial state space $S_\cM^\init$ is as large as the family of datasets $\calX$, while in some other cases, $S_\cM^\init$ is a singleton consisting of a single state representing the empty memory.

\begin{notation}\label{not:singleton-init-space}
    If $S_\cM^{\init}$ is a singleton $\{s\}$, we denote $\cM(s)$ as $\cM$.
\end{notation}

An example of an IM is the \emph{interactive randomized response} mechanism, denoted $\irr_{\eps,\delta}$, which will repeatedly be used in our proofs. Ghazi et al.~\cite{ghazi2025privacy} define this mechanism to initially flip a coin and determine whether to answer all subsequent queries truthfully or in a privacy-preserving manner. Our definition of the interactive randomized response mechanism is different but shares the same intuition. 

Recall that the standard randomized response mechanism $\RR_{\eps,\delta}$ is an NIM that outputs a pair from the space $\{\top, \bot\}\times\zo$. In contrast, we define the interactive variant $\irr_{\eps,\delta}$ to output the pair in two separate steps: it first produces an element from $\{\top, \bot\}$, and later returns a bit in $\zo$, such that the joint distribution of the two outputs matches that of $\RR_{\eps,\delta}$. Specifically, $\irr_{\eps,\delta}$ is an IM with a single query message $q^*$. Given an initial state $b \in \zo$, $\irr_{\eps, \delta}(b)$ responds to the first query $q^*$ with an element in $\{\top, \bot\}$ and returns a bit in $\zo$ when receiving $q^*$ for the second time. $\irr_{\eps, \delta}(b)$ halts the communication upon receiving $q^*$ for the third time. The formal definition is given below.

\begin{definition}[$\irr_{\eps, \delta}$]
\label{def:irr}
    For any $\eps\geq 0$ and $0\leq \delta\leq 1$, the interactive randomized response mechanism $\irr_{\eps, \delta}$ is an IM with state space $S=\{\varnull\}\cup\zo\cup \{\top, \bot\}\times\zo$, initial state space $S^\init=\zo$, and query space $Q=\{q^*\}$. $\irr_{\eps, \delta}$ is represented by a randomized function that maps a state $s \in S$ and query $q^* \in Q$ to a new state $s' \in S$ and a message $m$ as follows:
    \begin{itemize}[left=10pt]
        \item If $s$ is a bit $b\in\zo$ (i.e., $q^*$ is received for the first time), the output $m$ is set to $\bot$ with probability $\delta$ and to $\top$ with probability $1 - \delta$. The new state is then set to $s' = (m, b)$.
        \item If $s$ is a pair $(\tau, b)$ in $\{\top, \bot\}\times\zo$ (i.e., $q^*$ is received for the first time), $s'$ is set to $\varnull$ and $m$ is determined as follows:
        \begin{itemize}
            \item If $\tau=\bot$ (i.e., the first response was $\bot$), then $m=b$ with probability $1$.
            \item If $\tau=\top$ (i.e., the first response was $\top$), $m=b$ with probability $\frac{e^\eps}{1 + e^\eps}$ and $m=1-b$ with probability $\frac{1}{1 + e^\eps}$.
        \end{itemize}
    \end{itemize}
\end{definition}

\subsubsection{Interaction of IMs.} 
Two interactive mechanisms $\cM_1$ and $\cM_2$ satisfying $Q_{\cM_1}=A_{\cM_2}$ and $Q_{\cM_2}=A_{\cM_1}$ can interact with each other through message passing. For initial states $s_1\in S_{\cM_1}^\init$ and $s_2\in S_{\cM_2}^\init$, the interaction between $\cM_1(s_1)$ and $\cM_2(s_2)$ is defined as follows: It starts with round 1. Mechanism $\cM_1(s_1)$ sends a message to $\cM_2(s_2)$ in the odd rounds, and $\cM_2(s_2)$ responds in the subsequent even rounds. The interaction continues until one of the mechanisms halts the communication. A mechanism halts the communication by sending the specific message $\halt$ at its round.

In Definition~\ref{def:im}, we defined an IM as a stateful randomized algorithm mapping its current state and an input message to a new state and an output message. To initiate the interaction and send the first message $m_1$, $\cM_1(s_1)$ invokes $\cM_1(s_1, \lambda)$, where $\lambda$ is the empty message.

\begin{definition}[View of an IM] \label{def:view}
    Let $\cM_1$ and $\cM_2$ be two interactive mechanisms such that $Q_{\cM_1} = A_{\cM_2}$ and $Q_{\cM_2} = A_{\cM_1}$. Let $s_1$ and $s_2$ be initial states of $\cM_1$ and $\cM_2$, respectively. Consider the interaction between $\cM_1(s_1)$ and $\cM_2(s_2)$, initiated by a message from $\cM_1(s_1)$. The \emph{view} of $\cM_1(s_1)$ in this interaction, denoted $\View(\cM_1(s_1), \cM_2(s_2))$, is defined as
    $$\View(\cM_1(s_1), \cM_2(s_2)) = (r_{\cM_1}, m_1, m_2, m_3, \dots),$$
    where $r_{\cM_1}$ represents the internal randomness of $\cM_1$, and $(m_1, m_2, \dots)$ is the transcript of messages exchanged between $\cM_1$ and $\cM_2$.
    We define $\Pi_{\cM_1, \cM_2}$ as the family of all possible views over all initial states $s_1\in S_{\cM_1}^\init$ and $s_2\in S_{\cM_2}^\init$.
\end{definition}

Throughout the paper, whenever discussing the interaction between two IMs or the view of a mechanism in an interaction, we implicitly assume that the query and answer sets of the involved mechanisms match, i.e., $Q_{\cM_1}=A_{\cM_2}$ and $Q_{\cM_2}=A_{\cM_1}$.

\begin{definition}[Equivalent IMs]\label{def:identical-IMs}
     Let $\cM$ and $\cM'$ be two interactive mechanisms with $Q_{\cM}=Q_{\cM'}$ and $A_{\cM}=A_{\cM'}$. Let $s$ and $s'$ be initial states for $\cM$ and $\cM'$, respectively. The mechanisms $\cM(s)$ and $\cM'(s')$ are \emph{equivalent} if for every adversary $\cA$ with $A_{\cA}=Q_{\cM}$ and $Q_{\cA}=A_{\cM}$, the random variables $\View(\cA, \cM(s))$ and $\View(\cA, \cM'(s'))$ are identically distributed.
\end{definition}

\subsubsection{Differential Privacy for IMs.}
To define differential privacy for an interactive mechanism $\cM$, we compare the views of a so-called \emph{adversary} interacting with $\cM(s)$ and $\cM(s')$ for neighboring initial states $s, s'$.

\begin{definition}[Adversary]\label{def:adversary}
    An (adaptive) adversary $\cA$ is an interactive mechanism with a singleton initial state space. Let $s^\init$ be the only initial state of $\cA$. By Notation~\ref{not:singleton-init-space}, we write $\cA$ instead of $\cA(s^\init)$.
\end{definition}

Throughout this paper, adversaries always send the first message. Moreover, since adversaries are stateful algorithms, they can choose their messages adaptively based on the history of messages exchanged, meaning that all adversaries in this paper are \emph{adaptive}.

Existing works \cite{vadhan2021concurrent, vadhan2022concurrent, lyu2022composition, haney2023concurrent} view the initial state of $\cM$ as a dataset, i.e., $S_\cM^\init = \calX$, and (as is usual) they define neighbor relations as binary relations on $\calX$. However, for us it will be convenient to conceptually separate $S_\cM^\init$ from the data (which is changing continuously over time), but still define DP w.r.t. neighbor relations on $S_\cM^\init$. This leads to the following definition of differential privacy for IMs.

\begin{definition}[$(\eps,\delta)$-Differential Privacy for IMs]\label{def:dp-im}
    Let $\cM$ be an interactive mechanism, and let $\sim$ be a neighbor relation on $S_\cM^\init$. For $\eps\geq 0$ and $0\leq \delta\leq 1$, $\cM$ is $(\eps,\delta)$-differentially private (or $(\eps,\delta)$-DP for short) w.r.t. $\sim$ against adaptive adversaries if for every pair of neighboring initial states $s_0,s_1\in S_\cM^\init$ satisfying $s_0\sim s_1$ and every adaptive adversary $\cA$, the random variables $\View(\cA, \cM(s_0))$ and $\View(\cA, \cM(s_1))$ are $(\eps, \delta)$-indistinguishable.
\end{definition}

\subsection{Interactive Post-Processing (IPM)}\label{subsec:preliminaries-ipm}

For NIMs, a randomized post-processing function $\cT:\calY\to\mathcal{Z}$ maps the output of a mechanism $\cN:\calX\to\calY$ to an element of $\mathcal{Z}$. Differential privacy for NIMs is known to be preserved under post-processing.

Haney et al.~\cite{haney2023concurrent} generalize post-processing functions to \emph{interactive post-processing mechanisms (IPMs)} as follows and show that interactive post-processing preserves privacy guarantees for IMs.

\sloppy
\begin{definition}[Interactive Post-Processing Mechanism (IPM) \cite{haney2023concurrent}]\label{def:ipm}
    An interactive post-processing mechanism (IPM) is a stateful randomized algorithm $\cP : S_\cP \times \left((\{\ttL\} \times Q_\cP^L) \cup (\{\ttR\} \times Q_\cP^R)\right)\rightarrow S_\cP \times \left((\{\ttL\} \times A_\cP^L) \cup (\{\ttR\} \times A_\cP^R)\right)$ that stands between two interactive mechanisms, referred to as the left and right mechanisms, communicating with them via message passing. 
    The state space of $\cP$ is denoted by $S_\cP$, while $Q_\cP^L$ and $Q_\cP^R$ represent the sets of possible messages received by $\cP$ from the left and right mechanisms, respectively. Similarly, $A_\cP^L$ and $A_\cP^R$ denote the sets of possible messages $\cP$ sends to the left and right mechanisms. $\cP$ randomly maps its current state $s \in S_\cP$, an indicator $v\in \{\ttL, \ttR\}$ specifying whether the incoming message is from the left or right mechanism, and a message $m\in Q_\cP^L\cup Q_\cP^R$ to a new state $s'\in S_\cP$, an indicator $v'\in \{\ttL,\ttR\}$ specifying whether the outgoing message is for the left or right mechanism, and a message $m'\in A_\cP^L\cup A_\cP^R$. This is described by $(s', v', m') \gets \cP(s, v, m)$.
\end{definition}

The post-processing of an NIM $\cN$ by a non-interactive post-processing function $\cT$, denoted as $\cT\circ\cN$, is an NIM that first runs $\cN$ on the input dataset and then applies $\cT$ to $\cN$'s output, returning $\cT$'s output. In contrast, the post-processing of an IM $\cM$ by an IPM $\cP$, denoted by $\cP\circstar\cM$, is an IM that runs $\cP$ and $\cM$ internally. In an interaction with another IM $\cA$, $\cP\circstar\cM$ forwards its input messages from $\cA$ to $\cP$ as left messages. Then, $\cP$ and $\cM$ communicate multiple times, with $\cM$ acting as the right mechanism, until $\cP$ returns a response for its left mechanism, at which point $\cP\circstar\cM$ returns this response to $\cA$. Note that $\cP$ may return an answer for its left mechanism immediately after receiving a left message, without interacting with $\cM$. The formal definition of $\cP\circstar\cM$ is given below:

\begin{definition}[Post-Processing of an IM $\cM$ by an IPM $\cP$ ($\cP\circstar\cM$)\cite{haney2023concurrent}]\label{def:p_circ_m}
    Let $\cM$ be an IM and $\cP$ an IPM such that $Q_\cP^R=A_\cM$ and $A_\cP^R=Q_\cM$.
    The \emph{post-processing} of $\cM$ by $\cP$ is an interactive mechanism $\cP\circstar\cM: S_{\cP\circstar\cM} \times Q_{\cP}^L \to S_{\cP\circstar\cM} \times A_{\cP}^L$, defined in Algorithm \ref{alg:p_circ_m}. The initial state space of $\cP\circstar\cM$ equals $S_{\cP\circstar\cM}^\init= S_\cM^\init\times S_\cP^\init$. For readability, when $\cP\circstar\cM$ is initialized with the state $(s^\cM, s^\cP)\in S_{\cP\circstar\cM}^\init$, we denote it as $\cP(s^\cP)\circstar\cM(s^\cM)$ instead of $\left(\cP \circstar \cM\right)\left((s^\cM, s^\cP)\right)$.
\end{definition}

\begin{algorithm}
    \caption{Post-Processing of $\cM$ by $\cP$, denoted by $\cP \circstar \cM$~\cite{haney2023concurrent}.}\label{alg:p_circ_m}
    \begin{algorithmic}
        \Procedure{$\left(\cP \circstar \cM\right)$}{$s, m$}:
        \State $(s^\cM, s^\cP)\gets s$
        \State $(s^\cP, v, m') \gets \cP(s^\cP, \ttL, m)$
        \While{$v = \ttR$}
            \State $(s^\cM, m) \gets \cM(s^\cM, m')$
            \State $(s^\cP, v, m') \gets \cP(s^\cP, \ttR, m)$
        \EndWhile
        \State \Return $((s^\cM, s^\cP), m')$
    \EndProcedure
    \end{algorithmic}
\end{algorithm}

Haney et al.~\cite{haney2023concurrent} shows that if an IM $\cM$ is $(\eps, \delta)$-differentially private, then for any IPM $\cP$ with a singleton initial state space, $\cP\circstar\cM$ is also $(\eps, \delta)$-differentially private. This result is formally stated below.

\begin{lemma}[Post-Processing Lemma for IMs \cite{haney2023concurrent}]\label{lem:post-im}
    Let $\cM$ be an IM with an initial state space $S_\cM^\init$ and $\cP$ be an IPM with a singleton initial state space $S_\cP^\init=\{s\}$. For every pair of initial states $s_0, s_1\in S_\cM^\init$, the following holds: If for every adversary $\cA$, the random variables $\View(\cA, \cM(s_0))$ and $\View(\cA, \cM(s_1))$ are $(\eps, \delta)$-indistinguishable, then for every adversary $\cA$, the random variables $\View(\cA, \cP\circstar\cM(s_0))$ and $\View(\cA, \cP\circstar\cM(s_1))$ are $(\eps, \delta)$-indistinguishable. \footnote{In \cite{haney2023concurrent}, the definitions of IMs and IPMs are more restrictive. Specifically, for any IM $\cM$, $S_\cM=Q_\cM=A_\cM=\zo^*$, and for any IPM $\cP$, $S_\cP=M_\cP=\zo^*$; however, their results do not rely on these restrictions, and the interactive post-processing lemma holds for arbitrary sets $S_\cM$, $Q_\cM$, $A_\cM$, $S_\cP$, and $M_\cP$, which are not necessarily countable.}
\end{lemma}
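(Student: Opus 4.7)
The plan is to reduce to the hypothesis by absorbing the post-processor $\cP$ into the adversary. Given any adversary $\cA$ attacking $\cP \circstar \cM(s_b)$, I would construct a new adversary $\cA'$ whose internal state is the pair $(s^{\cA}, s^{\cP})$ and whose randomness is the concatenation $(r_\cA, r_\cP)$. On each external round, $\cA'$ runs a small internal loop that alternates between feeding messages to the simulated copy of $\cP$ and, depending on whether $\cP$ produces a left- or right-outgoing message, either consulting the simulated copy of $\cA$ (looping again) or emitting a message to $\cM$ (terminating the internal loop). This is exactly the \textbf{while} loop in Algorithm~\ref{alg:p_circ_m}, so its termination and the structure of the resulting transcript are inherited from the definition of $\cP \circstar \cM$. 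Because $S_\cP^\init = \{s\}$ is a singleton, $\cP$ carries no initial secret that $\cA'$ would need to guess, and $\cA'$ is a well-defined adversary in the sense of Definition~\ref{def:adversary}.

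Next, I would argue that $\View(\cA, \cP \circstar \cM(s_b))$ can be recovered as a deterministic function $\phi$ of $\View(\cA', \cM(s_b))$. The view of $\cA'$ contains both $(r_\cA, r_\cP)$ and the full transcript of messages exchanged between $\cA'$ (playing the role of $\cP$) and $\cM(s_b)$. From this data, the internal loop can be replayed deterministically to reconstruct the message sequence between $\cA$ and $\cP \circstar \cM(s_b)$, together with $r_\cA$. By directly comparing the step-by-step semantics of Algorithm~\ref{alg:p_circ_m} against the construction of $\cA'$, the random variables $\phi\bigl(\View(\cA', \cM(s_b))\bigr)$ and $\View(\cA, \cP \circstar \cM(s_b))$ are identically distributed for each $b \in \{0,1\}$.

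The conclusion then follows from two standard facts. First, the hypothesis applied to the adversary $\cA'$ yields that $\View(\cA', \cM(s_0))$ and $\View(\cA', \cM(s_1))$ are $(\eps, \delta)$-indistinguishable. Second, $(\eps, \delta)$-indistinguishability (Definition~\ref{def:indistinguishability}) is closed under any deterministic post-processing, since for every measurable set $Y$ in the codomain of $\phi$, applying the defining inequalities to the preimage $\phi^{-1}(Y)$ immediately gives the same inequalities for $Y$ under the pushforward distributions. Combining this closure property with the equality in distribution from the previous paragraph gives the desired $(\eps, \delta)$-indistinguishability of $\View(\cA, \cP \circstar \cM(s_0))$ and $\View(\cA, \cP \circstar \cM(s_1))$. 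I expect the main obstacle to be purely notational: carefully describing the state space, query/answer sets, and transition function of $\cA'$ inside the formalism of Definition~\ref{def:im}, and verifying that its internal rounds correctly simulate a single external round of $\cP \circstar \cM$. Beyond that, no probabilistic computation beyond the definition of indistinguishability is required.
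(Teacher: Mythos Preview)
The paper does not prove this lemma; it is quoted from \cite{haney2023concurrent} and used as a black box. Your proposal is the standard and correct argument (absorb $\cP$ into the adversary, then invoke closure of $(\eps,\delta)$-indistinguishability under deterministic post-processing), and it matches the proof in the cited reference.
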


IPMs are not limited to interacting with IMs. Two IPMs can interact with each other as well, as follows:

\begin{definition}[Chain of IPMs]\label{def:chain-IPMs}
    Let $\cP_1$ and $\cP_2$ be two IPMs satisfying $Q_{\cP_1}^R=A_{\cP_2}^L$ and $A_{\cP_1}^R=Q_{\cP_2}^L$. $\cP_1\circstar\cP_2$ is an IPM with state space $S_{\cP_1\circstar\cP_2}=S_{\cP_1}\times S_{\cP_2}$, left query space $Q_{\cP_1\circstar\cP_2}^L=Q_{\cP_1}^L$, left answer space $A_{\cP_1\circstar\cP_2}^L=A_{\cP_1}^L$, right query space $Q_{\cP_1\circstar\cP_2}^R=Q_{\cP_2}^R$, and right answer space $A_{\cP_1\circstar\cP_2}^R=A_{\cP_2}^R$. This mechanism is formally defined in Algorithm~\ref{alg:p1-circ-p2}. For readability, when $\cP_1\circstar\cP_2$ is initialized with $(s^1, s^1)\in S_{\cP_1\circstar\cP_2}^\init$, we denote it as $\cP_1(s^1)\circstar\cP_2(s^2)$ instead of $\left(\cP_1\circstar\cP_2\right)\left((s^1, s^2)\right)$.
\end{definition}
\begin{algorithm}
    \caption{Interactive Post-Processing Mechanism $\cP_1 \circstar \cP_2$.}\label{alg:p1-circ-p2}
    \begin{algorithmic}
        \Procedure{$\left(\cP_1 \circstar \cP_2\right)$}{$s, v, m$}:
        \State $(s^1, s^2)\gets s$
        \While{True}
            \If{$v=\ttL$}
                \State $(s^1, v, m) \gets \cP_1(s^1, v, m)$
                \If{$v=\ttL$}
                    \State Break the while loop
                \EndIf
            \Else
                \State $(s^2, v, m) \gets \cP_2(s^2, v, m)$
                \If{$v=\ttR$}
                    \State Break the while loop
                \EndIf
            \EndIf
        \EndWhile
        \State \Return $((s^1, s^2), v, m)$
    \EndProcedure
    \end{algorithmic}
\end{algorithm}

\section{Continual Mechanisms}\label{sec:cm}
In this section, we introduce \emph{continual mechanisms (CMs)} and propose a general definition of differential privacy (DP) for them. This definition unifies and extends DP for IMs and DP under continual observation. Its generality and simplicity allow us to prove (filter) concurrent composition theorems for CMs without requiring complicated proofs. We later apply these theorems to simplify existing analyses and strengthen results in the continual observation setting.

This section is based on two crucial observations: (i) A message, as sent or received by IMs and IPMs, is a general and powerful concept capable of encoding various types of instructions, and (ii) the essence of DP is to protect the privacy of a single bit that distinguishes neighboring states, sequences, etc. The second observation has been used many times in the DP literature and was specifically used in \cite{pricedpco} to model adaptivity in continual observation. Using these key observations, we first reformulate the definition of DP for IMs and continual observation. Then, inspired by the new formulations, we introduce continual mechanisms (CMs) and formally define DP in this setting.

\subsection{An Alternative Definition of DP for IMs.}\label{subsec:alternative-im}
By Definition~\ref{def:dp-im}, an interactive mechanism $\cM$ with initial state space $S_{\cM}^\init$ is $(\eps, \delta)$-DP w.r.t. a neighboring relation $\sim$, if for every two initial states $s_0, s_1\in S_{\cM}^\init$ such that $s_0\sim s_1$ and every adversary $\cA$, the views of $\cA$ in the interaction with $\cM(s_0)$ and $\cM(s_1)$ are $(\eps, \delta)$-indistinguishable. This means that for the worst-case (i.e., adversarial) choice of neighboring initial states, an adversary by adaptively asking queries should not be able to determine whether $\cM$ is running on $s_0$ or $s_1$.

In an alternative definition, an adversary sends a pair of neighboring initial states $(s_0, s_1)$ as its first message and adaptively asks a sequence of queries $q_1, q_2, \dots$ in the subsequent messages. In this setting, $\cM$ receives one of $s_0$ and $s_1$ as its first message and responds with an acknowledgment message $\top$. Thereafter, $\cM$ answers queries based on the received state, and the goal of the adversary is to determine whether $\cM$ was initialized with $s_0$ or $s_1$. In this definition, the actual initial state of $\cM$ (not the one received as the first message) carries no secret information and serves merely as an empty memory. Therefore, the initial state space of $\cM$ consists of a single fixed initial state. 

To formalize this, we restrict attention to adversaries that send a pair of neighboring initial states $(s_0, s_1)$ as their first message and pairs of identical queries $(q_i, q_i)$ in its subsequent messages. We also define an IPM $\I$ with the initial state space $S_{\I}^\init=\zo$ to stand between the adversary and $\cM$, filtering the pairs based on its, i.e., $\I$'s, initial state. Specifically, for $b\in\zo$, in the interaction between an adversary $\cA$ and $\I(b)\circstar\cM$, upon receiving a message $(m_0,m_1)$ from $\cA$, $\I(b)$ forwards $m_b$ to $\cM$ and returns $\cM$'s answer unchanged to $\cA$. 
With this formulation, an interactive mechanism $\cM$ is $(\eps, \delta)$-DP if for every adversary of the above form, the random variables $\View(\cA, \I(0)\circstar\cM)$ and $\View(\cA, \I(1)\circstar\cM)$ are $(\eps, \delta)$-indistinguishable.

We will enforce the restrictions on the adversary by an IPM, which we call a \emph{verifier} $\V$. The verifier checks whether the initial message is a pair of neighboring datasets and the subsequent messages of the adversary are of the form $(q, q)$; only if these checks pass are the messages forwarded to $\I$. 

\subsection{An Alternative Definition of DP under Continual Observation.}
Dwork, Naor, Pitassi, and Rothblum~\cite{Dwork2010} introduced \emph{differential privacy under continual observation}, where the dataset is not given at the initialization time and instead evolves over time. In this model, the mechanism receives either an empty record or a data record at each step. If the record is non-empty, the mechanism adds it to its dataset. Then, whether the input record is empty or not, the mechanism answers a predetermined question about its current dataset. 

To capture ``event-level privacy'' (where the privacy unit is a single record), Dwork et al.~\cite{Dwork2010} define \emph{neighboring} as a binary relation on sequences of data and empty records: two sequences are neighboring if they are identical except for the presence or absence of a record in at most one step. They called a mechanism \emph{$(\eps, \delta)$-DP under continual observation} if the distributions of answer sequences are $(\eps, \delta)$-indistinguishable for every two neighboring input sequences. This definition does not allow for data sequences that are adaptively chosen based on the answers released.

Jain, Raskhodnikova, Sivakumar, and Smith~\cite{pricedpco} formalized the definition of DP against adaptive adversaries in the continual observation model as follows: An adaptive adversary $\cA$ chooses a single (data or empty) record at every step, except for one \emph{challenge} step, selected by the adversary, where $\cA$ sends a pair of records. They called a mechanism $\cM$ \emph{$(\eps, \delta)$-differentially private against adaptive adversaries} in the continual observation model if for any such adaptive adversary $\cA$, the views of $\cA$---consisting of $\cA$'s internal randomness and the transcript of messages sent and received by $\cA$---are $(\eps, \delta)$-indistinguishable when the first or the second element of the pair in the challenge step is handed to $\cM$.

Denisov, McMahan, Rush, Smith and Thakurta~\cite{denisov2022improved} generalize this definition by allowing the adversary to send a pair of (data or empty) records at \emph{every} step such that the sequence of first records and the sequence of second records are neighbors under an arbitrary binary relation~$\sim$ on sequences of data and empty records. A mechanism $\cM$ is said to satisfy $(\eps,\delta)$-DP against an adaptive adversary if, for every adversary~$\cA_\sim$ that adaptively sends pairs of records forming $\sim$-neighbor sequences, the adversary’s views are $(\eps,\delta)$-indistinguishable when~$\cM$ receives the first elements of each pair and when it receives the second elements.

A mechanism $\cM$ in the continual observation model can be viewed as an IM with a single possible initial state, representing an empty dataset. The input message set of $\cM$ consists of the empty record and all possible data records. Under this formulation, the above definition of differential privacy can be restated as follows: For every adversary $\cA_\sim$, the random variables $\View(\cA_\sim, \I(0)\circstar\cM)$ and $\View(\cA_\sim, \I(1)\circstar\cM)$ are $(\eps, \delta)$-indistinguishable, where $\I$ is the interactive post-processing mechanism discussed previously. Again, below we will enforce the restriction on the adversary via another IPM $\V$.

In this alternative definition, all adversary messages are pairs of data or empty records, and the sequences of first and second elements are neighbors. Recall that in the alternative definition of DP for IMs, the first adversary message must be a pair of neighboring initial datasets, while all subsequent adversary messages must be pairs of identical queries. Next, we generalize the definition of differential privacy by allowing arbitrary conditions on the sequence of pairs sent by adversaries, without restricting the content of the messages.

\subsection{Introducing CMs.}
As discussed in the alternative formulation of IMs, we can assume that the initial state space of IMs is a singleton and encode its initial state as a message. A continual mechanism is defined as follows:

\begin{definition}[Continual Mechanism (CM)]\label{def:cm}
    A \emph{continual mechanism (CM)} is an interactive mechanism with a singleton initial state space.
\end{definition}
To the best of our knowledge, all mechanisms in DP literature can be represented as CMs:

\begin{itemize}[left=0pt]
    \item Every non-interactive mechanism $\cN$, such as the Laplace mechanism, can be modeled as a CM that receives a dataset $x$ as its first message and returns $\cN(x)$ in response. Upon receiving any subsequent input message, the CM representing $\cN$ halts the communication. 
    \item Every interactive mechanism $\cM$ with an initial state space $S_\cM^\init$ can be modeled as a CM that receives an initial state $s\in S_\cM^\init$ as its first message and responds with an acknowledgment message $\top$. This CM then answers the next input messages the same as $\cM(s)$.
    \item Mechanisms such as the Sparse Vector Technique (SVT)~\cite{dwork2009complexity,HardtRo10,lyu2016understanding} and the continual counter~\cite{dwork2006calibrating,DBLP:journals/tissec/ChanSS11,mcmahan2022private,dvijotham2024efficient} in the streaming settings can also be represented as CMs receiving two types of messages: question queries and data-update queries. Question queries request information about the current dataset, while data-update queries consist of instructions for modifying the dataset.
\end{itemize}

\subsection{DP for CMs.}
To formulate the general definition of DP for CMs, we introduce two IPMs: an identifier $\I$ and a verifier $\V$. As described earlier, an identifier receives a secret bit $b$ as its initial state and maps messages of the form $(m_0, m_1)$ to $m_b$. To avoid defining a new family of adversaries for each problem or model, we introduce \emph{verifiers}. 

Let $\cM$ be a CM and $\I$ be an identifier for $\cM$. A verifier $\V$ with a \emph{verification function} $f$, denoted by $\V[f]$, stands between an adversary $\cA$ and the interactive mechanism $\I\circstar\cM$. Upon receiving a message from $\cA$, $\V[f]$ applies $f$ to verify whether the sequence of pairs sent by the adversary up to that point satisfies a set of properties. Comparing the sequences formed by the first and second elements of the pairs, $f$ can be used to encode a neighboring property. If the sequences satisfy the property, $\V[f]$ forwards $\cA$'s message to $\I\circstar\cM$; otherwise, it halts the communication. The formal definitions of identifiers, verification functions, and verifiers are given below.

\begin{definition}[Identifier $\I$]\label{def:id}
    An \emph{identifier} $\I$ is an IPM with state space $S_\I$ and initial state space $S_\I^\init=\zo$. $\I$ stands between two IMs, referred to as the left and right mechanisms. While there is no restriction on the format of messages received from the right mechanism, $\I$ only accepts messages of the form $(m_0, m_1)$ from its left mechanism. Upon receiving a message $(m_0, m_1)$ from the left mechanism, an instance of $\I$ with an initial state $b\in\zo$ forwards $m_b$ to the right mechanism. Upon receiving a message from the right mechanism, $\I(b)$ forwards the message unchanged to the left mechanism. 
\end{definition}

\begin{definition}[Verification Function]\label{def:ver-func}
    A verification function $f$ takes a sequence of messages as input and returns either $\top$ or $\bot$. A message sequence $m_1, \dots, m_k$ is $f$-valid if $f(m_1, \dots, m_k)=\top$.
\end{definition}

\begin{remark}\label{rem:universal-set}
    Since the set of all possible messages does not exist, we assume that all message spaces considered in this paper are subsets of some universal set $\mathcal{U}$. For instance, in \cite{haney2023concurrent}, the universal set is taken to be $\zo^*$. The domain of verification functions is $\mathcal{U}^*$.
\end{remark}

\begin{definition}[Verifier $\V$]\label{def:verifier}
    For a verification function $f$, a verifier $\V[f]$ is an IPM with the initial state space $S_{\V[f]}^\init=\{()\}$, where $()$ represents an empty sequence. Since the initial state space of $\V[f]$ is a singleton, we denote an instance of $\V[f]$ with the initial state $()$ as $\V[f]$. 
    The IPM $\V[f]$ stands between two IMs, referred to as the left and right mechanisms. The state of $\V[f]$ consists of the sequence of messages it has received from the left mechanism so far. Upon receiving a message $m$ from the left mechanism, $\V[f]$ appends $m$ to its state and applies the verification function $f$. If the sequence is $f$-valid, then $\V[f]$ forwards the message $m$ to the right mechanism; otherwise, it halts the communication. $\V[f]$ forwards any message from the right mechanism unchanged to the left mechanism.
\end{definition}

\begin{remark}\label{rem:ver-func-assumption-type-check}
    By definition, for every continual mechanism $\cM$ and every verification function $f$, $\V[f]\circstar\I\circstar\cM$ is an IM with two possible initial states. The mechanism $\I\circstar\cM$ accepts messages of the form $(m_0, m_1)$, where both $m_0$ and $m_1$ are in the query space $Q_\cM$ of $\cM$. Therefore, when analyzing $\V[f]\circstar\I\circstar\cM$, we assume that the verification function $f$ always returns $\bot$ whenever a message in its input sequence is not in $Q_\cM\times Q_\cM$, even if not explicitly say so. 
\end{remark}

Remark~\ref{rem:ver-func-assumption-type-check} implies that $\V[f]\circstar\I\circstar\cM$ can interact with any other IM. Thus, in the following definition of DP for CMs, we can use the general definition of adversaries given in Definition~\ref{def:adversary}, without making any assumptions about the family of adversaries. Instead of defining DP w.r.t. neighboring relations, we introduce a more general formulation in terms of verification functions, as follows:

\begin{definition}[$(\eps, \delta)$-Differential Privacy for CMs]\label{def:dp-cm}
    Let $\cM$ be a CM and $f$ a verification function. We say that $\cM$ is \emph{$(\eps, \delta)$-differentially private} (or \emph{$(\eps, \delta)$-DP} for short) w.r.t. $f$ against adaptive adversaries if for every adversary $\cA$, the random variables $\View(\cA, \V[f]\circstar\I(0)\circstar\cM)$ and $\View(\cA, \V[f]\circstar\I(1)\circstar\cM)$ are $(\eps, \delta)$-indistinguishable.
\end{definition}

Let $S_\cM^\init=\{s^\init\}$ be the initial state space of the continual mechanism $\cM$ in Definition~\ref{def:dp-cm}. By definition, the initial state spaces of $\I$ and $\V[f]$ are $\zo$ and $\{()\}$, respectively. Thus, the initial state space of $\V[f]\circstar\I\circstar\cM$ is $\left\{((), 0, s^\init), ((), 1, s^\init)\right\}$. Define $\sim_{01}$ as the universal relation on this set, where every two elements are neighboring. Comparing Definition~\ref{def:dp-cm} and Definition~\ref{def:dp-im} leads to the following corollary:

\begin{corollary}\label{cor:eq-def-dp-cm}
    A continual mechanism $\cM$ is $(\eps, \delta)$-DP w.r.t. a verification function $f$ against adaptive adversaries if and only if the interactive mechanism $\V[f]\circstar\I\circstar\cM$ is $(\eps, \delta)$-DP w.r.t. the neighboring relation $\sim_{01}$ against adaptive adversaries.
\end{corollary}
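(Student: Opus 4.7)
The plan is to observe that the corollary is essentially a definitional unpacking, so the proof amounts to carefully matching the two definitions on the left and right of the biconditional and checking they express the same condition.

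First, I would compute the initial state space of the IM $\V[f]\circstar\I\circstar\cM$. By Definition~\ref{def:chain-IPMs} the IPM $\V[f]\circstar\I$ has initial state space $S_{\V[f]}^\init \times S_\I^\init = \{()\}\times\zo$, and by Definition~\ref{def:p_circ_m} the IM $(\V[f]\circstar\I)\circstar\cM$ has initial state space $\{()\}\times\zo\times\{s^\init\}$, where $s^\init$ is the unique initial state of the CM $\cM$. Moreover, the notation conventions stated after Definitions~\ref{def:p_circ_m} and \ref{def:chain-IPMs} imply that the IM $\V[f]\circstar\I\circstar\cM$ initialized with the state $((),b,s^\init)$ is exactly what the paper denotes by $\V[f]\circstar\I(b)\circstar\cM$ (since the trivial initial states $()$ and $s^\init$ are suppressed).

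Next, I would apply Definition~\ref{def:dp-im} to the interactive mechanism $\V[f]\circstar\I\circstar\cM$ with the neighboring relation $\sim_{01}$. The only neighboring pair under $\sim_{01}$ is $((),0,s^\init)\sim_{01}((),1,s^\init)$, so the definition reduces to requiring that, for every adaptive adversary $\cA$, the random variables $\View(\cA,\V[f]\circstar\I(0)\circstar\cM)$ and $\View(\cA,\V[f]\circstar\I(1)\circstar\cM)$ be $(\eps,\delta)$-indistinguishable. This is verbatim the condition in Definition~\ref{def:dp-cm} for $\cM$ being $(\eps,\delta)$-DP w.r.t.\ the verification function $f$ against adaptive adversaries. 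Since both sides of the biconditional are characterized by this same quantified indistinguishability statement, the equivalence follows, and no nontrivial obstacle arises—the only thing to be careful about is the bookkeeping of the initial state spaces under the two composition constructions and matching them with the paper's shorthand notations.
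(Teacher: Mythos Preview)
Your proposal is correct and follows exactly the paper's approach: the paper simply notes that the initial state space of $\V[f]\circstar\I\circstar\cM$ is $\{((),0,s^\init),((),1,s^\init)\}$, defines $\sim_{01}$ on it, and says ``Comparing Definition~\ref{def:dp-cm} and Definition~\ref{def:dp-im} leads to the following corollary,'' without giving any further proof. Your careful unpacking of the initial state spaces and the notational conventions is precisely what underlies that one-line justification.
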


\noindent
\underline{CMs vs. IMs.}
In principle, CMs are a special class of IMs with a singleton initial state space. Conversely, as described earlier, IMs can be represented by CMs. Both CMs and IMs are defined by a randomized function mapping a state and query message to another state and an answer message, where message is a general concept. The key distinction between CMs and IMs lies in how differential privacy is defined for each.

By Definition~\ref{def:im}, the initial state space of IMs contains multiple states, and DP for IMs is defined based on the neighboring initial states. However, by Definition~\ref{def:dp-cm}, the initial state space of CMs is singleton, and a verifier forces an adversary to generate pairs such that the sequences of the first and second elements of the pairs satisfy certain properties. While the adversary chooses these two mechanisms, the identifier selects one of them to forward. Thus, the definition of DP for CMs is based on a bit. 
\section{Concurrent Composition of Continual Mechanisms}\label{sec:concomp-cm}
\noindent
\textbf{Composition of NIMs.} {For $1\leq i\leq k$, let $\cN_i$ be an NIM with input domain $\calX_i$, and let $\sim_i$ be a neighboring relation on $\calX_i$. The \emph{composition of the NIMs} $\cN_1, \dots, \cN_k$ is represented by a non-interactive mechanism $\comp[\cN_1, \dots, \cN_k]$ with the domain $\calX=\calX_1\times\dots\times\calX_k$. This mechanism takes a tuple of datasets $x=(x_1,\dots, x_k)$ as input and returns 
$${\comp[\cN_1, \dots, \cN_k](x)=(\cN_1(x_1), \dots, \cN_k(x_k))}.$$ 

Define $\sim$ as a neighboring relation on $\calX$, where two tuples $x=(x_1,\dots, x_k)$ and $x'=(x_1',\dots, x_k')$ satisfy $x\sim x'$ if and only if $x_i\sim_i x_i'$ for all $1\leq i\leq k$. The composition of NIMs $\cN_1, \dots, \cN_k$ is said to be $(\eps, \delta)$-DP if the non-interactive mechanism $\comp[\cN_1, \dots, \cN_k]$ is $(\eps, \delta)$-DP w.r.t. $\sim$ (see Definition \ref{def:dp-nim}).

In the proofs of this section, we consider $\comp[\cN_1, \dots, \cN_k]$ not as a non-interactive, but instead as an interactive mechanism with initial state space $\calX$ that receives $x=(x_1,\dots, x_k)$ not as input, but instead as its first message. Immediately after receiving this message it executes $\comp[\cN_1, \dots, \cN_k]$ on its initial state and returns its output. Upon receiving a second message, this interactive mechanism halts the communication.
}

\medskip\noindent
\textbf{Concurrent Composition of IMs.}{
Vadhan and Wang~\cite{vadhan2021concurrent} extended the definition of the composition of NIMs to the \emph{concurrent composition of IMs}, where an adversary interacts with $k$ interactive mechanisms $\cM_1, \dots, \cM_k$ at the same time. In concurrent composition, the adversary generates a message and determines the receiver of this message (one of the $\cM_i$s) adaptively based on all the previous messages exchanged between the adversary and the mechanisms $M_1, \dots, M_k$. To formalize this, Vadhan and Wang~\cite{vadhan2021concurrent} introduced an interactive mechanism $\concomp[\cM_1, \dots, \cM_k]$, which internally runs $\cM_1, \dots, \cM_k$ and, upon receiving a message $(q, j)$, forwards the query $q$ to $\cM_j$ and returns its response.

The initial state space of $\concomp[\cM_1, \dots, \cM_k]$ is $S_{\cM_1}^\init\times\dots\times S_{\cM_k}^\init$, where $S_{\cM_i}^\init$ is the initial state space of $\cM_i$. Let $\sim_i$ be a neighboring relation on each $S_{\cM_i}^\init$, and define the neighboring relation $\sim$ on $S_{\cM_1}^\init\times\dots\times S_{\cM_k}^\init$ as follows: for every two initial states $s=(s_1, \dots, s_k)$ and $s'=(s_1', \dots, s_k')$, we have $s\sim s'$ if and only if $s_i\sim_i s_i'$ for all $i\in[k]$.

Vadhan and Wang~\cite{vadhan2021concurrent} show that if each IM $\cM_i$ is $\eps_i$-DP w.r.t $\sim_i$, and if the composition of NIMs $\RR_{\eps_1, 0}, \dots, \RR_{\eps_k, 0}$ is $(\eps, \delta)$-DP, then $\concomp[\cM_1, \dots, \cM_k]$ is also $(\eps, \delta)$-DP against an adaptive adversary. Lyu~\cite{lyu2022composition} and Vadhan and Zhang~\cite{vadhan2022concurrent} extend this result for $(\eps_i, \delta_i)$-DP IMs.

In the concurrent composition of IMs, queries asked to each mechanism are chosen adaptively, not only based on that mechanism’s previous responses but also using the responses of the other mechanisms up to that point. The results of \cite{vadhan2022concurrent} and \cite{lyu2022composition} show that this concurrency and additional adaptivity do \textit{not} compromise privacy guarantees. As a result, all $(\eps, \delta)$-DP composition theorems for NIMs, such as the basic~\cite{dwork2006calibrating}, advanced~\cite{dwork2010boosting}, and optimal~\cite{KairouzOV15} composition theorems, can also be applied to the concurrent composition of IMs.
}

\medskip\noindent
\textbf{Concurrent Composition of CMs.}{
We extend the definition of the concurrent composition for IMs to the concurrent composition of CMs, where an adversary can create new CMs or ask pairs of queries from the existing ones. Depending on the type of composition, the created mechanisms are constrained, and the sequence of query pairs asked from each CM satisfies certain properties. Based on a secret bit $b\in\zo$, each CM either receives the first element of every query pair asked from it or the second element. In Section~\ref{subsec:composition-extconcomp}, we generalize $\concomp$ to $\extconcomp$ and give a formal definition for the concurrent composition of CMs.

To extend the proofs of~\cite{lyu2022composition,vadhan2022concurrent} on the privacy guarantee of the concurrent composition of IMs to the concurrent composition of CMs, we want to use the following intermediate lemma from their paper:
For every $(\eps, \delta)$-DP interactive mechanism $\cM$ and every pair of neighboring initial states $s$ and $s'$ for $\cM$, there exists an IPM $\cP$ such that for each $b\in\zo$, the IMs $\cP\circstar\RR_{\eps, \delta}(b)$ and $\cM(b)$ are equivalent. 
However, as we discuss in Section~\ref{subsec:composition-comp-lyu}, there are hidden assumptions in Lyu's proof of this lemma.

In Section~\ref{subsec:composition-comp-fixed-cm}, we analyze the privacy of the concurrent composition of $k$ fixed CMs $\cM_1, \dots, \cM_k$. In this composition, the adversary first creates the mechanisms $\cM_1, \dots, \cM_k$ and then asks pairs of queries from each mechanism, ensuring that the sequence of query pairs for each mechanism is valid. In Section~\ref{subsec:composition-fixed-param-cm}, we extend this composition by fixing $k$ privacy parameters $(\eps_1, \delta_1), \dots, (\eps_k, \delta_k)$ and allowing the adversary to adaptively choose the $(\eps_i, \delta_i)$-DP mechanism $\cM_i$ over time.

}

\subsection{Formalization.}\label{subsec:composition-extconcomp}
We extend the definition of $\concomp$ in \cite{vadhan2021concurrent} and introduce $\extconcomp$ that allows the creation of new mechanisms in addition to queries to existing ones. A \emph{creation query} $(\cM, s, \eps, \delta, f)$ is a message that requests the creation of a continual mechanism $\cM$ with the initial state $s$ and asserts that $\cM$ is $(\eps, \delta)$-DP w.r.t. the verification function $f$. The mechanism $\extconcomp$ accepts creation queries and messages of the form $(q, i)$, where $q$ is a query for the $i$-th created mechanism. The formal definition of $\extconcomp$ is as follows:

\begin{definition}\label{def:extconcomp}
    \emph{Extended concurrent composition, $\extconcomp$,} is a continual mechanism that receives two types of messages:
    \begin{itemize}
        \item A creation query $(\cM, s, \eps, \delta, f)$, where $\cM$ is a CM with a single initial state $s$ that is $(\eps, \delta)$-DP w.r.t. a verification function $f$, and
        \item a query-index pair $(q, i)$, where $i$ is a positive integer, and $q$ is any message.
    \end{itemize}
    Upon receiving $(\cM, s, \eps, \delta, f)$, $\extconcomp$ creates an instance of $\cM$ with the initial state $s$, and when receiving $(q, i)$, it forwards $q$ to the $i$-th created mechanism and returns its response. If less than $i$ mechanisms have been created, $\extconcomp$ halts the communication. The state of $\extconcomp$ is a sequence of pairs $(\cM_i, s_i')$, where the $i$-th pair represents the $i$-th created mechanism $\cM_i$ and its current state $s_i'$. The single initial state space of $\extconcomp$ is the empty sequence $()$. Both query set and answer set of $\extconcomp$ are the universal message set $\mathcal{U}$ in Remark~\ref{rem:universal-set}.
\end{definition}

\begin{remark}
    By Russell’s paradox, a set of all continual mechanisms cannot exist. Therefore, to ensure that Definition~\ref{def:extconcomp} is mathematically correct and the query and answer sets of $\extconcomp$ are well defined, we restrict creation queries to select mechanisms from a well-defined set and verification functions from a well-defined set. These sets may be infinite. The mechanism $\extconcomp$ itself may belong to the mechanism set and, consequently, may be recursively created by itself.
\end{remark}

To define DP for the extended concurrent composition, we must impose constraints on the created mechanisms and their query pairs to limit the adversary’s power. We force a set of constraints using a verifier $\V[f]$, where $f$ is a verification function returning $\bot$ whenever the adversary misbehaves. We define variants of the concurrent composition of CMs by choosing different verification functions $f$:

\begin{definition}
    Let $f$ be a verification function. The \emph{$f$-concurrent composition of CMs} is the interactive mechanism $\V[f]\circstar\I\circstar\extconcomp$.
\end{definition}

Differential privacy for the $f$-concurrent composition of CMs follows the standard definition for IMs:

\begin{definition}[DP for $f$-Concurrent Composition of CMs]
    For a verification function $f$ and privacy parameter $\eps\geq 0$ and $0\leq \delta\leq 1$, \emph{the $f$-concurrent composition of CMs is $(\eps, \delta)$-DP against adaptive adversaries} if the interactive mechanism $\V[f]\circstar\I\circstar\extconcomp$ is $(\eps, \delta)$-DP w.r.t. $\sim_{01}$. Equivalently, this holds if the continual mechanism $\extconcomp$ is $(\eps, \delta)$-DP w.r.t. $f$ against adaptive adversaries.
\end{definition}

Although verification functions $f$ for $\extconcomp$ vary for different compositions, they must all ensure that the format of the adversary's messages is compatible with $\I\circstar\extconcomp$. An adversary is expected to either create a new CM or send a pair of queries to an existing one. The creation of a new mechanism is independent of the initial state of $\I$. Thus, when requesting the creation of a new mechanism, the adversary must send a pair of identical creation queries to $\V[f]$. Moreover, due to the definition of $\I$ and $\extconcomp$, to request asking one of the queries $q_0$ and $q_1$ from the $i$-th created mechanism according to $\I$'s initial state, the adversary must send the message $\left((q_0, i),(q_1, i)\right)$ to $\V[f]$.

Next, we formalize these shared constraints by defining \emph{suitable} message sequences. Later, when defining specific verification functions $f$ for different concurrent compositions, we set $f(m_1, \dots, m_k)=\bot$ whenever the sequence of messages $m_1, \dots, m_k$ is not suitable. For convenience, in the rest of the paper, we represent a pair of identical creation queries $(\cM, s, \eps, \delta, f)$ by $(\cM, s, \eps, \delta, f)^2$.

\begin{definition}[Suitable Sequence of Messages]\label{def:suitable-seq}
    A sequence of messages $m_1, \dots, m_k$ is said to be \emph{suitable}  if the following conditions hold for every $1\leq j\leq k$:
    \begin{enumerate}[label=(\roman*), left=0pt]
        \item Each message $m_j$ is either
        \begin{enumerate}
            \item a pair of identical creation queries of the form $(\cM, s, \eps, \delta, f)^2$, or
            \item a query pair message of the form $\left((q_0, \ell),(q_1, \ell)\right)$, where both queries $q_0$ and $q_1$ share the same index $\ell$.
        \end{enumerate}
        \item If $m_j=\left((q_0, \ell),(q_1, \ell)\right)$, then there are at least $\ell$ creation-query pairs in the sequence $m_1, \dots, m_{j-1}$.
        \item If $m_j=\left((q_0, \ell),(q_1, \ell)\right)$ and $\cM_\ell$ is the $\ell$-th mechanism requested to be created in the preceding messages $m_1, \dots, m_{j-1}$, then both $q_0$ and $q_1$ belong to the query space of $\cM_\ell$, i.e., $q_0, q_1\in Q_{\cM_\ell}$.
        \item If $m_j=(\cM, s, \eps, \delta, f)^2$, then $\cM$ is a CM with a single initial state $s$, and it is $(\eps, \delta)$-DP w.r.t. $f$.
        \item If $m_j=(\cM, s, \eps, \delta, f)^2$ is the $\ell$-th creation-query pair in the sequence $m_1, \dots, m_{j-1}$, then the corresponding sequence of query pairs $(q_0, q_1)$ from the messages of the form $\left((q_0, \ell),(q_1, \ell)\right)$ in $m_1, \dots, m_k$ is $f$-valid.
    \end{enumerate}
\end{definition}

\subsection{$(\eps, \delta)$-DP IMs and $\RR_{\eps, \delta}$}\label{subsec:composition-comp-lyu}

Lyu~\cite{lyu2022composition} and Vadhan and Zhang~\cite{vadhan2022concurrent} show that for every 
$(\eps,\delta)$-DP IM $\cM$
and every pair of neighboring initial states $s_0$ and $s_1$, 
there exists an IPM $\cT$ (with a single initial state) such that for each $b\in\{0,1\}$, the mechanisms $\cT\circstar\RR_{\eps,\delta}(b)$ and $\cM(s_b)$ are equivalent. 
Although not explicitly stated, the construction of $\cT$ proposed in~\cite{lyu2022composition} requires the answer sets of $\cM(s_0)$ and $\cM(s_1)$ to be finite. Vadhan and Zhang~\cite{vadhan2022concurrent} make an even stronger assumption that the query set must be finite. We instead assume that the answer distribution is discrete, which is formally defined as follows:

\begin{definition}
    An IM $\cM$ has \emph{discrete answer distributions} if $A_\cM$ is countable, and for every initial state $s$, integer $t\in\N$, history $(q_1, a_1, \dots, q_{t-1}, a_{t-1})\in (Q_\cM\times A_\cM)^{t-1}$, and new query $q_t\in Q_\cM$, the distribution of $\cM(s)$'s answer in response to $q_t$ conditioned on having the history $(q_1, \dots, a_{t-1})$ has a probability mass function (PMF).
\end{definition}

This is generally not a restrictive assumption in differential privacy. In practice, the binary nature of computers and the fact that the total number of memory bits in the world is finite ensure that every distribution sampled computationally is inherently discrete. Theoretically, when dealing with real-valued outputs, one can enforce discrete answer distributions by rounding each output to the nearest multiple of a sufficiently small constant $\alpha$. This discretization post-processing preserves the privacy of the mechanism and has a negligible effect on the accuracy. An example of this approach will be provided in Section~\ref{sec:application}.

Lyu~\cite{lyu2022composition} additionally assumes that there exists a pre-fixed upper bound $c$ on the number of queries $\cM$ answers. This assumption stems from their use of a backward-defined control function. By employing the limit control functions which we introduce in Section~\ref{sec:post-irr}, one can modify their proof to remove this additional assumption. Moreover, using the formalization of Section~\ref{sec:post-irr} to define recursive functions, we can relax the requirement that the answer set is finite to allow countably infinite answer sets. 

Let $\cM'$ be a continual mechanism that is $(\eps, \delta)$-DP w.r.t. a verification function $f$. By Corollary~\ref{cor:eq-def-dp-cm}, $\V[f]\circstar\I\circstar\cM'$ is an IM that is $(\eps, \delta)$-DP w.r.t. $\sim_{01}$. We have:

\begin{lemma}[slight extension of \cite{lyu2022composition,vadhan2022concurrent}]\label{lem:cm-post-rr-lyu}
    For $\eps> 0$ and $0\leq \delta\leq 1$, let $\cM$ be a CM that is $(\eps, \delta)$-DP w.r.t. a verification function $f$. Suppose $\V[f]\circstar\I\circstar\cM$ has discrete answer distributions. Then, there exists an IPM $\cT$ with a single initial state such that for each $b\in\zo$, the IMs $\cT\circstar\RR_{\eps, \delta}(b)$ and $\V[f]\circstar\I(b)\circstar\cM$ are equivalent.
\end{lemma}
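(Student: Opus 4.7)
The plan is to reduce the lemma to an infinite-horizon extension of Lyu's post-processing theorem for $(\eps,\delta)$-DP interactive mechanisms on a two-element initial state space. By Corollary~\ref{cor:eq-def-dp-cm}, $\cM' := \V[f]\circstar\I\circstar\cM$ is an IM that is $(\eps,\delta)$-DP w.r.t.\ the neighboring relation $\sim_{01}$ on $\{0,1\}$. So it suffices to construct an IPM $\cT$ with a single initial state such that $\cT\circstar\RR_{\eps,\delta}(b)\equiv\cM'(b)$ for each $b\in\{0,1\}$. Lyu proves such a statement under two extra hypotheses: a finite answer set, and a pre-fixed bound $T$ on the number of queries. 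Our task is to remove both of these, retaining only the hypothesis that $\cM'$ has discrete answer distributions.

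First, I would specify how $\cT$ extracts its $\RR_{\eps,\delta}(b)$ sample: treating $\RR_{\eps,\delta}(b)$ in the role of the right mechanism as the interactive $\irr_{\eps,\delta}(b)$ of Definition~\ref{def:irr}, $\cT$ issues the distinguished query $q^\ast$ twice at the start to collect and store the pair $(\tau,c)\in\{\top,\bot\}\times\{0,1\}$, whose joint distribution is exactly $\RR_{\eps,\delta}(b)$. Thereafter, upon each left-mechanism query $q_t$ with history $h_{t-1}=(q_1,a_1,\dots,q_{t-1},a_{t-1})$, $\cT$ samples the next answer $a_t$ from a distribution $\mu_{h_{t-1},q_t}(\cdot\mid\tau,c)$ chosen so that averaging over $(\tau,c)\sim\RR_{\eps,\delta}(b)$ reproduces the conditional next-answer distribution of $\cM'(b)$ given $h_{t-1}$. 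Since $\cM'$ is $(\eps,\delta)$-DP w.r.t.\ $\sim_{01}$, the pair of next-answer distributions of $\cM'(0)$ and $\cM'(1)$ given any fixed history is $(\eps,\delta)$-indistinguishable, so the Kairouz--Oh--Viswanath characterization of $\RR_{\eps,\delta}$ as the universal $(\eps,\delta)$-DP NIM (recalled in Section~\ref{subsec:preliminaries-nim}) produces a per-round post-processor with the desired property. The discrete-answer-distribution hypothesis is precisely what is needed for these per-round post-processors to be well defined via probability mass functions on the possibly countably infinite answer set $A_{\cM'}$.

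The main obstacle is stitching the per-round post-processors into a single IPM that uses the \emph{same} sample $(\tau,c)$ across all rounds and produces the correct joint distribution on the transcript, without appealing to any pre-fixed horizon. Lyu handles the finite-horizon case by backward induction from round $T$, which is unavailable here. Instead I would specify $\cT$ forward in time using the \emph{limit control function} formalism developed in Section~\ref{sec:post-irr}: one applies Lyu's construction to every $T$-round truncation of $\cM'$, notes that on any fixed prefix of length $t$ the construction becomes stationary as $T$ grows, and defines the per-round coupling $\mu_{h_{t-1},q_t}$ as this stationary value. Equivalence $\cT\circstar\RR_{\eps,\delta}(b)\equiv\cM'(b)$ then follows by induction on the prefix length: at round $t$, the inductive hypothesis gives equivalence on the length-$(t-1)$ prefix, and the per-round coupling supplies the correct conditional distribution of $a_t$; consistency across all finite prefixes, which determines the full transcript distribution since $\cM'$ has discrete answer distributions, yields equivalence of the two IMs.
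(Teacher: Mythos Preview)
Your high-level plan (pass to $\cM'=\V[f]\circstar\I\circstar\cM$ via Corollary~\ref{cor:eq-def-dp-cm}, then invoke the limit-control-function machinery of Section~\ref{sec:post-irr}) matches the paper's route, but two steps in your write-up do not go through as stated.

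First, the assertion in your second paragraph that ``the pair of next-answer distributions of $\cM'(0)$ and $\cM'(1)$ given any fixed history is $(\eps,\delta)$-indistinguishable'' is false. Overall $(\eps,\delta)$-DP of $\cM'$ does \emph{not} bound conditional next-answer distributions round by round: e.g.\ a mechanism that with probability~$\delta$ outputs ``reveal'' in round~1 and then outputs $b$ exactly in round~2 is $(0,\delta)$-DP, yet conditioned on seeing ``reveal'' the round-2 distributions have total variation~1. So the per-round Kairouz--Oh--Viswanath reduction does not produce a post-processor for each round, and this paragraph cannot serve even as motivation for the construction. This is precisely why Lyu's argument needs the global control functions $\rml_{t,T}^b$ rather than a round-by-round decomposition.

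Second, your fix in the third paragraph---apply Lyu's construction to each horizon~$T$ and ``note that on any fixed prefix of length~$t$ the construction becomes stationary as $T$ grows''---is not what Section~\ref{sec:post-irr} establishes, and there is no reason to expect it. What is shown there is that the \emph{control functions} $\rml_{t,T}^b$ are monotone in~$T$ and bounded (Lemmas~\ref{lem:rml-leq-mu-T} and~\ref{lem:rml-non-decreasing}), so $\rml_t^b=\lim_T\rml_{t,T}^b$ exists. Lyu's $\phi_t$, however, is obtained by an order-dependent greedy increase over the answer set starting from $\rml_{t,T}$; nothing says this output stabilises (or even converges) as $T\to\infty$, and with a countably infinite answer set the iteration does not even terminate. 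The paper's actual fix is different: take the limit at the level of the \emph{control functions} only, reformulate Lyu's three conditions on $\phi_t$ using $\rml_t^b$ in place of $\rml_{t,T}^b$, and then construct $\phi_t$ \emph{afresh} by a forward induction on~$t$, replacing Lyu's answer-set iteration with a constraint-satisfaction formulation solved via transfinite recursion over a well-order on~$A$ (Section~\ref{subsec:construction}). That is the missing idea in your proposal.
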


\begin{remark}\label{rem:at-most-one-interaction}
    Since $\RR_{\eps, \delta}$ is a NIM, in Lemma~\ref{lem:cm-post-rr-lyu}, the IPM $\cT$ interacts with $\RR_{\eps, \delta}$ at most once.
\end{remark}

\subsection{Concurrent Composition of CMs $\cM_1, \dots, \cM_k$}\label{subsec:composition-comp-fixed-cm}
For $1\leq i\leq k$, let $\cM_i$ be a CM with the initial state space $S_i=\{s_i\}$ such that $\cM_i$ is $(\eps_i, \delta_i)$-DP w.r.t. a verification function $f_i$, and $\V[f_i]\circstar\I\circstar\cM_i$ has discrete answer distributions. In the concurrent composition of $\cM_1, \dots, \cM_k$, an adversary is restricted to the following steps: (1) It first creates the fixed mechanisms $\cM_1, \dots, \cM_k$ and (2) then adaptively issues pairs of queries for them such that the sequences of query pairs for each $\cM_i$ is $f_i$-valid. Note that creating the CM $\cM_i$ means creating an instance of $\cM_i(s_i)$, where $s_i$ is the unique initial state of $\cM_i$. Based on a secret bit $b\in\zo$, each CM either receives the first element of every query pair addressed to it or the second element, and the adversary tries to guess $b$ using the answers of all $k$ mechanisms.

To formalize this type of concurrent composition of CMs, we introduce the verification function $f^{\alpha_1, \dots, \alpha_k}$. Intuitively, this verification function enforces that the adversary creates the mechanisms $\cM_1, \dots, \cM_k$ in their first $k$ messages and then only sends messages containing pairs of queries for $\cM_i$s. Additionally, it ensures that the sequence of query pairs for each $\cM_i$ is $f_i$-valid. The formal definition of $f^{\alpha_1, \dots, \alpha_k}$ is as follows:

\begin{definition}[$f^{\alpha_1, \dots, \alpha_k}$]\label{def:ver-func-comp-fixed-cm}
    For $i\in[k]$, let $\alpha_i=(\cM_i, s_i, \eps_i, \delta_i, f_i)$ denote a creation query.
    For every $t\in\N$ and every input sequence of messages $m_1, \dots, m_t$, the verification function $f^{\alpha_1, \dots, \alpha_k}$ returns $\top$ if and only if all of the following conditions hold:
    \begin{enumerate}
        \item [(i)] The sequence $m_1, \dots, m_k$ is suitable. For $j\in [t]$, let $m_j=(m_j^0, m_j^1)$.
        \item [(ii)] For $j\leq \min\{t, k\}$, the messages $m_j^0=m_j^1=\alpha_j$.
    \end{enumerate}
\end{definition}

With the above definition of $f^{\alpha_1, \dots, \alpha_k}$, we refer to the $f^{\alpha_1, \dots, \alpha_k}$-concurrent composition of CMs as the \emph{concurrent composition of the continual mechanisms $\cM_1, \dots, \cM_k$}. Therefore, the concurrent composition of $\cM_1, \dots, \cM_k$ is $(\eps,\delta)$-DP against adaptive adversaries if $\extconcomp$ is $(\eps,\delta)$-DP w.r.t. $f^{\alpha_1, \dots, \alpha_k}$ against adaptive adversaries.

Recall that in Section~\ref{sec:cm}, we generalized the notion of neighboring relations for interactive mechanisms to verification functions for continual mechanisms. The neighboring initial states for an interactive mechanism are chosen non-adaptively at initialization. However, the choice of messages forming a valid sequence w.r.t. a verification function is adaptive and happens over time. That is, an adversary generates two different query sequences for a CM, and the choice of differing sequences is made adaptively throughout the interaction.

In the concurrent composition of IMs, queries to each mechanism are chosen adaptively, not only based on that mechanism’s previous responses but also using the responses of the other mechanisms up to that point. By definition, \textit{the choice of neighboring initial states for the interactive mechanisms is non-adaptive}. However, in the concurrent composition of CMs, \textit{the choice of differences between query sequences, i.e., the neighboring query sequences} is adaptive, based not only on the responses of the corresponding mechanism but also on the responses of other mechanisms. (Note that, unlike IMs, CMs inherently have a single fixed initial state; hence, no choice of initial state is involved.) Despite this additional adaptivity, in the following theorem, we show that the privacy results of~\cite{lyu2022composition,vadhan2021concurrent} for $\concomp[\cM_1, \dots, \cM_k]$ also hold for $\V[f^{\alpha_1, \dots, \alpha_k}]\circstar\I\circstar\extconcomp$. 

\begin{theorem}\label{thm:comp-fixed-cm}
    For $1\leq i\leq k$, $\eps_1, \dots, \eps_k>0$, and $0\leq \delta_1, \dots, \delta_k\leq 1$, let $\cM_i$ be a continual mechanism with initial state space $S_{\cM_i}^\init=\{s_i\}$ such that $\cM_i$ is $(\eps_i, \delta_i)$-DP w.r.t. a verification function $f_i$ against adaptive adversaries, and $\V[f_i]\circstar\I\circstar\cM_i$ has discrete answer distributions. For $\eps\geq 0$ and $0\leq \delta\leq 1$, if the composition of non-interactive mechanisms $\RR_{\eps_1, \delta_1}, \dots, \RR_{\eps_k, \delta_k}$ is $(\eps, \delta)$-DP, then the concurrent composition of $\cM_1, \dots, \cM_k$ is also $(\eps, \delta)$-DP against adaptive adversaries.
\end{theorem}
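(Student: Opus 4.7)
The plan is to reduce the concurrent composition of continual mechanisms to the (non-interactive) composition of randomized response mechanisms via interactive post-processing, following the same pattern used for IMs by Vadhan--Wang and Lyu. First, I would apply Lemma~\ref{lem:cm-post-rr-lyu} to each of the $k$ fixed mechanisms: since each $\cM_i$ is $(\eps_i,\delta_i)$-DP w.r.t. $f_i$ and $\V[f_i]\circstar\I\circstar\cM_i$ has discrete answer distributions, there exists an IPM $\cT_i$ with a singleton initial state such that, for each $b\in\zo$, the IM $\cT_i \circstar \RR_{\eps_i,\delta_i}(b)$ is equivalent to $\V[f_i]\circstar\I(b)\circstar\cM_i$. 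By Remark~\ref{rem:at-most-one-interaction}, each $\cT_i$ queries its $\RR_{\eps_i,\delta_i}$ at most once during the entire run.

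Next, I would construct a single IPM $\cP$ (with a singleton initial state, \emph{not} depending on $b$) such that the IM $\cP \circstar \comp[\RR_{\eps_1,\delta_1},\dots,\RR_{\eps_k,\delta_k}]$, when the composition is fed the neighboring tuple $(b,b,\dots,b)$ as its first message, is equivalent to the IM $\V[f^{\alpha_1,\dots,\alpha_k}]\circstar\I(b)\circstar\extconcomp$. Internally, $\cP$ does three things: (i) it applies $f^{\alpha_1,\dots,\alpha_k}$ to the sequence of messages received from the adversary (halting on the first invalid message, and in particular requiring that the first $k$ adversary messages be the creation-query pairs $\alpha_1^2,\dots,\alpha_k^2$), (ii) it maintains internal instances of $\cT_1,\dots,\cT_k$, instantiated as soon as the corresponding creation queries arrive, and forwards each valid query-pair message $((q_0,\ell),(q_1,\ell))$ to $\cT_\ell$ wrapped so that $\cT_\ell$ sees what $\V[f_\ell]\circstar\I$ would forward to $\cM_\ell$, (iii) whenever some $\cT_\ell$ queries its external $\RR_{\eps_\ell,\delta_\ell}$, $\cP$ supplies the $\ell$-th coordinate of the single response tuple it obtains from $\comp[\RR_{\eps_1,\delta_1},\dots,\RR_{\eps_k,\delta_k}]$ (querying it lazily on the first such request).

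To verify equivalence, I would argue coordinate-by-coordinate: condition~(v) of Definition~\ref{def:suitable-seq} (built into $f^{\alpha_1,\dots,\alpha_k}$) guarantees that the subsequence of query pairs routed to the $\ell$-th mechanism is $f_\ell$-valid, so by Lemma~\ref{lem:cm-post-rr-lyu} each internal $\cT_\ell$ produces exactly the same joint distribution of answers (and of the halting pattern that $\V[f_\ell]$ would enforce) as $\V[f_\ell]\circstar\I(b)\circstar\cM_\ell$ would on that same adaptively generated sequence. Because the $k$ copies of $\cM_\ell$ inside $\extconcomp$ use independent randomness, and the $k$ components of the $\RR$-composition likewise use independent randomness feeding independent $\cT_\ell$'s, the joint distribution over all $k$ channels factors correctly, even though the adversary's queries to one channel may depend adaptively on earlier answers from the others. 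Thus the interactions $\cP\circstar \comp[\RR_{\eps_1,\delta_1},\dots,\RR_{\eps_k,\delta_k}]((b,\dots,b))$ and $\V[f^{\alpha_1,\dots,\alpha_k}]\circstar\I(b)\circstar\extconcomp$ induce identical views for every adversary $\cA$.

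Once equivalence is established, the theorem is immediate: the tuples $(0,\dots,0)$ and $(1,\dots,1)$ are neighboring under the coordinatewise relation, so by the assumed $(\eps,\delta)$-DP of $\comp[\RR_{\eps_1,\delta_1},\dots,\RR_{\eps_k,\delta_k}]$ (interpreted as an IM receiving its tuple as the first message) the views for $b=0$ and $b=1$ are $(\eps,\delta)$-indistinguishable, and Lemma~\ref{lem:post-im} transports this indistinguishability through the singleton-state post-processor $\cP$. The main obstacle is the careful bookkeeping in step (iii): one must ensure that $\cP$'s internal state never depends on $b$ (all dependence on $b$ is funneled through the single external call to the $\RR$-composition) while still correctly interleaving the adversary's messages across the $\cT_\ell$'s so as to match the independent-randomness structure of $\extconcomp$. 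Everything else is a direct application of Lemma~\ref{lem:cm-post-rr-lyu}, Lemma~\ref{lem:post-im}, and the hypothesized $(\eps,\delta)$ composition bound on the randomized responses.
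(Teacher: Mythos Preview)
Your proposal is correct and follows essentially the same approach as the paper: build an IPM $\cP$ that internally runs the simulators $\cT_1,\dots,\cT_k$ from Lemma~\ref{lem:cm-post-rr-lyu}, feed them with the coordinates of a single call to $\comp[\RR_{\eps_1,\delta_1},\dots,\RR_{\eps_k,\delta_k}]((b)_{i=1}^k)$, and conclude via Lemma~\ref{lem:post-im}. The only cosmetic differences are that the paper keeps $\V[f^{\alpha_1,\dots,\alpha_k}]$ outside of $\cP$ (so the post-processor is $\V[f]\circstar\cP$ rather than your all-in-one $\cP$), fetches all $k$ randomized-response samples eagerly on the first message rather than lazily, and introduces an intermediate mechanism $\iecc(b)$ to make the equivalence argument more explicit.
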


\begin{proof} 
Let $\alpha_i=(\cM_i, s_i, \eps_i, \delta_i, f_i)$ for each $i\in[k]$, and let $f$ denote $f^{\alpha_1, \dots, \alpha_k}$ for simplicity. Our goal is to show that $\extconcomp$ is $(\eps, \delta)$-DP w.r.t. $f$. To prove so, we construct an IPM $\cP$ with a single initial state and show that for each $b\in\zo$, the mechanisms $\V[f]\circstar\I(b)\circstar\extconcomp$ and $\V[f]\circstar\cP\circstar\comp[\RR_{\eps_1, \delta_1}, \dots, \RR_{\eps_k, \delta_k}]\left((b)_{i=1}^k\right)$ are equivalent. 

\sloppy
Since $\V[f] \circstar\cP$ is an IPM with a single initial state, Lemma~\ref{lem:post-im} implies that if $\comp[\RR_{\eps_1, \delta_1}, \dots, \RR_{\eps_k, \delta_k}]$ is $(\eps, \delta)$-DP, then $\V[f]\circstar\cP\circstar\comp[\RR_{\eps_1, \delta_1}, \dots, \RR_{\eps_k, \delta_k}]$ is also $(\eps, \delta)$-DP. Hence, for every adversary $\cA$, the views of $\cA$ when interacting with the mechanisms $\V[f]\circstar\cP\circstar\comp[\RR_{\eps_1, \delta_1}, \dots, \RR_{\eps_k, \delta_k}]\left((0)_{i=1}^k\right)$ and $\V[f]\circstar\cP\circstar\comp[\RR_{\eps_1, \delta_1}, \dots, \RR_{\eps_k, \delta_k}]\left((1)_{i=1}^k\right)$ are $(\eps, \delta)$-indistinguishable. As we will show below, by construction, these views are identical to those obtained by interacting with the original mechanisms $\V[f]\circstar\I(b)\circstar\extconcomp$, implying that the concurrent composition of $\cM_1, \dots, \cM_k$ is $(\eps, \delta)$-DP.

In $\V[f]\circstar\I(b)\circstar\extconcomp$, when $\I(b)$ receives pairs of identical creation queries from $\V[f]$, it forwards one of them to $\extconcomp$, which then creates an instance of $\cM_i$ and returns $\top$. Subsequently, when $\I(b)$ receives two queries for $\cM_i$, it forwards them to $\cM_i$ through $\extconcomp$ and returns its response. This procedure can be equivalently represented by an IM $\iecc(b)$ as follows: when receiving pairs of identical creation queries from $\V[f]$, $\iecc(b)$ creates an instance of $\V[f_i]\circstar\I(b)\circstar\cM_i$ and returns $\top$. Subsequently, when receiving $\left((q_0, i), (q_1, i)\right)$, it forwards $(q_0, q_1)$ to $\V[f_i]\circstar\I(b)\circstar\cM_i$ and returns its response. By definition of $f$ and $\I$, the mechanisms $\V[f]\circstar\I(b)\circstar\extconcomp$ and $\V[f]\circstar\iecc(b)$ are equivalent. Next, we define $\cP$ explicitly and show $\V[f]\circstar\iecc(b)$ and $\V[f]\circstar\cP\circstar\comp[\RR_{\eps_1, \delta_1}, \dots, \RR_{\eps_k, \delta_k}]\left((b)_{i=1}^k\right)$ are equivalent. 

Let $\cT_i$ be the IPM for $\cM_i$ given by Lemma~\ref{lem:cm-post-rr-lyu}, and let $t_i$ be its single initial state. The initial state of $\cP$ is a pair of empty sequences. Let $(\sigma_1, \sigma_2)$ denote the state of $\cP$. When receiving the first left message, $\cP$ interacts with its right mechanism, $\comp[\RR_{\eps_1, \delta_1}, \dots, \RR_{\eps_k, \delta_k}]\left((b)_{i=1}^k\right)$, and sets $\sigma_1$ to its output sequence $(r_1, \dots, r_k)$. $\cP$ never interacts with its right mechanism again. 

Upon receiving a left message $(\cM_i, s_i, \eps_i, \delta_i, f_i)^2$, $\cP$ adds $(\cT_i, t_i)$ to $\sigma_2$ and returns $\top$. Moreover, upon receiving a left message $\left((q_0, i), (q_1, i)\right)$, $\cP$ sends $(q_0, q_1)$ as a left message to $\cT_i$ and returns its response. If $\cT_i$ requests a sample from $\RR_{\eps_i, \delta_i}$, $\cP$ provides it with the previously stored sample $r_i$. Formally, $\cP$ executes $(t_i, v, m)\gets\cT_i\left(t_i, \ttL, (q_0, q_1)\right)$ and as long as $v=\ttR$, it runs $(t_i, v, m)\gets\cT_i\left(t_i, v, r_i\right)$. Once $v=\ttL$, $\cP$ returns the message $m$ to $\V[f]$. (By Remark~\ref{rem:at-most-one-interaction}, $v=\ttR$ at most once.) By Lemma~\ref{lem:cm-post-rr-lyu}, each $\cT_i$, provided with the sample $r_i$ from $\RR_{\eps_i, \delta_i}$, simulates $\V[f_i]\circstar\I(b)\circstar\cM_i$ identically. Hence, by definition of $\iecc$ and $\cP$, the mechanisms $\iecc(b)$ and $\cP\circstar\comp[\RR_{\eps_1, \delta_1}, \dots, \RR_{\eps_k, \delta_k}]\left((b)_{i=1}^k\right)$ are equivalent, finishing the proof.
\end{proof}

\subsection{Concurrent Composition of CMs with Fixed Parameters}\label{subsec:composition-fixed-param-cm}
The \emph{concurrent composition of CMs with fixed parameters} generalizes the concurrent composition of fixed CMs by allowing the adversary to select the CMs $\cM_1, \dots, \cM_k$ adaptively over time, instead of forcing them to create a fixed set of mechanisms in their first $k$ messages. Each selected mechanism $\cM_i$ must be $(\eps_i, \delta_i)$-DP w.r.t. a verification function $f_i$, where the multiset of privacy parameters $(\eps_1, \delta_1), \dots, (\eps_k, \delta_k)$ is fixed in advance, but the verification functions $f_i$ are chosen by the adversary. Note that this implies that at most $k$ continual mechanism are created by the adversary. Moreover, due to the aforementioned technical reasons, each $\V[f_i]\circstar\I\circstar\cM_i$ must have discrete answer distributions. 

For instance, an adversary may first create a continual mechanism $\cM_3$ that is $(\eps_3, \delta_3)$-DP w.r.t. a verification function $f_3$. Then, after interacting with $\cM_3$ for some time and analyzing its responses, the adversary may then select a mechanism $\cM_1$ from the class of $(\eps_1, \delta_1)$-DP CMs. Similar to the concurrent composition of fixed CMs, the adversary's sequence of query pairs for each $\cM_i$ must be $f_i$-valid.

To formally define this composition, we introduce a verification function, $f^{\eps_1, \delta_1, \dots, \eps_k, \delta_k}$ as follows:

\begin{definition}[$f^{\eps_1, \delta_1, \dots, \eps_k, \delta_k}$]\label{def:ver-func-fixed-param-comp-cm}
    For every $t\in\N$ and every input sequence of messages $m_1, \dots, m_t$, the function $f^{\eps_1, \delta_1, \dots, \eps_k, \delta_k}$ returns $\top$ if and only if all of the following conditions hold:
    \begin{enumerate}[left=0pt]
        \item [(i)] The sequence $m_1, \dots, m_t$ is suitable.
        \item [(ii)] There are at most $k$ pairs of identical creation queries in the sequence $m_1, \dots, m_t$. Let $\cM_i'$ denote the $i$-th created mechanism and $(\eps_i', \delta_i')$ and $f_i'$ be its privacy parameters and verification function, respectively. $\V[f_i']\circstar\I\circstar\cM_i'$ has discrete answer distributions. Moreover, the multiset of the privacy parameters $(\eps_i', \delta_i')$ of the created mechanisms is a sub-multiset of the fixed privacy parameters $(\eps_1, \delta_1), \dots, (\eps_k, \delta_k)$. Note that $(\eps_i', \delta_i')$ is adaptively chosen by the adversary, while $(\eps_i, \delta_i)$ is a predetermined publicly-known parameter.
    \end{enumerate}
\end{definition}

We refer to the $f^{\eps_1, \delta_1, \dots, \eps_k, \delta_k}$-concurrent composition of CMs as the \emph{concurrent composition of CMs with privacy parameters $\left((\eps_i, \delta_i)\right)_{i=1}^k$}. A slight adaptation of the proof of Theorem~\ref{thm:comp-fixed-cm} implies the following theorem. 

\begin{theorem}\label{thm:comp-fixed-param-cm}
    For $\eps_1, \dots, \eps_k> 0$, $\eps\geq 0$ and $0\leq \delta, \delta_1, \dots, \delta_k\leq 1$, if the composition of non-interactive mechanisms $\RR_{\eps_1, \delta_1}, \dots, \RR_{\eps_k, \delta_k}$ is $(\eps, \delta)$-DP, then the concurrent composition of $k$ CMs with the privacy parameters $\left((\eps_i, \delta_i)\right)_{i=1}^k$ is also $(\eps, \delta)$-DP against adaptive adversaries.
\end{theorem}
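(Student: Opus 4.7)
The plan is to mirror the proof of Theorem~\ref{thm:comp-fixed-cm} essentially verbatim, making only one change: the IPM $\cP$ must now bind the pre-drawn randomized-response samples to continual mechanisms that are created \emph{adaptively} over time, rather than to a fixed list $\cM_1,\ldots,\cM_k$ announced in the first $k$ messages. Write $f$ for $f^{\eps_1,\delta_1,\ldots,\eps_k,\delta_k}$. I would construct an IPM $\cP$ with a single initial state and show that, for each $b\in\zo$, the interactive mechanism $\V[f]\circstar\cP\circstar\comp[\RR_{\eps_1,\delta_1},\ldots,\RR_{\eps_k,\delta_k}]((b)_{i=1}^k)$ is equivalent to $\V[f]\circstar\I(b)\circstar\extconcomp$. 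Once that equivalence is in hand, the $(\eps,\delta)$-DP guarantee of $\comp[\RR_{\eps_1,\delta_1},\ldots,\RR_{\eps_k,\delta_k}]$ together with the post-processing lemma (Lemma~\ref{lem:post-im}) applied through the single-initial-state IPM $\V[f]\circstar\cP$ yields the theorem.

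For the construction, upon receiving its first left-message $\cP$ draws $(r_1,\ldots,r_k)$ from its right mechanism $\comp[\RR_{\eps_1,\delta_1},\ldots,\RR_{\eps_k,\delta_k}]((b)_{i=1}^k)$ once and for all, and maintains a pool $U\subseteq\{1,\ldots,k\}$ of unused indices (initially $\{1,\ldots,k\}$) together with a list of created CMs and their assigned simulator triples. When $\V[f]$ forwards a creation-query pair $(\cM',s',\eps',\delta',f')^2$, condition (ii) of Definition~\ref{def:ver-func-fixed-param-comp-cm} guarantees the existence of some $j\in U$ with $(\eps_j,\delta_j)=(\eps',\delta')$ and that $\V[f']\circstar\I\circstar\cM'$ has discrete answer distributions; $\cP$ selects such a $j$ by a fixed deterministic rule (say, the smallest), invokes Lemma~\ref{lem:cm-post-rr-lyu} on $\cM'$ w.r.t.\ $f'$ to obtain a simulator $\cT'$ with initial state $t'$, stores the triple $(\cT',t',r_j)$ for the new CM, removes $j$ from $U$, and returns $\top$ to $\V[f]$. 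On a query-index pair $((q_0,i),(q_1,i))$ addressed to the $i$-th created CM, $\cP$ drives the associated simulator $\cT_i$ on left-input $(q_0,q_1)$, supplying the stored sample whenever $\cT_i$ queries its right side (at most once per call by Remark~\ref{rem:at-most-one-interaction}), and forwards $\cT_i$'s left-output back to $\V[f]$.

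The main technical point to verify, and the only real departure from Theorem~\ref{thm:comp-fixed-cm}, is that eagerly sampling all $k$ responses up front produces the same joint distribution on views as lazily sampling each $r_j$ at the moment a CM with matching parameters is created. This holds because the matching index $j_i$ for the $i$-th created CM is a deterministic function of the transcript observed so far, and the sub-multiset condition enforced by $f$ ensures this function always lands on an index whose sample has not yet been consumed; so the eager and lazy experiments can be coupled to yield identical views. Combining this with Lemma~\ref{lem:cm-post-rr-lyu}, each $\cT_i$ fed with $r_{j_i}\sim\RR_{\eps_{j_i},\delta_{j_i}}(b)$ reproduces $\V[f_i']\circstar\I(b)\circstar\cM_i'$ exactly, so $\cP\circstar\comp[\RR_{\eps_1,\delta_1},\ldots,\RR_{\eps_k,\delta_k}]((b)_{i=1}^k)$ and $\I(b)\circstar\extconcomp$ induce identical distributions on any view produced by an adversary whose messages pass $\V[f]$, completing the proof.
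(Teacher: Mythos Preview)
Your proposal is correct and follows essentially the same route as the paper: both proofs reuse the construction from Theorem~\ref{thm:comp-fixed-cm}, eagerly draw all $k$ randomized-response samples up front, and add a deterministic matching rule (smallest unused index with matching parameters) to bind each adaptively created CM to an appropriate sample. The only cosmetic difference is that you commit the sample at creation time while the paper commits it when the simulator first requests it; your ``eager vs.\ lazy'' coupling paragraph is therefore unnecessary but harmless.
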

\begin{proof}
    The proof is identical to the proof of Theorem~\ref{thm:comp-fixed-cm} with a modification in the construction of $\cP$: Recall that $(r_1, \dots, r_k)$ denotes the output of $\comp[\RR_{\eps_1, \delta_1}, \dots, \RR_{\eps_k, \delta_k}]$. In the proof of Theorem~\ref{thm:comp-fixed-cm}, the $i$-th creation query was known in advance to be $(\cM_i, s_i, \eps_i, \delta_i, f_i)$. Consequently, $\cT_i$ expected $\RR_{\eps_i, \delta_i}$ as its right mechanism, and when $\cT_i$ needed a response from its right mechanism, $\cP$ sent $r_i$ to it. However, in the concurrent composition with fixed parameters, the IPM requesting the sample from $\RR_{\eps_i, \delta_i}$ need not necessarily be the $i$-th IPM, $\cT_i$. To handle this, we modify $\cP$ so that it maps each output sample $r_i$ to the appropriate IPM $\cT_j$ needing a sample from $\RR_{\eps_i, \delta_i}$. The implementation details are as follows.

    Let $(\cM_j', s_j', \eps_j', \delta_j', f_j')^2$ denote the $j$-th pair of identical creation queries received by $\cP$ from the verifier, and let $\cT_j$ be the corresponding IPM given by Lemma~\ref{lem:cm-post-rr-lyu}. In the proof of Theorem~\ref{thm:comp-fixed-cm}, the state of $\cP$ consists of a pair of two sequences $\sigma_1=(r_1, \dots, r_k)$ and $\sigma_2=\left((\cT_j, t_j)\right)_{j=1}^k$, where $r_j$ is the randomized response defined above and $t_j$ denotes the current state of $\cT_j$. We extend the state of $\cP$ to additionally store an index $y_i$ with each $r_i$ in $\sigma_1$ and the privacy parameters $\eps_j', \delta_j'$ with each $\cT_j$ in $\sigma_2$. Variable $y_i$ is initialized to $0$ and used to record the index $j$ of the mechanism $\cT_j$ associated with $r_i$

    Upon receiving $(r_1, \dots, r_k)$ from $\comp[\RR_{\eps_1, \delta_1}, \dots, \RR_{\eps_k, \delta_k}]$, $\cP$ sets $\sigma_1=\left((r_1, y_1=0), \dots, (r_k, y_k=0)\right)$. Also, upon receiving a pair of creation queries $(\cM_j', s_j', \eps_j', \delta_j', f_j')^2$, $\cP$ appends the tuple $(\cT_j, t_j, \eps_j', \delta_j')$ to $\sigma_2$. When $\cT_j$ requests a sample from $\RR_{\eps_j', \delta_j'}$, $\cP$ iterates over $\sigma_1$ to find the smallest $i^*\in[k]$ such that $y_{i^*}=0$, $\eps_j'=\eps_{i^*}$, and $\delta_j'=\delta_{i^*}$. It then assigns $y_{i^*}=j$ and sends $r_{i^*}$ to $\cT_j$ as a right response. 
    This adjustment ensures that even when there are duplicate privacy parameters among $(\eps_1, \delta_1), \dots, (\eps_k, \delta_k)$, each sample $r_i$ is sent to at most one $\cT_j$. Note that by the definition of $f^{\eps_1, \delta_1, \dots, \eps_k, \delta_k}$, $i^*$ always exists. 
\end{proof}

\section{Concurrent Parallel Composition of CMs}\label{sec:parallel}
The \emph{concurrent $k$-sparse parallel composition} extends the concurrent composition of CMs with $k$ fixed parameters by permitting the creation of an \textit{arbitrary} number of CMs. Specifically, given a multiset $\params$ of $k$ privacy parameters $(\eps_1, \delta_1), \dots, (\eps_k, \delta_k)$, the adversary can create an unlimited number of CMs with arbitrary privacy parameters. However, the adversary must ask pairs of identical queries to all created mechanisms, except for at most $k$ of them. Let $\cM_1, \dots, \cM_k$ be these exceptional mechanisms. Each $\cM_i$, selected by the adversary, must be $(\eps_i, \delta_i)$-DP w.r.t. a verification function $f_i$ and with $(\eps_i, \delta_i) \in {\params}$, and the sequence of query pairs for $\cM_i$ must be $f_i$-valid.

In Section~\ref{subsec:parallel-def-and-counter}, we formally define this composition and show that the result of Theorem~\ref{thm:comp-fixed-param-cm} cannot be extended to the concurrent $k$-sparse parallel composition of CMs when any $\delta_j > 0$. In Section~\ref{subsec:parallel-approx-dp}, we prove a parallel composition theorem that matches the privacy guarantee of Theorem~\ref{thm:comp-fixed-param-cm} when only purely differentially privacy CMs are created. Finally, in Section~\ref{subsec:parallel-restricted-ver-func}, we consider a restricted class of verification functions $f_i$ for the $k$ mechanisms receiving neighboring inputs. Under certain assumptions about these functions, we show that the privacy guarantee of Theorem~\ref{thm:comp-fixed-param-cm} still holds even if the created mechanism are approximately dp.

\subsection{Definition and Counterexample}\label{subsec:parallel-def-and-counter}
To formally define the concurrent $k$-sparse parallel composition of CMs with parameters $\left((\eps_i, \delta_i)\right)_{i=1}^k$, we introduce the verification function $f^{\eps_1, \delta_1, \dots, \eps_k, \delta_k}_\infty$, where the subscript $\infty$ emphasizes that an unbounded number of CMs can be created.

\begin{definition}[$f^{\eps_1, \delta_1, \dots, \eps_k, \delta_k}_\infty$]\label{def:ver-func-par-fixed-param-comp-cm}
    For every $t\in\N$ and every input sequence of messages $m_1, \dots, m_t$, the function $f^{\eps_1, \delta_1, \dots, \eps_k, \delta_k}_\infty$ returns $\top$ if and only if all of the following conditions hold:
    \begin{enumerate}[left=0pt]
        \item [(i)] The sequence $m_1, \dots, m_t$ is suitable.
        \item [(ii)] Let $\cM_i'$ denote the $i$-th created mechanism and $(\eps_i', \delta_i')$ and $f_i'$ be its privacy parameters and verification function, respectively. Define $\mathcal{J}$ as the set of indices $j$ such that there exists a message $\left((q_0, j), (q_1, j)\right)$ where $q_0\neq q_1$. Let $\params_\mathcal{J}$ denote the multiset consisting of privacy parameters $(\eps_j', \delta_j')$ for each $j\in\mathcal{J}$. This condition requires that $\params_\mathcal{J}$ be a sub-multiset of the fixed privacy parameters $(\eps_1, \delta_1), \dots, (\eps_k, \delta_k)$. Note that this implies that $|\mathcal{J}| \le k.$
    \end{enumerate}
\end{definition}

We refer to the $f^{\eps_1, \delta_1, \dots, \eps_k, \delta_k}_\infty$-concurrent composition of CMs as the \emph{concurrent $k$-sparse parallel composition of CMs with privacy parameters $\left((\eps_i, \delta_i)\right)_{i=1}^k$}. Therefore, this composition is $(\eps,\delta)$-DP against adaptive adversaries if $\extconcomp$ is $(\eps,\delta)$-DP w.r.t. $f^{\eps_1, \delta_1, \dots, \eps_k, \delta_k}_\infty$ against adaptive adversaries. 

The following theorem shows that for $k=1$, $\eps_1=0$, and any $\delta_1\in(0, 1]$, the $f^{0, \delta_1}_\infty$-concurrent composition of CMs is not differentially private:

\begin{theorem}\label{thm:counter-example}
    For every $\delta\in (0, 1]$, there exists an adversary $\cA_\delta$ such that the views of $\cA_\delta$ interacting with $\V[f^{0, \delta}]\circstar\I(0)\circstar\extconcomp$ and $\V[f^{0, \delta}]\circstar\I(1)\circstar\extconcomp$ have disjoint supports.
\end{theorem}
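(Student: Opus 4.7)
The plan is to exploit the fact that a $(0,\delta)$-DP continual mechanism is allowed to fully reveal its secret input with probability $\delta$, and that in the $1$-sparse concurrent parallel composition an adversary can spin up unboundedly many such mechanisms, probe each one cheaply via identical query pairs (which never enter $\mathcal{J}$), and commit its single allowed \emph{differing} pair only to an instance that has already announced it will leak.

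First I would define a CM $\cM_0$ together with a verification function $f$. The mechanism $\cM_0$ accepts a distinguished probe message $q^*$ as its first query; upon receiving it, $\cM_0$ samples $\tau\in\{\top,\bot\}$ with $\Pr[\tau=\top]=\delta$ and outputs $\tau$. On every subsequent query $b\in\{0,1\}$, $\cM_0$ responds with $b$ itself if $\tau=\top$ and with a uniformly random bit if $\tau=\bot$. The verification function $f$ accepts exactly those message sequences whose first pair is $(q^*,q^*)$ and whose remaining pairs lie in $\{0,1\}\times\{0,1\}$. A direct calculation shows $\cM_0$ is $(0,\delta)$-DP w.r.t.\ $f$: conditioned on $\{\tau=\bot\}$, which has probability $1-\delta$, the entire answer transcript is input-independent, so the $\delta$ slack absorbs the remaining mass no matter how the adversary chooses the query pairs.

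Next I would construct $\cA_\delta$. For $j=1,2,\dots$, it sends the creation-query pair $(\cM_0,s,0,\delta,f)^2$ to instantiate a fresh copy $\cM_0^{(j)}$ inside $\extconcomp$, then sends the identical probe pair $((q^*,j),(q^*,j))$ and reads the response $\tau_j$. While $\tau_j=\bot$ it moves on to $j{+}1$; as soon as $\tau_j=\top$ it sends the differing pair $((0,j),(1,j))$ to $\cM_0^{(j)}$, records the response, and halts. A differing pair is thus sent to a single mechanism only, so $|\mathcal{J}|\leq 1$ and that mechanism's parameters are $(0,\delta)$; together with the suitability of the message sequence this shows every prefix of $\cA_\delta$'s messages is $f^{0,\delta}$-valid, and $\V[f^{0,\delta}]$ never halts the interaction prematurely.

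Finally I would analyze the views. Since $\delta>0$, the probability of never observing $\tau_j=\top$ equals $\lim_{j\to\infty}(1-\delta)^j=0$ under both $\I(0)$ and $\I(1)$, so with probability $1$ the adversary terminates at some finite index. Conditioned on termination at index $j$, the identifier forwarded the secret bit $b$ to $\cM_0^{(j)}$, which (since $\tau_j=\top$) answered with $b$ exactly. Hence every view in the support of $\View(\cA_\delta,\V[f^{0,\delta}]\circstar\I(0)\circstar\extconcomp)$ ends with $\top$ followed by $0$, while every view in the support under $\I(1)$ ends with $\top$ followed by $1$, so the two supports are disjoint. The main subtlety is proving the $(0,\delta)$-DP claim for $\cM_0$: choosing $f$ narrowly enough to fix $q^*$ in the first slot is what lets $\tau$ be revealed publicly without violating the DP bound, while still leaving the adversary an exploitable signal.
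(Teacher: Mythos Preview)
Your proposal is correct and follows essentially the same approach as the paper: construct a $(0,\delta)$-DP CM that publicly announces (with probability $\delta$) whether it will leak its next input, then have the adversary create unboundedly many copies, probe each with an identical pair, and spend its single differing pair on the first instance that announces leakage. The differences---your dedicated probe symbol $q^*$ versus the paper's reuse of the first bit query, your uniform-random output in the non-leaking state versus the paper's deterministic output, and the reversal of which symbol ($\top$ vs.\ $\bot$) signals leakage---are cosmetic.
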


\begin{proof}
    We construct $\cA_\delta$ in three steps: (1) introducing a verification function $g$, (2) constructing a CM $\cM_\delta$ that is $(0, \delta)$-DP w.r.t. $g$, and (3) designing $\cA_\delta$. We then prove that for every $b\in\zo$, with probability $1$, $\cA_\delta$ can determine the secret initial state $b$ by interacting with $\V[f^{0, \delta}]\circstar\I(b)\circstar\extconcomp$ finitely many times.  Specifically, we show that with probability $1$, the view of $\cA_\delta$ interacting with $\V[f^{0, \delta}]\circstar\I(b)\circstar\extconcomp$ is finite and includes $b$ as the last message. This immediately implies that the views of $\cA_\delta$ interacting with $\V[f^{0, \delta}]\circstar\I(0)\circstar\extconcomp$ and $\V[f^{0, \delta}]\circstar\I(1)\circstar\extconcomp$ have disjoint supports.

    \medskip\noindent\underline{Step 1 (Introducing $g$):} 
    The verification function $g$ accepts a message sequence $(m_1, \dots, m_t)$ and returns $\top$ if and only if (i) $t\leq 2$, (ii) each $m_j$ is a pair of binary numbers $(b_j, b_j')\in \zo^2$, and (ii) the sequences $(b_1, \dots, b_t)$ and $(b_1', \dots, b_t')$ differ in at most a single element.
    
    \medskip\noindent\underline{Step 2 (Constructing $\cM_\delta$):} The CM $\cM_\delta:\{\bot, \top\}\times \zo\to\{\bot, \top\}\times \{0,1,\top, \bot\}$ with the initial state space $\{\top\}$ is defined as follows:
    
    \begin{itemize}[left=0pt]
        \item $\cM_\delta$ maps state $\top$ and an input message $b\in\zo$ to state $\top$ and output message $\top$ with probability $1-\delta$ and to state $\bot$ and output message $\bot$ with probability $\delta$.
        \item $\cM_\delta$ maps state $\bot$ and an input message $b\in\zo$ to state $\bot$ and output message $b$ with probability $1$.
    \end{itemize}
    
    Thus, $\cM_\delta$ has two possible states, $\top$ and $\bot$, and accepts binary queries. Initially, $\cM_\delta$ starts in state $\top$. On the first query, independent of the value of $b$, it switches to state $\bot$ with probability $\delta$ and returns its final state as an output message. On the second query, if the state is $\top$, the same as the first query, it switches to state $\bot$ with probability $\delta$ and returns the final state as output. However, if the state is $\bot$, $\cM_\delta$ outputs the exact input message $b$.
    
    Since with probability $1 - \delta$, the first two outputs of $\cM_\delta$ are independent of the input, $\cM_\delta$ is $(0, \delta)$-DP w.r.t. the verification function $g$.
    
    \medskip\noindent\underline{Step 3 (Designing $\cA_\delta$):}
    In the interaction with $\V[f^{0, \delta}]\circstar\I(b)\circstar\extconcomp$, the adversary $\cA_\delta$ must determine the secret bit $b$. $\cA_\delta$ accepts messages in $\{0,1,\top,\bot\}$ and sends the following messages: (a) the message $(\cM_\delta, \top, 0, \delta, g)^2$, which is a pair of identical creation queries requesting the creation of $\cM_\delta$, and (b) messages of the form $\left((b_0, j), (b_1, j)\right)$, where $b_0, b_1\in\zo$ and $j\in \N$. 
    
    $\cA_\delta$ starts the communication by sending $(\cM_\delta, \top, 0, \delta, g)^2$ and setting its state $s=1$. While receiving $\top$ as a response, it increments $s$ and alternates its queries:
    \begin{itemize}[left=0pt]
        \item If $s$ is odd, it sends $(\cM_\delta, \top, 0, \delta, g)^2$.
        \item If $s$ is even, it sends $\left((0, s/2), (0, s/2)\right)$.
    \end{itemize}
    In this iterative procedure, $\cA_\delta$ keeps creating a new instance of $\cM_\delta$ and asking the query pair $(0,0)$ from it until one instance returns $\bot$. Since $\extconcomp$ always returns $\top$ when creating a mechanism, this loop stops upon receiving a respond to $\left((0, s/2), (0, s/2)\right)$. 
    
    Once $\cA_\delta$ receives $\bot$, without increasing $s$, it sends $\left((0, s/2), (1, s/2)\right)$. That is, $\cA_\delta$ asks the pair $(0,1)$ from the last created instance of $\cM_\delta$, which has returned $\bot$. If the initial state of the identifier $\I$ in $\V[f^{0, \delta}]\circstar\I\circstar\extconcomp$ is $b$, then $\cM_\delta$ receives the query $b$, and by definition it will return $b$ as output, revealing the secret bit $b$. Upon receiving the response $b$, $\cA_\delta$ halts the communication.
    
    Therefore, with probability $1-(1-\delta)^t$, the adversary $\cA_\delta$ can detect the secret bit $b$ by sending at most $2t+1$ queries. Since $\delta>0$, as $t$ goes to infinity, this probability tends to $1$. Thus, with probability $1$, the view of $\cA_\delta$ interacting with $\V[f^{0, \delta}]\circstar\I(b)\circstar\extconcomp$ is finite and includes $b$ as the last message.
\end{proof}

\subsection{Composition Theorem}\label{subsec:parallel-approx-dp}
In the counterexample of Section~\ref{subsec:parallel-def-and-counter}, the adversary runs an unbounded number of $(0,\delta)$-DP CMs but sends a pair of non-identical queries to only one of them, leading to the exposure of the secret bit with probability $1$. In that construction, each individual mechanism may fail to preserve privacy with probability~$\delta$, and these failures accumulate, ultimately leading to complete exposure. To prevent such leakage, we restrict the second privacy parameters $\delta_j'$ of all created mechanisms to satisfy a certain condition.

\begin{definition}[$f^{\eps_1, \dots, \eps_k}_{\infty, \delta'}$]\label{def:ver-func-par-approx}
    For every $t\in\N$ and every input sequence of messages $m_1, \dots, m_t$, the function $f^{\eps_1, \dots, \eps_k}_{\infty, \delta}$ returns $\top$ if and only if all of the following conditions hold:
    \begin{enumerate}[left=0pt]
        \item [(i)] The sequence $m_1, \dots, m_t$ is suitable.
        \item [(ii)] Let $\cM_i'$ denote the $i$-th created mechanism and $(\eps_i', \delta_i')$ be its privacy parameters. Define $\mathcal{J}$ as the set of indices $j$ such that there exists a message $\left((q_0, j), (q_1, j)\right)$ where $q_0\neq q_1$. Let $\params_\mathcal{J}$ denote the multiset consisting of the first privacy parameters $\eps_j'$ for each $j\in\mathcal{J}$. This condition requires that $\params_\mathcal{J}$ be a sub-multiset of the fixed privacy parameters $\eps_1, \dots, \eps_k$.
        \item [(iii)] The inequality $1-\prod_j(1-\delta_j')\leq \delta'$ holds, where the product is taken over the indices of the created mechanisms.
    \end{enumerate}
\end{definition}

We refer to the $f^{\eps_1, \dots, \eps_k}_{\infty, \delta}$-concurrent composition of CMs as the \emph{concurrent $k$-sparse parallel composition of CMs with privacy parameters $\left((\eps_i, \delta_i)\right)_{i=1}^k$ and upper bound $\delta$}.
We show next the following theorem. 

\begin{theorem}\label{thm:parallel-comp-approx}
    For $\eps_1, \dots, \eps_k> 0$, $\eps\geq 0$ and $0\leq \delta, \delta'\leq 1$, if the composition of non-interactive mechanisms $\RR_{\eps_1, 0}, \dots, \RR_{\eps_k, 0}$ is $(\eps, \delta)$-DP, then the $f^{\eps_1, \dots, \eps_k}_{\infty, \delta'}$-concurrent  composition of CMs is $(\eps, \delta+\delta')$-DP against adaptive adversaries.
\end{theorem}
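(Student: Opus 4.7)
My plan is to build on the post-processing reduction used in Theorem~\ref{thm:comp-fixed-param-cm}, upgrading it to the approximate-DP setting by decomposing each created $(\eps_j',\delta_j')$-DP mechanism into a ``pure'' $\RR_{\eps_j',0}$ part and an ``error'' $\RR_{0,\delta_j'}$ part via the post-processing-of-randomized-response lemma sketched in the introduction (and proved in Section~\ref{sec:post-irr}). The pure parts will be routed through $\comp[\RR_{\eps_1,0},\dots,\RR_{\eps_k,0}]$ exactly as in the proof of Theorem~\ref{thm:comp-fixed-param-cm}, yielding the $(\eps,\delta)$ portion of the conclusion; the error parts will be handled as a ``bad event'' of probability at most $\delta'$, contributing the additive $+\delta'$.

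\textbf{Construction.} I would fix an adversary $\cA$ and design a pair of simulators $\mathcal{S}_0(b)$ and $\mathcal{S}_1(b)$, both acting as IPMs on the left of $\comp[\RR_{\eps_1,0},\dots,\RR_{\eps_k,0}](b,\dots,b)$ (viewed as an IM as in the proof of Theorem~\ref{thm:comp-fixed-cm}). Both simulators draw $(r_1,\dots,r_k)$ once upfront and maintain a pool indexed by $\eps_i$ with ``used'' flags $y_i=0$. Whenever $\cA$ creates a new CM $\cM_j'$ with parameters $(\eps_j',\delta_j')$ and verification function $f_j'$, both simulators spin up the IPM $\cT_j$ promised by the decomposition lemma applied to $\V[f_j']\circstar\I(\cdot)\circstar\cM_j'$; when $\cA$ sends $\cT_j$ a pair of queries, both simulators run $\cT_j$, and if $\cT_j$ requests its $\RR_{\eps_j',0}$ coin they match it against the smallest unused $r_{i^*}$ with $\eps_{i^*}=\eps_j'$, exactly as in the proof of Theorem~\ref{thm:comp-fixed-param-cm}. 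The key structural fact provided by the decomposition lemma is that $\cT_j$ reaches for its $\RR_{\eps_j',0}$ coin only after the adversary has issued a non-identical pair to $\cM_j'$, i.e., only once $j$ enters the adaptively defined set $\mathcal{J}$; condition (ii) of $f^{\eps_1,\dots,\eps_k}_{\infty,\delta'}$ then guarantees that a matching $r_{i^*}$ always exists. The two simulators differ only in how the ``$\delta$-samples'' are drawn: $\mathcal{S}_0$ draws each one honestly from $\RR_{0,\delta_j'}(b)$, while $\mathcal{S}_1$ replaces it by a surrogate of the form $(\bot,U_j)$ with $U_j$ a fresh fair coin independent of $b$.

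\textbf{Analysis and main obstacle.} By the decomposition lemma and the argument of Theorem~\ref{thm:comp-fixed-param-cm}, $\mathcal{S}_0(b)$ produces the same distribution over $\cA$'s view as $\V[f^{\eps_1,\dots,\eps_k}_{\infty,\delta'}]\circstar\I(b)\circstar\extconcomp$. Let $F$ be the event that at least one $\RR_{0,\delta_j'}(b)$ sample drawn inside $\mathcal{S}_0$ has first coordinate $\top$; conditional on $\neg F$, every such sample is of the form $(\bot,\text{uniform})$ with distribution independent of $b$, so $\mathcal{S}_0$ and $\mathcal{S}_1$ couple perfectly on $\neg F$. Because the first coordinate of $\RR_{0,\delta_j'}(b)$ equals $\top$ with probability $\delta_j'$ independent of $b$ and the draws across $j$ are independent, $\Pr[F]=1-\prod_j(1-\delta_j')$, which is at most $\delta'$ pointwise thanks to condition (iii) of the verifier, and hence at most $\delta'$ in expectation even under the adversary's adaptive choice of the $\delta_j'$. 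On the other hand, in $\mathcal{S}_1(b)$ the only $b$-dependent randomness is the tuple $(r_1,\dots,r_k)$, so $\mathcal{S}_1$ is literally a randomized post-processing of $\comp[\RR_{\eps_1,0},\dots,\RR_{\eps_k,0}](b,\dots,b)$; Lemma~\ref{lem:post-im} then gives $(\eps,\delta)$-indistinguishability between $\mathcal{S}_1(0)$ and $\mathcal{S}_1(1)$. Combining these via the standard ``close off a $\delta'$-event'' principle yields $(\eps,\delta+\delta')$-indistinguishability of the adversary's view, which is the claimed bound. The main obstacle is to verify carefully that $\mathcal{S}_0$ and $\mathcal{S}_1$ really do produce identical joint distributions over the full interactive transcript on $\neg F$, despite the adaptive dependence of later $\delta_j'$'s on earlier outputs; this comes down to the fact that on $\neg F$ each $\RR_{0,\delta_j'}(b)$ output is a deterministic function of a Bernoulli-$(\delta_j')$ ``alarm'' coin whose distribution is $b$-independent together with an independent fair coin, which is exactly how $\mathcal{S}_1$ samples its surrogate.
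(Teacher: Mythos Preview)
Your skeleton---simulate each created $(\eps_j',\delta_j')$-DP mechanism via the IPM $\cT_j$ of Corollary~\ref{cor:dp-cm-post-irr}, route the ``pure'' second call of $\cT_j$ through $\comp[\RR_{\eps_1,0},\dots,\RR_{\eps_k,0}]$ as in Theorem~\ref{thm:comp-fixed-param-cm}, and use that $\cT_j$ never issues that second call until a non-identical pair has been sent---is exactly the one the paper uses. The divergence is in how you cash in the $\delta'$-part.

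Your ``close off a $\delta'$-event'' finish does not give the stated bound. Chaining $\mathcal{S}_0(0)\approx_{0,\delta'}\mathcal{S}_1(0)\approx_{\eps,\delta}\mathcal{S}_1(1)\approx_{0,\delta'}\mathcal{S}_0(1)$ through the $(\eps,\delta)$-triangle inequality yields only $\Pr[\mathcal{S}_0(0)\in T]\le e^{\eps}\Pr[\mathcal{S}_0(1)\in T]+\delta+(1+e^{\eps})\delta'$, not $\delta+\delta'$; this is the same reason basic composition is strictly stronger than iterated TV-closeness. You also cannot merge the two outer steps by coupling the bad events $F$ on the $b=0$ and $b=1$ sides: the adaptive $\delta_j'$'s depend on the trajectory, which depends on the $r_i$'s, which depend on $b$, so the two alarm processes are genuinely different. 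Separately, the line ``$\Pr[F]=1-\prod_j(1-\delta_j')\le\delta'$ pointwise, hence in expectation'' is not a proof under adaptivity; what is needed is precisely Lemma~\ref{lem:filter-comp-eps-eq-zero} (equivalently, the martingale $W_n=\mathbb{1}[\text{no alarm by }n]/\prod_{j\le n}(1-\delta_j')$ has $\E[W_n]=1$, whence $\Pr[\text{no alarm by }n]\ge 1-\delta'$).

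The paper avoids both issues by never pulling the $\delta'$-part outside the DP mechanism. It builds a single $b$-independent IPM $\cP$ with $\V[f]\circstar\I(b)\circstar\extconcomp\equiv\V[f]\circstar\cP\circstar\bigl(\V[f^{\RR}]\circstar\I(b)\circstar\extconcomp\bigr)$, where inside the bracketed system $\cP$ \emph{creates} both the $\RR_{0,\delta_j'}$ and the $\RR_{\eps_j',0}$ instances. It then shows (Lemma~\ref{lem:basic-comp-fixed-and-filter}) that the bracketed system is $(\eps,\delta+\delta')$-DP by treating the adaptive family of $\RR_{0,\delta_j'}$'s as a bona fide $(0,\delta')$-DP mechanism (Lemma~\ref{lem:filter-comp-eps-eq-zero}) and applying one round of concurrent composition (Theorem~\ref{thm:comp-fixed-cm}) with the $(\eps,\delta)$-DP pure part, which by basic composition of $\RR_{\eps,\delta}$ with $\RR_{0,\delta'}$ gives exactly $(\eps,\delta+\delta')$. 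Restructuring your argument so that the $\RR_{0,\delta_j'}$ draws live inside the DP system being post-processed, rather than being conditioned away afterward, is the missing step.
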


It is known that the composition of $\RR_{\eps_1, 0}, \dots, \RR_{\eps_k, 0}$ is $\sum_{i=1}^k\eps_i$-DP. Thus, by converging all parameters $\eps_1, \dots, \eps_k$ to zero, we conclude that the $f^{0, \dots, 0}_{\infty, \delta'}$-concurrent composition of CMs is $(0, \delta')$-DP. The proof of Theorem~\ref{thm:counter-example} shows that this composition cannot be $(0, \delta'')$-DP for $\delta''<\delta'$, and thus the result of Theorem~\ref{thm:parallel-comp-approx} is \emph{tight}.

Setting $\delta'=0$ in Theorem~\ref{thm:parallel-comp-approx} means that all created mechanisms must be purely differentially private:
\begin{corollary}\label{cor:parallel-comp-pure}
    For $\eps_1, \dots, \eps_k> 0$, $\eps\geq 0$ and $0\leq \delta\leq 1$, if the composition of non-interactive mechanisms $\RR_{\eps_1, 0}, \dots, \RR_{\eps_k, 0}$ is $(\eps, \delta)$-DP, then the $f^{\eps_1, \dots, \eps_k}_{\infty, 0}$-concurrent composition of CMs is $(\eps, \delta)$-DP against adaptive adversaries. That is, the concurrent $k$-sparse parallel composition of purely differentially private CMs with privacy parameters $\left((\eps_i, \delta_i)\right)_{i=1}^k$ is $(\eps, \delta)$-DP against adaptive adversaries.
\end{corollary}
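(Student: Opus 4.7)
The plan is to derive this corollary directly from Theorem~\ref{thm:parallel-comp-approx} by specializing to $\delta'=0$. The only substantive thing to verify is that the verification function $f^{\eps_1,\dots,\eps_k}_{\infty,0}$ really does enforce that every created mechanism is purely differentially private, so that the two sentences of the corollary match up.

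First, I would unpack condition~(iii) of Definition~\ref{def:ver-func-par-approx} when the upper bound is $\delta'=0$. The condition reads $1-\prod_j(1-\delta_j')\leq 0$, i.e. $\prod_j(1-\delta_j')\geq 1$, where the product ranges over all mechanisms created so far. Since each $\delta_j'\in[0,1]$, every factor satisfies $1-\delta_j'\leq 1$, so the only way the product can equal $1$ is for every factor to equal $1$, forcing $\delta_j'=0$ for all created mechanisms. Consequently any $f^{\eps_1,\dots,\eps_k}_{\infty,0}$-valid transcript only creates purely differentially private CMs, which matches the rephrasing in the second sentence of the corollary.

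Second, I would simply invoke Theorem~\ref{thm:parallel-comp-approx} with $\delta'=0$. Under the hypothesis that $\comp[\RR_{\eps_1,0},\dots,\RR_{\eps_k,0}]$ is $(\eps,\delta)$-DP, that theorem gives $(\eps,\delta+\delta')$-DP for the $f^{\eps_1,\dots,\eps_k}_{\infty,\delta'}$-concurrent composition; setting $\delta'=0$ yields $(\eps,\delta)$-DP for the $f^{\eps_1,\dots,\eps_k}_{\infty,0}$-concurrent composition, which is exactly what the corollary asserts.

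There is no real obstacle here, because all the work has been done in the proof of Theorem~\ref{thm:parallel-comp-approx}; the only point worth flagging is the equivalence between ``$\delta'=0$ in the filter'' and ``every composed mechanism is pure DP,'' which is the one-line computation above. I would present the corollary as a two- or three-line argument: verify the equivalence, invoke the theorem, conclude.
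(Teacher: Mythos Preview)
Your proposal is correct and matches the paper's approach exactly: the paper simply states the corollary as the $\delta'=0$ specialization of Theorem~\ref{thm:parallel-comp-approx}, noting in one line that this forces all created mechanisms to be purely differentially private. Your unpacking of condition~(iii) to show $\delta_j'=0$ for all $j$ is precisely the content of that remark.
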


In the proofs of the previous composition theorems (namely, Theorems~\ref{thm:comp-fixed-cm} and~\ref{thm:comp-fixed-param-cm}), each created CM $\cM_j'$ with privacy parameters $(\eps_j', \delta_j')$ is simulated by the IPM $\cT_j$ from Lemma~\ref{lem:cm-post-rr-lyu}. For each $b\in\zo$, the mechanism $\cT_j$ interacts with $\RR_{\eps_j',\delta_j'}(b)$ once and simulates $\V[f_j']\circstar\I(b)\circstar\cM_j'$ identically. Next, we present a new construction for $\cT_j$, where it interacts with $\irr_{\eps_j', \delta_j'}$ instead of $\RR_{\eps_j',\delta_j'}$. This construction satisfies additional properties needed in the proof of Theorem~\ref{thm:parallel-comp-approx}. In Section~\ref{sec:post-irr}, we prove the following lemma, which directly implies the desired IPM as stated in Corollary~\ref{cor:dp-cm-post-irr}.

\begin{lemma}\label{lem:new}
    For $\eps > 0$ and $0 \leq \delta \leq 1$, let $\cM:S_\cM\times Q_\cM\to S_\cM\times A_\cM$ be an $(\eps, \delta)$-DP IM w.r.t. a neighbor relation $\sim$. Suppose that $\cM$ has discrete answer distributions. Then, for every two neighboring initial states $s_0$ and $s_1$, there exists an IPM $\cP$ such that for each $b\in \zo$:
    \begin{itemize}
        \item The mechanisms $\cM(s_b)$ and $\cP \circstar \irr_{\eps, \delta}(b)$ are equivalent.
        \item For every $k \in \N$ and every query sequence $(q_1, \dots, q_k) \in Q_{\cM}^k$, if the distributions of answers produced by $\cM(s_0)$ and $\cM(s_1)$ to $(q_1, \dots, q_k)$ are identical, then when $\cP \circstar \irr_{\eps, \delta}(b)$ receives the queries $q_1, \dots, q_k$, the IPM $\cP$ interacts with $\irr_{\eps, \delta}(b)$ only once. Otherwise, $\cP$ interacts with $\irr_{\eps, \delta}(b)$ at most twice.
    \end{itemize}
\end{lemma}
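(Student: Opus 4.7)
The plan is to construct $\cP$ by adapting the IPM from Lemma~\ref{lem:cm-post-rr-lyu} of~\cite{lyu2022composition}, exploiting the sequential structure of $\irr_{\eps,\delta}$ (as opposed to the single-shot $\RR_{\eps,\delta}$) to defer the second output until it is actually needed. Upon receiving its first left message, $\cP$ queries $\irr_{\eps,\delta}(b)$ once to obtain $\tau \in \{\top,\bot\}$ and thereby commits to a \emph{reveal mode} ($\tau = \bot$, probability $\delta$) or a \emph{noisy mode} ($\tau = \top$, probability $1-\delta$). For each subsequent query $q_t$ with current history $h_t$, $\cP$ will compute the conditional answer distributions $D_0^{h_t}$ and $D_1^{h_t}$ of $\cM(s_0)$ and $\cM(s_1)$: if they agree, $\cP$ samples from the common distribution without any further $\irr$ interaction; otherwise, $\cP$ performs its second (and final) $\irr$ call to obtain $m\in\zo$, and uses $m$ together with $\tau$ to sample this and every subsequent answer so as to match $\cM(s_b)$.

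The per-query sampling will use the standard ``reveal/noisy'' decomposition of $(\eps,\delta)$-indistinguishable distributions: for every history $h$ consistent with both $s_0$ and $s_1$, one can write $D_b^h = \delta \cdot E_b^h + (1-\delta) \cdot G_b^h$, where $G_0^h$ and $G_1^h$ are pointwise $e^\eps$-close. In reveal mode the second $\irr$ response equals $b$ exactly, so $\cP$ can sample from $E_b^h$; in noisy mode the second $\irr$ response is a randomized-response bit with precisely the noise level $e^\eps/(1+e^\eps)$ required to sample from $G_b^h$ via a maximum coupling of $G_0^h$ and $G_1^h$. By construction, $\cP$ never invokes $\irr$ a second time so long as every incoming query satisfies $D_0^{h_t} = D_1^{h_t}$; this gives exactly the sharp bounds claimed in the lemma, namely one interaction when the joint answer distributions of $\cM(s_0)$ and $\cM(s_1)$ on $(q_1,\dots,q_k)$ coincide and at most two in general.

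The main obstacle will be showing that the per-step decompositions can be assembled consistently over all possible adaptive histories so that $\cP \circstar \irr_{\eps,\delta}(b)$ and $\cM(s_b)$ are exactly equivalent in the sense of Definition~\ref{def:identical-IMs}, not merely indistinguishable. Lyu's original argument relies on backward induction from a pre-fixed upper bound on the number of interactions and on a finite answer space; I will remove both restrictions via the limit-control-function framework developed in Section~\ref{sec:post-irr}, which defines the decomposition recursively for unbounded interaction lengths and countably infinite (but discrete) answer sets, with the discreteness assumption on $\cM$ ensuring that all the requisite PMFs and couplings are well-defined. Correctness will then follow by induction on the query index $t$: conditioned on any history $h_t$ and on the value of $\tau$, the marginal of $\cP$'s next answer equals $D_b^{h_t}$, from which the equivalence of the two IMs is immediate after integrating over $\tau$.
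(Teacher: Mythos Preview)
Your overall structure matches the paper's approach: obtain $\tau$ first, sample from one of four PMF families indexed by $(\tau,c)\in\{\top,\bot\}\times\{0,1\}$, and make the second $\irr$ call only when the $c=0$ and $c=1$ families first diverge. But the per-step decomposition in your second paragraph is wrong as stated. You claim that for every history $h$ in the common support, $D_b^h = \delta E_b^h + (1-\delta) G_b^h$ with $G_0^h, G_1^h$ pointwise $e^\eps$-close; this is false, because $(\eps,\delta)$-DP constrains only the \emph{joint} answer distributions $\mu_t^0,\mu_t^1$, not the per-history conditionals. Concretely: let $\cM(s_b)$ output $a_1=1$ with probability $\delta$ (else $a_1=0$), and then output $a_2=b$ if $a_1=1$ (else $a_2=2$). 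This mechanism is $(\eps,\delta)$-DP, yet conditioned on $a_1=1$ the conditionals $D_0,D_1$ are disjoint point masses, admitting no decomposition of your form. For the same reason, ``sample from the common $D_b^{h_t}$ when they agree'' is not what the paper's $\cP$ does: even when $D_0^{h_t}=D_1^{h_t}$, the reveal-mode sampling distribution $\phi_t^b/\phi_{t-1}^b$ can differ from $D_b^{h_t}$, because $\phi$ must anticipate future divergence (in the example, reveal mode must put \emph{all} its mass on $a_1=1$, not split it $(1-\delta,\delta)$).

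The paper resolves this by carrying out the decomposition at the joint level: functions $\phi_t^b,\psi_t^b:(Q\times A)^t\to[0,1]$ with $\mu_t^b=\delta\phi_t^b+(1-\delta)\bigl[\tfrac{e^\eps}{1+e^\eps}\psi_t^b+\tfrac{1}{1+e^\eps}\psi_t^{1-b}\bigr]$, and the per-step PMFs are the ratios $\phi_t/\phi_{t-1}$ and $\psi_t/\psi_{t-1}$. The one-interaction property then comes not from checking $D_0^{h_t}=D_1^{h_t}$ but from the new Condition~\ref{item:cond4} the paper imposes on $\phi_t$, which forces $\phi_t^0=\phi_t^1$ (and hence $\psi_t^0=\psi_t^1$) whenever $\mu_t^0=\mu_t^1$; enforcing this extra condition is precisely what drives the new balanced construction via the intermediate function $\xi_t$ in Section~\ref{subsec:construction}. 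Your third paragraph gestures at this framework, but the plan as written needs to be revised so that both the sampling distributions and the trigger for the second $\irr$ call are stated in terms of $\phi_t,\psi_t$ rather than the raw conditionals $D_b^h$.
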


To avoid confusion in later proofs, we replace the IPM $\cP$ in Lemma~\ref{lem:new} with $\cT$ in the following corollary.

\begin{corollary}\label{cor:dp-cm-post-irr}
    For $\eps > 0$ and $0 \leq \delta \leq 1$, let $\cM$ be a CM that is $(\eps, \delta)$-DP w.r.t. a verification function $f$. Suppose $\V[f]\circstar\I\circstar\cM$ has discrete answer distributions. Then there exists an IPM $\cT$ such that for each $b\in \zo$:
    \begin{itemize}
        \item The IMs $\V[f]\circstar\I(b)\circstar\cM$ and $\cT\circstar \irr_{\eps, \delta}(b)$ are equivalent.
        \item For every $k \in \N$ and every message sequence $m_1, \dots, m_k$ such that each message $m_j$ is a pair of identical messages, if $\cT\circstar \irr_{\eps, \delta}(b)$ receives $m_1, \dots, m_k$ as input, then the IPM $\cT$ interacts with $\irr_{\eps, \delta}(b)$ at most once.
        \item $\cT$ sends at most two messages to its right mechanism.
    \end{itemize}
\end{corollary}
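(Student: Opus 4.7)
\medskip
\noindent\textbf{Proof plan.} My plan is to derive Corollary~\ref{cor:dp-cm-post-irr} directly from Lemma~\ref{lem:new} by using Corollary~\ref{cor:eq-def-dp-cm} to recast the continual mechanism $\cM$ as an interactive mechanism and then invoking the lemma on a single pair of neighboring initial states. First, I will observe that the interactive mechanism $\cM' := \V[f]\circstar\I\circstar\cM$ has the two-element initial-state space $\{((),0,s^{\init}),((),1,s^{\init})\}$, where $\{s^{\init}\}$ is the singleton initial-state space of $\cM$, and that by Corollary~\ref{cor:eq-def-dp-cm} $\cM'$ is $(\eps,\delta)$-DP with respect to the neighbor relation $\sim_{01}$. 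The discreteness hypothesis on $\V[f]\circstar\I\circstar\cM$ is exactly what Lemma~\ref{lem:new} requires of $\cM'$. Applying the lemma to $\cM'$ with the (only) neighboring pair $s_0:=((),0,s^{\init}) \sim_{01} ((),1,s^{\init})=:s_1$ will produce an IPM $\cT$ (the $\cP$ of Lemma~\ref{lem:new}) such that, for each $b\in\zo$, $\cM'(s_b)=\V[f]\circstar\I(b)\circstar\cM$ is equivalent to $\cT\circstar\irr_{\eps,\delta}(b)$. This establishes the first bullet.

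Next, for the second bullet, I will argue that if the incoming sequence $m_1,\ldots,m_k$ consists entirely of pairs $m_j=(q_j,q_j)$ of identical messages, then $\V[f]$ accepts or rejects the sequence in the same way regardless of which coordinate is extracted (since the two coordinate-sequences coincide), and conditional on acceptance $\I(b)$ forwards the common message $q_j$ to $\cM$ independently of $b$. Hence the distributions of the answers produced by $\cM'(s_0)$ and $\cM'(s_1)$ in response to $m_1,\ldots,m_k$ coincide, and the second property of Lemma~\ref{lem:new} then forces $\cT$ to interact with $\irr_{\eps,\delta}(b)$ at most once on such inputs. For the third bullet, Lemma~\ref{lem:new} already caps at two the number of interactions between $\cP$ (our $\cT$) and $\irr_{\eps,\delta}(b)$, and since the query set of $\irr_{\eps,\delta}$ is the singleton $\{q^*\}$, each interaction is a single message from $\cT$ to the right mechanism, so at most two messages are ever sent.

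The only substantive verification, and thus the mildest obstacle, is confirming that $\V[f]\circstar\I\circstar\cM$ legitimately inherits discrete answer distributions from the hypothesis and that the verifier's accept/reject outcome on paired-identical inputs is indeed $b$-independent; both follow directly from the definitions of $\V[f]$ and $\I$, so no new technical machinery beyond Lemma~\ref{lem:new} is needed.
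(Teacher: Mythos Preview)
Your proposal is correct and follows essentially the same route as the paper: both pass from the CM $\cM$ to the IM $\V[f]\circstar\I\circstar\cM$ via Corollary~\ref{cor:eq-def-dp-cm}, observe that pairs of identical messages yield identical answer distributions for the two initial states, and then invoke Lemma~\ref{lem:new} directly. Your write-up is slightly more detailed (spelling out the initial-state space and the role of $\V[f]$ in the $b$-independence argument), but the structure and the key lemma are the same.
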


\begin{proof}
    By Corollary~\ref{cor:eq-def-dp-cm}, $\V[f]\circstar\I\circstar\cM$ is an IM that is $(\eps, \delta)$-DP w.r.t. $\sim_{01}$. By definition of the identifier $\I$, for every $k \in \N$ and every message sequence $m_1, \dots, m_k$, if each message $m_j$ is a pair of identical messages, then the distributions of answers produced by $\V[f]\circstar\I(0)\circstar\cM$ and $\V[f]\circstar\I(1)\circstar\cM$ in respond to $(m_1, \dots, m_k)$ are identical. Applying Lemma~\ref{lem:new} to the IM $\V[f]\circstar\I\circstar\cM$ then yields the desired result.
\end{proof}

Informally, by Definition~\ref{def:irr}, Corollary~\ref{cor:dp-cm-post-irr} states that each created CM $\cM_j'$, with the privacy parameters $(\eps_j', \delta_j')$, 
can be simulated by post-processing the outcomes of two randomized response mechanisms, namely $\RR_{\eps_j', 0}$ and $\RR_{0, \delta_j'}$. 
While simulating $\cM_j'$ in general requires access to both outputs, by this lemma, as long as the adversary asks pairs of identical queries from $\cM_j'$, the output of $\RR_{\eps_j', 0}$ is unnecessary. 
By definition of $f^{\eps_1, \dots, \eps_k}_{\infty, \delta'}$, the multiset of the parameters $\eps_j'$ of the mechanisms receiving non-identical query pairs is a sub-multiset of $\eps_1, \dots, \eps_k$. The high-level idea behind the proof of Theorem~\ref{thm:parallel-comp-approx} is to separately compose $\RR_{\eps_1, 0}, \dots, \RR_{\eps_k, 0}$ and $\RR_{0, \delta_1'}, \RR_{0, \delta_2'}, \dots$, and then apply basic composition to these two sets of composed mechanisms to conclude the final result. 

Since the parameters $\delta_j'$ are chosen adaptively, the composition of $\RR_{0, \delta_1'}, \RR_{0, \delta_2'}, \dots$ is an instance of the \emph{filter composition} of NIMs~\cite{rogers2016privacy}. We do not state this composition in terms of filters and define it using a verification function $f^{\delta'}_{\RR}$ that only allows the creation of randomized response mechanisms $\RR_{0, \delta_j'}$ and returns $\bot$ if $1-\prod (1-\delta_j')>\delta'$.

\begin{definition}[$f^{\delta}_{\RR}$]\label{def:ver-func-sum-delta-rr}
    For every $t\in\N$ and every input sequence of messages $m_1, \dots, m_t$, the function $f^{\delta}_{\RR}$ returns $\top$ if and only if all of the following conditions hold:
    \begin{enumerate}[left=0pt]
        \item [(i)] The sequence $m_1, \dots, m_t$ is suitable.
        \item [(ii)] Let $\cM_j'$ denote the $j$-th mechanism requested to be created and $(\eps_j', \delta_j')$ and $f_j'$ be its privacy parameters and verification function, respectively. Each $\cM_j'$ is the randomized response mechanism $\RR_{\eps_j', \delta_j'}$ (see Definition~\ref{def:rr}), $f_j'$ is its corresponding verification function, and $\eps_j=0$. By definition of $\RR_{\eps_j', \delta_j'}$, $f_j'$ returns $\top$ if and only if it receives a single message $m$ where $m$ is a pair of bits.
        \item [(iii)] The inequality $1-\prod_j(1-\delta_j')\leq \delta$ holds, where the product is taken over the indices of the created mechanisms.
    \end{enumerate}
\end{definition}

 The following lemma is proved in Section~\ref{sec:filter-comp-rr}.
 \footnote{By setting $\eps_1=\eps_2=\dots = 0$ in Theorem~2 of~\cite{whitehouse2023fully}, it follows that for every $\eps>0$, $\delta'>0$, and $0 \leq\delta''\leq 1$, the composition of randomized response mechanisms $\RR_{0, \delta_1}, \RR_{0, \delta_2}, \dots$ with adaptively chosen privacy parameters $\delta_1, \delta_2, \dots$ satisfying $\sum_i \delta_i \leq \delta''$ guarantees $(\eps, \delta'+\delta'')$-DP. Taking $\eps$ and $\delta'$ to zero, we can conclude that this composition is $(0, \delta'')$-DP. To get tighter bounds, we relax the requirement $\sum_i \delta_i \leq \delta''$ to $1-\prod_i (1-\delta_i)\leq \delta''$ and prove the composition is still $(0, \delta'')$-DP. }

\begin{lemma}\label{lem:filter-comp-eps-eq-zero}
    For every $0\leq\delta\leq 1$, the $f^\delta_{\RR}$-concurrent composition of continual mechanisms is $(0, \delta)$-differentially private.
\end{lemma}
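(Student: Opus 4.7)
The plan is to couple the adversary's interactions with $\V[f^{\delta}_\RR] \circstar \I(0) \circstar \extconcomp$ and $\V[f^{\delta}_\RR] \circstar \I(1) \circstar \extconcomp$ by sharing the internal randomness of each created randomized-response mechanism, and then show that the coupled views differ only on an event of probability at most $\delta$. For each created $\RR_{0, \delta_j'}$, I will decompose its randomness as a pair $(Y_j, U_j)$, where $Y_j \sim \mathrm{Ber}(\delta_j')$ indicates whether the first output coordinate is $\top$ and $U_j$ is an independent uniform bit. When $Y_j = 0$, the mechanism outputs $(\bot, U_j)$ regardless of $b$ (using that $\eps_j' = 0$ makes the $\bot$-branch uniform on $\zo$ and independent of $b$); when $Y_j = 1$, it outputs $(\top, b)$. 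Using the same $(Y_j, U_j)$ in both worlds gives a valid coupling.

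Let $B = \{\exists j : Y_j = 1\}$ be the bad event. On $\neg B$ every mechanism's output equals $(\bot, U_j)$, so the coupled views are pathwise identical for $b = 0$ and $b = 1$, and in particular the adversary's adaptive choices of $\delta_j'$ agree across the two worlds up to the halting time. Hence for any measurable event $S$ on views,
\[
\Pr\nolimits_0[V \in S] - \Pr\nolimits_1[V \in S] \;=\; \Pr[V_0 \in S,\, B] - \Pr[V_1 \in S,\, B] \;\le\; \Pr[B],
\]
and by symmetry also $\Pr\nolimits_1[V \in S] - \Pr\nolimits_0[V \in S] \le \Pr[B]$. So it suffices to show $\Pr[B] \le \delta$.

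To bound $\Pr[B]$, I will analyze the deterministic ``good-path'' branch of the adversary's decision tree. On $\neg B$ the adversary sees only observations of the form $(\bot, U_1), (\bot, U_2), \dots$, so the parameter $\delta_j'$ it chooses for the $j$-th mechanism and its halting time $T_0$ are deterministic functions of $(U_1, \dots, U_{T_0 - 1})$ (fixing the adversary's own random tape). A direct inductive computation using the conditional independence of $(Y_j, U_j)$ from the history yields
\[
\Pr[\neg B] \;=\; \E_U\!\left[\,\prod_{j=1}^{T_0(U)}\!\bigl(1 - \delta_j'(U)\bigr)\,\right]\!,
\]
where $U = (U_1, U_2, \dots)$ is a sequence of i.i.d.\ uniform bits. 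The verifier $\V[f^{\delta}_\RR]$ enforces $1 - \prod_{j \le t}(1 - \delta_j') \le \delta$ after every creation step along every realization (halting the interaction otherwise), so $\prod_{j=1}^{T_0(U)}(1 - \delta_j'(U)) \ge 1 - \delta$ almost surely. Therefore $\Pr[\neg B] \ge 1 - \delta$ and $\Pr[B] \le \delta$, completing the argument.

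The main technical subtlety is handling an a priori unbounded number of created mechanisms: the identity displayed above must be justified for a random and possibly unbounded stopping time $T_0(U)$. I expect to resolve this either by truncating the adversary at a finite horizon $N$ and passing to the limit $N \to \infty$ via monotone convergence, or by working directly on the countable good-path sub-tree of the adversary's decision tree and summing the leaf probabilities via $\sigma$-additivity. A secondary point to verify is that the verifier's adaptive constraint really does force the product bound along \emph{every} realization of the good path (not merely in expectation), which follows because $\V[f^{\delta}_\RR]$ refuses to forward any creation query whose associated new $\delta_j'$ would drive $1 - \prod(1 - \delta_j')$ above $\delta$.
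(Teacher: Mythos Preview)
Your approach is correct and is essentially the same coupling argument the paper uses. The paper packages it into two general lemmas---one reducing possibly infinite transcripts to finite prefixes (which is your truncation/unbounded-horizon subtlety) and one bounding $\tv$ via a step-by-step maximal coupling under an adaptive constraint $1-\prod_i(1-\Delta_i)\le\gamma$---but the underlying construction (couple the $\RR_{0,\delta_j'}$ outputs so they agree on the $\bot$-branch, then bound the disagreement probability by the verifier's product constraint) is identical to your explicit $(Y_j,U_j)$-sharing.
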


Before proving Theorem~\ref{thm:parallel-comp-approx}, we show one final technical lemma that formally composes the fixed-parameter mechanisms $\RR_{\eps_1, 0}, \dots, \RR_{\eps_k, 0}$ and the adaptively chosen mechanisms $\RR_{0, \delta_1'}, \RR_{0, \delta_2'}, \dots$. To define the composition, we introduce the following verification function:

\begin{definition}[$f^{\eps_1, \dots, \eps_k}_{\infty, \delta, RR}$]
    For every input sequence of messages $m_1, \dots, m_t$, the function $f^{\eps_1, \dots, \eps_k}_{\infty, \delta, RR}$ returns $\top$ if and only if all of the following conditions hold:
    \begin{enumerate}[left=0pt]
        \item [(i)] The sequence $m_1, \dots, m_t$ is suitable.
        \item [(ii)] Let $\cM_j'$ denote the $j$-th created mechanism and $(\eps_j', \delta_j')$ and $f_j'$ be its privacy parameters and verification function, respectively. Each $\cM_j'$ equals the randomized response mechanism $\RR_{\eps_j', \delta_j'}$, and at least one of $\eps_j'$ and $\delta_j'$ is zero.
        \item [(iii)] The multiset of positive $\eps_j'$ is a sub-multiset of the fixed parameters $\eps_1, \dots, \eps_k$.
        \item [(iii)] The inequality $1-\prod_j(1-\delta_j')\leq \delta'$ holds, where the product is taken over the indices of the created mechanisms.
    \end{enumerate}
\end{definition}

\begin{lemma}\label{lem:basic-comp-fixed-and-filter}
    For every $\eps\geq 0$ and $0\leq\delta, \delta'\leq 1$, if $\comp(\RR_{\eps_1, 0}, \dots, \RR_{\eps_k, 0})$ is $(\eps, \delta)$-DP, then the $f^{\eps_1, \dots, \eps_k}_{\infty, \delta', RR}$-concurrent composition of CMs is $(\eps, 1-(1-\delta)(1-\delta'))$-DP. 
\end{lemma}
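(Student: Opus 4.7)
The plan is to decompose the $f^{\eps_1, \dots, \eps_k}_{\infty, \delta', RR}$-concurrent composition into the concurrent composition of two independent continual sub-mechanisms---one handling the pure-DP randomized responses $\RR_{\eps_j', 0}$ (those with $\eps_j' > 0$) and one handling the approximate-DP responses $\RR_{0, \delta_j'}$---and then to assemble the final privacy bound via our existing concurrent composition theorems together with the post-processing lemma.

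Concretely, I set $\cM_1 = \cM_2 = \extconcomp$ with verification functions $f_1$ and $f_2$ chosen as follows. Take $f_1$ to be the fixed-parameter verification function from Definition~\ref{def:ver-func-fixed-param-comp-cm} with parameter list $(\eps_1, 0), \dots, (\eps_k, 0)$; Theorem~\ref{thm:comp-fixed-param-cm} combined with the lemma's hypothesis that $\comp(\RR_{\eps_1, 0}, \dots, \RR_{\eps_k, 0})$ is $(\eps, \delta)$-DP then yields that $\cM_1$ is $(\eps, \delta)$-DP w.r.t.~$f_1$. Take $f_2 = f^{\delta'}_{\RR}$; Lemma~\ref{lem:filter-comp-eps-eq-zero} gives that $\cM_2$ is $(0, \delta')$-DP w.r.t.~$f_2$. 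Each wrapped $\V[f_i]\circstar\I\circstar\cM_i$ has discrete answer distributions because every created sub-mechanism is a randomized response with a finite answer space. By basic composition, $\comp(\RR_{\eps, \delta}, \RR_{0, \delta'})$ is $(\eps, \delta+\delta')$-DP, so Theorem~\ref{thm:comp-fixed-cm} applied to $\cM_1, \cM_2$ with creation queries $\alpha_1 = (\extconcomp, (), \eps, \delta, f_1)$ and $\alpha_2 = (\extconcomp, (), 0, \delta', f_2)$ implies that $\V[f^{\alpha_1, \alpha_2}]\circstar\I\circstar\extconcomp$ is $(\eps, \delta+\delta')$-DP.

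Next, I construct a singleton-state IPM $\cP$ that mediates between any adversary $\cA$ for the target composition and the two-CM composition above. On its first activation $\cP$ issues $(\alpha_1, \alpha_1)$ and $(\alpha_2, \alpha_2)$ to create the two inner $\extconcomp$'s. Whenever $\cA$ requests creation of an $\RR_{\eps_j', \delta_j'}$, $\cP$ routes that creation to $\cM_1$ when $\eps_j' > 0$ and to $\cM_2$ otherwise, recording a map $j \mapsto (c, j_c)$ from $\cA$'s global index $j$ to the local index $j_c$ inside sub-$\extconcomp_c$. A query-pair $((q_0, j), (q_1, j))$ from $\cA$ is rewritten as $(((q_0, j_c), c), ((q_1, j_c), c))$ and forwarded; responses are relayed unchanged. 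Checking that the routed sequence is $f^{\alpha_1, \alpha_2}$-valid exactly when $\cA$'s sequence is $f^{\eps_1, \dots, \eps_k}_{\infty, \delta', RR}$-valid establishes that $\cP \circstar (\V[f^{\alpha_1, \alpha_2}]\circstar\I(b)\circstar\extconcomp)$ is equivalent to $\V[f^{\eps_1, \dots, \eps_k}_{\infty, \delta', RR}]\circstar\I(b)\circstar\extconcomp$ for each $b \in \zo$, and then Lemma~\ref{lem:post-im} transports the $(\eps, \delta+\delta')$-DP guarantee to the target composition.

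The main obstacle is the validity bookkeeping: verifying that the sub-multiset condition on the positive $\eps_j'$'s in $f^{\eps_1, \dots, \eps_k}_{\infty, \delta', RR}$ corresponds exactly to the sub-multiset condition required by $f_1$ for the creations routed to $\cM_1$, that the product condition $1 - \prod_j(1-\delta_j') \leq \delta'$ translates correctly to the $f^{\delta'}_{\RR}$-validity of the subsequence routed to $\cM_2$ (using that pure-DP terms contribute factors of $1$), and that the suitability conditions across the two split sequences faithfully mirror those of the original single sequence, so that $\cP$ reproduces the halting behavior of the original verifier on every malformed prefix.
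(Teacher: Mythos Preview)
Your proposal is correct and follows essentially the same approach as the paper: split the incoming randomized-response creations into a pure-DP sub-composition and an approximate-DP sub-composition, invoke Theorem~\ref{thm:comp-fixed-cm} on the pair via basic composition of $\RR_{\eps,\delta}$ and $\RR_{0,\delta'}$, and transport the bound back through a routing IPM and the post-processing lemma. The only cosmetic differences are that the paper keeps the outer verifier $\V[f^{\eps_1,\dots,\eps_k}_{\infty,\delta',RR}]$ explicit (its post-processor is $\V[f^\RR]\circstar\calF$) rather than absorbing its halting behavior into $\cP$, and that the paper instantiates the first inner $\extconcomp$ with the fixed-mechanism verification function $f^{\beta_1,\dots,\beta_k}$ rather than your fixed-parameter choice $f^{\eps_1,0,\dots,\eps_k,0}$.
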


\begin{proof}
    Let $f^\RR=f^{\eps_1, \dots, \eps_k}_{\infty, \delta', RR}$. We design an IPM $\calF$ with a single initial state and a verification function $f^*$ and show:
    \begin{enumerate}
        \item For each $b\in\zo$, the IMs $\V[f^\RR]\circstar\I(b)\circstar\extconcomp$ and $\V[f^\RR]\circstar\calF\circstar\V[f^*]\circstar\I(b)\circstar\extconcomp$ are equivalent.
        \item $\extconcomp$ is $(\eps, \delta+\delta')$-DP w.r.t. $f^*$. 
    \end{enumerate}
    
    From property (2), it follows that for every adversary $\cA$, the views of $\cA$ interacting with $\V[f^*]\circstar\I(0)\circstar\extconcomp$ and $\V[f^*]\circstar\I(1)\circstar\extconcomp$ are $(\eps, \delta+\delta')$-indistinguishable. Then, by the post-processing lemma, for every adversary $\cA$, the views of $\cA$ interacting with $\V[f^\RR]\circstar\calF\circstar\V[f^*]\circstar\I(0)\circstar\extconcomp$ and $\V[f^\RR]\circstar\calF\circstar\V[f^*]\circstar\I(1)\circstar\extconcomp$ are $(\eps, \delta+\delta')$-indistinguishable. Thus, by property (1), the views of any adversary interacting with $\V[f^\RR]\circstar\I(0)\circstar\extconcomp$ and $\V[f^\RR]\circstar\I(1)\circstar\extconcomp$ are $(\eps, \delta+\delta')$-indistinguishable, implying that the $f^{\eps_1, \dots, \eps_k}_{\infty, \delta', RR}$-concurrent composition of CMs is $(\eps, \delta+\delta')$-DP.

    Intuitively, $\calF$ partitions the randomized responses mechanisms $\RR_{\eps_j', \delta_j'}$ requested to be created into two groups: those with $\eps_j' = 0$ and those with $\delta_j' = 0$. Then, instead of directly requesting $\extconcomp$ to execute the randomized response mechanisms, $\calF$ requests $\extconcomp$, called the {
    \em outer $\extconcomp$}, to run two so-called {\em inner} instances of $\extconcomp$, each responsible for executing one group. Note that $\extconcomp$ can create any continual sub-mechanism including another instance of $\extconcomp$. To implement this, $\calF$ begins by creating two internal instances of $\extconcomp$. It then rewrites incoming messages to ensure they are routed to the appropriate internal $\extconcomp$.

    To formally define $\calF$, we first define the verification function $f^*$: For each $i\in[k]$, let $\beta_i$ denote the creation query for $\RR_{\eps_i, 0}$. Let $\gamma_1=(\extconcomp, (), \eps, \delta, f^{\beta_1, \dots, \beta_k})$ be a creation query for creating an instance of $\extconcomp$ running $\RR_{\eps_1, 0}, \dots, \RR_{\eps_k, 0}$ (see Definition~\ref{def:ver-func-comp-fixed-cm}). Also, let $\gamma_2=(\extconcomp, (), 0, \delta, f^\delta_\RR)$ be a creation query for creating another instance of $\extconcomp$ executing adaptively chosen $\RR_{0, \delta_1'}, \RR_{0, \delta_2'}, \dots$ satisfying $1-\prod (1-\delta_j')\leq \delta'$ (see Definition~\ref{def:ver-func-sum-delta-rr}). We define $f^*=f^{\gamma_1, \gamma_2}$ as in Definition~\ref{def:ver-func-comp-fixed-cm}.
    
    We now define $\calF$. Initially (i.e., upon receiving its first left message), $\calF$ creates the two internal instances of $\extconcomp$ by sending the pairs of identical creation queries $(\gamma_1, \gamma_1)$ and $(\gamma_2, \gamma_2)$ to its right mechanism $\V[f^*]$. $\calF$ ignores the acknowledge responses $\top$. The left query set of $\calF$ equals the set of messages the verifier $\V[f^\RR]$ sends to its right mechanism, which consists of (A) pairs of identical creation queries and (B) messages of the form $\left((c, j), (c',j)\right)$ that requests giving the $j$-th created (randomized response) mechanism one of the bits $c$ and $c'$ as input and returning its response. We denote the $j$-th pair of creation queries by $(\alpha_j, \alpha_j)$ and its corresponding randomized response mechanism by $\RR_{\eps_j',\delta_j'}$. Note that the decision of whether to feed $\RR_{\eps_j',\delta_j'}$ with $c$ or $c'$ is made by the identifier.

    \noindent(A) Let $(\alpha_j, \alpha_j)$ denote the $j$-th pair of identical creation queries received by $\calF$, and let $\eps_j'$ and $\delta_j'$  be the privacy parameters in $\alpha_j$. Upon receiving $(\alpha_j, \alpha_j)$ from the left mechanism $\V[f^\RR]$, $\calF$ takes one of the following actions: 
    \begin{itemize}
        \item If $\eps_j'= 0$, $\calF$ sends the message $\left((\alpha_j, 2), (\alpha_j, 2)\right)$ to its right mechanism. This message passes through an identifier before reaching (the outer) $\extconcomp$. The outer mechanism $\extconcomp$ receives $(\alpha_j, 2)$ and forwards the creation query $\alpha_j$ to its second inner instance of $\extconcomp$.
        \item Otherwise, by definition of $f^\RR$, $\delta_j'=0$. In this case, $\calF$ sends the message $\left((\alpha_j, 1), (\alpha_j, 1)\right)$ to its right mechanism, resulting in the first inner instance of $\extconcomp$ to execute $\alpha_j$.
    \end{itemize}
    $\calF$ maintains an initially empty array in its state, initially empty. When processing $(\alpha_j,\alpha_j)$, it adds the pair $(w_j,t_j)$ to this array, where $w_j\in\{1,2\}$ identifies the inner $\extconcomp$ that has created $\RR_{\eps_j', \delta_j'}$, and $t_j$ indicates the index of the created mechanism within that instance.

    \noindent(B) Upon receiving a left message of the form $\left((c, j), (c',j)\right)$, $\calF$ sends the message $\left(((c, t_j), w_j), ((c', t_j), w_j)\right)$ to the right mechanism, using the stored values $w_j$ and $t_j$.

    Except for the first two acknowledgment messages for creating the internal instances of $\extconcomp$, upon receiving any message from the right, $\calF$ forwards it to the left mechanism unchanged. 

    By the definition of $f^\RR$, the privacy parameters $(\eps_j', \delta_j')$ associated with the creation queries satisfy the following properties: (i) $\eps_j'=0$ for all creation queries $\alpha_j$ except at most $k$ of them, which form a sub-multiset of the predetermined parameters $\eps_1, \dots, \eps_k$, and (ii) the inequality $1-\prod(1-\delta_j')$ holds. Therefore, by the construction of $\calF$, the verifier $\V[f^*]$ in $\V[f^\RR]\circstar\calF\circstar\V[f^*]\circstar\I\circstar\extconcomp$ always receives valid inputs and never halts the communication. Thus, for each $b\in\zo$, the IMs $\V[f^\RR]\circstar\calF\circstar\I(b)\circstar\extconcomp$ and $\V[f^\RR]\circstar\calF\circstar\V[f^*]\circstar\I(b)\circstar\extconcomp$ are equivalent. 

    Furthermore, by the design of $\calF$, for each $b\in\zo$, the IMs $\V[f^\RR]\circstar\calF\circstar\I(b)\circstar\extconcomp$ and $\V[f^\RR]\circstar\I(b)\circstar\extconcomp$ are equivalent: They both use the same verification function to validate messages. Upon receiving a query to create $\RR_{\eps_j', \delta_j'}$, they both create the mechanism and return $\top$. And upon receiving the pair of input datasets $(c_0, c_1)$ for $\RR_{\eps_j', \delta_j'}$, they both give $c_b$ to $\RR_{\eps_j', \delta_j'}$ and return its response. Thus, the IMs $\V[f^\RR]\circstar\I(b)\circstar\extconcomp$ and $\V[f^\RR]\circstar\calF\circstar\V[f^*]\circstar\I(b)\circstar\extconcomp$ are equivalent, and the property (1) is satisfied.

    It remains to prove that the property (2) holds. The assumption that $\comp(\RR_{\eps_1, 0}, \dots, \RR_{\eps_k, 0})$ is $(\eps, \delta)$-DP is equivalent to that $\extconcomp$ is $(\eps, \delta)$-DP w.r.t. the verification function $f^{\beta_1, \dots, \beta_k}$. Moreover, by Lemma~\ref{lem:filter-comp-eps-eq-zero}, $\extconcomp$ is $(0, \delta')$-DP w.r.t. $f^\RR$. By the optimal composition theorem of~\cite{murtagh2015complexity}, we have that the composition of $\RR_{\eps, \delta}$ and $\RR_{0, \delta'}$ is $(\eps, 1-(1-\delta)(1-\delta'))$-DP.\footnote{Replacing $k=2$, $\eps_1=\eps$, $\delta_1=\delta$, $\eps_2=0$, $\delta_2=\delta'$, and $\eps_g=\eps$ in Theorem 1.5 of~\cite{murtagh2015complexity} implies that for every $\delta_g$ such that $0\leq 1- \frac{1-\delta_g}{(1-\delta)(1-\delta')}$, the composition of $\RR_{\eps, \delta}$ and $\RR_{0, \delta'}$ is $(\eps, \delta_g)$-DP. $\delta_g=1-(1-\delta)(1-\delta')$ satisfies this inequality.} Therefore, by Theorem~\ref{thm:comp-fixed-cm}, $\extconcomp$ is $(\eps, 1-(1-\delta)(1-\delta'))$-DP w.r.t. the verification function $f^*=f^{\gamma_1, \gamma_2}$, finishing the proof.
\end{proof}

\begin{proof}[Proof of Theorem~\ref{thm:parallel-comp-approx}]
    Let $f=f^{\eps_1, \dots, \eps_k}_{\infty, \delta'}$ and $f^\RR=f^{\eps_1, \dots, \eps_k}_{\infty, \delta', RR}$. The proof proceeds in two steps: (1) We design an IPM $\cP$ and prove that for each $b\in\zo$, the IMs $\V[f]\circstar\I(b)\circstar\extconcomp$ and $\V[f]\circstar\cP\circstar\V[f^\RR]\circstar\I(b)\circstar\extconcomp$ are equivalent. (2) By Lemma~\ref{lem:basic-comp-fixed-and-filter}, the views of any adversary interacting with $\V[f^\RR]\circstar\I(0)\circstar\extconcomp$ and $\V[f^\RR]\circstar\I(1)\circstar\extconcomp$ is $(\eps, \delta+\delta')$-indistinguishable. Therefore, combining (1) and (2) with the post-processing lemma, it follows that the views of any adversary interacting with $\V[f]\circstar\I(1)\circstar\extconcomp$ and $\V[f]\circstar\I(0)\circstar\extconcomp$ are $(\eps, \delta+\delta')$-indistinguishable. This in turn implies that the $f$-concurrent composition of continual mechanisms is $(\eps, \delta+\delta')$-DP.

    We are left with showing (1).
    In $\V[f]\circstar\cP\circstar\V[f^\RR]\circstar\I\circstar\extconcomp$, the IPM $\cP$ receives from the left either (i) pairs of identical creation queries, or (ii) messages of the form $((q, j), (q', j))$, where $q$ and $q'$ are queries to the $j$-th created mechanism. Let $(\alpha_j, \alpha_j)$ be the $j$-th pair of creation queries, with $\alpha_j = (\cM_j', s_j', \eps_j', \delta_j', f_j')$. Also, for any $\eps''$ and $\delta''$, let $\beta_{\eps'', \delta''}$ denote the creation query for the mechanism $\RR_{\eps'', \delta''}$.

    The state of $\cP$ is a pair $(n, \sigma)$, where $n$ is a counter (initialized to $0$) and $\sigma$ is a sequence (initialized to an empty sequence). Upon receiving $(\alpha_i, \alpha_i)$ from the left, $\cP$ appends the tuple $(T_j, t_j, \eps_j', \delta_j', w_j=\varnull)$ to $\sigma$, where $\cT_j$ is the IPM corresponding to $\cM_j'$ from Corollary~\ref{cor:dp-cm-post-irr}, $t_j$ is $\cT_j$'s initial state, $(\eps_j', \delta_j')$ are the privacy parameters associated with $\cM_j'$, and $w_j$ is an initially null variable that will later be replaced by an outcome of a randomized response mechanism. $\cP$ then returns the acknowledgment message $\top$ as response to its left mechanism. 

    We note that although the privacy parameters $\eps_1, \dots, \eps_k$ in the statement of Theorem~\ref{thm:parallel-comp-approx} are assumed to be strictly positive, $\eps_j'$ might be zero. However, Corollary~\ref{cor:dp-cm-post-irr} is only applicable when $\eps_j'>0$. To fix this technical issue, if the privacy parameter $\eps_j'$ of the $j$-th creation query is zero, we change it to a small positive constant, e.g., $0.1$. Since $\cM_j'$ satisfies $(0, \delta_j')$-DP w.r.t. $f_j'$, it also satisfies $(0.1, \delta_j')$-DP w.r.t. $f_j'$. Thus, by Corollary~\ref{cor:dp-cm-post-irr}, there exists an IPM $\cT_j$ interacting with $\irr_{0.1, \delta_j'}$ and simulating $\V[f_j']\circstar\I\circstar\cM_j'$. By Definition~\ref{def:ver-func-par-approx}~(ii), since $0\not\in\{\eps_1, \dots, \eps_k\}$, the adversary always asks pairs of identical queries from $\cM_j'$, implying that $\cT_j$ interacts with its right mechanism at most once. The IPM $\cT_j$ expects $\irr_{0.1, \delta_j'}$ as its right mechanism, and $\irr_{0.1, \delta_j'}$ generates its first response only based on $\delta_j'$, and thus changing the value of $\eps_j'$ from zero to the arbitrary constant $0.1$ is unimportant.

    Upon receiving a left message of the form $((q, j), (q', j))$, the IPM $\cP$ sends $(q, q')$ as a left message to $\cT_j$. Upon receipt, $\cT_j$ may interact with its right mechanism. As described below, $\cP$ simulates the interaction of $\cT_j$ with its right mechanism until $\cT_j$ sends a left message $m$. $\cP$ then returns $m$ as a response to its left mechanism. By Corollary~\ref{cor:dp-cm-post-irr}, $\cT_j$  sends at most two right messages. 

    When $\cT_j$ sends a right message for the first time (i.e., when $w_j=\varnull$), $\cP$ requests the creation of $\RR_{0, \delta_j'}$ by sending the message $(\beta_{0, \delta_j'}, \beta_{0, \delta_j'})$ to its right mechanism $\V[f^\RR]\circstar\I\circstar\extconcomp$. $\cP$ discards the acknowledgment response $\top$, increments $n$ by one, and sends the message $\left((0, n), (1, n)\right)$ to its right mechanism. We will later show that $\V[f^\RR]$ never halts the communication. Given the initial state $b\in\zo$, upon receiving $\left((0, n), (1, n)\right)$, the identifier $\I(b)$ forwards $(b, n)$ to $\extconcomp$, which in turn sends $b$ to $\RR_{0, \delta_j'}$. The response of $\RR_{0, \delta_j'}$ is then forwarded to $\cP$ unchanged. Let $(\tau_j, z_j)\in \{\top, \bot\}\times\zo$ denote this response. $\cP$ sets $w_j=(\tau_j, z_j)$ and gives $\tau$ as a right response to $\cT_j$. Note that if $\tau_j=\bot$, the bit $b$ is exposed and $z_j=b$.

    When $\cT_j$ sends a second right message (i.e., when $w_j\neq \varnull$), $\cP$ proceeds as follows:
    \begin{itemize}[left=0pt]
        \item If $\tau_j=\bot$, then $\cP$ directly returns $z_j$ as the right response to $\cT_j$, thereby revealing the initial state of $\I$.
        \item Otherwise, $\cP$ requests the creation of $\RR_{\eps_j', 0}$ by sending $(\beta_{\eps_j', 0}, \beta_{\eps_j', 0})$ to its right mechanism $\V[f^\RR]\circstar\I\circstar\extconcomp$. As before, $\cP$ drops the acknowledgment response $\top$, increments the counter $n$, and sends $\left((0, n), (1, n)\right)$ to its right mechanism. In this case, $\RR_{\eps_j', 0}$ receives the initial state of $\I$ as input and returns a response $(\tau_j', z_j')$, which is forwarded unchanged to $\cP$. By definition $\tau_j'=\top$ with probability $1$. Finally, $\cP$ sends $z_j'$ as a right response to $\cT_j$.
    \end{itemize}

    By the definition of $f$, there are at most $k$ mechanisms among the creation queries sent by $\V[f]$ that receive messages of the form $((q, j), (q', j))$ with $q\neq q'$, and the parameter $\eps_j'$ of these mechanisms form a sub-multiset of $\eps_1, \dots, \eps_k$. By Corollary~\ref{cor:dp-cm-post-irr}, only the IPM $\cT_j$ corresponding to these mechanisms may send a second message to their right mechanisms. Moreover, by the definition of $f$, the parameters $\delta_j'$ of all mechanisms satisfy $1- \prod_j (1-\delta_j')\delta'$. Thus, by the design of $\cP$, the verifier $\V[f^\RR]$ never halts the communication and forwards all messages to the opposite side unchanged.

    Comparing Definition~\ref{def:irr} and Definition~\ref{def:rr}, we observe that the left responses generated by each $\cT_j$ are identically distributed to those that would be produced if it were interacting directly with the IM $\irr_{\eps_j', \delta_j'}(b)$. Thus, by Corollary~\ref{cor:dp-cm-post-irr}, each $\cT_j$ simulates $\V[f_j']\circstar\I(b)\circstar\cM_j'$ identically, implying that the IMs $\V[f]\circstar\I(b)\circstar\extconcomp$ and $\V[f]\circstar\cP\circstar\V[f^\RR]\circstar\I(b)\circstar\extconcomp$ are equivalent.
\end{proof}

\subsection{Composition Theorem for Restricted Verification Functions.}\label{subsec:parallel-restricted-ver-func}

In Theorem~\ref{thm:counter-example}, we showed that for any positive $\delta$, the concurrent $1$-sparse parallel composition of CMs with the privacy parameter $(0, \delta)$ fails to preserve privacy. In our counterexample, the adversary leveraged the first outcome of a $(0, \delta)$-DP CM to decide whether to spend its budget on issuing non-identical queries to that mechanism. Intuitively, when $\delta > 0$, there might exist certain ``bad first answers'' that occur with low probability; however, once we condition on their occurrences, the distribution of the second answer ceases to be differentially private.

In this section, we force the adversary to decide whether it will always issue pairs of identical queries to a CM when it sends the first message to that mechanism. Specifically, if the first message sent to a CM includes identical queries, then all subsequent messages for that CM must also have identical queries. This leads to the following definition.

\begin{definition}[First-Pair Consistent]\label{def:fpc}
    A verification function $f$ is \emph{first-pair consistent} if the following holds: for every input message sequence $m_1, \dots, m_t$, $f(m_1, \dots, m_t)=\bot$ whenever $m_1$ is a pair of identical messages, i.e., $m_1=(m', m')$ for some $m'$, and there exists $2\leq j\leq t$ such that $m_j$ is not a pair of identical messages.
\end{definition}

For instance, let $\cM$ be an IM that is $(\eps, \delta)$-DP w.r.t. a neighbor relation $\sim$. As discussed in Section~\ref{subsec:alternative-im}, $\cM$ can be represented by a CM $\cM'$ (with initial state $\varnull$) that receives an initial state $s\in S_\cM^\init$ as the first message and behaves the same as $\cM(s)$ afterwards. Define the verification function $f_\sim$ as follows: given an input message sequence $(m_1, \dots, m_t)$, $f_\sim$ returns $\top$ if and only if $m_1$ is a pair of neighboring initial states in $S_\cM^\init$, and for all $2 \leq j \leq t$, $m_j$ is a pair of identical queries in $Q_\cM$. By definition, $f_\sim$ is first-pair consistent.

To formally define the restricted variant of the concurrent parallel composition of CMs, we introduce the verification function $f^{\eps_1, \delta_1, \dots, \eps_k, \delta_k}_{\infty, \mathit{FPC}}$, where $FPC$ stands for first-pair consistency.

\begin{definition}[$f^{\eps_1, \delta_1, \dots, \eps_k, \delta_k}_{\infty, \mathit{FPC}}$]\label{def:ver-func-par-fixed-param-comp-fpc}
    For any $t\in\N$ and any input message sequence $m_1, \dots, m_t$, the function $f^{\eps_1, \delta_1, \dots, \eps_k, \delta_k}_{\infty, \mathit{FPC}}$ returns $\top$ if and only if both of the following conditions hold:
    \begin{enumerate}[left=0pt]
        \item [(i)] $f^{\eps_1, \delta_1, \dots, \eps_k, \delta_k}_{\infty}(m_1, \dots, m_t)=\top$
        \item [(ii)] For every message $m_j$ of the form $(\cM', s', \eps', \delta', f')^2$, the verification function $f'$ is first-pair consistent.
    \end{enumerate}
\end{definition}

We refer to the $f^{\eps_1, \delta_1, \dots, \eps_k, \delta_k}_{\infty, \mathit{FPC}}$-concurrent composition of CMs as the \emph{FPC concurrent $k$-sparse parallel composition of CMs with privacy parameters $\left((\eps_i, \delta_i)\right)_{i=1}^k$}.

\begin{theorem}\label{thm:parallel-fixed-param-comp-fpc}
    For $\eps_1, \dots, \eps_k> 0$, $\eps\geq 0$ and $0\leq \delta, \delta_1, \dots, \delta_k\leq 1$, if the composition of non-interactive mechanisms $\RR_{\eps_1, \delta_1}, \dots, \RR_{\eps_k, \delta_k}$ is $(\eps, \delta)$-DP, then the FPC concurrent $k$-sparse parallel composition of CMs with the privacy parameters $(\eps_1, \delta_1), \dots,(\eps_k, \delta_k)$ is also $(\eps, \delta)$-DP against an adaptive adversary.
\end{theorem}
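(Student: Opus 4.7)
The plan is to mirror the proof of Theorem~\ref{thm:comp-fixed-param-cm} while exploiting the first-pair consistency restriction to avoid the additional $\delta'$ loss that Theorem~\ref{thm:parallel-comp-approx} had to pay. Let $f = f^{\eps_1, \delta_1, \dots, \eps_k, \delta_k}_{\infty, \mathit{FPC}}$. I will construct a single-initial-state IPM $\cP$ such that for each $b \in \zo$, the IMs $\V[f] \circstar \I(b) \circstar \extconcomp$ and $\V[f] \circstar \cP \circstar \comp[\RR_{\eps_1, \delta_1}, \dots, \RR_{\eps_k, \delta_k}]\bigl((b)_{i=1}^k\bigr)$ are equivalent. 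Combining this equivalence with the hypothesis that $\comp[\RR_{\eps_1, \delta_1}, \dots, \RR_{\eps_k, \delta_k}]$ is $(\eps,\delta)$-DP and applying the post-processing lemma (Lemma~\ref{lem:post-im}) will then yield the claimed $(\eps,\delta)$-DP bound for $\extconcomp$ with respect to $f$.

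The construction of $\cP$ proceeds as follows. On its first left message, $\cP$ performs a single interaction with its right mechanism to obtain pre-sampled responses $(r_1, \dots, r_k)$ with each $r_i \in \{\top, \bot\} \times \zo$ distributed as $\RR_{\eps_i, \delta_i}(b)$, and marks every slot $i$ as free. For every creation query $(\cM_j', s_j', \eps_j', \delta_j', f_j')^2$ forwarded by the verifier, $\cP$ spins up an internal copy of the IPM $\cT_j$ given by Corollary~\ref{cor:dp-cm-post-irr} and returns the acknowledgment $\top$. When a later left message $((q_0, j), (q_1, j))$ arrives, $\cP$ feeds $(q_0, q_1)$ to $\cT_j$ and services $\cT_j$'s at most two right-facing queries according to a branching decision that is fixed at the very first such left message for $\cM_j'$. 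If that first pair is identical, then by first-pair consistency every subsequent pair to $\cM_j'$ is also identical, and Corollary~\ref{cor:dp-cm-post-irr} guarantees that $\cT_j$ will request at most the first component of $\irr_{\eps_j', \delta_j'}(b)$; $\cP$ supplies this component by flipping a local biased coin of bias $\delta_j'$, consuming no part of the stored randomness and no information about $b$. Otherwise $q_0 \neq q_1$, which by the definition of $f$ forces $(\eps_j', \delta_j')$ to appear among the still-unused fixed parameters $(\eps_i, \delta_i)$; $\cP$ assigns the smallest such free slot $i$ to $\cM_j'$, marks it used, and emulates $\irr_{\eps_j', \delta_j'}(b)$ by translating the pre-sampled $r_i$ into the two-query interface of $\irr$, which is faithful because the marginal of $r_i$ matches $\RR_{\eps_i, \delta_i}(b)$.

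I expect the main obstacle to be the formal verification of the equivalence rather than the construction itself. The delicate point is that the branching decision inside $\cP$ is irrevocable and is made at the first query pair for each mechanism; this is exactly what FPC buys us and what fails in the counterexample of Theorem~\ref{thm:counter-example}, where the adversary used the first response of an identical-pair interaction to decide whether to spend its budget later. With FPC in place, the slot allocation is well defined, at most $k$ slots are ever consumed, and the locally simulated first-components for identical-pair mechanisms are statistically independent of both $b$ and the stored randomness $(r_1, \dots, r_k)$. Together with Corollary~\ref{cor:dp-cm-post-irr}, which guarantees that each $\cT_j$ fed a correct $\irr_{\eps_j', \delta_j'}(b)$ transcript reproduces $\V[f_j'] \circstar \I(b) \circstar \cM_j'$ identically, this yields the required equivalence of views and completes the argument.
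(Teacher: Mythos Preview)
Your proposal is correct and follows essentially the same strategy as the paper: construct an IPM $\cP$ so that $\V[f]\circstar\I(b)\circstar\extconcomp$ is equivalent to $\V[f]\circstar\cP\circstar\comp[\RR_{\eps_1,\delta_1},\dots,\RR_{\eps_k,\delta_k}]\bigl((b)_{i=1}^k\bigr)$, with the branching decision made at the \emph{first} query pair for each created mechanism, and a slot-matching assignment of the pre-sampled $r_i$'s to the (at most $k$) mechanisms whose first pair is non-identical.

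The only noteworthy difference is in how the identical-first-pair branch is handled. The paper does not invoke Corollary~\ref{cor:dp-cm-post-irr} there at all: it simply lets $\cT_j$ be an IPM that never touches its right side and behaves exactly as $\V[f_j']\circstar\I(0)\circstar\cM_j'$; this is legitimate because under FPC all subsequent pairs for $\cM_j'$ are identical, so $\I(0)$ and $\I(b)$ act the same. Your variant instead instantiates the IPM of Corollary~\ref{cor:dp-cm-post-irr} for \emph{every} created mechanism and, in the identical branch, locally simulates the single $\irr$ query with a $\delta_j'$-biased coin. That is also sound, but it creates a small technicality you did not address: Corollary~\ref{cor:dp-cm-post-irr} requires $\eps_j'>0$, whereas mechanisms in the identical branch may have $\eps_j'=0$. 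This is easily patched (bump $\eps_j'$ to any positive constant before invoking the corollary, as in the proof of Theorem~\ref{thm:parallel-comp-approx}, since only the $\delta_j'$-dependent first response of $\irr$ is ever used), but the paper's direct-simulation choice sidesteps the issue entirely and is a bit cleaner.
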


\begin{proof}
    We denote $f$ as shorthand for $f^{\eps_1, \delta_1, \dots, \eps_k, \delta_k}_{\infty, \mathit{FPC}}$. The proof is identical to the proof of Theorem~\ref{thm:comp-fixed-param-cm}, except for the definition of IPMs $\cT_j$ that are executed by the IPM $\cP$: In the proof of Theorem~\ref{thm:comp-fixed-param-cm}, each $\cT_j$ was defined according to Lemma~\ref{lem:cm-post-rr-lyu}. In this proof, each $\cT_j$ is defined by Corollary~\ref{cor:dp-cm-post-irr} if the first message of the form $\left((q_0, j), (q_1, j)\right)$ satisfies $q_0\neq q_1$, and it behaves identically to $\V[f_j']\circstar\I(0)\circstar\cM_j'$ otherwise.

    To implement this, the sequence $\sigma_2$ in $\cP$'s state must include an additional boolean variable $u_j$ for each created mechanism $\cM_j'$, indicating whether the mechanism has previously received any messages. Specifically, when $\cP$ receives the creation query pair $(\cM_j', s_j', \eps_j', \delta_j', f_j')^2$ from the verifier, instead of adding $(\cT_j, t_j, \eps_j', \delta_j')$, it adds the tuple $(\cM_j', s_j', \eps_j', \delta_j', u_j=0)$ to $\sigma_2$. 

    Upon receiving a left message of the form $\left((q_0, j), (q_1, j)\right)$, if $u_j=0$, before responding, $\cP$ replaces the $j$-th entry $(\cM_j', s_j', \eps_j', \delta_j', u_j=0)$ in $\sigma_2$ with $(\cT_j, t_j, \eps_j', \delta_j', 1)$, where $T_j$ is an IPM defined as follows and $t_j$ is its initial state:
    \begin{itemize}[left=0pt]
        \item If $q_0=q_1$, then $\cT_j$ is an IPM with the single initial state $\left((), 0, s_j'\right)$ that never interacts with its right mechanism and interacts with its left mechanism exactly as $\V[f_j']\circstar\I(0)\circstar\cM_j'$. That is, upon receiving a right message, $\cT_j$ halts the communication, and upon receiving a left message, it updates its state and responds identically to $\V[f_j']\circstar\I(0)\circstar\cM_j'$.
        \item If $q_0 \ne q_1$, then $\cT_j$ is the IPM defined by Lemma~\ref{lem:cm-post-rr-lyu} corresponding to $\cM_j'$.
    \end{itemize}

    To generate a response to a left message of the form $\left((q_0, j), (q_1, j)\right)$, as before, $\cP$ forwards $(q_0, q_1)$ to $\cT_j$ as a left message and provide a properly chosen $r_i$ as a right response if needed (see the proof of Theorem~\ref{thm:comp-fixed-param-cm}).

    By Definition~\ref{def:ver-func-par-fixed-param-comp-fpc} (ii), we know that for each index $j$, if $q_0=q_1$ in the first message of the form $\left((q_0, j), (q_1, j)\right)$, then $q_0=q_1$ in all subsequent messages of this form. In this case, by the above definition, for each $b\in\zo$, the mechanism $\cT_j$ never interacts with $\RR_{\eps_j', \delta_j'}(b)$ and simulates $\V[f_j']\circstar\I(b)\circstar\cM_j'$ identically. Moreover, when $q_0\neq q_1$ in the first message, by Definition~\ref{def:ver-func-par-fixed-param-comp-fpc} (i) and Definition~\ref{def:ver-func-par-fixed-param-comp-cm} (ii), the privacy parameters $(\eps_j', \delta_j')$ corresponding to $\cT_j$s with the second definition form a sub-multiset of $(\eps_1, \delta_1), \dots, (\eps_k, \delta_k)$. Therefore, whenever an IPM $\cT_j$ requires a sample from $\RR_{\eps_j', \delta_j'}$, $\cP$ can find such sample in $\sigma_1$ that is unused.
\end{proof}

\subsubsection{Concurrent $k$-sparse Parallel Composition of IM}\label{subsubsec:parallel-comp-im}
Consider a setting where an adversary creates (an unlimited number of) IMs over time while adaptively asking queries from the existing ones. When creating an IM $\cM_i$ that is $(\eps_i, \delta_i)$-DP w.r.t. a neighbor relation $\sim_i$, the adversary also selects a pair of neighboring initial states for $\cM_i$. This selection depends on the history of queries asked so far and their responses. Based on a secret bit $b$, either all created mechanisms are initialized with the first initial state in their respective pairs, or all are initialized with the second. The goal of the adversary is to distinguish whether $b=0$ or $b=1$.

This scenario is an special case of FPC concurrent $k$-sparse parallel composition of CMs, where, instead of creating $\cM_i$ and assigning a pair of initial states to it, the adversary sends $(\cM_i', \varnull, \eps_i, \delta_i, f_{\sim_i})^2$, where $\cM_i'$ is the CM simulating $\cM_i$. Moreover, to ask a query $q$ from the $i$-th created IM, the adversary sends $\left((q, i), (q, i)\right)$. Since each $f_{\sim_i}$ is first-pair consistent, Theorem~\ref{thm:parallel-fixed-param-comp-fpc} applies to the \emph{concurrent $k$-sparse parallel composition of IM} (that are differentially private w.r.t. neighbor relations).

\begin{corollary}\label{cor:parallel-fixed-param-comp-im}
    For $\eps_1, \dots, \eps_k> 0$, $\eps\geq 0$ and $0\leq \delta, \delta_1, \dots, \delta_k\leq 1$, if the composition of non-interactive mechanisms $\RR_{\eps_1, \delta_1}, \dots, \RR_{\eps_k, \delta_k}$ is $(\eps, \delta)$-DP, then the concurrent $k$-sparse parallel composition of IM with privacy parameters $\left((\eps_i, \delta_i)\right)_{i=1}^k$ is also $(\eps, \delta)$-DP.
\end{corollary}

NIMs are a special subclass of IMs that halt the communication upon receiving a second input message. Since Theorem~\ref{thm:parallel-fixed-param-comp-fpc} applies to all IMs, it also holds for the \emph{concurrent $k$-sparse parallel composition of NIMs with privacy parameters $\left((\eps_1, \delta_i\right)_{i=1}^k$}. 
In the next section, we compare the $k$-sparse parallel composition of IMs to the classical parallel composition theorem for NIMs.

\subsubsection{Comparison with the Parallel Composition of NIMs.} 
Consider a set of $\ell$ NIMs, $\cN_1, \dots, \cN_\ell$, where each $\cN_i$ has a domain $\calX_i$ and is $(\eps_i, \delta_i)$-DP w.r.t. a neighbor relation $\sim_i$. The \emph{$k$-sparse parallel composition of NIMs} $\cN_1, \dots, \cN_\ell$ is said to be $(\eps, \delta)$-DP if the non-interactive mechanism $\comp(\cN_1, \dots, \cN_\ell)$ with the domain $\calX=\calX_1\times\dots\times \calX_\ell$ is $(\eps, \delta)$-DP w.r.t. the neighbor relation $\sim_{\partext}$, defined as follows: for any two $x=(x_1, \dots x_\ell)$ and $x'=(x_1', \dots x_\ell')$ in $\calX$, we have $x\sim_{\partext} x'$ if and only if there exists a subset $\mathcal{J}\subseteq\{1,\dots,\ell\}$ of size at most $k$ such that $x_j\sim_j x_j'$ for every $j\in \mathcal{J}$, and $x_j=x_j'$ for every $j\in\{1,\dots,\ell\}\setminus\mathcal{J}$.

\medskip
The concurrent parallel composition of IMs generalizes the parallel composition of NIMs in several key ways: Rather than composing a fixed set of $\ell$ NIMs, an adversary adaptively selects an \emph{unbounded number} of IMs with arbitrary privacy parameters. Additionally, instead of non-adaptively selecting worst-case neighboring initial states for (at most) $k$ mechanisms and identical initial states for the other $\ell-k$ mechanisms, the adversary adaptively (1) chooses pairs of initial states based on prior interactions with other mechanisms, while (2)  ensuring that the initial states in at most $k$ pairs are neighboring, and that they are identical in the other pairs. In the concurrent parallel composition of CMs, the adversary is even stronger: instead of choosing a single pair of neighboring initial states, \emph{it adaptively selects a sequence of pairs}, satisfying the constraints enforced by a verification function.

The $1$-sparse parallel composition of NIMs is commonly used when a dataset is partitioned among $\ell$ NIMs such that for any two neighboring datasets differing by a single record, all partitions remain identical except for the presence of the differing record in one partition. Consider the case where an adversary creates CMs receiving data or empty records as query messages. In the concurrent parallel composition of CMs, the adversary adaptively determines the number of partitions, assigns records to them, and selects which partitions contain differing sets of records.

It is known that the $k$-sparse parallel composition of $\ell$ $(\eps_0, \delta_0)$-DP NIMs is $(\eps, \delta)$-DP if the composition of $\ell$ instances of $\RR_{\eps_0, \delta_0}$ is $(\eps, \delta)$-DP. We showed that any NIM that is $(\eps_0, \delta_0)$-DP w.r.t. a neighbor relation $\sim$ can be modeled by a CM that is $(\eps_0, \delta_0)$-DP w.r.t. $f_\sim$, which is a first-pair consistent verification function (see the example below Definition~\ref{def:fpc}). Thus, setting $\eps_1=\dots=\eps_k=\eps_0$, $\delta_1=\dots=\delta_k=\delta_0$ in Theorem~\ref{thm:parallel-fixed-param-comp-fpc} recovers and generalizes the classic parallel composition theorem. Our theorem extends differential privacy guarantees to settings where an adversary can adaptively create an unbounded number of NIMs.

\section{Designing Complex Continual Mechanisms}\label{sec:analyze-complex-cm}
So far we discussed the setting where the messages are issued by an adversary. However,
in the design of  differentially private mechanisms in the streaming setting, it is common to use existing differentially private interactive mechanisms, such as the Sparse Vector Technique (SVT) and continual counters, as sub-mechanisms. 
Thus, it is the main mechanism $\cM$ that issues messages to its sub-mechanisms.
The main mechanism may fix the set of sub-mechanisms at initialization or select these mechanisms adaptively over time. 
For example, in Section~\ref{sec:application}, we present a mechanism that maintains $d$ fixed binary counters while adaptively instantiating an unbounded number of SVT and Laplace mechanisms as required.

The privacy analysis of such mechanisms $\cM$ is challenging as the creation of sub-mechanisms and their inputs is adaptive and dependent. The goal of this section is to give a set of conditions for $\cM$ and prove a theorem that shows that under these conditions the privacy of $\cM$ can directly be bounded by applying our concurrent composition theorems for continual mechanisms to the sub-mechanisms. Specifically, given that the $f'$-concurrent composition of CMs is $(\eps, \delta)$-DP w.r.t. some verification function $f'$, we would like to show that $\cM$ is $(\eps, \delta)$-DP w.r.t. the given verification function $f$.

As discussed in Section~\ref{sec:cm}, these sub-mechanisms can be modeled as CMs. We formalize $\cM$, which runs multiple CMs, as a continual mechanism of the form $\cP\circstar\extconcomp$, where $\cP$ is an IPM and $\extconcomp$ is the CM defined in Section~\ref{subsec:composition-extconcomp}. Since $\cM$ is a CM, its initial state space is a singleton, implying that the initial state space of $\cP$ is also a singleton. Upon receiving a message, $\cP$ may create new differentially private CMs or ask queries from the existing ones by communicating with $\extconcomp$, possibly multiple times. 

To prove that $\cM$ is $(\eps, \delta)$-DP w.r.t. $f$, we need to show that the views of every adversary $\cA$ interacting with $\V[f]\circstar\I(0)\circstar \cM = \V[f]\circstar\I(0)\circstar\cP\circstar\extconcomp$ and $\V[f]\circstar\I(1)\circstar \cM=\V[f]\circstar\I(1)\circstar\cP\circstar\extconcomp$ are $(\eps, \delta)$-indistinguishable.
Intuitively, the messages that $\cP\circstar\extconcomp$ receives from $\I(b)$ contain information about $b$. Therefore, to deduce that $\cP\circstar\extconcomp$ is $(\eps, \delta)$-DP w.r.t. $f$ from the fact that $\extconcomp$ is $(\eps, \delta)$-DP w.r.t. $f'$, the IPM $\cP$ must not directly use the messages from $\I$ to respond. Instead, it should only use these messages to determine which queries it wants to send to its sub-mechanisms via $\extconcomp$. More precisely, $\cP$ must generate an answer for $\I$ only  based on the privatized answers provided by $\extconcomp$ and \emph{independent} of the un-privatized input sequence (Response Property). Moreover, $\cP$ must decide whether to continue interacting with $\extconcomp$ or to return a response based on these privatized answers (Destination Property).

Finally, we need a property for $\cP$ to convert $f$-valid message sequences to $f'$-valid ones: Let $m_1^0, m_2^0, \dots$ and $m_1^1, m_2^1, \dots$ be two sequences such that the sequence $(m_1^0, m_1^1), (m_2^0, m_2^1), \dots$ is $f$-valid. Upon receiving each sequence $m_1^b, m_2^b, \dots$, the mechanism $\cP$ must send the messages $m_1^{b\prime}, m_2^{b\prime}, \dots$ to $\extconcomp$ such that $(m_1^{0\prime}, m_1^{1\prime}), (m_2^{0\prime}, m_2^{1\prime}), \dots$ is $f'$-valid ($f\to f'$ Property).

In Section~\ref{subsec:properties}, we formalize the above properties for $\cP$, and in Section~\ref{subsec:privacy-complex-cm}, we prove the desired privacy guarantees for $\cM$. Combining this result with the concurrent composition theorems provided in the previous sections, we can avoid complicated privacy analysis for many CMs. To obtain rigorous results, our statements and proofs must be presented in a technical and somewhat complex form. In contrast, mechanisms are usually described more simply using pseudocode rather than message-passing state machines. In Section~\ref{subsec:pseudocode}, we reformulate our results in terms of pseudocode, enabling new mechanisms to be analyzed directly without the need for reformulating to the setting in this paper.

\subsection{Properties.}\label{subsec:properties}
To formally state that $\cP$ maps $f$-valid sequences to $f'$-valid ones, \emph{we assume that $\cP$ is deterministic}. This means $\cP$ can be expressed as a deterministic function that maps the sequence of received messages to its next output message. This assumption is not restrictive, as randomness can be modeled by a $0$-DP sub-mechanism $\cM_i$ that accepts a single message $\top$, and upon receiving it, generates a sufficiently long sequence of random bits or samples from a fixed distribution. 

Recall that the goal of this section is to show $\cP\circstar\extconcomp$ is $(\eps, \delta)$-DP w.r.t. $f$ if $\extconcomp$ is $(\eps, \delta)$-DP w.r.t. $f'$. That is, we need to show that the views of every adversary interacting with $\V[f]\circstar\I(0)\circstar\cP\circstar\extconcomp$ and $\V[f]\circstar\I(1)\circstar\cP\circstar\extconcomp$ are $(\eps, \delta)$-indistinguishable.

The messages received by $\cP$ from $\I$ contain information about the bit used by $\I$ to filter the adversary's pairs. Therefore, to directly use the existing composition theorems for $\extconcomp$ to analyze the privacy of $\cP\circstar\extconcomp$, we need to ensure that $\cP$ only utilizes the messages from $\I$ to generate queries for the existing private sub-mechanisms executed by $\extconcomp$. Moreover, we need to guarantee that the sequence $(m_1^{0\prime}, m_1^{1\prime}), (m_2^{0\prime}, m_2^{1\prime}), \dots$ is $f'$-valid, where $m_1^{b\prime}, m_2^{b\prime}, \dots$ is the sequence of messages that $\cP$ sends to $\extconcomp$ when it receives messages from $\I(b)\circstar\V[f]$. The following properties formalize these:

\begin{itemize}[left=0pt]
    \item \textbf{Destination Property.} The destination $\ttL$ or $\ttR$ of messages sent by $\cP$ is determined solely based on messages received from its right mechanism (here, $\extconcomp$). This ensures that if two transcripts contain identical sequences of messages from the right mechanism but different sequences from the left mechanism, $\cP$ will still interact with the same mechanism in the next round.
    \item \textbf{Response Property.} The responses sent by $\cP$ to the left mechanism depend only on the messages received from the right mechanism. That is, if two transcripts contain identical sequences of messages from the right mechanism and $\cP$ sends its next message to the left mechanism, then this message is identical in both cases.
    \item \textbf{$f\to f'$ Property.}
    For $v\in\{\ttL, \ttR\}$, let the pair $(m, v)$ denote a message $m$ sent to or received from the $v$'th mechanism (i.e., left or right) of $\cP$. For $b\in\zo$, let $\sigma^b=\left((m_1^b, v_1^b), \dots, (m_t^b, v_t^b)\right)$ be a sequence of $t$ input messages to $\cP$, and let $\tau^b=\left((m_1^{b\prime}, v_1^{b\prime}), \dots, (m_{t}^{b\prime}, v_{t}^{b\prime})\right)$ be the corresponding sequence of $t$ output messages from $\cP$. Suppose the following conditions hold:
    \begin{enumerate}
        \item[(i)] $v_i^0=v_i^1$ for each $i\in [t]$, i.e., the sender of messages is identical;
        \item[(ii)] $m_i^0=m_i^1$ for each $i\in [t]$ where $v_i^0=v_i^1=\ttR$, meaning that messages from the right mechanism are identical; and
        \item[(iii)] the sequence of $(m_i^0, m_i^1)$s for all $i\in [t]$ where $v_i^0=v_i^1=\ttL$ is $f$-valid.
    \end{enumerate}
    Then, by the destination property of $\cP$, $v_i^{0\prime}=v_i^{1\prime}$ for each $i\in [t]$, and by the response property, $m_i^{0\prime}=m_i^{1\prime}$ for each $i\in [t]$ with $v_i^0=v_i^1=\ttL$. The $f\to f'$ property states that under these conditions, the sequence $\left((m_i^{0\prime}, m_i^{1\prime})\right)$ for all $i\in [t]$ where $v_i^{0\prime}=v_i^{1\prime}=\ttR$ must be $f'$-valid.
\end{itemize}

\subsection{Privacy Result.}\label{subsec:privacy-complex-cm}
In this section, we will prove the following theorem:
\begin{theorem}\label{thm:dp-complex-mech}
    Let $\cP$ be a deterministic interactive post-processing mechanism with a singleton initial state space satisfying the destination, response, and $f\to f'$ properties. If $\extconcomp$ is $(\eps, \delta)$-DP w.r.t. $f'$ against adaptive adversaries, then the continual mechanism $\cP\circstar\extconcomp$ is $(\eps, \delta)$-DP w.r.t. $f$ against adaptive adversaries.
\end{theorem}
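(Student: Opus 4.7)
The plan is to reduce the target statement to the interactive post-processing lemma (Lemma~\ref{lem:post-im}) applied to the IM $\V[f']\circstar\I\circstar\extconcomp$, whose $(\eps,\delta)$-DP guarantee against neighboring initial states $0\sim_{01}1$ follows from the hypothesis together with Corollary~\ref{cor:eq-def-dp-cm}. Specifically, I will construct a deterministic IPM $\mathcal{Q}$ with a singleton initial state such that, for each $b\in\zo$, the IMs $\V[f]\circstar\I(b)\circstar\cP\circstar\extconcomp$ and $\mathcal{Q}\circstar\V[f']\circstar\I(b)\circstar\extconcomp$ are equivalent in the sense of Definition~\ref{def:identical-IMs}. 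Given such a $\mathcal{Q}$, Lemma~\ref{lem:post-im} immediately yields that the view of every adversary under the two settings $b\in\zo$ of the post-processed mechanism is $(\eps,\delta)$-indistinguishable, and Corollary~\ref{cor:eq-def-dp-cm} then translates this back into the desired $(\eps,\delta)$-DP of $\cP\circstar\extconcomp$ with respect to $f$.

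To define $\mathcal{Q}$, I run internally an instance of $\V[f]$ (on the left-hand stream of incoming pairs from the adversary) together with two parallel copies $\cP_0$ and $\cP_1$ of the deterministic $\cP$, both initialized in $\cP$'s unique initial state. When $\mathcal{Q}$ receives a left message $(m^0,m^1)$, it first updates its internal $\V[f]$; if $f$ now rejects the history, $\mathcal{Q}$ halts, mirroring $\V[f]$ in the original composition. Otherwise, $\mathcal{Q}$ feeds $m^b$ to $\cP_b$ for $b=0,1$ and advances them in lockstep, unfolding Algorithm~\ref{alg:p_circ_m} on each copy. While both copies emit right-bound messages, $\mathcal{Q}$ bundles their outputs into a pair $(m'^0,m'^1)$, sends it as its own right message to $\V[f']\circstar\I(b)\circstar\extconcomp$, receives the single response $r$, and hands $r$ back to both $\cP_0$ and $\cP_1$ as a right input. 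Once both copies emit a left-bound message, $\mathcal{Q}$ sends a single such message to its left and returns control to the adversary.

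Correctness of the simulation rests on the following invariant, which I will verify by induction on the number of transition steps taken by $\cP_0$ and $\cP_1$: throughout the simulation, the two copies have received identical sequences of right-messages, while their left-inputs are exactly the $0$-th and $1$-st projections of an $f$-valid pair sequence. The three hypothesized properties of $\cP$ preserve this invariant and make the lockstep well-defined. The destination property ensures that, under the invariant, $\cP_0$ and $\cP_1$ always pick the same direction for their next output, so ``both right'' versus ``both left'' are the only cases $\mathcal{Q}$ must handle. The response property ensures that whenever the direction is left, their outgoing content is identical, so $\mathcal{Q}$ can forward a single left message without breaking equivalence. Finally, the $f \to f'$ property ensures that the sequence of right-pairs $\mathcal{Q}$ outputs to its right mechanism is $f'$-valid, so the internal $\V[f']$ never halts; this is essential, because it guarantees that each $\cP_b$ inside $\mathcal{Q}$ actually sees the same $\extconcomp$-responses that the single real $\cP$ would see in the original composition when $\I$'s bit equals $b$.

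With the invariant established, a direct transcript-by-transcript comparison shows that, for each fixed $b\in\zo$, $\cA$'s view under $\mathcal{Q}\circstar\V[f']\circstar\I(b)\circstar\extconcomp$ matches its view under $\V[f]\circstar\I(b)\circstar\cP\circstar\extconcomp$: the outer $\V[f]$ filtering is identical, the single $\extconcomp$ on the right sees exactly the same sequence of queries (namely, the $b$-th projection of the right-pairs emitted by $\mathcal{Q}$), and the left responses returned to $\cA$ coincide by the response property. Invoking Lemma~\ref{lem:post-im} and Corollary~\ref{cor:eq-def-dp-cm} then finishes the proof. The main technical obstacle I anticipate is the careful formulation of ``lockstep'' when $\cP$ performs multiple right-interactions before producing a single left response, and the bookkeeping needed to unfold Algorithm~\ref{alg:p_circ_m} on $\cP_0$ and $\cP_1$ simultaneously while maintaining the shared-right-history invariant across all of $\cA$'s rounds; however, no new ideas beyond the three stated properties of $\cP$ are required.
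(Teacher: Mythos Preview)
Your proposal is correct and follows essentially the same approach as the paper. Your IPM $\mathcal{Q}$ is exactly the paper's $\V[f]\circstar\cP^2$, where $\cP^2$ runs two copies of $\cP$ in lockstep; the paper likewise uses the destination and response properties to make the lockstep well-defined, the $f\to f'$ property to argue that the inserted $\V[f']$ never halts, and then applies the post-processing lemma (Lemma~\ref{lem:post-im}) to $\V[f']\circstar\I\circstar\extconcomp$.
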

In Theorem~\ref{thm:dp-complex-mech}, we assume that the views of every adversary interacting with $\V[f']\circstar\I(0)\circstar\extconcomp$ and $\V[f']\circstar\I(1)\circstar\extconcomp$ are $(\eps, \delta)$-indistinguishable. The goal is to show the same result for $\V[f]\circstar\I(0)\circstar\cP\circstar\extconcomp$ and $\V[f]\circstar\I(1)\circstar\cP\circstar\extconcomp$. The key idea of the proof is to show that the IMs $\V[f]\circstar\I(b)\circstar\cP\circstar\extconcomp$ and $\cT\circstar\V[f']\circstar\I(b)\circstar\extconcomp$ are identical for some IPM $\cT$ with a singleton initial state space. The proof consists of two steps.

\vspace{0.3cm}\noindent
\underline{Step 1 (Swapping $\cP$ and $\I(b)$).} We start by swapping the positions of $\cP$ and $\I(b)$ in $\V[f]\circstar\I(b)\circstar\cP\circstar\extconcomp$. The verifier $\V[f]$ sends pairs of query messages, while $\cP$ expects a single one. Also, the identifier $\I(b)$ expects pairs of messages, while $\cP$ generates a single response. Therefore, when swapping $\cP$ and $\I(b)$, we should replace $\cP\circstar\I(b)$ with $\I(b)\circstar\cP^2$, where $\cP^2$ is an interactive post-processing mechanism that runs two instances of $\cP$ in parallel. 
The right mechanism of $\cP^2$ is $\I\circstar\extconcomp$, and its left mechanism is $\V[f]$. The exact definition of $\cP^2$ is given below.

\begin{definition}[$\cP^2$]\label{def:p2}
    Let $\cP : S_\cP \times \left((\{\ttL\} \times Q_\cP^L) \cup (\{\ttR\} \times Q_\cP^R)\right)\rightarrow S_\cP \times \left((\{\ttL\} \times A_\cP^L) \cup (\{\ttR\} \times A_\cP^R)\right)$ be a deterministic interactive post-processing mechanism with a singleton initial state space $S_\cP^\init=\{s^\init\}$, satisfying the destination and response properties. $\cP^2: S_\cP^2 \times \left((\{\ttL\} \times Q_\cP^L) \cup (\{\ttR\} \times Q_\cP^R \times Q_\cP^R)\right)\rightarrow S_\cP \times \left((\{\ttL\} \times A_\cP^L) \cup (\{\ttR\} \times A_\cP^R)\right)$ is an interactive post-processing mechanism that executes two instances of $\cP$, referred to as $\cP_0$ and $\cP_1$. The state of $\cP^2$ is a pair $(s_0, s_1)$ denoting the current states of $\cP_0$ and $\cP_1$. The initial state space of $\cP^2$ is $\{(s^\init, s^\init)\}$.
    
    When $\cP^2$ receives a message $m\in Q_\cP^L$ from the left mechanism, it forwards two copies of $m$ to both $\cP_0$ and $\cP_1$ as left messages. Specifically, it executes $(s_0, v_0', m_0')\gets \cP_0(s_0, \ttL, m)$ and $(s_1, v_1', m_1')\gets \cP_1(s_1, \ttL, m)$.  When $\cP^2$ receives a pair of messages $(m_0, m_1)\in Q_\cP^R \times Q_\cP^R$ from the right mechanism, it passes each $m_b$ to $\cP_b$ as a right message. Specifically, it executes $(s_0, v_0', m_0')\gets \cP_0(s_0, \ttR, m_0)$ and $(s_1, v_1', m_1')\gets \cP_1(s_1, \ttR, m_0)$. By the destination property of $\cP$ and the fact that both $\cP_0$ and $\cP_1$ receive identical left messages, we have $v_0'=v_1'$. If $v_0'=v_1'=\ttR$, then $\cP^2$ sends $(m_0', m_1')$ to its right mechanism. If $v_0'=v_1'=\ttL$, then by the response property, $m_0'=m_1'$, and $\cP^2$ returns $m_0'$ to its left mechanism.
\end{definition}

The following claim follows directly from the definition of $\cP^2$:

\begin{claim}\label{cla:swap}
    For each $b\in\zo$, the interactive mechanisms $\I(b)\circstar\cP\circstar\extconcomp$ and $\cP^2\circstar\I(b)\circstar\extconcomp$ are identical.
\end{claim}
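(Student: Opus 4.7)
The proof amounts to tracking internal states through an induction on the number of message exchanges between an arbitrary adversary and the mechanism. The invariant I would maintain is: after every exchange, the state of $\extconcomp$ is the same in both compositions, and the state of $\cP$ in $\I(b)\circstar\cP\circstar\extconcomp$ equals the state of the internal copy $\cP_b$ inside $\cP^2$ in $\cP^2\circstar\I(b)\circstar\extconcomp$. The base case is immediate since $\extconcomp$ and every copy of $\cP$ start in their unique initial states.

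For the inductive step, consider a new adversary pair $(m_0,m_1)$ arriving at the interface. In the first mechanism, $\I(b)$ filters it to $m_b$ and hands it to $\cP$. In the second mechanism, $\cP^2$ delivers $m_0$ to $\cP_0$ and $m_1$ to $\cP_1$; by the hypothesis, $\cP_b$ starts in the same state as $\cP$ with the same input, so it produces the same next step. The destination property of $\cP$ forces $\cP_0$ and $\cP_1$ to always pick the same direction, so $\cP^2$ is well-defined and advances in lock-step with $\cP$. When the direction is $\ttR$, $\cP^2$ emits a pair whose $b$-th component is the query $\cP$ would issue; $\I(b)$ forwards precisely that query to $\extconcomp$, and $\extconcomp$'s answer is then returned to $\cP$ in the first mechanism and to both internal copies inside $\cP^2$ in the second. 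When the direction is $\ttL$, the response property forces $\cP_b$ and $\cP_{1-b}$ to emit identical left messages, which $\cP^2$ relays to the adversary, matching what $\cP$ returns in the first mechanism.

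\textbf{Main obstacle.} The one subtle point is verifying that the ``wrong-branch'' copy $\cP_{1-b}$ stays synchronized with $\cP_b$ inside $\cP^2$. Although $\cP_0$ and $\cP_1$ may receive different left inputs from the adversary, they receive the \emph{same} right-message history (since $\I(b)$ forwards a single answer back, which $\cP^2$ supplies to both copies). The destination and response properties of $\cP$ are exactly what guarantees that, given a shared right-history, both copies agree on direction at every step and on the value of any left response---this is why those properties were imposed in Section~\ref{subsec:properties}. Once synchronization is established, the invariant is restored at the end of each round and the induction closes; non-terminating runs are handled by the same lock-step argument applied to infinite transcripts.
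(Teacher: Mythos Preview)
Your proposal is correct and is precisely the detailed verification that the paper omits: the paper merely asserts that the claim ``follows directly from the definition of $\cP^2$,'' and your induction on rounds, with the invariant that $\cP$ and $\cP_b$ share the same state and $\extconcomp$'s state is identical in both compositions, is exactly the unfolding of that definition. Your identification of the one nontrivial point---that the destination and response properties are what keep $\cP_{1-b}$ synchronized with $\cP_b$ (same direction, same left output) despite differing left inputs, because both copies see the same right-message history from $\extconcomp$---is spot on and is implicitly used in the paper's Definition~\ref{def:p2} when it asserts $v_0'=v_1'$ and $m_0'=m_1'$.
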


\noindent\underline{Step 2 (Inserting $\V[f']$).} Due to the $f\to f'$ property, when $\cP^2$ receives an $f$-valid message sequence from its left mechanism, it sends an $f'$-valid message sequence to its right mechanism. We insert $\V[f']$ between $\cP^2$ and $\I$ in $\cP^2\circstar\I\circstar\extconcomp$. By definition, when $\V[f']$ receives an $f'$-valid message sequence from its left mechanism, it acts like it does not exist. Specifically, it forwards every message from its left and right mechanism unchanged to the opposite mechanism. Thus, we have:

\begin{claim}\label{cla:add-verifier}
    For each $b\in\zo$, the interactive mechanisms $\V[f]\circstar\cP^2\circstar\I(b)\circstar\extconcomp$ and $\V[f]\circstar\cP^2\circstar\V[f']\circstar\I(b)\circstar\extconcomp$ are identical.
\end{claim}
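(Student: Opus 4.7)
The plan is to show that, in the augmented chain, the inserted verifier $\V[f']$ is always transparent: it never halts and simply forwards each message unchanged in either direction. Once this is established, the subchain $\cP^2 \circstar \V[f'] \circstar \I(b) \circstar \extconcomp$ behaves exactly as $\cP^2 \circstar \I(b) \circstar \extconcomp$, and prepending the same outer $\V[f]$ yields the claimed identity of the two interactive mechanisms.

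To prove transparency, I would fix an arbitrary adversary $\cA$, couple the internal randomness of $\extconcomp$ across the two configurations, and argue by induction on the round number, maintaining two invariants: (i) the message on every corresponding wire is identical in the two configurations at the given round, and (ii) the pair sequence that $\cP^2$ has emitted to its right up to the current round is $f'$-valid. The base case is vacuous. For the inductive step, both $\cA$ and the deterministic components $\V[f]$, $\cP^2$, and $\I(b)$ behave identically because they see identical histories, and by the coupling $\extconcomp$ produces the same response on both sides as well. The only place the two configurations could diverge is if $\V[f']$ halts, which by (ii) has not happened previously and can only happen when $\cP^2$ emits a new right pair. So it suffices to verify that this new right pair preserves $f'$-validity of $\cP^2$'s right-emitted sequence.

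For that verification I would appeal to the $f \to f'$ property of $\cP^2$, applied to the two internal transcripts $\sigma^0, \sigma^1$ maintained by the parallel copies $\cP_0$ and $\cP_1$ inside $\cP^2$. The destination property of $\cP$ forces $v_i^0 = v_i^1$ at every step, giving condition (i); the construction of $\cP^2$, which delivers the same right response back to both $\cP_0$ and $\cP_1$, gives condition (ii); and the outer $\V[f]$ guarantees that the left pair sequence reaching $\cP^2$ is $f$-valid, giving condition (iii). The $f \to f'$ property then delivers $f'$-validity of $\cP^2$'s emitted right pair sequence, so $\V[f']$ forwards the newly emitted pair rather than halting. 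This closes the induction and yields the claim.

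The main thing to get right is that the $f \to f'$ property must be applied at the correct prefix granularity: after each newly emitted right pair, and not only at the end of the interaction. This is immediate from the formulation of the property in Section~\ref{subsec:properties}, which holds for sequences of arbitrary finite length $t \in \N$, so no strengthening is required, and the argument above can be carried out step by step.
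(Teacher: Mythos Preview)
Your proposal is correct and follows essentially the same approach as the paper: the paper's justification is the short paragraph preceding the claim, which observes that the $f\to f'$ property forces the right-output pair sequence of $\cP^2$ to be $f'$-valid whenever its left-input pair sequence is $f$-valid (as guaranteed by the outer $\V[f]$), so the inserted $\V[f']$ never halts and is transparent. Your inductive coupling argument is a more detailed elaboration of exactly this reasoning, and your verification of conditions (i)--(iii) of the $f\to f'$ property via the destination property, the common right responses fed to both $\cP_0$ and $\cP_1$, and the outer $\V[f]$ is precisely what is needed.
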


\begin{proof}[Proof of Theorem~\ref{thm:dp-complex-mech}]
    $\V[f]\circstar\cP^2$ in Claim~\ref{cla:add-verifier} is an IPM with a singleton initial state space. Thus, by Lemma~\ref{lem:post-im}, the fact that for every adversary $\cA$, the views of $\cA$ interacting with $\V[f']\circstar\I(0)\circstar\extconcomp$ and $\V[f']\circstar\I(1)\circstar\extconcomp$ are $(\eps, \delta)$-indistinguishable implies that for every adversary $\cA$, the views of $\cA$ interacting with $\V[f]\circstar\I(0)\circstar\cP\circstar\extconcomp$ and $\V[f]\circstar\I(1)\circstar\cP\circstar\extconcomp$ is $(\eps, \delta)$-indistinguishable, which finishes the proof.
\end{proof}

\subsection{Simplified Formulation for Pseudocodes}\label{subsec:pseudocode}
Continual mechanisms are typically not described as state machines exchanging messages. Instead, they are presented in the form of pseudocode: at each time step, the pseudocode receives an input, updates its internal variables, and produces an output. For clarity and practicality, we restate the results of this section in the pseudocode framework.

\paragraph{Tainted vs. Untainted Variables.}
To rigorously track sensitive data, we rely on concepts from computer security. \emph{Information flow} characterizes how data transfers between variables during a program's execution \cite{denning1976lattice,sabelfeld2003language}. \emph{Taint analysis} builds upon this to monitor the propagation of sensitive inputs, known as \emph{taint sources} \cite{newsome2005dynamic,schwartz2010all}. Any variable computed using a taint source becomes a \emph{tainted variable}. To safely release data, it must be processed by a \emph{sanitizer}---in our context, a differentially private submechanism. The sanitizer mathematically interrupts the sensitive information flow, rendering its outputs and all subsequent downstream computations as \emph{untainted variables}.

To formalize this, we partition the variables of a mechanism into two categories using a recursive definition based on their inputs:
\begin{itemize}
    \item \textbf{Tainted variables:} The mechanism's inputs are strictly tainted. Furthermore, any intermediate variable computed via a function where at least one input argument is a tainted variable is also considered tainted. These variables retain a direct dependence on the raw inputs.
    \item \textbf{Untainted variables:} Variables computed solely from privatized data or fixed values (such as the privacy parameters or constants) are untainted. Crucially, the output of any differentially private submechanism (the sanitizer) is defined as untainted, regardless of whether its inputs were tainted. Any subsequent variable computed as a function of \emph{only} untainted variables remains untainted. Consequently, untainted variables are purely a post-processing of privatized outputs.
\end{itemize}

\paragraph{Note.}
For simplicity, we assume the pseudocode of $\cM$ is deterministic as any randomness could be modeled as a submechanism call. For instance, sampling from the uniform distribution could be represented as invoking a $0$-DP submechanism that outputs such a sample on a fixed input.

\begin{definition}\label{def:pseudocode-properties}
    For $t \in \N$, given two sequences $\sigma^0=(m_1^0,\dots,m_t^0)$ and $\sigma^1=(m_1^1,\dots,m_t^1)$, their \emph{combination} is defined as $\big((m_1^0,m_1^1),\dots,(m_t^0,m_t^1)\big)$. Recall the three properties from Section~\ref{subsec:properties}. We reformulate them for pseudocodes as follows:
    \begin{enumerate}
        \item \textbf{Destination property.}  
        A pseudocode satisfies this property if (i) the decision of whether to produce an output, create a new submechanism, or interact with an existing one, (ii) the choice of the new submechanism in the second case, and (iii) the selection of the submechanism to interact with in the third case is made solely based on untainted variables.
        \item \textbf{Response property.} 
        A pseudocode satisfies this property if its output is determined solely by untainted variables.   
        \item \textbf{$f \to f'$ property.}  
        Let $f$ and $f'$ be verification functions. A pseudocode satisfies this property if, for every $t\in\N$ and every pair of input sequences $(\sigma^0, \sigma^1)$ of length $t$ forming an $f$-valid combination, the following holds: Assume that all untainted variables of $\cM$ are updated identically through the end of time step $t$. For each $b\in \zo$, define $\sigma^b$ to denote the sequence of creation queries and inputs to submechanisms when processing $\sigma^b$. Here, a creation query is a tuple specifying a created submechanism, its privacy parameters, and associated verification functions. Also, an input $m$ to the $j$-th submechanism is represented by the pair $(m, j)$. Then, the combination of $\sigma^0$ and $\sigma^1$ is required to be $f'$-valid.
    \end{enumerate}
\end{definition}

\begin{corollary}\label{cor:pseudocode}
    Let $f$ and $f'$ be two verification functions. For $\eps\geq 0$ and $0\leq \delta
    \leq 1$, suppose the $f'$-concurrent composition of continual mechanisms is $(\eps, \delta)$-DP. Then, every continual mechanism whose pseudocode satisfies the destination, response, and $f\to f'$ properties, defined above, is $(\eps, \delta)$-DP w.r.t. $f$.
\end{corollary}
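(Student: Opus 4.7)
The plan is to view any pseudocode-based continual mechanism $\cM$ satisfying the three pseudocode-level properties as equivalent to $\cP \circstar \extconcomp$ for a deterministic IPM $\cP$ with singleton initial state space, and then invoke Theorem~\ref{thm:dp-complex-mech}. Since the pseudocode is assumed deterministic (with randomness factored into submechanism calls, as stated in Section~\ref{subsec:pseudocode}), $\cP$ will be well-defined and deterministic.

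First, I would construct $\cP$ explicitly. The state of $\cP$ records the current values of all variables of the pseudocode together with a program counter. When $\cP$ receives a left message (an input to $\cM$), it starts executing the pseudocode on that input, updating auxiliary and decision variables along the way. Whenever the pseudocode calls a submechanism, $\cP$ pauses execution and sends the corresponding creation query $(\cM', s', \eps', \delta', f')^2$ or query-index pair $(q, j)$ as a right message to $\extconcomp$; when the right response arrives, $\cP$ stores it into the corresponding decision variable and resumes execution. When the pseudocode produces an output, $\cP$ sends it as a left message. By construction, the interactive mechanism $\cP \circstar \extconcomp$ has exactly the same input/output behavior as $\cM$.

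Second, I would verify that the three message-passing properties of $\cP$ from Section~\ref{subsec:properties} follow from the corresponding pseudocode-level properties. The key bridge is the following inductive claim: \emph{the value of every decision variable, at every point of the execution, is a deterministic function of the sequence of right messages that $\cP$ has received so far}. This follows by induction on the time of assignment: the base case is that a decision variable is the direct output of a submechanism, hence a right message; the inductive case is that a decision variable is computed from other decision variables, which by induction are functions of the right-message history. Given this claim, the translation is routine. The destination property (pseudocode) says the choice among producing output, creating a submechanism (and of which type), and interacting with an existing one (and which one) depends only on decision variables; by the claim, this depends only on right messages, which is the message-passing destination property. The response property (pseudocode) says the output value depends only on decision variables, hence only on right messages, giving the message-passing response property. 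Finally, the pseudocode $f \to f'$ property is already phrased under the assumption that decision variables evolve identically across the two inputs; by the destination and response properties this assumption is equivalent to saying that $\cP$ receives identical right messages in both branches, which is the setup of the message-passing $f \to f'$ property.

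Once these translations are established, the hypothesis that the $f'$-concurrent composition of continual mechanisms is $(\eps, \delta)$-DP means precisely that $\extconcomp$ is $(\eps, \delta)$-DP w.r.t.\ $f'$, so Theorem~\ref{thm:dp-complex-mech} applies and yields that $\cP \circstar \extconcomp$, i.e.\ $\cM$, is $(\eps, \delta)$-DP w.r.t.\ $f$. The main obstacle is making the inductive claim about decision variables fully precise, since the pseudocode notion is defined recursively and one must be careful that no auxiliary variable can leak into a decision variable through a computation path; this is guaranteed by the definition in Section~\ref{subsec:pseudocode}, which restricts decision variables to be either submechanism outputs or functions of other decision variables only, but the formalization requires an explicit induction on the program execution trace.
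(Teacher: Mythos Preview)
Your proposal is correct and follows the same approach the paper intends: the corollary is stated immediately after Section~\ref{subsec:pseudocode} reformulates the message-passing properties of Section~\ref{subsec:properties} in pseudocode terms, and is meant to follow directly from Theorem~\ref{thm:dp-complex-mech} by exactly the translation you describe. One small notational slip: in your construction of $\cP$, the right messages $\cP$ sends to $\extconcomp$ should be single creation queries $(\cM', s', \eps', \delta', f')$ and single pairs $(q,j)$, not the squared pairs $(\cM', s', \eps', \delta', f')^2$, since $\cP$ sits to the right of $\I(b)$ and communicates with $\extconcomp$ directly.
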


\section{Application (Continual Monotone Histogram Query Problem)}\label{sec:application}
We next show an application of our technique to the continual monotone histogram query problem. So far, we defined differential privacy (DP) with respect to (w.r.t.) verification functions. To be consistent with the literature, in this section we use the notion of DP w.r.t. ``neighboring input sequences'' instead:

\begin{remark}\label{rem:neighbor-to-adaptive-ver-func}
    Let $\sim$ be a binary relation on a family of message sequences. Throughout this section, we say that a mechanism is $(\eps, \delta)$-DP w.r.t. the neighbor relation $\sim$ (or simply it is $(\eps, \delta)$-DP) if the mechanism is $(\eps, \delta)$-DP w.r.t. the verification function $f_\sim$ \emph{against adaptive adversaries}, where $f_\sim$ is defined as follows: For every message sequence $m_1, \dots, m_t$, $f_\sim(m_1\dots, m_t)=\top$ if and only if the message sequence is suitable in the sense of Definition~\ref{def:suitable-seq} (implying $m_j=(m_j^0, m_j^1)$ for every $j\in \{1, \dots, t\}$) and the sequences $\sigma_0=(m_1^0, \dots, m_t^0)$ and $\sigma_1=(m_1^1, \dots, m_t^1)$ satisfy $\sigma_0\sim \sigma_1$. 
\end{remark}

\begin{definition}\label{def:histogram-problem}
    Let $d, T\in \N$ and let $q:\N^d \to \N$ be a function that is monotone in its input. In the (differentially private) \emph{continual monotone histogram query problem}, a continual mechanism $\cM$ receives $T$ inputs of row vectors $x=(x^1,\dots, x^T)$, each in $\{0,1\}^d$. Differential privacy for $\cM$ is defined with respect to the neighbor relation $\sim_H$ defined as follows (see Remark~\ref{rem:neighbor-to-adaptive-ver-func}). For $T\in\N$, two input streams in $\left(\zo^d\right)^T$ are neighboring if they are identical except for a single step. The output of $\cM$ at each time step $t$ is (an additive approximation to) $q(t)=q((\sum_{\ell=1}^t x_i^\ell)_{i\in[d]})$.
\end{definition}

\begin{definition}\label{def:1-sensitive}
    Recall function $q:\zo^d\to \R$ and neighbor relation $\sim_H$ from  Definition~\ref{def:histogram-problem}. $q$ is said to be \emph{$1$-sensitive} if $\max_{x^0 \sim_H x^1}|q(x^0)-q(x^1)|$ is at most $1$.
\end{definition}

Henzinger, Sricharan, and Steiner~\cite{henzingerSS2023} presented
the best known differentially private mechanism for the monotone histogram query problem, where the query is $1$-sensitive. For $\eps$-differential privacy it has, with probability at least $1-\beta$, an additive error of $O((d\log^2 (d q^*/\beta) +\log T)/\eps)$, and for $(\eps,\delta)$-differential privacy, it has an additive error of $O((d\log^{1.5} (d q^*/\beta) +\log T)\log\frac{1}{\delta}/\eps)$, where $q^* = \max_{t \in [T]} q(t)$. The privacy of the mechanism is proven ``from scratch'' using the concept of a privacy game~\cite{pricedpco} and the bound was only shown for $(\eps, \delta)$-DP against an oblivious adversary. 

In Section~\ref{subsec:application-algorithm}, we give an overview of the mechanism in \cite{henzingerSS2023}, which we will call $\tthss$, and in Section~\ref{subsec:application-new-privacy}, we give a simple privacy proof for it using our approach. With the new privacy analysis, the result for $(\eps,\delta)$-differential privacy is stronger and holds against an adaptive adversary.

\subsection{Overview of Mechanism.}\label{subsec:application-algorithm}
The mechanism $\tthss$, described in Algorithm~\ref{alg:histogram}, employs a variant of the Sparse Vector Technique mechanism, $\svt$, and a $d$-dimensional counter mechanism, $\tth$, as submechanisms. We start by defining these mechanisms and then explain $\tthss$.

\noindent\medskip \textbf{Sparse vector technique mechanism (SVT).} {
The mechanism $\svt$ is described in Algorithm~\ref{alg:svt} using two operations. The first, $\sinit(\eps, q, h)$, initializes the mechanism with a privacy parameter $\eps$, a query $q$, and an initial $d$-dimensional vector $h$. This procedure indeed describes the creation of a continual mechanism $\svt_{h, \eps, q}$. The second operation, $\call(x, \thr)$, maps inputs $(x, \thr)\in \zo^d\times \R$ to outputs in $\{\top, \bot\}$. The mechanism maintains a running sum of all inputs $x$ and, in a differentially private manner, checks whether applying $q$ to the running sum yields a value at least $\thr$. If so, it outputs $\top$ and halts. Otherwise, it outputs $\bot$ and continues. These two procedures are given in detail in Algorithm~\ref{alg:svt}. The command $\Lap(b)$ denotes a random sample from the Laplace distribution with scale parameter $b$.

To clarify the connection between these pseudocodes and our definition of continual mechanisms (Definition~\ref{def:cm}), we note that the state of $\svt_{h,\eps,q}$ is the tuple of its variables. When $\sinit(\eps,q,h)$ is invoked, the continual mechanism $\svt_{h, \eps, q}$ with the single initial state $(\eps, q, h, \tau=\varnull)$ is created. In the pseudocode of $\sinit$, the variable $\tau$ takes a value. In our setting, such initialization is considered to happen when $\svt_{h, \eps, q}$ receives its first input. The $\call$ operation then describes how $\svt_{h,\eps,q}$ maps its current state (i.e., set of variables) and an input message $m=(x, \thr)\in \zo^d\times \R$ to a new state (with updated variables) and a response $m'\in\{\top, \bot\}$.

\begin{algorithm}
    \caption{Sparse vector technique for histogram queries}\label{alg:svt}
    \begin{algorithmic}
        \Procedure{$\sinit$}{$\eps, q, h$}:
        \State $h_i\gets h_i$ for all $i \in [d]$
        \State $\eps = \eps$
        \State $q = q$
        \State $\tau \gets \Lap(1/\eps)$ 
        \EndProcedure
        \Procedure{$\call$}{$x, \thr$}:
        \State $h_i = h_i +  x_i$ for all $i \in [d]$
        \If{$q(h) + \Lap(2/\eps) > \thr + \tau$}
            \State return $\top$ and halt
        \Else
            \State return $\bot$
        \EndIf
    \EndProcedure
    \end{algorithmic}
\end{algorithm}

As part of their privacy proof, the following is shown in~\cite{henzingerSS2023}:
\begin{lemma}\label{lem:svt-privacy}
    If $q$ is a $1$-sensitive function, then for every $\eps$ and $h$, the mechanism $\svt_{\eps/6, q, h}$ is $\eps/3$-DP w.r.t. the neighbor relation $\sim_S$ defined as follows (see Remark~\ref{rem:neighbor-to-adaptive-ver-func}). For $T\in\N$, two input streams $\left((x_1, \tau_1), \dots, (x_T, \tau_T)\right)$ and $\left((y_1, \nu_1), \dots, (y_T, \nu_T)\right)$ in $\left((\zo^d\times \R)\right)^T$ are neighboring if the (threshold) sequences $(\tau_1, \dots, \tau_T)$ and $(\nu_1, \dots, \nu_T)$ are identical at every entry and the (binary) sequences $(x_1, \dots, x_T)$ and $(y_1, \dots, y_T)$ are identical in all but one entry.
\end{lemma}
}

\noindent\medskip \textbf{$d$-dimensional continual counter.} {
The mechanism $\tth$ is described by two operations. The first, $\countinit(\eps)$, creates $d$ continual counters that are $\eps/d$-DP w.r.t. the neighbor relation $\sim'$, where two real-valued sequences of the same length are neighbors if they differ only on a single step by at most $1$. The second, $\countupdate(c)$, gives the elements of its $d$-dimensional input $c\in \R^d$ to their corresponding counters and returns the aggregated outputs as a $d$-dimensional vector. We denote this continual mechanism by $\tth_\eps$. In another version, $\countinit(\eps, \delta)$ creates $d$ continual counters, each satisfying $(\eps/d, \delta/d)$-DP w.r.t. $\sim'$. We represent this continual mechanism by $\tth_{\eps, \delta}$.

By definition, mechanisms $\tth_\eps$ and $\tth_{\eps, \delta}$ are the concurrent composition of $d$ many $\eps/d$- and $(\eps/d, \delta/d)$-DP fixed private continual counters. We can assume that these continual counters return an integer value at each point of time, and thus they have discrete answer distributions. If the output space is real, we use a counter with integer outputs that post-processes the original counter by rounding its output up to the nearest integer. By the post-processing lemma, the privacy guarantees still hold and the additive error is only negligibly affected by at most $1$. Therefore, Theorem~\ref{thm:comp-fixed-cm} is applicable and by the basic composition, we have:

\begin{corollary}\label{cor:d-dim-counter-privacy}
    Mechanisms $\tth_\eps$ and $\tth_{\eps, \delta}$ are $\eps$-DP and $(\eps, \delta)$-DP w.r.t. the neighbor relation $\sim_C$ defined as follows (see Remark~\ref{rem:neighbor-to-adaptive-ver-func}). For $T\in \N$, two input streams $x=(x_1, \dots, x_T)$ and $y=(y_1, \dots, y_T)$ in $(\R^d)^T$ are neighboring if there exists $i\in [T]$ such that $x_j=y_j$ for all $j\in [T]\setminus \{i\}$ and $|x_i-y_i|_\infty \leq 1$, i.e., each entry of $x_i$ and $y_i$ differs by at most $1$.
\end{corollary}
}

\noindent\medskip \textbf{Continual histogram query mechanism.} {
The mechanism $\tthss$, described in Algorithm~\ref{alg:histogram}, is initialized with a privacy parameter $\eps>0$, a query $q:\N^d\to \N$, and an accuracy parameter $0<\beta\leq 1$ and receives a vector $x \in \{0,1\}^d$ as input at each time step. This mechanism partitions the input stream into subsequences, called \emph{intervals}. At the end of each interval, it forwards the $d$-dimensional sum of all the inputs of the interval to the differentially private $d$-dimensional continual counter $\tth$. (Note that $\tth$ can be used directly to solve the problem; however, $\tthss$ achieves better accuracy bounds in many cases.) The mechanism $\tthss$ maintains two $d$-dimensional vectors: (1) variable $c$, where $c_i$ stores the exact sum of coordinate $i$ within the current interval, and (2) variable $h$, where $h_i$ stores the approximate sum of coordinate $i$ over all \emph{previous} intervals computed by $\tth$. Both vectors are initialized to $0^d$.  

At the beginning of each interval (including the initial time step), $\tthss$ creates a fresh instance of the Sparse Vector Technique mechanism $\svt$ with privacy parameter $\eps/6$, query $q$, and initial vector $h=0^d$ (denoted by $\svt_{\eps/6,q,h}$). The mechanism discards the previous instance. $\tthss$ keeps track of the current time step and the current interval using variables $t$ and $j$, respectively. In Algorithm~\ref{alg:histogram}, $\gamma$ and $\xi$ are some deterministic functions, mapping $t,j,\beta, \eps$ to a real value. 

The mechanism $\tthss$ also stores a threshold variable $\thr$, initially set to $\gamma(1,1,\beta,\eps)$. Upon receiving an input $x$ at step $t$, the mechanism updates $c$ and invokes $\call(x,\thr)$. If the response, denoted by $a$, is $\top$, the current interval ends and $\tthss$ takes the following actions: (threshold update) It first privately evaluates $q(c+h)$ via the Laplace noise and compares the noisy result to $\thr-\xi(t,j,\beta,\eps)$. If the inequality holds, $\thr$ is suitably increased. Then independent of the outcome of this comparison, $\thr$ is updated again to reflect the increase in the number of intervals. (summation update) Subsequently, $\tthss$ passes $c$ to $\tth$, updates $h$ with the output of $\tth$, and resets $c$ to $0^d$. Since $\svt$ halts after returning $\top$, $\tthss$ also creates a new instance $\svt_{\eps/6,q,h}$. 

Finally, regardless of the output of $\call(x,\thr)$, $\tthss$ updates $\thr$ once more to account for the increased number of time steps and outputs $h$. 

\begin{algorithm}
    \caption{Mechanism for answering a monotone histogram query}\label{alg:histogram}
    \begin{algorithmic}
        \Procedure{$\histinit$}{$\eps, q, \beta$}:
            \State $\countinit(\eps/3)$ (or $\countinit(\eps/3,\delta)$)
            \State $h_i\gets 0$ for all $i \in [d]$, $c_i\gets 0$ for all $i \in [d]$
            \State $out \gets q(h)$
            \State $j \gets 1$
            \State $\sinit(\eps/6,q,h)$
            \State $a\gets \bot$
            \State $\thr = \gamma(1,j,\beta,\eps)$ 
            \State $t\gets 0$
        \EndProcedure
        \Procedure{$\histupdate$}{$x$}:
            \State $t\gets t+1$
            \State $c_i\gets c_i + x_i$ for all $i \in [d]$
            \State $a= \call(x, \thr)$
            \If{$a == \top$}
                \If{$q(c + h) +\Lap(3/\eps) > \thr - \xi(t,j,\beta, \eps)$}
                \State $\thr = \thr + \gamma(t,j,\beta,\eps)$
                \EndIf
                \State $h = \countupdate(c)$ 
                \State $out \gets q(h)$
                \State $\sinit(\eps/6,q,h)$
                \State $c_i\gets 0$ for all $i \in [d]$
                \State $j \gets j + 1$
                \State $\thr = \thr - \gamma(t,j-1,\beta,\eps) + \gamma(t,j,\beta,\eps)$
            \EndIf
            \State $\thr = \thr - \gamma(t,j,\beta,\eps) + \gamma(t+1,j,\beta,\eps)$  
            \State return $out$
    \EndProcedure
    \end{algorithmic}
\end{algorithm}
}
\subsection{New Privacy Analysis.}\label{subsec:application-new-privacy}
In Algorithm~\ref{alg:histogram}, the summation $q(c + h) +\Lap(3/\eps)$ can be rewritten as an instance of Laplace mechanism with privacy parameter $\eps/3$ and input $q(c+h)$. The privacy of the Laplace mechanism is defined with respect to the neighbor relation $\sim_L$, where two real values $r_1, r_2\in \R$ satisfy $r_1\sim_L r_2$ if and only if $|r_1-r_2|\leq 1$. Consequently, the mechanism $\tthss$ executes:
\begin{enumerate}
    \item one instance of the $d$-dimensional counter $\tth$, which consists of $d$ instances of a scalar counter,
    \item an arbitrary number of $\svt$ instances, and 
    \item an arbitrary number of Laplace mechanisms.  
\end{enumerate}
By Corollary~\ref{cor:d-dim-counter-privacy}, $\tth$ is $\eps/3$-DP (or $(\eps/3,\delta)$-DP when approximate counters are used) w.r.t. $\sim_C$. By Lemma~\ref{lem:svt-privacy}, each $\svt$ instance is $\eps/3$-DP w.r.t. $\sim_S$. Finally, the Laplace mechanism with parameter $\eps/3$ is a non-interactive mechanism (NIM) that is well known to satisfy $\eps/3$-DP on neighboring real inputs differing by at most $1$ (i.e., w.r.t. to neighbor relation $\sim_L$ defined above). The idea is to show that the concurrent composition of multiple instances of these mechanisms created by $\tthss$ is $\eps$-DP (or $(\eps, \delta)$-DP when $\tth$ is approx DP). Then, we will show that $\tthss$ satisfies the required properties in Corollary~\ref{cor:pseudocode}, and thus it has the same privacy guarantees as the concurrent composition. As concurrent composition theorems in this paper require the composed mechanisms to have discrete answer distributions, we first show that we can assume the output space of $\tth$ and Laplace mechanism is discrete. Since $\svt$ has a finite output space, its answer distributions are also discrete.

\medskip
\noindent\textit{Discretization of outputs.}  
As argued earlier, the counters executed by $\tth$ output integer values, and thus $\tth$ has discrete answer distributions. Likewise, we can assume that the Laplace mechanism in $\tthss$ rounds its noisy real-valued output to the nearest integer. By the post-processing lemma, this rounding does not degrade privacy, and by definition, it increases the additive error by at most $1$, which is unimportant for the accuracy guarantees of~\cite{henzingerSS2023}. Hence, both $\tth$ and the Laplace mechanism can be treated as having discrete output distributions. 

\medskip
\noindent\textit{Pseudocode privacy analysis.}  
Recall from Section~\ref{subsec:pseudocode} that untainted variables of a mechanism are (i) outputs of submechanisms, (ii) fixed values (e.g., privacy and accuracy parameters), or (iii) variables computed solely as functions of other untainted variables. In Algorithm~\ref{alg:histogram}, the variables $\eps$, $\beta$, $q$, $a$, $h$, $out$, $j$, $t$, and $\thr$ are untainted, while $c$ is not: The values of the variables $\eps$, $\beta$, and $q$ are fixed and independent of the sensitive data. Variables $a$ and $h$ denote the privatized outcomes of $\svt$ and $\tth$, respectively. Variable $out$ equals $q(h)$, where $h$ is an untainted variable. Index $j$ is initially $1$ and is increased by $1$ when $a=\top$. Time step $t$ is initially $0$ and is increased by $1$ at each step, independent of the input and other variables. Finally, variable $\thr$ is initialized and updated to values determined by the untainted variables $\eps, \beta, j, t$. It is important to note that the decision of updating variables $h$ (and consequently $out$) and $\thr$ depends only on the privatized output of $\svt$, i.e., the untainted variable $a$.

We first show that the destination property holds. For that we have to show that the decision of whether to produce an output, create a new submechanism (and if so, which one) or interact with an existing one (and if so, which one) is solely based on untainted variables. All submechanisms of $\tthss$ are instantiated either in the initial step (and, thus, whether it receives input is independent of tainted variables and the inputs $x$) or based on outputs of the private submechanism $\svt$. The active $\svt$ instance receives input at every step, and thus whether it receives input is independent of tainted variables and the inputs $x$. Also, whether the mechanism $\tth$ gets updated depends only on the privatized output of $\svt$. Finally, $\tthss$ returns an output after the final update of $\thr$, without any condition on the inputs $x$ or the tainted variables. Thus, the \emph{destination property} in Corollary~\ref{cor:pseudocode} holds. Moreover, the final output of $\tthss$ is $out$, which is an untainted variable, so the \emph{response property} in that corollary also holds. It remains to show $\tthss$ satisfies the $f\to f'$ property, where $f$ is the verification function corresponding to the neighbor relation $\sim_H$ in Definition~\ref{def:histogram-problem} and $f'$ is a verification function satisfying that the $f'$-concurrent composition of the continual mechanisms is $(\eps, \delta)$-DP.

Let $T\in\mathbb{N}$, and let $x=(x_1,\dots,x_T)$ and $y=(y_1,\dots,y_T)$ be two sequences in $\left(\{0,1\}^d\right)^T$ satisfying $x\sim_H y$, i.e., there exists $i\in[T]$ such that $x_j=y_j$ for all $j\in [T]\setminus \{i\}$. Fix arbitrary values for all untainted variables at every time step. Conditioned on these values, we compare the input streams of $\tthss$'s submechanisms when this mechanisms runs on $x$ and $y$:
\begin{itemize}[left=0pt]
    \item \textbf{Inputs of $\tth$:}
    By construction, the input stream of $\tthss$ is partitioned into intervals determined by the time steps at which $\svt$ outputs $\top$. Since these outputs are untainted, they are identical in both executions to the fixed values specified in the condition. Let $1\le t_1<\dots<t_k\le T$ be the steps $\svt$ outputs $\top$, and set $t_0=0$. The submechanism $\tth$ receives the interval sums $\left(\sum_{\ell=t_{j-1}+1}^{t_j}x_\ell\right)_{j=1}^k$ and $\left(\sum_{\ell=t_{j-1}+1}^{t_j}y_\ell\right)_{j=1}^k$ as input sequence when $\tthss$ runs on $x$ and $y$, respectively. All intervals not containing $i$ have identical sums. If $i>t_k$, no (closed) interval includes $i$ and thus all sums coincide. If $i\leq t_k$, let $j^*\in [k]$ denote the index satisfying $t_{j^*-1}+1\le i\le t_{j^*}$. Then
    \[
        \sum_{\ell=t_{j^*-1}+1}^{t_{j^*}} x_\ell
        -
        \sum_{\ell=t_{j^*-1}+1}^{t_{j^*}} y_\ell
        = x_i-y_i .
    \]
    Since $x_i,y_i\in\{0,1\}^d$, we have $x_i-y_i\in\{-1,0,1\}^d$ and thus $\|x_i-y_i\|_\infty\le 1$. Hence the inputs to $\tth$ differ in at most one coordinate by at most one, i.e., they are $\sim_C$-neighbors (see Corollary~\ref{cor:d-dim-counter-privacy}). 
    \item \textbf{Inputs of $\svt$:} The threshold variable $\thr$ is untainted. Let $\thr_\ell$ denote its fixed value at time $\ell$. Let time steps $t_1, \dots, t_k$ be defined as above. Each time $\svt$ outputs $\top$, the mechanism $\tthss$ reinitializes it. Therefore, $\tthss$ initiates $k+1$ instances of $\svt$. Let $t_0=0$ and $t_{k+1}=T$. For each $j\in [k+1]$, the $j$-th instance of $\svt$ receives $\left((x_\ell, \thr_\ell)\right)_{\ell=t_{j-1}+1}^{t_j}$ and $\left((y_\ell, \thr_\ell)\right)_{\ell=t_{j-1}+1}^{t_j}$ as input sequence when $\tthss$ runs on $x$ and $y$, respectively.
    All instances except possibly $j^*$ (the one containing $i$) receive identical inputs. Let $j^*\in [k+1]$ denote the index where $t_{j^*-1}+1\leq i\leq t_{j^*}$. For every $j\in [k+1]\setminus \{j^*\}$, the input sequences of the $j$-th $\svt$ submechanism are identical when $\tthss$ runs on $x$ and $y$. Moreover, the input sequences of the $j^*$-th $\svt$ submechanism are identical except for the inputs $(x_i, \thr_i)$ and $(y_i, \thr_i)$, which have the same threshold $\thr_i$. Hence, these input sequences are $\sim_S$-neighbors (see Lemma~\ref{lem:svt-privacy}).
    \item \textbf{Inputs of the Laplace Mechanism:} The Laplace mechanism is applied to the value $q(h+c)$, where $c$ is the interval sum. The variable $h$ is untainted and therefore fixed and identical in both executions of $\tthss$. As shown above, all interval sums coincide except possibly one, and that one differs by at most $1$ in $\ell_\infty$-norm. Hence the corresponding values $h+c$ also differ by at most $1$ in $\ell_\infty$-norm. Since $q$ is $1$-sensitive (see Definition~\ref{def:1-sensitive}), the real-valued outputs of $q$ on these inputs differ by at most $1$. Therefore, when $\tthss$ runs on $x$ and $y$, all corresponding Laplace submechanisms receive identical inputs, except possibly one pair whose inputs differ by at most $1$, that is, they are $\sim_L$-neighbors.
\end{itemize}
Hence, the concurrent composition of all submechanisms of $\tthss$ can be decomposed into three concurrent compositions: (1) all Laplace mechanisms, (2) all $\svt$ instances, and (3) the $d$ continual counters forming $\tth$. 
The first two are the parallel compositions of purely differentially private mechanisms. By Corollary~\ref{cor:parallel-comp-pure}, both are $\eps/3$-DP. The third one is, by construction, the concurrent composition of $d$ fixed continual counters. In Corollary~\ref{cor:d-dim-counter-privacy}, we showed that this composition, i.e., $\tth$, is $\eps/3$-DP (or $(\eps/3,\delta)$-DP when approx DP counters are used).  

By basic composition and Theorem~\ref{thm:comp-fixed-cm}, the overall concurrent composition of all three groups of mechanisms is $\eps$-DP when $\tth$ uses pure counters, and $(\eps,\delta)$-DP when approximate counters are used. (More precisely, it is DP w.r.t. a verification function $f'$ that ensures all submechanisms but one $\svt$ instance, one Laplace mechanism, and the $d$ counters of $\tth$ are assigned pairs of identical inputs, while the first and second inputs for each of these mechanisms form neighboring sequences.)

By Corollary~\ref{cor:pseudocode}, since $\tthss$ satisfies the destination, response, and neighboring-mapping properties, and since the concurrent composition of its submechanisms is $\eps$-DP (or $(\eps,\delta)$-DP), the mechanism $\tthss$ satisfies the same privacy guarantee. We recall that all privacy guarantees in this section hold against adaptive adversaries.

\section{Analyzing Local Differential Privacy}\label{sec:ldp}
So far, we have discussed the application of concurrent composition theorems in designing continual mechanisms with modular structures. Another natural application of these theorems arises in the setting of \emph{local differential privacy} (LDP), where multiple users interact with an untrusted server over (possibly) multiple rounds. Each user holds a private data record (or dataset) and, upon receiving a query from the server, produces a response that satisfies differential privacy. After several rounds of interaction, the server aggregates the received messages and outputs an estimate of a function over the users' data. Formally, a \emph{local protocol} consists of an interactive mechanism for the server and an interactive mechanism for each user. Local differential privacy is defined with respect to a neighbor relation on sequences of datasets held by the users.

\begin{definition}\label{def:ldp}
	Let $P$ be a local protocol with $n$ users. Let $\calX$ denote the domain of datasets held by the users, and let $\sim$ be a neighbor relation on $\calX^n$. For example, two dataset sequences $x=(x_1,\dots,x_n)$ and $x'=(x_1',\dots,x_n')$ may satisfy $x\sim x'$ if they differ in exactly one coordinate $i$, where $x_i\sim^* x_i'$ for a neighbor relation $\sim^*$ on $\calX$.  The protocol $P$ is said to satisfy $(\eps,\delta)$-\emph{local differential privacy} (or $(\eps,\delta)$-LDP) w.r.t. $\sim$ if, for every pair of neighboring dataset sequences $x\sim x'$, the distributions of the server’s view---consisting of its random coins and the transcript of all exchanged messages---are $(\eps,\delta)$-indistinguishable when users hold the datasets in $x$ and $x'$.
\end{definition}

\noindent\underline{Equivalence of a local protocol $P$ being LDP to a continual mechanism $\cM_P$ being DP.}
Let $P$ be a local protocol as defined above, and let $\sim$ denote the neighboring relation on the sequences of users’ datasets. Define $f_\sim$ to output $\top$ only when it receives a single message of the form $(x,x')$ where $x \sim x'$. We construct a continual mechanism $\cM_P$ that receives a dataset sequence as (a single) input, simulates the interaction of the server with the users in the protocol $P$, and returns the view of the server as output. (Indeed, $\cM_P$ is a NIM, which is a special case of CMs.) We will show that $P$ is $(\eps,\delta)$-LDP w.r.t. $\sim$ if and only if $\cM_P$ is $(\eps,\delta)$-DP w.r.t. $f_\sim$. We define $\cM_P$ as
$$\cM_P= \calS_P\circstar \extconcomp,$$
where $\calS_P$ is an IPM modeling the server’s behavior. 
The mechanism $\calS_P$ is specifically defined as follows: $\calS_P$ accepts only one left message of the form $x=(x_1, \dots, x_n)$, where each $x_i$ is a dataset for user $i$. Upon receiving $x$, $\calS_P$ takes the following actions: (1) It creates $n$ user mechanisms by sending appropriate creation queries to $\extconcomp$, and (2) it sends each $x_i$ as the first message to the corresponding user mechanism through $\extconcomp$. (3) It then interacts with the user mechanisms according to the server’s interaction rules defined in protocol $P$. (4) Once the server decides to halt, $\calS_P$ outputs the server’s view as its (left) response and then terminates. This view consists of the server’s internal randomness (i.e., random coins used by $\calS_P$) and the transcript of all messages sent and received by the server. Thus, it does not include the creation queries or the initial dataset messages sent by $\calS_P$.

\begin{claim}\label{cla:p-ldp-mp-dp}
    The local protocol $P$ is $(\eps, \delta)$-LDP w.r.t. the neighboring relation $\sim$ if and only if the CM $\cM_P$ is $(\eps, \delta)$-DP w.r.t. the verification function $f_\sim$.
\end{claim}
\begin{proof}
    The claim holds by the construction of $\cM_P$ and Definition~\ref{def:ldp}.
\end{proof}

This equivalence allows us to analyze the privacy of the protocol $P$ by examining the privacy of the corresponding CM $\cM_P$, which has a modular design, using Theorem~\ref{thm:dp-complex-mech}.

\begin{remark}\label{rem:sample-generating-mech}
    While the server in $P$ may use randomness, Theorem~\ref{thm:dp-complex-mech} requires $\calS_P$ to be deterministic. This can be handled by defining $\calS_P$ to request $\extconcomp$ to create a $(0,0)$-DP mechanism that generates random coins or random samples whenever queried.
\end{remark}

\begin{claim}\label{cla:dp-sat-dest-and-resp}
    The IPM $\calS_P$ satisfies the destination and response properties defined in Section~\ref{subsec:properties} and required by Theorem~\ref{thm:dp-complex-mech}.
\end{claim}

\begin{proof}
    By construction, upon receiving the left message $x$, the IPM $\calS_P$ in $\cM_P$ first sends exactly $2n$ messages to $\extconcomp$: $n$ creation queries and $n$ dataset messages. Although the content of these dataset messages depends on $x$, the decision of $\calS_P$ sending a dataset message to its right mechanism, $\extconcomp$, is completely independent of $x$. (If the server is randomized, $\calS_P$ additionally creates an auxiliary $(0,0)$-DP mechanism. This creation query to $\extconcomp$ is also independent of $x$.)

    After these initial interactions, $\calS_P$ proceeds to interact with the user mechanisms according to the server’s specification in $P$. The decision of whether to continue the interaction or to return a left response and halt is solely determined by the control flow of the server, which depends only on the users' (and the auxiliary mechanism's) responses and not directly on $x$. (The server has no access to $x$.) Hence, $\calS_P$ satisfies the destination property.
    
    Moreover, the (single) left response of $\calS_P$ is the server’s view, consisting of the messages exchanged with (the auxiliary mechanism and) the user mechanisms after their dataset is set. By design, this depends only on the privatized outputs of the mechanisms executed by $\extconcomp$. Therefore, $\calS_P$ also satisfies the response property.
\end{proof}

To apply Theorem~\ref{thm:dp-complex-mech}, it remains to show that for a verification function $f$, the IPM $\calS_P$ satisfies the $f_\sim \to f$ property (see Section~\ref{subsec:properties}) and the $f$-concurrent composition of CMs is DP. We now propose a verification function $f$ for which the $f_\sim \to f$ property holds. Intuitively, this verification function enforces that all user mechanisms are first created, and that each is assigned a pair of datasets $(x_i, x_i')$ satisfying $x \sim x'$. After this, queries $q$ can be asked concurrently from the users. Local protocols are typically defined so that the queries asked from each mechanism satisfy certain properties~$\prop$, which play a key role in the privacy analysis. To later be able to show privacy guarantees for the $f$-concurrent composition of CMs, we also define $f$ to enforce the queries asked from each user to satisfy $\prop$. The formal definition of $f$ is given below. For convenience, we refer to the $i$-th created mechanism as user~$i$.

\begin{definition}[$\sim_i$, $f_{\sim_i, \prop_i}$, and $f_{P,\sim}^{\eps_1,\delta_1,\dots,\eps_n,\delta_n}$]\label{def:ver-func-ldp-neighbor-prop}
    Let $P$ be a local protocol with $n$ users, where the queries issued by the server to each user~$i\in[n]$ satisfy a property~$\prop_i$. Let $\sim$ denote a neighboring relation on the sequences of users’ datasets. For every $i\in [n]$, define the neighbor relation~$\sim_i$ on datasets of user~$i$ as follows: for datasets $x_i,x_i'$, we have $x_i\sim_i x_i'$ if and only if there exist datasets $x_j,x_j'$ for all $j\in[n]\setminus\{i\}$ such that $x\sim x'$. (For example, assume $x$ represents a graph with $n$ vertices, where $x_i$ denotes the set of vertices adjacent to vertex $i$. Suppose $x\sim x'$ iff the corresponding graphs of $x$ and $x'$ differ by an edge. Then $x_i\sim_i x_i'$ iff the adjacency sets $x_i$ and $x_i'$ of vertex $i$ differ in the presence or absence of a single vertex.)
    
    Define the verification function $f_{\sim_i,\prop_i}$ to accept a message sequence if and only if (1) the first message is a dataset pair $(x_i,x_i')$ with $x_i\sim_i x_i'$, and (2) all subsequent messages are pairs of identical queries $(q,q)$, where the sequence of~$q$ values satisfies~$\prop_i$.  
    
    For every $i\in[n]$, let $\cM_i$ denote the CM associated with user~$i$ and let $s_i$ be its single initial state. Assume $\cM_i$ is $(\eps_i,\delta_i)$-DP w.r.t. $f_{\sim_i,\prop_i}$.  
    Then the verification function $f_{P,\sim}^{\eps_1,\delta_1,\dots,\eps_n,\delta_n}$ is defined as follows: for every $t\in\N$ and message sequence $(m_1,\dots,m_t)$, we have $f_{P,\sim}^{\eps_1,\delta_1,\dots,\eps_n,\delta_n}(m_1,\dots,m_t)=\top$ if and only if
    \begin{enumerate}[label=(\roman*), left=0pt]
        \item the sequence $(m_1,\dots,m_t)$ is suitable; and
        \item it is a prefix of a sequence consisting of the following messages, in order:
        \begin{enumerate}[label=(\alph*)]
            \item pairs of identical creation queries $(\alpha_i,\alpha_i)$ for each user $i\in[n]$, where $\alpha_i=(\cM_i,s_i,\eps_i,\delta_i,f_{\sim_i,\prop_i})$ (and possibly an additional pair for an auxiliary sample-generating mechanism if the server is randomized);
            \item dataset-assignment messages $((x_i,i),(x_i',i))$ for each user~$i$, where $x\sim x'$; and
            \item arbitrarily many identical query messages $((q,i),(q,i))$ directed to the same user~$i$.
        \end{enumerate}
    \end{enumerate}
    Note that condition~(i) ensures that the query sequence for each user~$i$ remains valid with respect to~$f_{\sim_i,\prop_i}$.
\end{definition}

\begin{claim}\label{cla:sp-sat-f-map}
    The IPM $\calS_P$ satisfies the $f_\sim \to f_{P, \sim}^{\eps_1, \delta_1, \dots, \eps_n, \delta_n}$ property.
\end{claim}

\begin{proof}
    By definition, every $f_\sim$-valid message sequence consists of a single message of the form $(x,x')$ with $x \sim x'$. Consider two executions of $\calS_P$ receiving $x$ and $x'$ as their respective left messages. By definition, both IPMs first identically create user mechanisms. Then one instance assigns datasets $x_i$ to the users by sending the right messages $(x_i, i)$, while the other assigns datasets $x_i'$ by sending $(x_i', i)$. Once all datasets have been assigned, the future behavior of both instances becomes independent of their left inputs—their actions are determined solely by the server description in $P$, which has no access to the datasets and depends only on the (privatized) responses returned by the user mechanisms. Hence, if both instances of $\calS_P$ receive identical right messages (from user mechanisms), they generate identical right queries to the same users, i.e., identical messages of the form $(q, i)$. Consequently, the sequence $\sigma$ formed by pairing the right messages produced by these two instances satisfies condition (ii) in Definition~\ref{def:ver-func-ldp-neighbor-prop}. We will show that this sequence also satisfies condition (i) in this definition, and thus it is $f_{P,\sim}$-valid. Consequently, $\calS_P$ satisfies the $f_\sim \to f_{P,\sim}$ property. 

    For condition (i), we recall that a message sequence is \emph{suitable} if each message is either (i) a pair of identical creation queries $(\cM, s, \eps, \delta, f)^2$, where the CM~$\cM$ with single initial state~$s$ is $(\eps,\delta)$-DP w.r.t.~$f$, or (ii) a message of the form $\big((q_0,\ell),(q_1,\ell)\big)$, where at least $\ell$ mechanisms have already been created and $q_0,q_1$ are valid inputs for the $\ell$-th one. Let $\cM_\ell$ denote the $\ell$-th created mechanism and let $f_\ell$ be its associated verification function. The suitability condition also requires that the sequence of query pairs $(q_0,q_1)$ extracted from messages of the form $\big((q_0,\ell),(q_1,\ell)\big)$ be $f_\ell$-valid. By construction, the paired sequence~$\sigma$ satisfies these conditions and is therefore suitable, finishing the proof.
\end{proof}

Therefore, by Claim~\ref{cla:dp-sat-dest-and-resp} and Claim~\ref{cla:sp-sat-f-map}, Theorem~\ref{thm:dp-complex-mech} is applicable. This theorem combined with Claim~\ref{cla:p-ldp-mp-dp} implies:
\begin{theorem}\label{thm:ldp-to-concomp}
    Let $P$ be a local protocol with $n$ users, where the queries asked by the server of each user $i\in[n]$ satisfy a property~$\prop_i$. Let $\sim$ be a neighbor relation on sequences of $n$ user datasets, and for each $i\in[n]$, define the verification function $f_{\sim_i,\prop_i}$ as in Definition~\ref{def:ver-func-ldp-neighbor-prop}. Suppose that for parameters $\eps_1,\dots,\eps_n,\eps\geq 0$ and $0\leq \delta_1,\dots,\delta_n,\delta\leq 1$, the following hold:
    \begin{enumerate}[left=0pt,label=(\roman*)]
        \item for every $i\in[n]$, the CM corresponding to user~$i$ is $(\eps_i,\delta_i)$-DP w.r.t. $f_{\sim_i,\prop_i}$; and
        \item the $f_{P,\sim}^{\eps_1,\delta_1,\dots,\eps_n,\delta_n}$-concurrent composition of CMs is $(\eps,\delta)$-DP.
    \end{enumerate}
    Then the protocol~$P$ is $(\eps,\delta)$-LDP w.r.t. $\sim$.
\end{theorem}

We next simplify this result for a choice of the neighboring relation $\sim$ based on a shared neighbor relation for the users. Let $\sim^*$ denote a neighbor relation on (individual) users’ datasets. For $k\in\N$, define $\sim$ on sequences of $n$ user datasets as follows: For $x,x'$, we have $x\sim x'$ if and only if they differ in at most $k$ indices, and for each differing index $i$, we have $x_i\sim^*x_i'$. We refer to $\sim$ as the \emph{$k$-sparse parallel-composition neighbor relation with user-neighbor relation~$\sim^*$}. We note that the relation~$\sim^*$ is implicitly assumed to be reflexive, meaning each dataset is related to itself. Hence, the neighbor relation~$\sim_i$ associated to each user~$i$ in Definition~\ref{def:ver-func-ldp-neighbor-prop} equals~$\sim^*$.

\begin{corollary}\label{cor:ldp-parallel-concomp}
    Let $P$ be a local protocol with $n$ users following identical procedures, and let $\sim^*$ be a neighbor relation on individual users’ datasets. For $k\in\N$, let $\sim_{k,\sim^*}$ denote the $k$-sparse parallel-composition relation with user-neighbor relation~$\sim^*$, defined above. Assume that in protocol~$P$, the queries asked by the server of each user satisfy a property~$\prop^*$. Let $\eps'> 0$, $\eps\geq 0$, and $0\leq \delta',\delta\leq 1$. Suppose the following conditions hold:
    \begin{itemize}
        \item The CM corresponding to each user is $(\eps', \delta')$-DP w.r.t. the verification function $f_{\sim^*, \prop^*}$ from Definition~\ref{def:ver-func-ldp-neighbor-prop}. 
        \item The composition of $k$ instances of the randomized response mechanism $\RR_{\eps',\delta'}$ is $(\eps,\delta)$-DP.
    \end{itemize}
    Then the protocol~$P$ is $(\eps,\delta)$-LDP w.r.t. $\sim$.
\end{corollary}

To prove Corollary~\ref{cor:ldp-parallel-concomp}, we need the following observation:

\begin{observation}\label{obs:f2-concomp-imply-f1-concomp}
    Let $f_1$ and $f_2$ be verification functions such that every message sequence that is $f_1$-valid is also $f_2$-valid. If a CM~$\cM$ is $(\eps,\delta)$-DP w.r.t.~$f_2$, then $\cM$ is also $(\eps,\delta)$-DP w.r.t.~$f_1$.
\end{observation}

\begin{proof}
    By assumption, the IPMs $\V[f_1]\circstar\I\circstar\cM$ and $\V[f_1]\circstar\V[f_2]\circstar\I\circstar\cM$ are equivalent. Hence, by the post-processing lemma, the claim follows immediately.
\end{proof}

\begin{proof}[Proof of Corollary~\ref{cor:ldp-parallel-concomp}]
    Recall the verification function $f^{(\eps', \delta')_{i=1}^{k}}_{\infty, \mathit{FPC}}$ from Definition~\ref{def:ver-func-par-fixed-param-comp-fpc}. This verification function permits the creation of an arbitrary number of CMs over time, but guarantees that at most $k$ of them—each with privacy parameters $(\eps', \delta')$—are assigned pairs of non-identical inputs. Moreover, it enforces that if the first message directed to a mechanism contains identical inputs, then all subsequent messages to that mechanism must also contain identical inputs. Comparing the verification functions $f_{P, \sim}^{(\eps', \delta')_{i=1}^{n}}$ (Definition~\ref{def:ver-func-ldp-neighbor-prop}) and $f^{(\eps', \delta')_{i=1}^{k}}_{\infty, \mathit{FPC}}$, we observe that every $f_{P, \sim}^{(\eps', \delta')_{i=1}^{n}}$-valid message sequence is also $f^{(\eps', \delta')_{i=1}^{k}}_{\infty, \mathit{FPC}}$-valid. Applying Observation~\ref{obs:f2-concomp-imply-f1-concomp} with $\cM = \extconcomp$, we conclude that if the $f^{(\eps', \delta')_{i=1}^{k}}_{\infty, \mathit{FPC}}$-concurrent composition of CMs is $(\eps,\delta)$-DP, then so is the $f_{P, \sim}^{(\eps', \delta')_{i=1}^{n}}$-concurrent composition of CMs.
    
    By Theorem~\ref{thm:parallel-fixed-param-comp-fpc}, if the composition of $k$ randomized response mechanisms $\RR_{\eps', \delta'}$ is $(\eps,\delta)$-DP, then the $f^{(\eps', \delta')_{i=1}^{k}}_{\infty, \mathit{FPC}}$-concurrent composition of CMs is also $(\eps,\delta)$-DP. Hence, under the second assumption of Corollary~\ref{cor:ldp-parallel-concomp}, the $f_{P, \sim}^{(\eps', \delta')_{i=1}^{n}}$-concurrent composition is $(\eps,\delta)$-DP. Furthermore, by the definition of~$\sim$, for each $i\in[n]$, the neighbor relation $\sim_i$ in Definition~\ref{def:ver-func-ldp-neighbor-prop} coincides with~$\sim^*$. Thus, the first assumption of the corollary is equivalent to requiring that each user’s CM is $(\eps_i,\delta_i)$-DP w.r.t. $f_{\sim_i,\prop_i}$. Thus, applying Theorem~\ref{thm:ldp-to-concomp} yields that the protocol~$P$ is $(\eps,\delta)$-LDP w.r.t. $\sim$, completing the proof.
\end{proof}

\begin{remark}
    In real-world deployments, users may join or leave the system dynamically, and data records may be shared across a subset of users. Furthermore, a user’s local dataset may evolve over time. To cover these scenarios, users are CMs rather than IMs, and in the above discussions, instead of $\cM_P$ receiving only a single message $x=(x_1, \dots, x_n)$, this mechanism receives creation queries for creating a new user, deletion messages informing that a specific user is gone, and messages of the form $(u, j)$ requesting $\cM_P$ to forward the dataset-update $u$ to user $j$. Without repeating the whole argument, we note that Theorem~\ref{thm:ldp-to-concomp} and Corollary~\ref{cor:ldp-parallel-concomp} can be adapted and applied in this general scenario and our concurrent composition theorems for CMs (rather than IMs) can be used to simplify the privacy analysis of these local protocols.
\end{remark}

\begin{remark}
    In this section, we assumed that all components (i.e., the server and the users) are honest, meaning that they follow the local protocol. We also assumed that there is no collusion, that is, parties do not share their internal states with each other. Suppose now that $k$ users hold private data, while the remaining $n-k$ users and the server may be dishonest and collude to learn this data. We can model the colluding parties as a single adversary that interacts concurrently with the $k$ target users. In this case, privacy of the local protocol follows from the concurrent composition of the IMs corresponding to these $k$ users. If the users' datasets evolve over time, then the concurrent composition of CMs must be used instead.
\end{remark}
\section{Application (Core Decomposition via Multi-Dimensional SVT)}\label{sec:app:core-decomposition}
In this section, we review the local protocol of Dhulipala, Henzinger, Li, Liu, Sricharan, and Zhu~\cite{dhulipala2025near} and demonstrate how its privacy analysis can be simplified using our concurrent parallel composition theorem for IMs. They study the \emph{core decomposition problem} in the LDP model, where each user corresponds to a (publicly known) vertex of a graph and holds the set of adjacent vertices as its private dataset. A core decomposition protocol is said to satisfy $(\eps,\delta)$-\emph{edge-LDP} if it is $(\eps,\delta)$-LDP w.r.t. the neighbor relation $\sim$, where two sequences of user datasets are considered neighbors whenever the corresponding graphs differ by at most one edge.

\medskip\noindent\underline{The core decomposition problem.}
Let $G=(V,E)$ be a graph, where $V(G)$ and $E(G)$ denote its vertex and edge sets, respectively. For a subset $U\subseteq V(G)$, the subgraph of $G$ induced by $U$ is denoted by $G[U]$. The degree of a vertex $v\in V(G)$ is written as $\deg_G(v)$. The \emph{core number} of a vertex $v$, denoted $k_G(v)$, is the largest integer $k$ such that $v$ belongs to a subgraph $H\subseteq G$ in which every vertex $u\in V(H)$ satisfies $\deg_H(u)\ge k$. The goal of the core decomposition problem is to compute the \emph{coreness vector} 
\[
\vec{k}_G = (k_G(v) : v \in V),
\]
which assigns to each vertex its coreness value.

\medskip\noindent\underline{Overview of the Local Protocol by \cite{dhulipala2025near}.}
Dhulipala et~al.~\cite{dhulipala2025near} propose a local protocol $P$ for the core decomposition problem, inspired by the classical \emph{peeling algorithm}. Let $G = (V, E)$ be a graph stored by $n=|V|$ users: each user corresponds to exactly one vertex $v \in V$ and stores a private dataset $N_v$, defined as
$$N_v=\left\{u\in V\mid \{v, u\}\in E\right\}.$$
The server knows only the vertex set $V$, while each user $v$ knows both $V$ and its private set of neighbors $N_v$. Next, we will describe the server and user mechanisms in $P$.

\vspace{0.3em}
\noindent\textbf{Server.}
The server maintains four variables: (1) the current core value $d$, initialized to $1$; (2) the active vertex set $A$, initialized to $V$; and (3) the estimated coreness vector $\vec{k}$, initialized to $\vec{0} \in \R^n$. The server repeats the following procedure $n$ times: 
\begin{enumerate}
    \item Send the current active vertex set $A$ and core value $d$ to all users $v \in A$ and ask whether their number of adjacent vertices in $A$ is less than $d$, i.e., $|N_v\cap A|< d$.
    \item Let $S$ denote the subset of vertices in $A$ responding with $\top$, indicating that $|N_v\cap A|< d$. Update $\vec{k}$ by assigning $k_v = d$ for all $v \in S$.
    \item Remove the vertices in $S$ from $A$.
\end{enumerate}

\vspace{0.3em}
\noindent\textbf{Users.}
For each $v \in V$, let $\cM_v$ denote the user IM corresponding to $v$. This mechanism is a variant of the \emph{Sparse Vector Technique (SVT)} that holds $N_v$ as its dataset and maps inputs $(A, d)$—where $A \subseteq V$ and $d \in \N$—to either $\top$ or $\bot$, indicating whether $|N_v \cap A| < d$. Two datasets are neighbors if they differ in at most one vertex, i.e., the adjacency set of $v$ changes in at most one element. The integer $d$ is a threshold, and the set $A$ represents a function $f_A(\cdot) = |\cdot \cap A|$, which is $1$-sensitive because it changes by at most one on neighboring vertex sets. Modeling $\cM_v$ as a continual mechanism, differential privacy for $\cM_v$ is defined with respect to a verification function~$g$ that requires the first message to be a pair of identical or neighboring vertex sets, and every subsequent message to be a pair of identical vertex sets (representing $1$-sensitive functions). We say that $\cM_v$ is $(\eps,\delta)$-DP if it satisfies $(\eps,\delta)$-DP with respect to~$g$ against adaptive adversaries.

\begin{lemma}[\cite{dhulipala2025near}]
    Consider the local protocol $P$ for the core decomposition problem described above. If every user (SVT) mechanism satisfies $(\eps/2, 0)$-DP, then the protocol $P$ satisfies $(\eps, 0)$-LDP. Moreover, for a graph $G = (V, E)$ with $n = |V|$ and maximum degree $\Delta$, the following holds with high probability: for every vertex $v \in V$, the coreness value estimated by the server is in the range $k_G(v) \pm O(\log n \log \Delta)$ where $k_G(v)$ is the true core number of $v$.
\end{lemma}

To prove the privacy part of this lemma, Dhulipala et al.~\cite{dhulipala2025near} analyze the privacy of a mechanism called \emph{Multi-Dimensional SVT}, which generalizes SVT by mapping, at each step, a vector of $n$ $1$-sensitive functions and a vector of $n$ thresholds to an element of $\{\top,\bot\}^n$. Their analysis is from scratch and technically involved. We observe that multi-dimensional SVT is a special case of the concurrent parallel composition of $n$ SVT mechanisms, where the adversary must issue queries to the SVT instances in a round-robin pattern. So we can simplify their analysis using Corollary~\ref{cor:parallel-fixed-param-comp-im} and Theorem~\ref{thm:dp-complex-mech}. However, to illustrate the use of Corollary~\ref{cor:ldp-parallel-concomp}, we instead analyze the entire local protocol directly.

\medskip\noindent\underline{Simplified Privacy Analysis.}
Recall that in the local model, a graph with public vertex set $V$ of size $n = |V|$ is represented as a sequence of adjacency sets $(N_v)_{v\in V}$, where each $N_v$ corresponds to the dataset of the user mechanism associated with vertex~$v$. Let $\sim^*$ denote the neighbor relation for user SVT mechanisms. (For every $N, N'\subseteq V$, we have $N \sim^* N'$ iff the sets differ in at most one vertex.) Let $\sim_{2, \sim^*}$ denote the $2$-sparse parallel-composition neighbor relation with user-neighbor relation $\sim^*$. By definition, the local protocol $P$ of~\cite{dhulipala2025near} satisfies $(\eps,\delta)$-edge-LDP if, for every two sequences of $n$ (adjacency) vertex sets whose graphs differ in exactly one edge, the server’s views are $(\eps,\delta)$-indistinguishable. Since such sequences differ in at most two adjacency sets, it suffices to show that $P$ is $(\eps,\delta)$-LDP w.r.t.~$\sim_{2,\sim^*}$.

By construction, functions corresponding to the vertex sets $A$ sent by the server are all $1$-sensitive; let this property be~$\prop^*$. Hence, each user’s SVT mechanism is $(\eps/2,0)$-DP w.r.t. the verification function $f_{\sim^*,\prop^*}$ from Definition~\ref{def:ver-func-ldp-neighbor-prop}. By basic composition, the composition of two instances of $\RR_{\eps/2,0}$ is $(\eps,0)$-DP. Therefore, by Corollary~\ref{cor:ldp-parallel-concomp}, the protocol~$P$ is $(\eps,0)$-LDP w.r.t. $\sim_{2, \sim^*}$, and therefore $(\eps,0)$-edge-LDP.
\section{Reduction to Interactive Randomized Response}\label{sec:post-irr}
Lyu~\cite{lyu2022composition} shows that for every $(\eps, \delta)$-DP IM $\cM$ and every pair of neighboring initial states $s_0, s_1$ for $\cM$, there exists an IPM $\cP$ such that for each $b\in \zo$, the mechanisms $\cM(s_b)$ and $\cP\circstar\RR_{\eps, \delta}(b)$ are equivalent. This means that $\cP$ can simulate $\cM(s_b)$ given access only to the output of a randomized response mechanism $\RR_{\eps, \delta}(b)$. In this section, inspired by this work, we construct a new IPM $\cP$ that, rather than simulating $\cM$ based on the outcome of a single instance of $\RR_{\eps, \delta}$, operates on the output of $\RR_{0, \delta}$ and only when necessary uses the output of $\RR_{\eps, 0}$. 
In other words, instead of interacting with the NIM $\RR_{\eps, \delta}$, $\cP$ interacts with the IM $\irr_{\eps, \delta}$, and under certain conditions, it queries $\irr_{\eps, \delta}$ at most once.

In addition, our construction removes or relaxes the following implicit assumptions in Lyu's proof:
\begin{itemize}
    \item Lyu's result relies on the existence of a constant $T$ that upper bounds the number of queries adversaries ask. This restriction comes from a backward definition of a so-called control function. We define a function which is the limit over this control function and, using monotone convergence, show that this ``limit function'' satisfies all the necessary conditions to be used in the rest of Lyu's proof. This extends the result to an unbounded number of rounds of interaction. 
    \item Lyu's proof uses an algorithmic construction iterating over the answer set, implying that the answer set must be finite. Instead with a mathematical approach, we extend the proof to work with a countably infinite answer set.
\end{itemize}

Our result is formalized in Lemma~\ref{lem:new}, which is restated below:
\begin{lemma*}[Restatement of Lemma~\ref{lem:new}]
    For $\eps > 0$ and $0 \leq \delta \leq 1$, let $\cM:S_\cM\times Q_\cM\to S_\cM\times A_\cM$ be an $(\eps, \delta)$-DP IM w.r.t. a neighbor relation $\sim$. Suppose that $\cM$ has discrete answer distributions. Then, for every two neighboring initial states $s_0$ and $s_1$, there exists an IPM $\cP$ such that for each $b\in \zo$:
    \begin{itemize}
        \item The mechanisms $\cM(s_b)$ and $\cP \circstar \irr_{\eps, \delta}(b)$ are equivalent.
        \item For every $k \in \N$ and every query sequence $(q_1, \dots, q_k) \in Q_{\cM}^k$, if the distributions of answers produced by $\cM(s_0)$ and $\cM(s_1)$ to $(q_1, \dots, q_k)$ are identical, then when $\cP \circstar \irr_{\eps, \delta}(b)$ receives the queries $q_1, \dots, q_k$, the IPM $\cP$ interacts with $\irr_{\eps, \delta}(b)$ only once. Otherwise, $\cP$ interacts with $\irr_{\eps, \delta}(b)$ at most twice.
    \end{itemize}
\end{lemma*}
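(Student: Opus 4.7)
The plan is to adapt Lyu's single-call reduction to $\RR_{\eps,\delta}$ so that it instead operates on the two separately-asked outputs of $\irr_{\eps,\delta}$, while simultaneously removing two hidden assumptions in Lyu's argument: a pre-chosen horizon $T$ on the number of rounds and a finite answer set $A_\cM$. The construction has three ingredients: a forward-defined ``limit control function'' $c_\infty$, a per-history decomposition of $\cM(s_b)$'s next-answer distribution into an exposure piece, a symmetric piece, and an $e^\eps$-biased piece, and a simulator $\cP$ that defers its second query of $\irr_{\eps,\delta}$ until it cannot avoid consuming $b$.

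For every finite horizon $T$, Lyu defines a control function $c_T(h_t)$ by backward recursion from $c_T(h_T)=0$, tracking the ``residual leakage'' still available after history $h_t$. The first step is to show that $c_T(h_t) \leq c_{T+1}(h_t) \leq \delta$ pointwise, so the monotone limit $c_\infty(h_t) := \lim_{T\to\infty} c_T(h_t)$ exists and is bounded. Using monotone convergence to swap the limit with the conditional expectation over the next answer (which is well-defined because $A_\cM$ is countable and $\cM$ has discrete answer distributions), one checks that $c_\infty$ satisfies the same pointwise recurrence that Lyu uses for $c_T$. This simultaneously lifts the horizon restriction and replaces the algorithmic iteration over answers by a measure-theoretic argument, so countably infinite $A_\cM$ is handled automatically.

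Given $c_\infty$, at every history $h_t$ with next query $q_t$ I would decompose $\mu_b^{h_t}(a) := \Pr[\cM(s_b)\ \text{answers}\ a \mid h_t]$ as
\[
\mu_b^{h_t}(a) \;=\; \alpha(h_t)\,\nu_b^{h_t}(a) \;+\; \beta(h_t)\,\sigma^{h_t}(a) \;+\; \gamma(h_t)\,\rho_b^{h_t}(a),
\]
where $\sigma^{h_t}$ is $b$-free, $\rho_b^{h_t}$ is the Kairouz--Oh--Viswanath pure-DP $e^\eps$-biased piece around $\sigma^{h_t}$, and $\nu_b^{h_t}$ is the exposure piece; the weights $\alpha,\beta,\gamma$ are chosen via $c_\infty$ so that the aggregate exposure mass across all rounds is at most $\delta$ and so that $\rho_0^{h_t}/\rho_1^{h_t} \in [e^{-\eps},e^\eps]$. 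When $\mu_0^{h_t} = \mu_1^{h_t}$ the decomposition is chosen so that $\gamma(h_t)=0$ and $\nu_0^{h_t}=\nu_1^{h_t}$, making the sampling $b$-free for that step. The simulator $\cP$ first sends one query to $\irr_{\eps,\delta}(b)$ to obtain $\tau \in \{\top,\bot\}$; on each subsequent round it uses $\tau$ together with fresh local randomness to pick one of the three channels. If it picks the symmetric channel it samples from $\sigma^{h_t}$ without contacting $\irr$; otherwise it fires its single remaining second query to $\irr_{\eps,\delta}(b)$, obtaining either $b$ (when $\tau=\bot$) or an $e^\eps/(1+e^\eps)$-biased copy of $b$ (when $\tau=\top$), and then samples from $\nu_b^{h_t}$ or $\rho_b^{h_t}$. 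A boolean flag guarantees that the second query fires at most once during the entire interaction; later rounds reuse that bit under the same pure-DP coupling.

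Equivalence $\cM(s_b) \equiv \cP\circstar\irr_{\eps,\delta}(b)$ then follows by unrolling the decomposition round by round and checking that the joint per-round marginals reproduce $\mu_b^{h_t}$, as in Lyu. The identical-distribution clause is immediate from the choice $\gamma(h_t)=0$ and $\nu_0^{h_t}=\nu_1^{h_t}$ at every round: the simulator never enters the biased or truly-exposing channel and therefore never fires the second query. The main obstacle I expect is the limit argument for $c_\infty$: the pointwise recurrence must survive the passage to the limit both in the conditional expectation over next answers and in the supremum over query-selection strategies, which requires monotone convergence inside and dominated convergence outside. A secondary subtlety is that reusing a single second bit across multiple later rounds must be a consistent coupling; this works because the biased coupling coefficient $e^\eps/(1+e^\eps)$ is history-independent, so a single pure-DP bit suffices to couple all future biased samples simultaneously.
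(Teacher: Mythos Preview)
Your high-level plan matches the paper's: pass to the limit of Lyu's backward control function to remove the horizon $T$ (the paper proves $\rml_{t,T}^b$ is monotone in $T$ and bounded, so $\rml_t^b=\lim_T\rml_{t,T}^b$ exists, and uses monotone convergence to push the countable sum over $A$ through the limit), then build a simulator that first asks $\irr_{\eps,\delta}$ for $\tau$ and defers the second query until it is forced.

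The gap is in the decomposition. You write a per-round mixture $\mu_b^{h_t}=\alpha(h_t)\,\nu_b^{h_t}+\beta(h_t)\,\sigma^{h_t}+\gamma(h_t)\,\rho_b^{h_t}$ with history-dependent weights, but $\irr_{\eps,\delta}$ supplies $\tau$ once with a \emph{fixed} law $\Pr[\tau=\bot]=\delta$ independent of $h_t$; a single global $\tau$ cannot realise a round-varying exposure mass $\alpha(h_t)$. More seriously, the per-round conditionals $\mu_0^{h_t},\mu_1^{h_t}$ need not be $(\eps,\delta)$-close at all---the DP guarantee is on the joint view, not on each conditional---so a per-round KOV split into symmetric and $e^\eps$-biased pieces is not guaranteed to exist. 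The paper avoids both issues by fixing the mixture weights to exactly the law of $(\tau,c)$, namely $\delta$, $(1-\delta)\tfrac{e^\eps}{1+e^\eps}$, $(1-\delta)\tfrac{1}{1+e^\eps}$, and letting only the \emph{joint} families $\phi_t^b,\psi_t^b:(Q\times A)^t\to[0,1]$ be history-dependent; the simulator follows a single family $\calF_{(\tau,c)}$ for the entire interaction. There is no separate symmetric channel: the identical-distribution clause is obtained by adding a new condition~(IV) to Lyu's three, forcing $\phi_t^0=\phi_t^1$ (hence $\psi_t^0=\psi_t^1$) whenever $\mu_t^0=\mu_t^1$, so $\calF_{(\tau,0)}=\calF_{(\tau,1)}$ and $c$ is never requested.

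Your claim that ``a single pure-DP bit suffices because the bias $e^\eps/(1+e^\eps)$ is history-independent'' is where the real work hides. What makes bit-reuse sound is not the constancy of the coefficient but that $\phi_t^b,\psi_t^b$ are constructed as \emph{consistent} families satisfying $\sum_{a}\phi_t^b(q_1,\ldots,a_{t-1},q_t,a)=\phi_{t-1}^b(q_1,\ldots,a_{t-1})$ (condition~(III)), so that following one family throughout is itself a valid stochastic process whose mixture over $c$ reproduces $\mu_t^b$. Achieving conditions (I)--(IV) simultaneously over a countably infinite $A$ and without a horizon is the bulk of the paper's argument: an intermediate function $\xi_t$ is built by transfinite recursion along a well-order on $A$, at each step selecting the lexicographically maximal point of a compact constraint set, and $\phi_t$ is then obtained from $\xi_t$ by a second such pass.
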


\begin{notation}
    For short, let $Q=Q_\cM$ and $A=A_\cM$. For each $t\in \N$ and $b \in \zo$, we define the function $\mu_t^b:(Q\times A)^t\to [0,1]$ such that $\mu_t^b(q_1, \dots, a_t)$ denotes the probability of $\cM(s_b)$ returning answers $a_1, \dots, a_t$ to the queries $q_1, \dots, q_t$.  
\end{notation}

Fixing an upper bound $T$ on the number of queries, Lyu~\cite{lyu2022composition} reduces the construction of $\cP$ to defining a sequence of functions $\phi_t^0$ and $\phi_t^1$ for $t\in[T]$ that satisfy three conditions. They then inductively construct $\phi_t^b$ and verify that it satisfies the required conditions. The definition of these conditions is based on a sequence of control function $\rml_{t, T}^0$ and $\rml_{t, T}^1$, which have backward recursive definitions for $t$ from $T$ to $1$.

In Section~\ref{subsec:rmlower-rml-T}, we overview the definition of $\rml_{t, T}^b$ and its properties. In Section~\ref{subsec:rmlower-rml-limit-def}, we introduce a new control function $\rml_{t}^b$ with no dependency on $T$. The goal is to use $\rml_{t}^b$ instead of $\rml_{t, T}^b$. In Section~\ref{subsec:rmlower-rml-limit-prop}, we show some properties for $\rml_{t}^b$, required in later proofs.

In Section~\ref{subsec:reduction}, we summarize the reduction introduced in~\cite{lyu2022composition}, with the control functions $\rml_{t, T}^b$ replaced by $\rml_{t}^b$. We also modify this reduction by adding a new condition to the original three that the functions $\phi_t$ must satisfy. This additional condition is used to limit the number of interactions between $\cP$ and $\irr_{\eps, \delta}(b)$. Finally, in Section~\ref{subsec:construction}, we provide a new construction of $\phi_t$ and show that it satisfies all four conditions.

\subsection{Control Functions $\rmlower_{t,T}$ and $\rml_{t,T}$.}\label{subsec:rmlower-rml-T}
Lyu~\cite{lyu2022composition} introduces the following control function:

\begin{definition}[$\rmlower_{t,T}$]\label{def:rmlower-T}
    For $T\in\N$, $b\in\zo$, and $t \in [T]$, the function $\rmlower_{t,T}^b:(Q\times A)^{t-1}\times Q\to [0,1]$ is defined recursively as follows: for each transcript $(q_1, \dots, a_{t-1}, q_t)\in (Q\times A)^{t-1}\times Q$, 
    \begin{align*}
        \rmlower_{t,T}^b(q_1, \dots, a_{t-1}, q_t)=\begin{cases}
            \sum\limits_{a_t\in A}\sup_{q_{t+1}\in Q}\rmlower_{t+1,T}^b(q_1, \dots a_{t}, q_{t+1}), & t<T\\
            \sum\limits_{a_t\in A}\max\{0, \mu_t^b(q_1, \dots, a_t)-e^\eps\mu_t^{1-b}(q_1, \dots, a_t)\}, & t=T 
        \end{cases}
    \end{align*}
\end{definition}

We introduce another control function $\rml_{t,T}$ based on $\rmlower_{t, T}$:

\begin{definition}[$\rml_{t,T}^b$]\label{def:rml-T}
    For $T\in\N$, $b\in\zo$, and $t \in [T]$, the function $\rml_{t,T}^b:(Q\times A)^t\to [0,1]$ is defined recursively as follows: for each transcript $(q_1, \dots a_t)\in (Q\times A)^t$,
    \begin{align*}
        \rml_{t,T}^b(q_1, \dots, a_t)=
        \begin{cases}
            \sup_{q_{t+1}\in Q}\rmlower_{t+1,T}^b(q_1, \dots, a_t, q_{t+1}),& t<T\\
            \max\{0, \mu_t^b(q_1, \dots, a_t)-e^\eps\mu_t^{1-b}(q_1, \dots, a_t)\},& t=T
        \end{cases}
    \end{align*}
\end{definition}

The following corollary follows directly from the definitions of $\rmlower_{t, T}^b$ and $\rml_{t, T}^b$:
\begin{corollary}\label{cor:sum-rml-eq-rmlower}
    For every $T\in\N$, $b\in\zo$, $t \in [T]$, and $(q_1, \dots, a_{t-1}, q_t)\in (Q\times A)^{t-1}\times Q$, 
    $$\rmlower_{t,T}^b(q_1, \dots, a_{t-1}, q_t)=\sum_{a_t\in A}\rml_{t,T}^b(q_1, \dots, a_t).$$
\end{corollary}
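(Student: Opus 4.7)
The proof will be a direct substitution, simply matching Definitions~\ref{def:rmlower-T} and~\ref{def:rml-T} term by term. The plan is to split into two cases based on whether $t<T$ or $t=T$, since both defining recursions bifurcate at that boundary.

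First I would fix $T\in\N$, $b\in\zo$, $t\in[T]$, and a transcript $(q_1,\dots,a_{t-1},q_t)\in (Q\times A)^{t-1}\times Q$. Consider the case $t<T$. By Definition~\ref{def:rml-T}, for each $a_t\in A$ we have
\[
\rml_{t,T}^b(q_1,\dots,a_t)=\sup_{q_{t+1}\in Q}\rmlower_{t+1,T}^b(q_1,\dots,a_t,q_{t+1}).
\]
Summing over $a_t\in A$ and comparing with the $t<T$ branch of Definition~\ref{def:rmlower-T} gives the identity. For the case $t=T$, Definition~\ref{def:rml-T} yields
\[
\rml_{t,T}^b(q_1,\dots,a_t)=\max\{0,\mu_t^b(q_1,\dots,a_t)-e^\eps\mu_t^{1-b}(q_1,\dots,a_t)\},
\]
which when summed over $a_t\in A$ exactly reproduces the $t=T$ branch of Definition~\ref{def:rmlower-T}.

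There is no real obstacle here, since the two definitions were set up in a compatible way: $\rml_{t,T}^b$ was defined precisely to be the summand that appears inside the defining sum of $\rmlower_{t,T}^b$. The only care needed is interchanging sums and suprema, but this is a non-issue because the identity is asserted term-wise inside the sum, not after pulling the sup outside. Since $A$ is countable (by the discrete-answer-distribution assumption), each of the infinite sums is well defined as a value in $[0,\infty]$, so no convergence subtleties arise.
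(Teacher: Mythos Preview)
Your proposal is correct and matches the paper's approach: the paper simply states that the corollary ``follows directly from the definitions of $\rmlower_{t,T}^b$ and $\rml_{t,T}^b$'' without giving any further argument, and your two-case verification is exactly the unpacking of that claim.
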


Lyu~\cite{lyu2022composition} shows the following two claims about $\rmlower_{t, T}^b$:
\begin{claim}[\cite{lyu2022composition}]\label{cla:rmlower-ub-delta}
    For each $b\in\zo$, $T\in\N$, and $q_1\in Q$, we have $\rmlower_{1,T}^b(q_1)\leq \delta$.
\end{claim}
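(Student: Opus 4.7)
The plan is to identify $\rmlower_{1,T}^b(q_1)$ with the supremum, over adaptive deterministic adversaries that fix their first query to $q_1$ and then play for exactly $T$ rounds, of the hockey-stick divergence at level $e^\epsilon$ between the transcript distributions in the $\cM(s_b)$ and $\cM(s_{1-b})$ worlds. Once this identification is in hand, the bound $\rmlower_{1,T}^b(q_1)\leq\delta$ follows from the standard fact that the $(\epsilon,\delta)$-DP property of $\cM$ (Definition~\ref{def:dp-im}) is equivalent to a bound of $\delta$ on the hockey-stick divergence at level $e^\epsilon$ between $\View(\cA,\cM(s_b))$ and $\View(\cA,\cM(s_{1-b}))$ for every adversary $\cA$.

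Concretely, I would prove by backward induction on $t$ from $T$ down to $1$ the identity
\begin{equation*}
\rmlower_{t,T}^b(q_1,a_1,\dots,q_t) \;=\; \sup_{\cA\in\mathcal{S}_t}\; \sum_{a_t,\dots,a_T}\max\bigl\{0,\; \mu_T^b(q_1,\dots,q_T^\cA,a_T) - e^\epsilon\,\mu_T^{1-b}(q_1,\dots,q_T^\cA,a_T)\bigr\},
\end{equation*}
where $\mathcal{S}_t$ denotes the set of deterministic adaptive strategies that agree with the fixed prefix $(q_1,a_1,\dots,q_t)$ and choose each subsequent $q_{t'+1}^\cA = q_{t'+1}^\cA(a_1,\dots,a_{t'})$ as a function of past answers. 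The base case $t=T$ is immediate from the definition of $\rmlower_{T,T}^b$. For the inductive step, the key observation is that the sum over $a_t$ and the adaptive choice of $q_{t+1}$ as a function of $a_t$ commute: selecting $q_{t+1}$ after seeing $a_t$ is equivalent to fixing a function $a_t \mapsto q_{t+1}$ in advance and then summing over $a_t$, which is exactly what permits $\sum_{a_t}\sup_{q_{t+1}}(\cdot)$ to equal the optimal adaptive strategy's contribution from rounds $t+1$ through $T$.

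Specializing to $t=1$ gives $\rmlower_{1,T}^b(q_1) = \sup_\cA H_{e^\epsilon}\bigl(\View(\cA,\cM(s_b))\,\bigl\|\,\View(\cA,\cM(s_{1-b}))\bigr)$, where $H_{e^\epsilon}(X\|Y) = \sup_S \bigl(\Pr[X\in S] - e^\epsilon\Pr[Y\in S]\bigr)$ is attained on the event $S^\star = \{\pi : \Pr[X=\pi] > e^\epsilon\Pr[Y=\pi]\}$. Since $s_0\sim s_1$ and $\cM$ is $(\epsilon,\delta)$-DP, applying the defining DP inequality of Definition~\ref{def:dp-im} to the event $S^\star$ yields $H_{e^\epsilon}(\View(\cA,\cM(s_b))\,\|\,\View(\cA,\cM(s_{1-b})))\leq \delta$ for every adversary $\cA$, so the supremum is also at most $\delta$.

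The main technical hurdle is that, when $Q$ is infinite, the suprema in the recursive definition of $\rmlower_{t,T}^b$ need not be attained, so no single ``optimal'' $\cA$ realizes the identity above. I would handle this by a standard approximation argument: for any $\eta>0$, inductively construct a strategy $\cA_\eta$ that at each step $t$ and each answer history chooses $q_{t+1}$ realizing the corresponding supremum up to a summable error (e.g.\ tolerance $\eta\cdot 2^{-t}\cdot w(a_1,\dots,a_t)$ for weights $w$ summing to $1$, exploiting the countability of $A$). This yields $\rmlower_{1,T}^b(q_1) \leq H_{e^\epsilon}(\View(\cA_\eta,\cM(s_b))\,\|\,\View(\cA_\eta,\cM(s_{1-b}))) + \eta \leq \delta + \eta$ for every $\eta>0$, and letting $\eta\to 0$ concludes.
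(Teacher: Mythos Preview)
The paper does not supply its own proof of this claim; it is attributed to Lyu~\cite{lyu2022composition} and used as a black box. Your proposal is correct and is the natural argument one would expect: the recursive definition of $\rmlower_{t,T}^b$ unrolls, via backward induction, into a supremum over deterministic $T$-round adaptive adversaries (with first query fixed to $q_1$) of the hockey-stick divergence $H_{e^\eps}$ between the transcript distributions under $\cM(s_b)$ and $\cM(s_{1-b})$; the $(\eps,\delta)$-DP assumption on $\cM$ then bounds each such divergence, hence the supremum, by $\delta$.

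One minor simplification: once you have established the identity $\rmlower_{1,T}^b(q_1)=\sup_{\cA} H_{e^\eps}(\cdots)$ as an equality of suprema, the bound $\leq\delta$ follows immediately because every term in the supremum is at most $\delta$; you do not need to exhibit a near-optimal $\cA_\eta$ for this direction. The approximation argument is only needed inside the inductive step, to justify that $\sum_{a_t}\sup_{q_{t+1}} g(a_t,q_{t+1})=\sup_{h}\sum_{a_t} g(a_t,h(a_t))$ for countable $A$ and nonnegative $g$, and your weight-splitting scheme handles that cleanly.
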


\begin{claim}[\cite{lyu2022composition}]\label{cla:rmlower-upper-ineq}
    For each $b\in\zo$, $T\in\N$, $t\in [T]$,\footnote{In the original claim in \cite{lyu2022composition}, $t$ is assumed to be at most $T-1$, which is a typo.} and transcript $(q_1, \dots, a_{t-1}, q_t)\in (Q\times A)^{t-1}\times Q$, we have
    \begin{align*}
        \rmlower_{t,T}^b(q_1, a_1, \dots, q_t)&-e^{-\eps}\rmlower_{t,T}^{1-b}(q_1, a_1, \dots, q_t) \leq \mu_{t-1}^b(q_1, \dots, a_{t-1})- e^{-\eps}\mu_{t-1}^{1-b}(q_1, \dots, a_{t-1}).
    \end{align*}
\end{claim}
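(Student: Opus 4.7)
The plan is to prove Claim~\ref{cla:rmlower-upper-ineq} by backward induction on $t$, going from $t = T$ down to $t = 1$, with a pointwise-in-$a_t$ strengthening of the inequality as the inductive invariant.

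For the base case $t = T$, I would unfold both sides using Definition~\ref{def:rmlower-T}. Writing $X_c = \mu_T^c(q_1,\dots,a_T)$ for $c \in \zo$, it suffices to prove the pointwise inequality
\[
\max\{0,\; X_b - e^\eps X_{1-b}\} \;-\; e^{-\eps}\max\{0,\; X_{1-b} - e^\eps X_b\} \;\leq\; X_b - e^{-\eps} X_{1-b}
\]
for every $a_T$, since summing over $a_T$ collapses the right-hand side to $\mu_{T-1}^b(q_1,\dots,a_{T-1}) - e^{-\eps}\mu_{T-1}^{1-b}(q_1,\dots,a_{T-1})$. This pointwise bound splits into three easy cases according to the signs of $X_b - e^\eps X_{1-b}$ and $X_{1-b} - e^\eps X_b$ (note that at most one of these can be positive since $\eps \geq 0$), each of which follows by elementary manipulation and the fact that $e^\eps \geq e^{-\eps}$.

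For the inductive step, fix $t < T$ and assume the claim at level $t+1$. Using Corollary~\ref{cor:sum-rml-eq-rmlower}, I rewrite $\rmlower_{t,T}^c(q_1,\dots,q_t) = \sum_{a_t}\rml_{t,T}^c(q_1,\dots,a_t)$, so it suffices to establish, for each individual $a_t$, the pointwise bound
\[
\rml_{t,T}^b(q_1,\dots,a_t) - e^{-\eps}\rml_{t,T}^{1-b}(q_1,\dots,a_t) \;\leq\; \mu_t^b(q_1,\dots,a_t) - e^{-\eps}\mu_t^{1-b}(q_1,\dots,a_t);
\]
summing over $a_t$ then yields exactly the statement to be proved, since $\sum_{a_t}\mu_t^c(q_1,\dots,a_t) = \mu_{t-1}^c(q_1,\dots,a_{t-1})$.

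The main subtlety is handling the supremum defining $\rml_{t,T}^c$: I cannot take suprema of a difference directly. Instead, I rearrange the inductive hypothesis at level $t+1$ into the equivalent form
\[
\rmlower_{t+1,T}^b(q_1,\dots,a_t,q_{t+1}) \;\leq\; e^{-\eps}\,\rmlower_{t+1,T}^{1-b}(q_1,\dots,a_t,q_{t+1}) + \mu_t^b(q_1,\dots,a_t) - e^{-\eps}\mu_t^{1-b}(q_1,\dots,a_t),
\]
which holds for every $q_{t+1}$. Since the additive term $\mu_t^b - e^{-\eps}\mu_t^{1-b}$ does not depend on $q_{t+1}$, taking $\sup_{q_{t+1}}$ on both sides preserves the inequality and yields $\rml_{t,T}^b \leq e^{-\eps}\rml_{t,T}^{1-b} + \mu_t^b - e^{-\eps}\mu_t^{1-b}$, which is the required pointwise inequality. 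This sup-of-a-sum step, while essentially trivial, is the only place where the induction could fail; everything else is routine unfolding of the recursive definitions. I do not anticipate any serious obstacle beyond bookkeeping.
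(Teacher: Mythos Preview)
Your proof is correct. Note, however, that the paper does not give its own proof of this claim: it is cited from \cite{lyu2022composition} as a black box, and the paper instead proves the closely related Lemma~\ref{lem:rml-upper-ineq} (the pointwise-in-$a_t$ version for $\rml_{t,T}$), \emph{using} Claim~\ref{cla:rmlower-upper-ineq} in its inductive step. Your argument effectively proves both statements at once: your intermediate pointwise bound is exactly the content of Lemma~\ref{lem:rml-upper-ineq}, and your base-case three-way case split matches the paper's proof of that lemma verbatim. The one minor difference is in handling the supremum: the paper uses $\sup f - e^{-\eps}\sup g \le \sup(f - e^{-\eps}g)$ directly, whereas you rearrange first and then take $\sup$ of both sides; these are equivalent. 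In short, you have supplied a self-contained proof where the paper defers to the cited reference, and your strategy coincides with the paper's treatment of the companion lemma.
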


Claim~\ref{cla:rmlower-ub-delta} combined with the definition of $\rml_{t, T}^b$ imply:
\begin{corollary}\label{cor:rml-ub-delta}
    For each $b\in\zo$ and $T\in\N$, we have $\rml_{0,T}^b()\leq \delta$.
\end{corollary}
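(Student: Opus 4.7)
The plan is to invoke Claim~\ref{cla:rmlower-ub-delta} directly via the recursive clause defining $\rml^b_{t,T}$ at $t=0$. The natural extension of the $t<T$ case of Definition~\ref{def:rml-T} applied to the empty transcript gives
$$\rml_{0,T}^b() \;=\; \sup_{q_1\in Q}\, \rmlower_{1,T}^b(q_1).$$
First I would make this identity explicit (noting that the domain $(Q\times A)^0$ is a singleton consisting of the empty sequence, so $\rml_{0,T}^b()$ is well-defined), then appeal to Claim~\ref{cla:rmlower-ub-delta}, which supplies $\rmlower_{1,T}^b(q_1)\leq \delta$ for every $q_1\in Q$, every $b\in\zo$, and every $T\in\N$. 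Taking the supremum over $q_1$ on both sides preserves the bound, which yields $\rml_{0,T}^b()\leq \delta$.

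There is essentially no obstacle: the work was done in Claim~\ref{cla:rmlower-ub-delta}, and the corollary is a one-step unwrapping of the recursive definition. The only mild subtlety is the boundary convention, since Definition~\ref{def:rml-T} was originally stated for $t\in[T]$; I would include one sentence confirming that the recursive clause is consistent at $t=0$ and that no use is made of the $t=T$ base case (which would only arise when $T=0$, a degenerate case where the corollary is vacuous or trivially satisfied since $\rml_{0,0}^b$ would be defined by the base case on the empty transcript and is a probability, hence at most $\delta$ by $(\eps,\delta)$-DP applied to the empty interaction).
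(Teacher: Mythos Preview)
Your proposal is correct and matches the paper's approach exactly: the paper states the corollary as an immediate consequence of Claim~\ref{cla:rmlower-ub-delta} combined with the definition of $\rml_{t,T}^b$, which is precisely the one-step unwrapping you describe. Your extra remark about the boundary convention at $t=0$ is a reasonable clarification, though the paper leaves it implicit.
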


In some part of the proof, Lyu~\cite{lyu2022composition} wrongly refers to Claim~\ref{cla:rmlower-upper-ineq}, while the following lemma is required. For completeness, we prove this lemma and use it in the next section.

\begin{lemma}\label{lem:rml-upper-ineq}
    For each $b\in\zo$, $T\in\N$, $t\in [T]\cup\{0\}$, and transcript $(q_1, \dots, a_t)\in (Q\times A)^t$, we have
    \begin{align*}
        \rml_{t,T}^b(q_1, \dots, a_t)&-e^{-\eps}\rml_{t,T}^{1-b}(q_1, \dots, a_t) \leq \mu_{t}^b(q_1, \dots, a_t)- e^{-\eps}\mu_{t}^{1-b}(q_1, \dots, a_{t}).
    \end{align*}
\end{lemma}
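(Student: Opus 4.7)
The plan is to prove Lemma~\ref{lem:rml-upper-ineq} by backward induction on $t$, proceeding from $t = T$ down to $t = 0$. The structure mirrors the recursive definition of $\rml_{t,T}^b$: at $t = T$ we have a closed form, and for $t < T$ we must relate the sup-over-next-query definition of $\rml_{t,T}^b$ to the sum over next answer appearing in $\rmlower_{t+1,T}^b$ via Corollary~\ref{cor:sum-rml-eq-rmlower}.

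For the base case $t = T$, writing $\mu = \mu_T^b(q_1,\dots,a_T)$ and $\mu' = \mu_T^{1-b}(q_1,\dots,a_T)$, the inequality reduces to
$$\max\{0,\mu - e^\eps \mu'\} - e^{-\eps}\max\{0,\mu' - e^\eps \mu\} \leq \mu - e^{-\eps}\mu',$$
which I would verify by a three-way case split on the signs of $\mu - e^\eps \mu'$ and $\mu' - e^\eps \mu$. In the two extreme cases one of the maxima vanishes and the inequality either follows from $e^\eps \ge e^{-\eps}$ or holds with equality; in the middle case both maxima vanish and the right-hand side is nonnegative.

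For the inductive step at $t < T$, assuming the result at $t+1$, I would exploit that $\rml_{t,T}^b(q_1,\dots,a_t) = \sup_{q_{t+1}} \rmlower_{t+1,T}^b(q_1,\dots,a_t,q_{t+1})$ is a supremum that may not be attained. Fix $\eta > 0$ and choose $q^* \in Q$ with
$$\rmlower_{t+1,T}^b(q_1,\dots,a_t,q^*) \geq \rml_{t,T}^b(q_1,\dots,a_t) - \eta,$$
while noting that $\rml_{t,T}^{1-b}(q_1,\dots,a_t) \geq \rmlower_{t+1,T}^{1-b}(q_1,\dots,a_t,q^*)$ since the latter sup is taken over all queries. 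Combining these with Corollary~\ref{cor:sum-rml-eq-rmlower} gives
$$\rml_{t,T}^b - e^{-\eps}\rml_{t,T}^{1-b} \leq \eta + \sum_{a_{t+1}\in A}\bigl(\rml_{t+1,T}^b(\dots,a_{t+1}) - e^{-\eps}\rml_{t+1,T}^{1-b}(\dots,a_{t+1})\bigr).$$
Apply the inductive hypothesis term by term, sum, then marginalize using $\sum_{a_{t+1}} \mu_{t+1}^b(q_1,\dots,a_t,q^*,a_{t+1}) = \mu_t^b(q_1,\dots,a_t)$ (valid because the mechanism is online and its first $t$ answers do not depend on the later query $q^*$), and send $\eta \to 0$.

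The main obstacle is delicate rather than deep: the suprema defining $\rml_{t,T}^b$ and $\rml_{t,T}^{1-b}$ are generally attained at different queries, so one cannot directly ``subtract'' the definitions. The $\eta$-optimality trick—choosing $q^*$ that nearly maximizes only the $b$-side while using $q^*$ merely as a lower bound on the $(1-b)$-side—resolves this. A secondary technicality is the countable sum over $a_{t+1}$ when $A$ is countably infinite; this is handled by the monotone convergence argument implicit in exchanging sum and inequality, together with the observation that both $\sum_{a_{t+1}} \mu_{t+1}^b$ and $\sum_{a_{t+1}} \rml_{t+1,T}^b$ are absolutely convergent, as they are upper-bounded by probabilities.
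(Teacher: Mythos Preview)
Your proof is correct and follows essentially the same approach as the paper's: an identical three-way case split at $t=T$, and for $t<T$ the same supremum manipulation (your $\eta$-trick is just an explicit form of $\sup f - c\sup g \leq \sup(f-cg)$ for $c\ge 0$). The only structural difference is that the paper invokes Claim~\ref{cla:rmlower-upper-ineq} from \cite{lyu2022composition} as a black box for the $t<T$ step, whereas you make the argument self-contained via backward induction on the lemma itself, effectively inlining that claim through Corollary~\ref{cor:sum-rml-eq-rmlower} and the marginalization identity.
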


\begin{proof}
    For $t=T$, we need to show that
    \begin{align*}
        &\max\{0, \mu_t^b(q_1, \dots, a_t)-e^\eps\mu_t^{1-b}(q_1, \dots, a_t)\}- e^{-\eps}\max\{0, \mu_t^{1-b}(q_1, \dots, a_t)-e^\eps\mu_t^{b}(q_1, \dots, a_t)\}\\
        &\leq \mu_t^b(q_1, \dots, a_{t})- e^{-\eps}\mu_t^{1-b}(q_1, \dots, a_{t})
    \end{align*}
    For short, we refer to the LHS and RHS as $A$ and $B$, respectively.
    There are three cases:
    \begin{itemize}
        \item If $\mu_t^b(q_1, \dots, a_t)\leq e^{-\eps}\mu_t^{1-b}(q_1, \dots, a_t)$, then $A=\mu_t^b(q_1, \dots, a_t)-e^{-\eps}\mu_t^{1-b}(q_1, \dots, a_t),$
        which equals $B$.
        \item If $\mu_t^b(q_1, \dots, a_t)\geq e^{\eps}\mu_t^{1-b}(q_1, \dots, a_t)$, then, $A = \mu_t^{b}(q_1, \dots, a_t)-e^{\eps}\mu_t^{1-b}(q_1, \dots, a_t),$
        which is at most $B$ as $e^\eps\geq e^{-\eps}$.
        \item Otherwise, if $e^{-\eps}\mu_t^{1-b}(q_1, \dots, a_t)< \mu_t^b(q_1, \dots, a_t)< e^{\eps}\mu_t^{1-b}(q_1, \dots, a_t)$, $A=0$ is at most $B$ since by the condition of the case distinction, $B=\mu_t^b(q_1, \dots, a_{t})- e^{-\eps}\mu_t^{1-b}(q_1, \dots, a_{t})$ is positive.
    \end{itemize}
    For $t<T$, we have
    \begin{align*}
        &\rml_{t,T}^b(q_1, a_1, \dots, q_t)-e^{-\eps}\rml_{t,T}^{1-b}(q_1, a_1, \dots, q_t)\\
        &=\sup_{q_{t+1}\in Q}\rmlower_{t+1,T}^b(q_1, a_1, \dots, q_{t+1})- e^{-\eps}\sup_{q_{t+1}\in Q}\rmlower_{t+1,T}^{1-b}(q_1, a_1, \dots, q_{t+1})\\
        &\leq \sup_{q_{t+1}\in Q}\left(\rmlower_{t,T}^b(q_1, a_1, \dots, q_{t+1})- e^{-\eps}\rmlower_{t,T}^{1-b}(q_1, a_1, \dots, q_{t+1})\right)\\
        &\leq \mu_t^b(q_1, \dots, a_{t})- e^{-\eps}\mu_t^b(q_1, \dots, a_{t}),
    \end{align*}
    where the last inequality holds because for every $q_{t+1}\in Q$, by Claim~\ref{cla:rmlower-upper-ineq}
    \begin{align*}
        \rmlower_{t+1,T}^b(q_1, a_1, \dots, q_{t+1})&-e^{-\eps}\rmlower_{t+1,T}^{1-b}(q_1, a_1, \dots, q_{t+1}) \leq \mu_t^b(q_1, \dots, a_{t})- e^{-\eps}\mu_t^b(q_1, \dots, a_{t}).
    \end{align*}
\end{proof}

\subsubsection{Control Function $\rml_t^b=\lim_{T \ge t, T \to \infty}\rml_{t,T}$ (Definition).}\label{subsec:rmlower-rml-limit-def}
This section introduces our main new technical contribution.
To remove the assumption of having an upper bound $T$ on the number of queries, we introduce $\rml_t^b$ as the limit of $\rml_{t,T}^b$ and show that it satisfies all properties needed by the proof of~\cite{lyu2022composition}.
Thus, differently from~\cite{lyu2022composition}, we will use $\rml_t^0$ and $\rml_t^1$ to define the function $\phi_t$.

\begin{definition}[$\rml_t^b$]
    For each $b\in\zo$ and $t \in \N$, the function $\rml_t^b:(Q\times A)^t\to [0,1]$ is defined as follows: for each transcript $(q_1, \dots a_t)\in (Q\times A)^t$,
    $$\rml_t^b(q_1, \dots, a_{t-1}, q_t)=\lim_{T \ge t, T \to \infty}\rml_{t,T}^b(q_1, \dots, a_{t-1}, q_t).$$
\end{definition}

To ensure $\rml_t^b$ is well-defined, we need to show that the limit of $\rml_{t,T}^b$ exists. We will upper bound the values of $\rml_{t,T}^b$ by $1$ and show that this function is non-decreasing in $T$. As a result, for each $t\in \N\cup\{0\}$ and $(q_1, \dots, a_t)\in (Q\times A)^t$, the limit $\lim_{T \ge t, T \to \infty}\rml_{t,T}^b(q_1, \dots, a_t)$ exists. 
We need the following lemma throughout the section:

\begin{lemma}\label{lem:sum-mu-eq-mu}
    For every $t\in \N$, transcript $(q_1, \dots a_{t-1})\in (Q\times A)^{t-1}$, and query $q_t\in Q$, we have
    $$\mu_{t-1}^b(q_1, a_1, \dots, q_{t-1}, a_{t-1}) = \sum_{a_t \in A} \mu_{t}^b(q_1, a_1, \dots, q_t, a_t).$$
\end{lemma}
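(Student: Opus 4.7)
The plan is to prove this by straightforward marginalization. By the definition of $\mu_t^b$, the value $\mu_t^b(q_1, a_1, \dots, q_t, a_t)$ is the probability that the interactive mechanism $\cM(s_b)$, when fed the query sequence $q_1, \dots, q_t$, returns exactly the answers $a_1, \dots, a_t$. Since $\cM$ is a stateful randomized procedure, this event can be decomposed via the chain rule as the event that $\cM(s_b)$ returns $a_1, \dots, a_{t-1}$ to $q_1, \dots, q_{t-1}$, conjoined with the event that it then returns $a_t$ in response to $q_t$ given that history. In symbols,
\begin{equation*}
\mu_t^b(q_1, a_1, \dots, q_t, a_t) = \mu_{t-1}^b(q_1, a_1, \dots, q_{t-1}, a_{t-1}) \cdot p_{t}^b(a_t \mid q_1, a_1, \dots, q_{t-1}, a_{t-1}, q_t),
\end{equation*}
where $p_{t}^b(\cdot \mid \cdot)$ denotes the conditional distribution of $\cM(s_b)$'s $t$-th answer given the preceding transcript and the $t$-th query.

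Now I would sum both sides over $a_t \in A$. Because $\cM$ has discrete answer distributions, the conditional distribution $p_{t}^b(\cdot \mid q_1, a_1, \dots, q_{t-1}, a_{t-1}, q_t)$ is a probability mass function on $A$, and hence $\sum_{a_t \in A} p_{t}^b(a_t \mid q_1, a_1, \dots, q_{t-1}, a_{t-1}, q_t) = 1$. Factoring the (constant in $a_t$) term $\mu_{t-1}^b(q_1, a_1, \dots, q_{t-1}, a_{t-1})$ out of the sum yields
\begin{equation*}
\sum_{a_t \in A} \mu_t^b(q_1, a_1, \dots, q_t, a_t) = \mu_{t-1}^b(q_1, a_1, \dots, q_{t-1}, a_{t-1}),
\end{equation*}
which is exactly the desired identity.

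There is essentially no obstacle: the only subtlety is the bookkeeping around the case in which $\cM$ may halt before producing a $t$-th answer. This is handled by the standing convention in Section~\ref{sec:preliminaries} that $A = A_\cM$ already contains any terminating symbol (such as $\halt$), so that $\cM$ always produces some element of $A$ in response to any query and the conditional PMF genuinely sums to $1$. With this convention the proof is purely mechanical, and the statement is essentially just the tower/marginalization property of the joint distribution induced by $\cM(s_b)$ on transcripts.
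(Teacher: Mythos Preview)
Your proof is correct and takes essentially the same approach as the paper: apply the chain rule to write $\mu_t^b = \mu_{t-1}^b \cdot (\text{conditional PMF of } a_t)$, sum over $a_t$, and use that the conditional PMF sums to $1$. The paper's argument is identical up to notation (it writes the conditional as $\mu_t^b(a_t \mid q_1, \dots, a_{t-1}, q_t)$ rather than your $p_t^b$).
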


\begin{proof}
    For $a_t\in A$, let $\mu_t^b(a_t\mid q_1, a_1, \dots, q_{t-1}, a_{t-1}, q_t)$ denote the probability that $\cM(s_b)$ returns the answer $a_t$ in response to $q_t$ conditioned on $q_1, a_1, \dots, q_{t-1}, a_{t-1}$ being the history of previous queries and answers. By the chain rule, we have
    \begin{align*}
        \mu_t^b(q_1, a_1, \dots, q_t, a_t)&=\sum_{a_t\in A}\mu_{t-1}^b(q_1, a_1, \dots, q_{t-1}, a_{t-1})\cdot \mu_t^b(a_t\mid q_1, \dots, a_{t-1}, q_t)\\
        &=\mu_{t-1}^b(q_1, a_1, \dots, q_{t-1}, a_{t-1})\sum_{a_t\in A}\mu_t^b(a_t\mid q_1, \dots, a_{t-1}, q_t)\\
        &=\mu_{t-1}^b(q_1, a_1, \dots, q_{t-1}, a_{t-1})
    \end{align*}   
\end{proof}

To upper bound $\rml_{t,T}$, we first need to upper bound $\rmlower_{t,T}$:

\begin{lemma}\label{lem:rmlower-leq-mu-T}
    For every $T\in\N$, $b\in\zo$, $t\in [T]$, and transcript $(q_1, \dots a_{t-1}, q_t)\in (Q\times A)^{t-1}\times Q$,
    $$\rmlower_{t, T}^b(q_1, \dots a_{t-1}, q_t)\leq \mu_t^b(q_1, \dots, a_{t-1}).$$
\end{lemma}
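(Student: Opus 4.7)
The plan is a straightforward backward induction on $t$ from $T$ down to $1$, using the recursive definition of $\rmlower_{t,T}^b$ together with Lemma~\ref{lem:sum-mu-eq-mu}. The key observation is that $\rmlower$ at level $t$ is a sum over $a_t$ of a supremum over $q_{t+1}$, and the inductive hypothesis will bound the inner supremum by $\mu_t^b(q_1,\dots,a_t)$, which does not depend on $q_{t+1}$ at all, so the supremum evaluates trivially. The outer summation then collapses via Lemma~\ref{lem:sum-mu-eq-mu}.

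For the base case $t=T$, I would compute directly:
\[
\rmlower_{T,T}^b(q_1,\dots,a_{T-1},q_T) = \sum_{a_T\in A}\max\{0,\ \mu_T^b(q_1,\dots,a_T)-e^\eps\mu_T^{1-b}(q_1,\dots,a_T)\}.
\]
Since $e^\eps\mu_T^{1-b}\ge 0$, each term is at most $\mu_T^b(q_1,\dots,a_T)$, so the whole expression is at most $\sum_{a_T\in A}\mu_T^b(q_1,\dots,a_T)$, which equals $\mu_{T-1}^b(q_1,\dots,a_{T-1})$ by Lemma~\ref{lem:sum-mu-eq-mu}. (Note: reading the statement literally, the RHS $\mu_t^b(q_1,\dots,a_{t-1})$ appears to be shorthand for $\mu_{t-1}^b(q_1,\dots,a_{t-1})$, i.e., the probability of the first $t-1$ answers given the first $t-1$ queries — the RHS is independent of $q_t$ by construction, which is consistent with this reading.)

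For the inductive step, assume the bound holds at level $t+1$ and expand:
\[
\rmlower_{t,T}^b(q_1,\dots,a_{t-1},q_t)=\sum_{a_t\in A}\sup_{q_{t+1}\in Q}\rmlower_{t+1,T}^b(q_1,\dots,a_t,q_{t+1}).
\]
By the inductive hypothesis, for every choice of $q_{t+1}$ we have $\rmlower_{t+1,T}^b(q_1,\dots,a_t,q_{t+1})\le \mu_t^b(q_1,\dots,a_t)$. Since this upper bound is independent of $q_{t+1}$, the supremum over $q_{t+1}$ is also bounded by $\mu_t^b(q_1,\dots,a_t)$. Summing over $a_t$ and applying Lemma~\ref{lem:sum-mu-eq-mu} once more gives $\mu_{t-1}^b(q_1,\dots,a_{t-1})$, completing the induction.

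I do not expect a serious obstacle here; the only subtle point is noting that the inductive upper bound no longer involves $q_{t+1}$, which is precisely what lets us discharge the supremum without any extra work. This lemma is then the main ingredient we will need in the next section to upper bound $\rml_{t,T}^b$ by (something like) $\mu_t^b$, establish monotonicity or at least uniform boundedness of $\{\rml_{t,T}^b\}_T$, and thereby justify passing to the limit $T\to\infty$ in the definition of $\rml_t^b$.
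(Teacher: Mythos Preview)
Your proposal is correct and follows essentially the same backward induction as the paper: base case $t=T$ via $\max\{0,\mu_T^b-e^\eps\mu_T^{1-b}\}\le \mu_T^b$ and Lemma~\ref{lem:sum-mu-eq-mu}, inductive step by bounding the supremum over $q_{t+1}$ using the $q_{t+1}$-independent inductive hypothesis and summing via Lemma~\ref{lem:sum-mu-eq-mu}. Your reading of the RHS as $\mu_{t-1}^b(q_1,\dots,a_{t-1})$ is also what the paper intends.
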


\begin{proof}
    Fix $T\geq t$. The proof is by induction on $t$ from $T$ to $1$. For $t=T$, by definition and Lemma~\ref{lem:sum-mu-eq-mu}, we have
    \begin{align*}
        \rmlower_{t, T}^b(q_1, \dots a_{t-1}, q_t)&=\sum_{a_t\in A}\max\{0, \mu_t^b(q_1, \dots, a_t)-e^\eps\mu_t^{1-b}(q_1, \dots, a_t)\}\\
        &\leq \sum_{a_t\in A}\max\{0, \mu_t^b(q_1, \dots, a_t)\}= \mu_{t-1}^b(q_1, \dots, a_{t-1})
    \end{align*}
    For $t<T$, by the inductive hypothesis, 
    \begin{align*}
        \rmlower_{t, T}^b(q_1, \dots, a_{t-1}, q_t)&\leq \sum_{a_t\in A}\sup_{q_{t+1}\in Q}\mu_{t-1}^b(q_1, \dots a_{t}) =\sum_{a_t\in A}\mu_{t-1}^b(q_1, \dots a_{t})= \mu_{t-1}^b(q_1, \dots, a_{t-1}),
    \end{align*}
    completing the induction.
\end{proof}

The following lemma implies that the values of $\rml_{t,T}$ are upper bounded by $1$.

\begin{lemma}\label{lem:rml-leq-mu-T}
    For every $T\in\N$, $b\in\zo$, $t\in [T]\cup\{0\}$, and transcript $(q_1, \dots a_t)\in (Q\times A)^t$,
    $$\rml_{t, T}^b(q_1, \dots a_t)\leq \mu_t^b(q_1, \dots, a_t).$$
\end{lemma}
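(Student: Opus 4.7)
The proof is a short induction / case split that leans directly on the preceding lemma (Lemma~\ref{lem:rmlower-leq-mu-T}) together with the definition of $\rml_{t,T}^b$ in Definition~\ref{def:rml-T}. The plan is to split on whether $t=T$ or $t<T$ (the case $t=0$ being subsumed by the latter when $T>0$, with the convention $\mu_0^b()=1$).

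For the base case $t=T$, the definition gives
\[
\rml_{T,T}^b(q_1,\dots,a_T)=\max\{0,\ \mu_T^b(q_1,\dots,a_T)-e^\eps\mu_T^{1-b}(q_1,\dots,a_T)\}.
\]
Since $\mu_T^{1-b}\ge 0$ and $e^\eps\ge 0$, the expression $\mu_T^b-e^\eps\mu_T^{1-b}$ is at most $\mu_T^b$, and the probability $\mu_T^b(q_1,\dots,a_T)$ is itself non-negative; hence $\max\{0,\mu_T^b-e^\eps\mu_T^{1-b}\}\le \mu_T^b$, which is exactly the desired inequality.

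For the inductive step $t<T$, the definition yields
\[
\rml_{t,T}^b(q_1,\dots,a_t)=\sup_{q_{t+1}\in Q}\rmlower_{t+1,T}^b(q_1,\dots,a_t,q_{t+1}).
\]
By Lemma~\ref{lem:rmlower-leq-mu-T} applied at level $t+1$, for every choice of $q_{t+1}\in Q$,
\[
\rmlower_{t+1,T}^b(q_1,\dots,a_t,q_{t+1})\le \mu_t^b(q_1,\dots,a_t),
\]
since the right-hand side of that lemma depends only on the prefix $(q_1,\dots,a_t)$ and not on $q_{t+1}$. Taking the supremum over $q_{t+1}\in Q$ on the left preserves the inequality, giving $\rml_{t,T}^b(q_1,\dots,a_t)\le \mu_t^b(q_1,\dots,a_t)$. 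This also covers $t=0$, where $(q_1,\dots,a_t)$ is the empty transcript and $\mu_0^b()=1$.

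There is no genuine obstacle here: the only care required is to notice that the bound from Lemma~\ref{lem:rmlower-leq-mu-T} is uniform in the extending query $q_{t+1}$, which is what lets the supremum pass through. Once this lemma is in hand, together with Lemma~\ref{lem:sum-mu-eq-mu} (already used inside the proof of Lemma~\ref{lem:rmlower-leq-mu-T}), boundedness of $\rml_{t,T}^b$ by $1$ follows immediately as a corollary, as needed to justify the monotone-convergence argument that will make $\rml_t^b=\lim_{T\to\infty}\rml_{t,T}^b$ well defined in the subsequent subsection.
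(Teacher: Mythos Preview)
Your proof is correct and follows essentially the same approach as the paper: a two-case split on $t=T$ versus $t<T$, using the definition of $\rml_{t,T}^b$ directly in the first case and invoking Lemma~\ref{lem:rmlower-leq-mu-T} together with the observation that the resulting bound is uniform in $q_{t+1}$ in the second. The additional remark about $t=0$ and the role of the lemma in the subsequent limit argument are accurate and match the paper's intent.
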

\begin{proof}
    For $t=T$, by definition and Lemma~\ref{lem:sum-mu-eq-mu}, we have
    \begin{align*}
        \rml_{t,T}^b(q_1, \dots, a_t)&=\max\{0, \mu_t^b(q_1, \dots, a_t)-e^\eps\mu_t^{1-b}(q_1, \dots, a_t)\}\\ 
        &\leq \max\{0, \mu_t^b(q_1, \dots, a_t)\} = \mu_t^b(q_1, \dots, a_t)
    \end{align*}
    For $t<T$, by Lemma~\ref{lem:rmlower-leq-mu-T},
    \begin{align*}
        \rml_{t,T}^b(q_1, \dots, a_t)&=\sup_{q_{t+1}\in Q}\rmlower_{t+1,T}^b(q_1, \dots, a_t, q_{t+1}) \leq \sup_{q_{t}\in Q}\mu_t^b(q_1, \dots, a_t)=\mu_t^b(q_1, \dots, a_t)
    \end{align*}
\end{proof}

To ensure the function $\rml_t^b$ is well-defined, it remains to show that $\rml_{t,T}^b$ is non-decreasing in $T$. To prove so, we first need to show:

\begin{lemma}\label{lem:rmlower-non-decreasing}
    For each $b\in\zo$ and $t\in\N$, the function $\rmlower_{t,T}^b$ is monotonically non-decreasing in $T$.
\end{lemma}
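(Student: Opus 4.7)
Fix $b \in \{0,1\}$ and $t \in \mathbb{N}$. The plan is to show, for every $T \geq t$, that $\rmlower_{t,T}^b(q_1,\dots,a_{t-1},q_t) \leq \rmlower_{t,T+1}^b(q_1,\dots,a_{t-1},q_t)$ for every transcript in the domain. I will do this by downward induction on the time index, starting from the level where the recursion bottoms out.

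The inductive step is the easy half. Assume that $\rmlower_{s+1,T}^b \leq \rmlower_{s+1,T+1}^b$ pointwise for some $s$ with $t \leq s < T$. Then, because both the supremum over $q_{s+1} \in Q$ and the sum over $a_s \in A$ are monotone in their integrand, applying the recursive formula from Definition~\ref{def:rmlower-T} at level $s$ gives
\[
\rmlower_{s,T}^b(q_1,\dots,a_{s-1},q_s) = \sum_{a_s \in A}\sup_{q_{s+1}\in Q}\rmlower_{s+1,T}^b(\cdots) \leq \sum_{a_s \in A}\sup_{q_{s+1}\in Q}\rmlower_{s+1,T+1}^b(\cdots) = \rmlower_{s,T+1}^b(q_1,\dots,a_{s-1},q_s).
\]
Iterating this from $s = T-1$ down to $s = t$ reduces everything to the base case.

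The base case, at $s = T$, is where the real content lies. On one side $\rmlower_{T,T}^b$ is defined directly by the $\max\{0,\cdot\}$ formula, while on the other side $\rmlower_{T,T+1}^b$ unfolds one more layer of the recursion and then terminates. Concretely, I must show that for every $(q_1,\dots,a_{T-1},q_T)$ and every $a_T \in A$,
\[
\max\bigl\{0,\mu_T^b(q_1,\dots,a_T) - e^\eps\mu_T^{1-b}(q_1,\dots,a_T)\bigr\} \;\leq\; \sup_{q_{T+1}\in Q}\sum_{a_{T+1}\in A}\max\bigl\{0,\mu_{T+1}^b - e^\eps\mu_{T+1}^{1-b}\bigr\},
\]
after which summing over $a_T$ finishes the base case. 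To prove this inequality, I will fix any $q_{T+1} \in Q$ and use the pointwise bound $\max\{0,z\} \geq z$ together with $\max\{0,z\} \geq 0$ to conclude that the right-hand sum is at least $\max\{0, \sum_{a_{T+1}}(\mu_{T+1}^b - e^\eps\mu_{T+1}^{1-b})\}$; by Lemma~\ref{lem:sum-mu-eq-mu} this telescopes exactly to $\max\{0,\mu_T^b - e^\eps\mu_T^{1-b}\}$.

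The main (mild) obstacle is precisely this base-case telescoping: one has to notice that refining the integration by adding another query and answer can only widen the positive part of the signed measure $\mu^b - e^\eps\mu^{1-b}$, because on each refined cell the $\max$ is taken against zero independently, whereas on the coarse cell negative contributions from some $a_{T+1}$ are allowed to cancel positive ones. Lemma~\ref{lem:sum-mu-eq-mu} is exactly the conservation law that lets us compare the two. Once this observation is made, no further machinery is needed, and the lemma follows.
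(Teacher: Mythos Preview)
Your proposal is correct and matches the paper's proof essentially line for line: both use downward induction on the time index, with the inductive step following from monotonicity of sum and supremum, and the base case $s=T$ established via the inequality $\sum_{a_{T+1}}\max\{0,z_{a_{T+1}}\}\geq \max\{0,\sum_{a_{T+1}} z_{a_{T+1}}\}$ combined with Lemma~\ref{lem:sum-mu-eq-mu} to telescope back to level $T$. The only cosmetic difference is that you state the base-case inequality for a single $a_T$ and then sum, whereas the paper carries the outer sum over $a_T$ throughout.
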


\begin{proof}
    We need to show that for every $T\in \N$, $b\in\zo$, $t\leq T$, and transcript $(q_1, \dots, a_{t-1}, q_t)\in (Q\times A)^{t-1}\times Q$, 
    $$\rmlower_{t,T}^b(q_1, \dots, a_{t-1}, q_t) \le \rmlower_{t,T+1}^b(q_1, \dots, a_{t-1}, q_t).$$
    Fix $T\in\N$ and $b\in\zo$. We show this by reverse induction on $t$: The base case is $t=T$. By Lemma~\ref{lem:sum-mu-eq-mu}, for every $(q_1, \dots, a_{t-1}, q_t)$, we have
    \begin{align*}
    &\rmlower_{T,T}^b (q_1, \dots, a_{T-1}, q_T) = \sum\limits_{a_T\in A}\max\{0, \mu_T^b(q_1, \dots, a_T)-e^\eps\mu_T^{1-b}(q_1, \dots, a_T)\}\\
    &= \sum\limits_{a_T\in A} \sup_{q_{T+1} \in Q} \max\{0, \sum\limits_{a_{T+1}\in A}\mu_{T+1}^b(q_1, \dots, a_{T+1})-e^\eps\mu_{T+1}^{1-b}(q_1, \dots, a_{T+1})\}\\
    &\le \sum\limits_{a_T\in A}\sup_{q_{T+1}\in Q} 
    \sum\limits_{a_{T+1}\in A}\max\{0, \mu_{T+1}^b(q_1, \dots, a_{T+1})-e^\eps\mu_{T+1}^{1-b}(q_1, \dots, a_{T+1})\}\\
    &= \sum\limits_{a_{T}\in A}\sup_{q_{T+1}\in Q} \rmlower_{T+1,T+1}(q_1, \dots, a_T, q_{T+1}) \\
    &= \rmlower_{T,T+1}^b (q_1, \dots, a_{T-1}, q_{T}) 
    \end{align*}
    For the induction step, assume that $t < T$ and that for every $(q_1, \dots, a_{t}, q_{t+1})$, we have 
    $$\rmlower_{t+1,T}^b(q_1, \dots, a_{t}, q_{t+1}) \le \rmlower_{t+1,T+1}^b(q_1, \dots, a_{t}, q_{t+1}).$$
    We need to show $\rmlower_{t,T}^b(q_1, \dots, a_{t-1}, q_t) \le \rmlower_{t,T+1}^b(q_1, \dots, a_{t-1}, q_t)$ for every $(q_1, \dots, a_{t-1}, q_t)$.
    By definition, 
    \begin{align*}
    \rmlower_{t,T}^b(q_1, \dots, a_{t-1}, q_t)& = 
      \sum\limits_{a_t\in A}\sup_{q_{t+1}\in Q}\rmlower_{t+1,T}^b(q_1, \dots a_{t}, q_{t+1}) \\
      & \le \sum\limits_{a_t\in A}\sup_{q_{t+1}\in Q}\rmlower_{t+1,T+1}^b(q_1, \dots a_{t}, q_{t+1}) \\
      & =  \rmlower_{t,T+1}^b(q_1, \dots, a_{t-1}, q_t)
      \end{align*}
\end{proof}

\begin{lemma}\label{lem:rml-non-decreasing}
    For each $b\in\zo$ and $t\in\N\cup\{0\}$, the function $\rml_{t,T}^b$ is monotonically non-decreasing in $T$.
\end{lemma}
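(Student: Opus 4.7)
The plan is to prove, for every fixed $b \in \{0,1\}$ and $T \in \mathbb{N}$, that
$$\rml_{t,T}^b(q_1,\dots,a_t) \;\leq\; \rml_{t,T+1}^b(q_1,\dots,a_t)$$
for every $t \in \{0,1,\dots,T\}$ and every transcript $(q_1,\dots,a_t) \in (Q \times A)^t$. I would split into two cases based on how each side is defined.

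Case $t < T$: Here both $\rml_{t,T}^b$ and $\rml_{t,T+1}^b$ are given by the same formula, namely a supremum over $q_{t+1}\in Q$ of $\rmlower_{t+1,\cdot}^b(q_1,\dots,a_t,q_{t+1})$. By Lemma~\ref{lem:rmlower-non-decreasing}, the integrand is non-decreasing in the second index, so taking suprema preserves the inequality and the claim follows immediately.

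Case $t = T$: This is the only real content of the proof, and it mirrors the base step of Lemma~\ref{lem:rmlower-non-decreasing}. Here $\rml_{T,T}^b(q_1,\dots,a_T) = \max\{0, \mu_T^b(q_1,\dots,a_T) - e^\eps \mu_T^{1-b}(q_1,\dots,a_T)\}$, while $\rml_{T,T+1}^b(q_1,\dots,a_T) = \sup_{q_{T+1}\in Q} \rmlower_{T+1,T+1}^b(q_1,\dots,a_T,q_{T+1})$, which by Corollary~\ref{cor:sum-rml-eq-rmlower} expands to $\sup_{q_{T+1}\in Q}\sum_{a_{T+1}\in A}\max\{0,\mu_{T+1}^b-e^\eps\mu_{T+1}^{1-b}\}$. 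I would fix an arbitrary $q_{T+1} \in Q$, apply Lemma~\ref{lem:sum-mu-eq-mu} to rewrite $\mu_T^b(q_1,\dots,a_T) = \sum_{a_{T+1}\in A}\mu_{T+1}^b(q_1,\dots,a_{T+1})$ and similarly for $\mu_T^{1-b}$, and then use the elementary inequality $\max\{0,\sum_i x_i\} \leq \sum_i \max\{0,x_i\}$ applied to $x_{a_{T+1}} = \mu_{T+1}^b - e^\eps \mu_{T+1}^{1-b}$. This shows $\rml_{T,T}^b(q_1,\dots,a_T) \leq \rmlower_{T+1,T+1}^b(q_1,\dots,a_T,q_{T+1})$ for every $q_{T+1}$, and then taking the supremum over $q_{T+1}$ yields the desired bound.

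I do not expect any substantive obstacle: this is a short two-case computation, and the only trick is the telescoping via Lemma~\ref{lem:sum-mu-eq-mu} together with the subadditivity of $\max\{0,\cdot\}$ that already appears in the base case of the proof of Lemma~\ref{lem:rmlower-non-decreasing}. Once this is established, combined with the uniform bound $\rml_{t,T}^b \leq 1$ (which follows from Lemma~\ref{lem:rml-leq-mu-T} since $\mu_t^b \leq 1$), the monotone convergence of the sequence $\{\rml_{t,T}^b\}_{T \geq t}$ is immediate, so the definition of $\rml_t^b$ as $\lim_{T\to\infty} \rml_{t,T}^b$ is well-posed.
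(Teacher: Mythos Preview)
Your proposal is correct and matches the paper's proof essentially line for line: the paper phrases it as a ``reverse induction on $t$ from $T$ to $0$,'' but its inductive step never actually invokes the inductive hypothesis, so it collapses to the same two-case split you describe, with the $t=T$ case handled via Lemma~\ref{lem:sum-mu-eq-mu} and subadditivity of $\max\{0,\cdot\}$, and the $t<T$ case handled directly via Lemma~\ref{lem:rmlower-non-decreasing}.
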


\begin{proof}
    We need to show that for every $T\in \N$, $t\in [T]\cup \{0\}$, and transcript $(q_1, \dots, a_t)\in (Q\times A)^t$, 
    $$\rml_{t,T}(q_1, \dots, a_t) \le \rml_{t,T+1}(q_1, \dots, a_t)$$
    Fix $T\in\N$ and $b\in\zo$. We show this by reverse induction on $t$ from $T$ to $0$. 
    
    The base case is $t=T$. By Lemma~\ref{lem:sum-mu-eq-mu}, for every $(q_1, \dots, a_T)$ we have
    \begin{align*}
    &\rml_{T,T}^b (q_1, \dots, a_T) = \max\{0, \mu_T^b(q_1, \dots, a_T)-e^\eps\mu_T^{1-b}(q_1, \dots, a_T)\} \\
    &= \sup_{q_{T+1} \in Q} \max\{0, \sum\limits_{a_{T+1}\in A}\mu_{T+1}^b(q_1, \dots, a_{T+1})-e^\eps\mu_{T+1}^{1-b}(q_1, \dots, a_{T+1}))\}\\
    &\le \sup_{q_{T+1}\in Q} 
    \sum\limits_{a_{T+1}\in A}\max\{0, \mu_{T+1}^b(q_1, \dots, a_{T+1})-e^\eps\mu_{T+1}^{1-b}(q_1, \dots, a_{T+1})\}\\
    &= 
    \sup_{q_{T+1}\in Q} \rmlower_{T+1,T+1}(q_1, \dots, a_T, q_{T+1}) \\
    &= \rml_{T,T+1}^b (q_1, \dots, a_{T-1}, q_{T}, a_T) 
    \end{align*}

    For the induction step, assume that $t < T$ and that for every $(q_1, \dots, a_{t+1})$, we have $\rml_{t+1,T}^b(q_1, \dots, a_{t+1}) \le \rml_{t+1,T+1}^b(q_1, \dots, a_{t+1})$. We need to show $\rml_{t,T}^b(q_1, \dots, a_t) \le \rml_{t,T+1}^b(q_1, \dots, a_t)$ for every $(q_1, \dots, a_t)$.
    By definition of $\rml_{t,T+1}$ and Lemma~\ref{lem:rmlower-non-decreasing}, we have
    \begin{align*}
    \rml_{t,T}^b(q_1, \dots, q_t, a_t)& = 
      \sup_{q_{t+1}\in Q}\rmlower_{t+1,T}^b(q_1, \dots a_{t}, q_{t+1}) \\
      & \le \sup_{q_{t+1}\in Q}\rmlower_{t+1,T+1}^b(q_1, \dots a_{t}, q_{t+1}) \\
      & =  \rml_{t,T+1}^b(q_1, \dots, q_t, a_t)
      \end{align*}
\end{proof}

\subsubsection{Control Function $\rml_t^b=\lim_{T \ge t, T \to \infty}\rml_{t,T}$ (Properties).}\label{subsec:rmlower-rml-limit-prop}
In this section, we present the properties of $\rml_t^b$, required in the next sections. The definition of $\rml_t^b$ combined with Corollary~\ref{cor:rml-ub-delta}, Lemma~\ref{lem:rml-upper-ineq}, and Lemma~\ref{lem:rml-leq-mu-T} gives the following corollaries, respectively:
\begin{corollary}\label{cor:rml-ub-delta-limit}
    For each $b\in\zo$, we have $\rml_0^b()\leq \delta$.
\end{corollary}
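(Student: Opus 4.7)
The plan is to directly pass the uniform upper bound from Corollary~\ref{cor:rml-ub-delta} through the limit that defines $\rml_0^b()$. By definition, $\rml_0^b() = \lim_{T \ge 0,\, T \to \infty} \rml_{0,T}^b()$, so once I know this limit exists and each term of the sequence is bounded by $\delta$, the conclusion is immediate.

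First, I would verify that the limit is well-defined. Lemma~\ref{lem:rml-non-decreasing} shows that the sequence $\bigl(\rml_{0,T}^b()\bigr)_{T\in\N}$ is monotonically non-decreasing in $T$, and Lemma~\ref{lem:rml-leq-mu-T} bounds each term above by $\mu_0^b() = 1$. A bounded monotone sequence of reals converges (to its supremum), so $\rml_0^b()$ exists as a real number in $[0,1]$.

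Next, Corollary~\ref{cor:rml-ub-delta} supplies the uniform upper bound $\rml_{0,T}^b() \le \delta$ for every $T \in \N$. Since weak inequalities are preserved under limits, passing to the limit gives $\rml_0^b() \le \delta$, which is exactly the claim.

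There is no real obstacle here: the substantive work has already been done in constructing $\rml_t^b$ as a limit and in establishing monotonicity and boundedness of the approximants $\rml_{t,T}^b$. The only thing one might want to flag is that the limiting argument relies on having the bound $\rml_{0,T}^b() \le \delta$ hold \emph{for every} $T$ (not merely asymptotically), which is indeed what Corollary~\ref{cor:rml-ub-delta} provides.
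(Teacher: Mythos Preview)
Your proposal is correct and matches the paper's approach: the paper simply states that the corollary follows from the definition of $\rml_t^b$ combined with Corollary~\ref{cor:rml-ub-delta}, which is exactly the limit-passing argument you spell out. Your additional remarks on why the limit exists (via Lemmas~\ref{lem:rml-non-decreasing} and~\ref{lem:rml-leq-mu-T}) just make explicit the well-definedness already established in Section~\ref{subsec:rmlower-rml-limit-def}.
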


\begin{corollary}\label{cor:rml-upper-ineq-limit}
    For each $b\in\zo$, $t\in\N\cup\{0\}$, and transcript $(q_1, \dots, a_t)\in (Q\times A)^t$, we have
    \begin{align*}
        \rml_t^b(q_1, \dots, a_t)&-e^{-\eps}\rml_t^{1-b}(q_1, \dots, a_t) \leq \mu_{t}^b(q_1, \dots, a_t)- e^{-\eps}\mu_{t}^{1-b}(q_1, \dots, a_{t}).
    \end{align*}
\end{corollary}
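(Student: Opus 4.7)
The plan is very short, because the corollary is essentially an immediate consequence of Lemma~\ref{lem:rml-upper-ineq} combined with the definition of $\rml_t^b$ as a limit. I would proceed by taking $T \to \infty$ in the inequality supplied by Lemma~\ref{lem:rml-upper-ineq}, namely
\[
\rml_{t,T}^b(q_1, \dots, a_t) - e^{-\eps}\rml_{t,T}^{1-b}(q_1, \dots, a_t) \le \mu_t^b(q_1, \dots, a_t) - e^{-\eps}\mu_t^{1-b}(q_1, \dots, a_t),
\]
valid for every integer $T \ge t$, and observing that the right-hand side does not depend on $T$.

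To justify passing to the limit on the left, I would invoke Lemma~\ref{lem:rml-non-decreasing} (monotonicity of $T \mapsto \rml_{t,T}^b$) and Lemma~\ref{lem:rml-leq-mu-T} (uniform boundedness by $\mu_t^b \le 1$), which together guarantee that $\lim_{T \ge t,\, T\to\infty}\rml_{t,T}^{b'}(q_1, \dots, a_t)$ exists and equals $\rml_t^{b'}(q_1, \dots, a_t)$ for both $b' \in \{0,1\}$. Since both sequences converge in $\R$, the linear combination converges to the corresponding linear combination of the limits, i.e.
\[
\lim_{T \to \infty}\bigl(\rml_{t,T}^b - e^{-\eps}\rml_{t,T}^{1-b}\bigr) = \rml_t^b - e^{-\eps}\rml_t^{1-b}.
\]
Taking the limit in the above inequality, and using that weak inequalities pass to limits, yields the desired bound.

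There is essentially no obstacle here; the only thing to be careful about is that the term $-e^{-\eps}\rml_{t,T}^{1-b}$ is decreasing in $T$ (the coefficient is negative), so the left-hand side as a whole is not manifestly monotone. However, this does not matter for the argument: convergence of each summand separately is enough to conclude convergence of the linear combination, and the $T$-uniform upper bound by the $T$-independent right-hand side survives the limit. So the proof is a one-line application of limits on top of Lemma~\ref{lem:rml-upper-ineq}, Lemma~\ref{lem:rml-non-decreasing}, and Lemma~\ref{lem:rml-leq-mu-T}.
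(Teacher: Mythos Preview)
Your proposal is correct and matches the paper's approach: the paper states this corollary as an immediate consequence of the definition of $\rml_t^b$ (as the limit of $\rml_{t,T}^b$) combined with Lemma~\ref{lem:rml-upper-ineq}, and your write-up simply spells out the limit argument, invoking Lemmas~\ref{lem:rml-non-decreasing} and~\ref{lem:rml-leq-mu-T} to guarantee the limits exist.
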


\begin{lemma}\label{lem:rml-leq-mu-limit}
    For each $b\in\zo$, $t\in \N$, and transcript $(q_1, \dots a_t)\in (Q\times A)^t$,
    $$\rml_t^b(q_1, \dots a_t)\leq \mu_t^b(q_1, \dots, a_t).$$
\end{lemma}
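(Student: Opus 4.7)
The plan is to deduce this directly from the finite-horizon analogue in Lemma~\ref{lem:rml-leq-mu-T} by passing to the limit. Since $\rml_t^b$ is defined as $\lim_{T \ge t, T \to \infty} \rml_{t,T}^b$ and the upper bound $\mu_t^b(q_1,\dots,a_t)$ does not depend on $T$, the inequality should survive the limit without further work.

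First, I fix $b \in \zo$, $t \in \N$, and a transcript $(q_1,\dots,a_t) \in (Q \times A)^t$. For every integer $T \ge t$, Lemma~\ref{lem:rml-leq-mu-T} yields
\[
\rml_{t,T}^b(q_1,\dots,a_t) \le \mu_t^b(q_1,\dots,a_t).
\]
Next, I observe that the limit defining $\rml_t^b(q_1,\dots,a_t)$ actually exists: by Lemma~\ref{lem:rml-non-decreasing}, the sequence $\{\rml_{t,T}^b(q_1,\dots,a_t)\}_{T \ge t}$ is monotonically non-decreasing in $T$, and by the above display it is bounded above by $\mu_t^b(q_1,\dots,a_t) \le 1$. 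Hence the monotone convergence principle for real sequences guarantees that the limit exists and equals the supremum over $T \ge t$ of $\rml_{t,T}^b(q_1,\dots,a_t)$.

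Finally, taking $T \to \infty$ on both sides of the displayed inequality (and using that the right-hand side is constant in $T$) gives
\[
\rml_t^b(q_1,\dots,a_t) \;=\; \lim_{T \ge t,\, T \to \infty} \rml_{t,T}^b(q_1,\dots,a_t) \;\le\; \mu_t^b(q_1,\dots,a_t),
\]
which is the desired conclusion. There is no real obstacle here; the only thing worth double-checking is that the monotonicity needed to justify the existence of the limit has indeed been established earlier (it has, in Lemma~\ref{lem:rml-non-decreasing}), so the argument is essentially a one-line limiting step applied to Lemma~\ref{lem:rml-leq-mu-T}.
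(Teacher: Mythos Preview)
Your proposal is correct and follows essentially the same approach as the paper, which simply remarks that the lemma follows from the definition of $\rml_t^b$ together with Lemma~\ref{lem:rml-leq-mu-T}. Your additional justification of the limit's existence via Lemma~\ref{lem:rml-non-decreasing} is a welcome bit of care, but the core argument is identical.
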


To show the next property, we need the following known fact:
\begin{lemma}[Monotone Convergence]\label{lem:monotone-convergence}
    Let $f:\mathbb{N}^2 \to \mathbb{R}$ be a function such that $\sum_m f(m,1) > -\infty$ and that is non-decreasing in the second parameter. Then 
    $$\lim_{n \to \infty} \sum_{m=1}^\infty f(m,n) = \sum_{m=1}^\infty \lim_{n \to \infty} f(m,n)$$
\end{lemma}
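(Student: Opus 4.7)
The plan is to reduce the statement to the standard (non-negative) monotone convergence theorem for counting measure. First I would split into two cases based on $\sum_m f(m,1)$. If $\sum_m f(m,1) = +\infty$, both sides are trivially $+\infty$: by monotonicity in $n$, $\sum_m f(m,n) \geq \sum_m f(m,1) = +\infty$ for every $n$, so the left-hand side is $+\infty$, and $\lim_n f(m,n) \geq f(m,1)$ gives $\sum_m \lim_n f(m,n) = +\infty$ as well. So I would focus on the case where $\sum_m f(m,1)$ is finite, which is exactly what the hypothesis $\sum_m f(m,1) > -\infty$ is designed to allow (assuming, as is natural, the sum is also bounded above; otherwise we are in case one).

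Next, define $g(m,n) := f(m,n) - f(m,1)$. Since $f$ is non-decreasing in its second argument, $g(m,n) \geq 0$ for all $m, n$, and $g$ remains non-decreasing in $n$. Because the finite constant $C := \sum_m f(m,1)$ can be added to both sides of the asserted equality, the lemma reduces to showing
\[
\lim_{n \to \infty} \sum_{m=1}^\infty g(m,n) \;=\; \sum_{m=1}^\infty \lim_{n \to \infty} g(m,n)
\]
for a non-negative sequence $g(m,n)$ non-decreasing in $n$. This is precisely monotone convergence for counting measure on $\mathbb{N}$.

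To prove this reduced statement from scratch (avoiding heavy measure-theoretic machinery), I would establish both inequalities directly. For the $\leq$ direction, observe that for any fixed $m$ and any $n$, $g(m,n) \leq \lim_{n' \to \infty} g(m,n')$; summing over $m$ yields $\sum_m g(m,n) \leq \sum_m \lim_n g(m,n)$, and taking $n \to \infty$ on the left gives the upper bound. For the $\geq$ direction, fix any finite truncation $M$; then $\sum_{m=1}^M g(m,n) \leq \sum_{m=1}^\infty g(m,n)$, and the left side is a finite sum, so we may commute limit and sum to get $\sum_{m=1}^M \lim_n g(m,n) \leq \lim_n \sum_{m=1}^\infty g(m,n)$. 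Taking $M \to \infty$ yields the matching lower bound.

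The main subtlety is really just the bookkeeping around the $\pm\infty$ cases and verifying that the translation trick by $f(m,1)$ is legitimate, which requires precisely the finiteness made available by the hypothesis $\sum_m f(m,1) > -\infty$. No truly hard step arises: this is the Beppo Levi / monotone convergence theorem specialized to counting measure, and one could alternatively dispatch the whole lemma in a single line by citing that theorem.
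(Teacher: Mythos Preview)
Your proof is correct. The paper does not actually prove this lemma; it is introduced with the phrase ``we need the following known fact'' and stated without proof, as it is just the classical monotone convergence (Beppo Levi) theorem specialized to counting measure. Your write-up supplies exactly the standard argument the paper is implicitly invoking: handle the $+\infty$ case directly, then translate by $f(m,1)$ to reduce to the non-negative case, and finally prove the non-negative MCT by the two elementary inequalities (termwise bound for $\leq$, finite truncation for $\geq$). One small remark: in the finite case your splitting $\sum_m f(m,n) = \sum_m g(m,n) + C$ is justified because finiteness of $\sum_m f(m,1)$ together with $\sum_m f(m,1)^- < \infty$ forces absolute convergence of $\sum_m f(m,1)$, so the rearrangement is legitimate; you might make that explicit, but it is not a gap.
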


Lemma~\ref{lem:monotone-convergence} combined with Lemma~\ref{lem:rml-non-decreasing} implies:
\begin{corollary}\label{cor:swap-lim-and-sum-rml}
    For every $b\in\zo$, $t\in \N$, and $(q_1, \dots, a_{t-1}, q_t)\in (Q\times A)^{t-1}\times Q$,
    $$\sum_{a_t\in A}\lim_{T \ge t, T \to \infty}\rml_{t, T}^b(q_1, \dots, a_t)= \lim_{T \ge t, T \to \infty}\sum_{a_t\in A}\rml_{t, T}^b(q_1, \dots, a_t)$$
\end{corollary}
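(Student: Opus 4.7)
The plan is to apply the monotone convergence statement of Lemma~\ref{lem:monotone-convergence} to the doubly-indexed sequence of values $\rml_{t,T}^b(q_1,\dots,a_{t-1},q_t,a_t)$, where the inner index ranges over $a_t\in A$ and the outer index ranges over $T\ge t$. Since $\cM$ has discrete answer distributions, the set $A$ is (at most) countable, so we may enumerate its elements as $a^{(1)},a^{(2)},\ldots$ (if $A$ is finite, the identity is trivial, as finite sums commute with limits, and we may dispose of this case separately). Fix the transcript $(q_1,\dots,a_{t-1},q_t)$ throughout.

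For $m\in\mathbb{N}$ and $n\in\mathbb{N}$, define
\[
f(m,n) \;=\; \rml_{t,\,t+n-1}^{b}\bigl(q_1,\dots,a_{t-1},q_t,a^{(m)}\bigr).
\]
This reindexing aligns the outer index with the base $n=1$ required by Lemma~\ref{lem:monotone-convergence}. Two hypotheses must be verified. First, each value $f(m,n)$ is nonnegative, since by Definition~\ref{def:rml-T} the function $\rml_{t,T}^{b}$ is defined as a supremum of $\rmlower$-values (which are sums of nonnegative terms) or as $\max\{0,\cdot\}$; in particular $\sum_{m}f(m,1)\ge 0>-\infty$. Second, Lemma~\ref{lem:rml-non-decreasing} states that $\rml_{t,T}^{b}$ is monotonically non-decreasing in $T$, which is exactly the statement that $f(m,n)$ is non-decreasing in $n$ for every fixed $m$.

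With both hypotheses in hand, Lemma~\ref{lem:monotone-convergence} yields
\[
\lim_{n\to\infty}\sum_{m=1}^{\infty} f(m,n) \;=\; \sum_{m=1}^{\infty}\lim_{n\to\infty} f(m,n).
\]
Unwinding the definition of $f$ and the reindexing $T=t+n-1$, the left-hand side equals
$\lim_{T\ge t,\,T\to\infty}\sum_{a_t\in A}\rml_{t,T}^{b}(q_1,\dots,a_t)$, while the right-hand side equals
$\sum_{a_t\in A}\lim_{T\ge t,\,T\to\infty}\rml_{t,T}^{b}(q_1,\dots,a_t)$, which is the desired identity.

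The only step requiring any care is the appeal to monotone convergence itself: one must ensure the two hypotheses (eventual non-negativity or bounded-below initial sum, and monotonicity in the outer index) are both verified for the specific sequence under consideration. Non-negativity is immediate from Definition~\ref{def:rml-T}, and monotonicity is precisely Lemma~\ref{lem:rml-non-decreasing}; no additional estimate is needed. (Alternatively, because $\rml_{t,T}^b\le \mu_t^b$ by Lemma~\ref{lem:rml-leq-mu-T} and $\sum_{a_t}\mu_t^b(q_1,\dots,a_t)=\mu_{t-1}^b(q_1,\dots,a_{t-1})\le 1$ by Lemma~\ref{lem:sum-mu-eq-mu}, one could equally well justify the exchange via dominated convergence with dominating function $a_t\mapsto\mu_t^b(q_1,\dots,a_t)$.)
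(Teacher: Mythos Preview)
Your proof is correct and follows the same approach as the paper, which simply states that the corollary follows from Lemma~\ref{lem:monotone-convergence} combined with Lemma~\ref{lem:rml-non-decreasing}. You have merely made explicit the reindexing and the verification of the two hypotheses (nonnegativity and monotonicity in $T$), which the paper leaves implicit.
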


\begin{lemma}\label{lem:rml-geq-sum-rml-limit}
    For every $b\in \{0,1\}$, $t \in \N$,
    transcript $(q_1, \dots a_{t-1})\in (Q\times A)^{t-1}$, and query $q_t\in Q$,
    $$\rml_{t-1}^b(q_1, \dots a_{t-1}) \geq \sum_{a_t \in A}\rml_{t}^b(q_1, \dots a_t)$$
\end{lemma}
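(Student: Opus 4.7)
The plan is to bound $\rml_{t-1}^b$ from below using the pre-limit control functions $\rml_{t-1,T}^b$ for large $T$, then apply the identity from Corollary~\ref{cor:sum-rml-eq-rmlower} to rewrite the $\rmlower_{t,T}^b$ in the defining supremum as a sum over $a_t$ of $\rml_{t,T}^b$, and finally interchange the limit in $T$ with the sum over $A$ using Corollary~\ref{cor:swap-lim-and-sum-rml} (which itself is just monotone convergence applied to the non-decreasing family $\rml_{t,T}^b$ given by Lemma~\ref{lem:rml-non-decreasing}).

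Concretely, first I would fix $b$, $t$, a transcript $(q_1, \ldots, a_{t-1})$, and a query $q_t \in Q$. For every integer $T \geq t$, the definition of $\rml_{t-1, T}^b$ (the case $t-1 < T$ in Definition~\ref{def:rml-T}) gives
\[
  \rml_{t-1,T}^b(q_1, \ldots, a_{t-1}) \;=\; \sup_{q_t' \in Q} \rmlower_{t,T}^b(q_1, \ldots, a_{t-1}, q_t') \;\geq\; \rmlower_{t,T}^b(q_1, \ldots, a_{t-1}, q_t),
\]
and by Corollary~\ref{cor:sum-rml-eq-rmlower} the right-hand side equals $\sum_{a_t \in A} \rml_{t,T}^b(q_1, \ldots, a_t)$. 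Taking the limit as $T \to \infty$ on both sides and using the definition of $\rml_{t-1}^b$ yields
\[
  \rml_{t-1}^b(q_1, \ldots, a_{t-1}) \;\geq\; \lim_{T \to \infty} \sum_{a_t \in A} \rml_{t,T}^b(q_1, \ldots, a_t).
\]

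The remaining step is the limit-sum interchange on the right-hand side. Since $A$ is countable and, by Lemma~\ref{lem:rml-non-decreasing}, each term $\rml_{t,T}^b(q_1, \ldots, a_t)$ is non-decreasing in $T$ and non-negative, Corollary~\ref{cor:swap-lim-and-sum-rml} lets me swap the limit and the sum, giving
\[
  \lim_{T \to \infty} \sum_{a_t \in A} \rml_{t,T}^b(q_1, \ldots, a_t) \;=\; \sum_{a_t \in A} \lim_{T \to \infty} \rml_{t,T}^b(q_1, \ldots, a_t) \;=\; \sum_{a_t \in A} \rml_t^b(q_1, \ldots, a_t),
\]
which combined with the previous inequality closes the proof.

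The only subtlety, and the reason the statement is non-trivial rather than a one-line corollary of Corollary~\ref{cor:sum-rml-eq-rmlower}, is that in general $\lim_T \sup_{q_t'} f(q_t', T)$ need not equal $\sup_{q_t'} \lim_T f(q_t', T)$; here we sidestep this by fixing the particular $q_t$ from the statement before passing to the limit, which is why we get an inequality and not an equality. The interchange of limit and sum over $A$ is the substantive analytic step, but it is already packaged for us in Corollary~\ref{cor:swap-lim-and-sum-rml}, so no additional monotone-convergence bookkeeping is needed beyond noting that $\rml_{t,T}^b \geq 0$ and is non-decreasing in $T$.
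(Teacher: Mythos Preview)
Your proposal is correct and matches the paper's proof essentially line for line: first establish $\rml_{t-1,T}^b(q_1,\ldots,a_{t-1}) \geq \sum_{a_t}\rml_{t,T}^b(q_1,\ldots,a_t)$ for each $T\geq t$ via the defining supremum and Corollary~\ref{cor:sum-rml-eq-rmlower}, then pass to the limit and swap limit and sum using Corollary~\ref{cor:swap-lim-and-sum-rml}.
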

\begin{proof}
    First, we will show that for every $T>t-1$, 
    $$\rml_{t-1, T}^b(q_1, \dots a_{t-1}) \geq \sum_{a_t \in A}\rml_{t, T}^b(q_1, \dots a_t)$$
    Since $t-1\neq T$, by definition of $\rml_{t-1,T}^b$ and Corollary~\ref{cor:sum-rml-eq-rmlower}, we have
    \begin{align*}
        \rml_{t-1, T}^b(q_1, \dots a_{t-1}) &= \sup_{q\in Q}\rmlower_{t, T}(q_1, \dots a_{t-1}, q) \geq \rmlower_{t, T}(q_1, \dots a_{t-1}, q_t) = \sum_{a_t \in A}\rml_{t, T}^b(q_1, \dots a_t).
    \end{align*}

    Thus, by Corollary~\ref{cor:swap-lim-and-sum-rml}, we have
    \begin{align*}
        \rml_{t-1}^b(q_1, \dots a_{t-1}) &= \lim_{T \ge t-1, T \to \infty}\rml_{t-1, T}^b(q_1, \dots a_{t-1})= \lim_{T \ge t, T \to \infty}\rml_{t-1, T}^b(q_1, \dots a_{t-1})\\
        &\geq \lim_{T \ge t, T \to \infty}\sum_{a_t \in A}\rml_{t, T}^b(q_1, \dots a_t)= \sum_{a_t \in A}\lim_{T \ge t, T \to \infty}\rml_{t, T}^b(q_1, \dots a_t)= \sum_{a_t \in A}\rml_{t}^b(q_1, \dots a_t)
    \end{align*}
\end{proof}

\subsection{Reduction.}\label{subsec:reduction}
To define $\cP$ and simulate $\cM$, Lyu~\cite{lyu2022composition} constructs four families of probability mass functions (PMFs), corresponding to each outcome of $\RR_{\eps, \delta}$. Intuitively, each family describes an IM and consists of PMFs over the answer set $A$ for every $t\in [T]$ and every history $(q_1, \dots, a_{t-1}, q_t)\in (Q\times A)^t\times Q$. These PMFs represent the probability of returning each answer in response to $q_t$, conditioned on $q_1, \dots, a_{t-1}$ being the history of queries and answers. Similarly, we define $\cP$ based on four families of PMFs: $\calF_{(\bot, 0)}$, $\calF_{(\bot, 1)}$, $\calF_{(\top, 0)}$, and $\calF_{(\top, 1)}$, corresponding to the four possible output sequences of $\irr_{\eps, \delta}$. We note that in our construction, each family includes a PMF over $A$ for every $t\in \N$ and every history $(q_1, \dots, a_{t-1}, q_t)$. Moreover, the definitions of these PMFs are not the same as the ones in~\cite{lyu2022composition}.

\subsubsection{Constructing $\cP$.}\label{subsubsec:reduction-P}
In~\cite{lyu2022composition}, the post-processing mechanism initially interacts with $\RR_{\eps, \delta}$ and stores its output $r=(\tau, c)\in\{\top, \bot\}\times\zo$. It then answers the adversary's queries by itself using the family of PMFs corresponding to $r$. Our high-level idea is to construct the families $\calF_{(\bot, 0)}$, $\calF_{(\bot, 1)}$, $\calF_{(\top, 0)}$, and $\calF_{(\top, 1)}$ in such a way that, under certain conditions, knowledge of $\tau$ alone suffices to choose the appropriate PMF. Recall that the IPM $\cP$ interacts with $\irr_{\eps, \delta}$ instead of $\RR_{\eps, \delta}$. We design $\cP$ to initially interact with $\irr_{\eps, \delta}$ and store its response. $\cP$ then interacts with $\irr_{\eps, \delta}$ for the second time when it cannot choose a PMF using only the first response.

Specifically, $\cP$ has a single initial state $\left((), ()\right)$, where $()$ represents an empty sequence. We denote the state of $\cP$ by a pair $(r, h)$, where $r$ is the sequence of responses from $\irr_{\eps, \delta}$, and $h$ is the history of queries and answers exchanged with the adversary: right after receiving a message from $\irr_{\eps, \delta}$, $\cP$ appends it to $r$, and right before returning an answer $a$ to a query $q$ from the adversary, $\cP$ appends both $q$ and $a$ to $h$.

To generate a response $a$ to query $q$, $\cP$ proceeds as follows: If $q$ is the first query (i.e., $r=()$), then $\cP$ initially interacts with $\irr_{\eps, \delta}$ and appends its response to $r$. After this, whether $q$ is the first query or not, the first time the following condition does \textit{not} hold, $\cP$ interacts with $\irr_{\eps, \delta}$ for the second time and appends its answer to $r$: \emph{the PMFs corresponding to $(h, q)$ in $\calF_{(\top, 0)}$ and $\calF_{(\top, 1)}$ are identical, and the PMFs corresponding to $(h, q)$ in $\calF_{(\bot, 0)}$ and $\calF_{(\bot, 1)}$ are identical.} After that, $\cP$ never interacts with $\irr_{\eps, \delta}$ again.

After the potential interactions with $\irr_{\eps, \delta}$, $\cP$ generates the answer $a$ as follows:
\begin{itemize}
    \item If $r \in \{\top, \bot\} \times \zo$, then $\cP$ samples $a$ from the PMF corresponding to $(h, q)$ in $\calF_r$.
    \item Otherwise, by design, $r\in\{\top, \bot\}$, and the PMFs in $\calF_{(r, 0)}$ and $\calF_{(r, 1)}$ corresponding to $(h, q)$ are identical. In this case, $\cP$ samples $a$ from this common PMF.
\end{itemize}

\subsubsection{Constructing $\calF_{(\bot, b)}$s.}\label{subsubsec:reduction-F-bot}
Using the control function $\rmlower_{t, T}$, Lyu~\cite{lyu2022composition} introduces three conditions and constructs a sequence of functions $\phi_t:(Q\times A)^t \to [0,1]^2$ for $t \in \{1, \dots, T\}$ that satisfy these conditions. The families of PMFs are defined then based on these functions. We remove the dependence on the bound $T$ by replacing $\rmlower_{t, T}$ with $\rml_t$ in the conditions, and extend the construction of $\phi_t$ to all $t \in \N$. Furthermore, we introduce an additional condition to those given in~\cite{lyu2022composition}, which we will later use to bound the number of times $\cP$ interacts with $\irr_{\eps, \delta}$.

Let $\phi_t^0$ and $\phi_t^1$ denote the functions that return the first and second components, respectively, of the output of $\phi_t$. Intuitively, each $\phi_t^b$ represents an IM: for every query sequence $q_1, \dots, q_t$, it assigns a probability to each answer sequence $a_1, \dots, a_t$. That is, for any query sequence $q_1, \dots, q_t$, the function $\phi_t^b(q_1, \cdot, q_2, \cdot, \dots, q_t, \cdot)$ defines a PMF over $A^t$.

For $t=0$, we define $\phi_t()=(1, 1)$. We defer the detailed construction of the functions $\phi_t$ to Section~\ref{subsec:construction}. However, we assume the existence of $\phi_t$ satisfying the following conditions for each $b\in \zo$:

\begin{enumerate}[label=({\Roman*})]
    \item \label{item:cond1} For every transcript $(q_1, \dots, a_t)\in (Q\times A)^t$, 
    $$\delta\phi_t^b(q_1, \dots, a_t) \geq \rml_t^b(q_1, \dots, a_t).$$
    \item \label{item:cond2} For every transcript $(q_1, \dots, a_t)\in (Q\times A)^t$,
    $$\delta\phi_t^b(q_1, \dots, a_t) -  e^{-\eps} \delta\phi_t^{1-b}(q_1, \dots, a_t)\leq \mu_t^b(q_1, \dots, a_t) - e^{-\eps} \mu_t^{1-b}(q_1, \dots, a_t).$$
    \item \label{item:cond3} For every transcript $(q_1, \dots, a_{t-1}, q_t)\in (Q\times A)^{t-1}\times Q$,
    $$\sum_{a\in A} \phi_t^b(q_1, \dots, a_{t-1}, q_t,a) = \phi_{t-1}^b(q_1, \dots, a_{t-1}).$$
    \item \label{item:cond4} For each query sequence $(q_1, \dots, q_t)\in Q^t$, if $\mu_t^0(q_1, \dots, a_t)=\mu_t^1(q_1, \dots, a_t)$ for every answer sequence $(a_1, \dots, a_t)\in A^t$, then 
    $$\phi_t^0(q_1, \dots, a_t)=\phi_t^1(q_1, \dots, a_t)\geq \min\{\mu_t^0(q_1, \dots, a_t), \rml_t^0(q_1, \dots, a_t)+\rml_t^1(q_1, \dots, a_t)\}$$
    for every $(a_1, \dots, a_t)\in A^t$.
\end{enumerate}

For each $b\in \zo$, the family $\calF_{(\bot, b)}$ consists of functions $f_{q_1, \dots, a_{t-1}, q_t}^b:A\to [0,1]$ defined for every $(q_1, \dots, a_{t-1}, q_t)\in (Q\times A)^*\times Q$, where $f_{q_1, \dots, a_{t-1}, q_t}^b$ is defined as follows: 
\begin{itemize}[left=10pt]
    \item If $\phi_{t-1}(q_1, \dots, a_{t-1})>0$, then for each answer $a_t\in A$, $$f_{q_1, \dots, a_{t-1}, q_t}^b(a_t)=\frac{\phi_{t}(q_1, \dots, a_{t})}{\phi_{t-1}(q_1, \dots, a_{t-1})}.$$
    \item Otherwise, if $\phi_{t-1}(q_1, \dots, a_{t-1})=0$, $f_{q_1, \dots, a_{t-1}, q_t}^b(a_t)=0$ for all $a_t \in A$, except for one fixed arbitrary answer $a^* \in A$, for which $f_{q_1, \dots, a_{t-1}, q_t}^b(a^*) = 1$. This case is included only for formality, as the corresponding PMFs are never used by $\cP$.
\end{itemize}

Since $\phi_{t-1}$ and $\phi_t$ are non-negative, $f_{q_1, \dots, a_{t-1}, q_t}^b(a_t)$ is also non-negative. Furthermore, by \ref{item:cond3}, we have $\sum_{a_t\in A}f_{q_1, \dots, a_{t-1}, q_t}^b(a)=1$, implying that $f_{q_1, \dots, a_{t-1}, q_t}^b(a_t)$ is a PMF.

\subsubsection{Constructing $\calF_{(\top, b)}$s.}\label{subsubsec:reduction-F-top}
Similar to the previous section, $\calF_{(\top, 0)}$ and $\calF_{(\top, 1)}$ are defined based on a sequence of functions $\psi_t:(Q\times A)^t\to [0,1]^2$ for each $t\in \N\cup \{0\}$. For $t=0$, $\psi_t()=(1,1)$. For $t\in \N$, we define $\psi_t$ as follows: 
\begin{itemize}[left=10pt]
    \item If $\delta=1$, then $\psi_t$ is identical to $\phi_t$.
    \item Otherwise, for each $b\in\zo$ and $(q_1, \dots, a_t)\in (Q\times A)^t$, $\psi_t^b(q_1, \dots, a_t)$ equals
    \begin{align*}\label{def:psi}
        \frac{1}{(1-\delta)(e^\eps-1)}\Big(e^\eps\mu_t^b(q_1, \dots, a_t) -\delta e^\eps\phi_t^b(q_1, \dots, a_t) -\mu_t^{1-b}(q_1, \dots, a_t) +\delta \phi_t^{1-b}(q_1, \dots, a_t)\Big).
    \end{align*}
\end{itemize}

For each $b\in\zo$, the family $\calF_{(\top, b)}$ consists of functions $g_{q_1, \dots, a_{t-1}, q_t}^b:A\to [0,1]$ for each $(q_1, \dots, a_{t-1}, q_t)\in (Q\times A)^*\times Q$, where $g_{q_1, \dots, a_{t-1}, q_t}^b$ is defined as follows: 
\begin{itemize}[left=10pt]
    \item If $\psi_{t-1}(q_1, \dots, a_{t-1})>0$, then for each answer $a_t\in A$, 
    $$g_{q_1, \dots, a_{t-1}, q_t}^b(a_t)=\frac{\psi_{t}(q_1, \dots, a_{t})}{\psi_{t-1}(q_1, \dots, a_{t-1})}.$$
    \item Otherwise, $g_{q_1, \dots, a_{t-1}, q_t}^b(a_t)=0$ for all $a_t \in A$, except for one fixed arbitrary answer $a^* \in A$, for which $g_{q_1, \dots, a_{t-1}, q_t}^b(a^*) = 1$.
\end{itemize}

By condition \ref{item:cond2}, $\psi_t^b$ is non-negative. Also, by Lemma~\ref{lem:sum-mu-eq-mu} and condition \ref{item:cond3}, 
$$\sum_{a\in A} \psi_t^b(q_1, \dots, a_{t-1}, q_t,a) = \psi_{t-1}^b(q_1, \dots, a_{t-1}),$$
implying that $g_{q_1, \dots, a_{t-1}, q_t}^b$ is a PMF.

\subsubsection{Proof of Lemma~\ref{lem:new}.}\label{subsubsec:reduction-proof}
By definition of $\irr_{\eps, \delta}$ and $\cP$, for every $t\in \N$ and every transcript $(q_1, \dots, a_t)\in (Q\times A)^t$, the probability of $\cP\circstar\irr_{\eps, \delta}(b)$ returning the answers $a_1, \dots, a_t$ to the queries $q_1, \dots, q_t$ is
\begin{align*}
    \delta \phi_t^b(q_1, \dots, a_t) + (1-\delta)\frac{e^\eps}{1+e^\eps}\psi_t^b(q_1, \dots, a_t)+(1-\delta)\frac{1}{1+e^\eps}\psi_t^{1-b}(q_1, \dots, a_t).
\end{align*}
By the definition of $\psi_t$, this probability equals
\begin{align*}
    &\delta \phi_t^b(q_1, \dots, a_t) + \frac{e^\eps}{(1+e^\eps)(e^\eps-1)}\Big(e^\eps\mu_t^b(q_1, \dots, a_t) \\
    &\quad-\delta e^\eps\phi_t^b(q_1, \dots, a_t)-\mu_t^{1-b}(q_1, \dots, a_t) +\delta \phi_t^{1-b}(q_1, \dots, a_t)\Big) \\
    &\quad+\frac{1}{(1+e^\eps)(e^\eps-1)}\Big(e^\eps\mu_t^{1-b}(q_1, \dots, a_t) -\delta e^\eps\phi_t^{1-b}(q_1, \dots, a_t)-\mu_t^{b}(q_1, \dots, a_t) +\delta \phi_t^{b}(q_1, \dots, a_t)\Big)\\
    &=\mu_t^{b}(q_1, \dots, a_t).
\end{align*}
Thus, the views of any adversary interacting with $\cM(s_b)$ and $\cP\circstar\irr_{\eps, \delta}(b)$ are identically distributed, implying that these mechanisms are equivalent. 

Furthermore, for every query sequence $(q_1, \dots, q_k)$, if the answer distributions of $\cM(s_0)$ and $\cM(s_1)$ to $(q_1, \dots, q_k)$ are identical, then by Lemma~\ref{lem:sum-mu-eq-mu}, for every $t \leq k$ and every answer sequence $(a_1, \dots, a_t) \in A^t$, we have
$$\mu_t^0(q_1, \dots, a_t)=\mu_t^1(q_1, \dots, a_t).$$
Therefore, by condition~\ref{item:cond4}, for every $t\leq k$ and every $(a_1, \dots, a_t)\in A^t$,
$$\phi_t^0(q_1, \dots, a_t)=\phi_t^1(q_1, \dots, a_t).$$
Consequently,
$$\psi_t^0(q_1, \dots, a_t)=\psi_t^1(q_1, \dots, a_t).$$
By the definitions of $\calF_{(\top, 0)}$, $\calF_{(\top, 1)}$, $\calF_{(\bot, 0)}$, and $\calF_{(\bot, 1)}$, this implies that for every $t \leq k$ and answer sequence $(a_1, \dots, a_{t-1})$, the PMF associated with $(q_1, \dots, a_{t-1}, q_t)$ is identical in both $\calF_{(\bot, 0)}$ and $\calF_{(\bot, 1)}$, and likewise in $\calF_{(\top, 0)}$ and $\calF_{(\top, 1)}$. Hence, by the construction of $\cP$, this mechanism interacts with $\irr_{\eps, \delta}$ only once. Moreover, by design, $\cP$ never interacts with $\irr_{\eps, \delta}$ more than twice.

\subsection{Construction of $\phi_t$}\label{subsec:construction}
For each $q_1, \dots, a_{t-1}, q_t$, Lyu~\cite{lyu2022composition} 
initializes the values of $\phi_t(q_1, \dots, a_{t-1}, q_t, \cdot)$ with the ones from $\rml_{t, T}(q_1, \dots, a_{t-1}, q_t, \cdot)$. 
(Removing the upper bound assumption, they are initialized by the values of $\rml_t(q_1, \dots, a_{t-1}, q_t, \cdot)$.) Then, they iterate over all possible answers $a_t \in A$ in a fixed order and gradually increase the value of $\phi_t(q_1, \dots, a_t)$ to satisfy conditions \ref{item:cond2} and \ref{item:cond3}, while ensuring that condition \ref{item:cond1} remains valid. This iterative procedure requires $A$ to be finite; otherwise, the construction algorithm does not terminate.  Moreover, the fixed iteration order can lead to an imbalance in which earlier values grow significantly larger than those that come later in the order. In our new construction, we control and balance this increase. 

Intuitively, we introduce a temporary upper bound and increase the values of the function $\rml_t(q_1, \dots, a_{t-1}, q_t, \cdot)$ to reach the minimum of the original upper bounds and the new temporary ones. This results in an intermediate function, denoted by $\xi_t(q_1, \dots, a_{t-1}, q_t, \cdot)$. Then, we remove the temporary upper bound and continue increasing the values of $\xi_t(q_1, \dots, a_{t-1}, q_t, \cdot)$ (if needed) to meet the original upper bounds. The final function is the desired $\phi_t(q_1, \dots, a_{t-1}, q_t, \cdot)$. We will show that if $\mu_t^0(q_1, \dots, a_t)=\mu_t^1(q_1, \dots, a_t)$ for every answer sequence $(a_1, \dots, a_t)\in A^t$, then the functions $\xi_t(q_1, \dots, a_{t-1}, q_t, \cdot)$ and $\phi_t(q_1, \dots, a_{t-1}, q_t, \cdot)$ are identical. We set the temporary upper bound in a way that this implies the condition~\ref{item:cond4} holds.

Formally, we inductively construct $\phi_t$ for each $t \in \N$, assuming that $\phi_{t-1}$ exists and satisfies conditions \ref{item:cond1}, \ref{item:cond2} and \ref{item:cond4}. We then show that $\phi_t$ satisfies all conditions from \ref{item:cond1} to \ref{item:cond4}. For the base case $t = 0$, we define $\phi_0() = (1, 1)$. By Corollary~\ref{cor:rml-ub-delta-limit}, this choice satisfies condition \ref{item:cond1}. Moreover, for $\phi_0^0() = \phi_0^1() = 1$ and $\mu_0^0() = \mu_0^1() = 1$, condition~\ref{item:cond2} is $\delta\cdot 1- e^{-\eps}\cdot \delta\cdot 1\leq 1-e^{-\eps}\cdot 1$, which is true because $\delta\leq 1$. Furthermore, since $\phi_0^0() = \phi_0^1() = 1 \geq \min\{1, \delta+\delta\}$, condition \ref{item:cond4} is also satisfied.

\begin{remark}\label{rem:inductive-hypothesis}
    In the rest of this section, we assume $\phi_{t-1}$ exists and satisfies conditions \ref{item:cond1}, \ref{item:cond2} and \ref{item:cond4}.
\end{remark}

Section~\ref{subsubsec:construction-preliminaries} outlines the preliminaries required for defining $\xi_t$ and $\phi_t$. In Section~\ref{subsubsec:construction-xi}, we define the intermediate function $\xi_t$. Section~\ref{subsubsec:construction-phi} presents the new construction of $\phi_t$, and finally, in Section~\ref{subsubsec:construction-conditions}, we prove that $\phi_t$ satisfies conditions \ref{item:cond1} to \ref{item:cond4}.

\subsubsection{Preliminaries.}\label{subsubsec:construction-preliminaries}
In this section, we introduce the fundamental concepts and notations needed for defining $\phi_t$ when $A$ is not finite.

\begin{definition}[Well-Order]
    A well-order on a set $X$ is a total order (i.e., for every $x,y\in X$, either $x<y$, $x>y$, or $x=y$) such that every non-empty subset of $X$ has a least element.
\end{definition}

Since $A$ is countable, $A$ can be injected into $\N$. Listing elements of $A$ in the order given by their images under this injection gives a well‐order $<_A$ on $A$. Note that for any non-empty subset $B\subseteq A$, the image of $B$ in $\N$ has a least element, whose preimage is the least element of $B$ under $<_A$. We will later use the following remark when constructing $\phi_t$.

\begin{remark}\label{rem:smallest-or-predecessor}
    By the definition of $<_A$, each element of $A$ is either the least element of $A$ or has an immediate predecessor in $A$.
\end{remark}

\begin{definition}[$\Gamma_{<a}$ and $\Gamma_{>a}$]
    For any answer $a\in A$, $\Gamma_{<a}=\{a'\in A\mid a'<_A a\}$ and $\Gamma_{>a}=\{a'\in A\mid a<_A a'\}$.
\end{definition}

\begin{lemma}[Transfinite Recursion~\cite{suppes2012axiomatic}]\label{lem:transfinite}
    Let $f$ be a function recursively defined over $A$ according to the well-order $<_A$. If for each answer $a\in A$, the value of $f(a)$ is uniquely determined by the values $f(a')$ for every $a' \in \Gamma_{<a}$, then $f$ exists and is unique.
\end{lemma}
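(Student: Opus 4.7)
The plan is to prove uniqueness first (which is immediate from well-ordering) and then existence, exploiting the special structure of $<_A$ noted in Remark \ref{rem:smallest-or-predecessor}.

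For uniqueness, suppose $f$ and $g$ both satisfy the recursion, and consider $B = \{a \in A : f(a) \neq g(a)\}$. If $B$ were non-empty, let $a^*$ be its $<_A$-least element. Minimality gives $f(a') = g(a')$ for every $a' \in \Gamma_{<a^*}$, and the hypothesis that $f(a^*)$ is uniquely determined by $\{f(a') : a' \in \Gamma_{<a^*}\}$ forces $f(a^*) = g(a^*)$, a contradiction. Hence $B = \emptyset$ and $f = g$.

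For existence, the general recipe is to form \emph{attempts} --- partial functions defined on downward-closed subsets $D \subseteq A$ that satisfy the recursion on $D$ --- observe that any two attempts agree on the intersection of their domains (by the uniqueness argument applied within $D$), and then take $f$ to be the union of all attempts. A second well-ordering argument on the set of uncovered elements shows that every $a \in A$ lies in the domain of some attempt: its $<_A$-least element $a^*$ would have attempts defined throughout $\Gamma_{<a^*}$, and gluing them and extending by one step via the recursion rule produces an attempt covering $a^*$, a contradiction. The main obstacle in a fully general transfinite setting would be handling limit elements, where the glued partial function at a limit must itself be checked to satisfy the recursion. In our case this obstacle disappears: Remark \ref{rem:smallest-or-predecessor} implies that $A$ is order-isomorphic to either a finite initial segment of $\omega$ or to $\omega$ itself, so the recursion collapses to ordinary induction on $\N$ --- define $f(\min A)$ from the empty list $\Gamma_{<\min A} = \emptyset$ and $f(a)$ for each successor $a$ from $f$ on its predecessor chain --- and the result follows from the standard recursion theorem on $\N$.
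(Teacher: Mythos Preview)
Your proof is correct and follows the standard textbook argument for transfinite recursion: uniqueness via a minimal-counterexample argument, existence via the union of partial ``attempts'' on initial segments. The paper does not supply its own proof of this lemma; it is stated as a known result cited from Suppes, so there is nothing to compare against. Your additional observation---that by Remark~\ref{rem:smallest-or-predecessor} the well-order $<_A$ here is isomorphic to $\omega$ or a finite initial segment, so the whole thing collapses to ordinary recursion on $\N$---is a worthwhile simplification: in the paper's specific setting the full transfinite machinery is never actually needed.
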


In the next section, we construct functions $f : A \to [0,1]^2$ and apply Lemma~\ref{lem:transfinite} to prove that they are well-defined. To apply this lemma, we must ensure that each value $f(a)$ is uniquely determined based on the previously defined values $f(a')$ for all $a' \in \Gamma_{<a}$. Thus, to define $f(a)$, we first select a subset $\mathcal{C} \subseteq [0,1]^2$ of desired values satisfying certain properties derived from the values of $f(a')$ for $a' \in \Gamma_{<a}$. We then set $f(a)$ to the lexicographically maximal element of $\mathcal{C}$. The following definitions and lemma formalizes this and guarantee that $f(a)$ is uniquely defined in this way.

\begin{definition}[Maximal Pair]
    Let $\mathcal{C} \subseteq [0,1]^2$. A pair $(x_0, x_1) \in \mathcal{C}$ is said to be \emph{maximal} in $\mathcal{C}$ if there does not exist a pair $(y_0, y_1) \in \mathcal{C}$ such that $(x_0, x_1) \neq (y_0, y_1)$, $x_0 \leq y_0$, and $x_1 \leq y_1$.
\end{definition}

\begin{definition}[Lexicographic Order $<_{lex}$]
    The relation $<_{lex}$ is a binary ordering on $[0,1]^2$ defined as follows: for any $(r_1, r_2), (r_3, r_4) \in [0,1]^2$, we have $(r_1, r_2)<_{lex} (r_3, r_4)$ if and only if $r_1\leq r_3$, or $r_1=r_3$ and $r_2\leq r_4$.
\end{definition}

\begin{lemma}\label{lem:compact-maximal}
    Let $\mathcal{C}$ be a non-empty compact subset of $[0,1]^2$.
    \begin{enumerate}[label=(\alph*)]
        \item  $(x_0, x_1)<_{lex}(x_0^*, x_1^*)$ for all $(x_0, x_1)\in \mathcal{C}$.
        \item $(x_0^*, x_1^*)$ is maximal in $\mathcal{C}$.
    \end{enumerate}
\end{lemma}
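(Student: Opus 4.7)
The statement implicitly asserts that a lexicographically maximal element $(x_0^*,x_1^*)$ exists in $\mathcal{C}$ and then claims it enjoys two properties. My plan is first to construct this element from compactness and then to verify (a) and (b) separately; the argument is short and the only delicate point is checking that the lex-maximum actually lies in $\mathcal{C}$ (i.e., that it is attained, not merely a supremum).

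First, I would produce $(x_0^*,x_1^*)$. Let $\pi_1:[0,1]^2\to[0,1]$ be the projection onto the first coordinate. Since $\mathcal{C}$ is non-empty and compact, $\pi_1(\mathcal{C})\subseteq[0,1]$ is non-empty and compact, so it attains its maximum; set $x_0^*=\max \pi_1(\mathcal{C})$. The fiber $\mathcal{C}_{x_0^*}=\{x_1\in[0,1]:(x_0^*,x_1)\in\mathcal{C}\}$ is the intersection of $\mathcal{C}$ with the closed line $\{x_0^*\}\times[0,1]$, hence non-empty and compact, so it too attains its maximum; set $x_1^*=\max\mathcal{C}_{x_0^*}$. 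By construction $(x_0^*,x_1^*)\in\mathcal{C}$.

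For (a), take any $(x_0,x_1)\in\mathcal{C}$. By the choice of $x_0^*$ we have $x_0\leq x_0^*$; if $x_0<x_0^*$ then $(x_0,x_1)<_{\lex}(x_0^*,x_1^*)$ by the definition of $<_{\lex}$. If $x_0=x_0^*$, then $x_1\in\mathcal{C}_{x_0^*}$, so $x_1\leq x_1^*$ by the choice of $x_1^*$, and again $(x_0,x_1)<_{\lex}(x_0^*,x_1^*)$.

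For (b), suppose for contradiction that $(x_0^*,x_1^*)$ is not maximal in the product order on $\mathcal{C}$; then there exists $(y_0,y_1)\in\mathcal{C}$ with $(y_0,y_1)\neq(x_0^*,x_1^*)$ and both $x_0^*\leq y_0$ and $x_1^*\leq y_1$. From $y_0\in\pi_1(\mathcal{C})$ and the maximality of $x_0^*$ we get $y_0\leq x_0^*$, hence $y_0=x_0^*$. But then $y_1\in\mathcal{C}_{x_0^*}$, so $y_1\leq x_1^*$ and thus $y_1=x_1^*$, contradicting $(y_0,y_1)\neq(x_0^*,x_1^*)$. I expect no real obstacle here; the only thing to be careful about is to use compactness (not just boundedness) to guarantee that the two successive maxima are attained inside $\mathcal{C}$ rather than only in its closure.
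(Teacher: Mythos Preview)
Your proposal is correct and follows essentially the same approach as the paper: both construct $(x_0^*,x_1^*)$ by first maximizing the first coordinate over the compact projection $\pi_1(\mathcal{C})$ and then maximizing the second coordinate over the (compact, non-empty) fiber above $x_0^*$, then verify (a) and (b) by the same coordinate-wise arguments. The only differences are cosmetic---you name the projection and fiber explicitly and phrase (b) as a contradiction, while the paper argues equality directly.
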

\begin{proof}
    (a): Define $x_0^*=\sup\{x_0\mid (x_0, x_1)\in \mathcal{C}\}$. Since $\mathcal{C}$ is non-empty, the set $\{x_0\mid (x_0, x_1)\in \mathcal{C}\}$ is non-empty. Also, by definition, $x_0^*\leq 1$. As we are taking a continuous projection $(x_0, x_1)\to x_0$ of a compact set $\mathcal{C}$, this supremum is actually a maximum, so there exists $(x_0, x_1)\in \mathcal{C}$ such that $x_0=x_0^*$.

    Define $x_1^*=\sup\{x_1\mid (x_0^*, x_1)\in \mathcal{C}\}$. Again, the set $\{x_1\mid (x_0^*, x_1)\in \mathcal{C}\}$ is non-empty since there exists $(x_0, x_1)\in \mathcal{C}$ with $x_0=x_0^*$. By the same compactness argument, this supremum is also attained. Hence, $(x_0^*, x_1^*)\in \mathcal{C}$. By definition, $(x_0, x_1)<_{lex}(x_0^*, x_1^*)$ for every $(x_0, x_1)\in \mathcal{C}_{q_1, \dots, a_t}$, which trivially implies that $(x_0^*, x_1^*)$ is the only pair in $\mathcal{C}$ satisfying this property.

    (b): Let $(x_0, x_1)$ be a pair in $\mathcal{C}$ satisfying $x_0\geq x_0^*$ and $x_1\geq x_1^*$. Since $x_0\geq x_0^*$ and $(x_0, x_1)\in \mathcal{C}$, by definition of $x_0^*$, we have $x_0^*=x_0$. Since $x_1\geq x_1^*$, $x_0^*=x_0$, and $(x_0, x_1)\in \mathcal{C}$, by definition of $x_1^*$, we have $x_1^*=x_1$. Therefore, $(x_0, x_1)=(x_0^*, x_1^*)$, implying that $(x_0^*, x_1^*)$ is maximal.
\end{proof}

\subsubsection{Defining $\xi_t$ and showing its existence.}\label{subsubsec:construction-xi}
To define the function $\xi_t:(Q\times A)^t\to [0,1]^2$, we fix $(q_1, \dots, a_{t-1}, q_t)\in (Q\times A)^{t-1}\times Q$ and recursively define $\xi_t(q_1, \dots, a_{t-1}, q_t, \cdot)$ for each answer $a_t\in A$ according to the order $<_A$. We define $\xi_t(q_1, \dots, a_t)=(x_0^*, x_1^*)$, where $x_0^*$ and $x_1^*$ are uniquely chosen from the maximal feasible solutions of a constraint satisfaction problem with two variables $x_0$ and $x_1$. The constraints for this problem involve the known functions $\mu_t$, $\phi_{t-1}$, $\rml_t$, and the known values $\xi_t(q_1,\dots, a_{t-1}, q_t, a)$ for every $a\in\Gamma_{<a_t}$.

Consider the following constraints for the variables $x_0$ and $x_1$:

\begin{enumerate}[label={($\xi$.\arabic*)}]
    \item \label{item:cond-xi-geq-rml} For each $b\in\zo$, 
    $$x_b\geq \rml_t^b(q_1, \dots, a_{t}).$$
    \item \label{item:cond-xi-normal} For each $b\in\zo$,  
    \begin{align*}
        \sum_{a\in\Gamma_{<a_t}} &\xi_t^b(q_1, \dots, a_{t-1}, q_t, a) + x_b +\sum_{a\in\Gamma_{>a_t}} \rml_t^b(q_1, \dots, a_{t-1}, q_t, a) \leq \phi_{t-1}^b(q_1, \dots, a_{t-1})
    \end{align*}
    \item \label{item:cond-xi-upper} For each $b\in\zo$,  
    \begin{align*}
        x_b- e^{-\eps} x_{1-b}
        \leq \mu_t^b(q_1, \dots, a_{t}) - e^{-\eps} \mu_t^{1-b}(q_1, \dots, a_{t})
    \end{align*} 
    \item \label{item:cond-xi-extra} For each $b\in\zo$,  
    \begin{align*}
        x_b \leq \min\{\mu_t^b(q_1, \dots, a_{t}), \rml_t^0(q_1, \dots, a_t)+\rml_t^1(q_1, \dots, a_t)\}
    \end{align*} 
\end{enumerate}

Let the set $\mathcal{C}_{q_1, \dots, a_t}$ denote the set of feasible solutions $(x_0, x_1)$ that satisfy all the constraints above. Each constraint is a non-strict inequality of the form $h_i(x_0, x_1) \geq 0$, where each $h_i$ is a continuous function. Since the solution set of a single non-strict inequality defined by a continuous function is closed, and the intersection of finitely many closed sets is also closed, the set $\mathcal{C}_{q_1, \dots, a_t}$ is closed. Furthermore, as it is a subset of the compact set $[0,1]^2$, it is itself compact. If $\mathcal{C}_{q_1, \dots, a_t}$ is nonempty, then by Lemma~\ref{lem:compact-maximal}, the lexicographically maximum pair $(x_0^*, x_1^*)$ of $\mathcal{C}_{q_1, \dots, a_t}$ exists, belongs to $\mathcal{C}_{q_1, \dots, a_t}$, and is maximal in $\mathcal{C}_{q_1, \dots, a_t}$. We define $\xi_t(q_1, \dots, a_{t})=(x_0^*, x_1^*)$.

\begin{claim}\label{cla:xi-exists}
    The function $\xi_t$ is well-defined and exists.
\end{claim}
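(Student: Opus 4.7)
The plan is to establish non-emptiness of $\mathcal{C}_{q_1, \dots, a_t}$ for every $a_t \in A$ by transfinite recursion along the well-order $<_A$ (Lemma~\ref{lem:transfinite}). Compactness of $\mathcal{C}_{q_1, \dots, a_t}$ has already been argued, so once non-emptiness is in hand, Lemma~\ref{lem:compact-maximal} produces a unique lexicographic maximum $(x_0^*, x_1^*) \in \mathcal{C}_{q_1, \dots, a_t}$, and this pair is determined entirely by $\phi_{t-1}$, $\mu_t$, $\rml_t$, and the previously defined values $\xi_t(q_1,\dots, q_t, a)$ for $a\in\Gamma_{<a_t}$. Transfinite recursion then yields existence and uniqueness of $\xi_t$.

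For non-emptiness, the natural candidate is $(x_0, x_1) = \bigl(\rml_t^0(q_1, \dots, a_t),\, \rml_t^1(q_1, \dots, a_t)\bigr)$. I would verify three of the four constraints directly: \ref{item:cond-xi-geq-rml} holds with equality; \ref{item:cond-xi-upper} is precisely Corollary~\ref{cor:rml-upper-ineq-limit}; and \ref{item:cond-xi-extra} follows from Lemma~\ref{lem:rml-leq-mu-limit} together with the non-negativity of $\rml_t^{1-b}$. The main obstacle is constraint \ref{item:cond-xi-normal}, which couples $a_t$ with every other answer simultaneously and is where the recursive structure is required.

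To handle \ref{item:cond-xi-normal}, I would maintain, as $a_t$ runs through $A$ under $<_A$, the invariant that after $\xi_t(q_1, \dots, q_t, a)$ has been defined for all $a \leq a_t$,
\begin{equation*}
  \sum_{a\leq a_t} \xi_t^b(q_1, \dots, q_t, a) \;+\; \sum_{a> a_t} \rml_t^b(q_1, \dots, q_t, a) \;\leq\; \phi_{t-1}^b(q_1, \dots, a_{t-1}) \qquad (b\in\zo).
\end{equation*}
By Remark~\ref{rem:smallest-or-predecessor}, $a_t$ either is the smallest element of $A$ or has an immediate predecessor $a'$. In the successor case, splitting $\Gamma_{>a'} = \{a_t\}\cup\Gamma_{>a_t}$ rewrites the invariant at $a'$ as exactly the inequality needed to verify \ref{item:cond-xi-normal} for the candidate $(\rml_t^0, \rml_t^1)$ at step $a_t$. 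In the base case (smallest element), \ref{item:cond-xi-normal} reduces to $\sum_{a\in A} \rml_t^b(q_1, \dots, q_t, a) \leq \phi_{t-1}^b(q_1, \dots, a_{t-1})$, which follows from Lemma~\ref{lem:rml-geq-sum-rml-limit} combined with the inductive hypothesis \ref{item:cond1} for $\phi_{t-1}$ (the case $\delta > 0$ gives $\phi_{t-1}^b \geq \rml_{t-1}^b/\delta \geq \rml_{t-1}^b$, while $\delta = 0$ forces $\rml_{t-1}^b = 0$ and hence $\sum_a \rml_t^b = 0$ trivially). Since $\xi_t^b(q_1, \dots, q_t, a_t)$ satisfies \ref{item:cond-xi-normal} by construction, the invariant persists to the next step, closing the induction and finishing the proof.
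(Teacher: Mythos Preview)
Your proposal is correct and follows essentially the same approach as the paper: both use transfinite induction along $<_A$ with the candidate $(\rml_t^0,\rml_t^1)$, verify \ref{item:cond-xi-geq-rml}, \ref{item:cond-xi-upper}, \ref{item:cond-xi-extra} via the same lemmas, and handle \ref{item:cond-xi-normal} by observing that it coincides with the constraint already satisfied at the immediate predecessor (with the base case reduced to Lemma~\ref{lem:rml-geq-sum-rml-limit} plus condition~\ref{item:cond1} for $\phi_{t-1}$). Your explicit treatment of the $\delta=0$ case is a small addition not spelled out in the paper.
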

\begin{proof}
    The proof consists of two steps: (i) proving that $\mathcal{C}_{q_1, \dots, a_t}$ is non-empty, and (ii) applying the transfinite recursion theorem (Lemma~\ref{lem:transfinite}) and concluding that $\xi_t$ exists.

    \medskip\noindent\textbf{(i):}
    Fix any $(q_1, \dots, a_{t-1}, q_t)\in (Q\times A)^{t-1}\times Q$. By Remark~\ref{rem:smallest-or-predecessor}, each answer in $A$ is either the smallest element of $A$ or has an immediate predecessor. We will inductively show that for every $a_t\in A$, the values $\rml_t^0(q_1, \dots, a_t)$ and $\rml_t^1(q_1, \dots, a_t)$ satisfy the constraints for defining $\xi_t(q_1, \dots, a_t)$, and consequently, $\mathcal{C}_{q_1, \dots, a_t}$ is not empty.

    Trivially, $\rml_t^b(q_1, \dots, a_t)$ satisfies Constraint \ref{item:cond-xi-geq-rml}. Moreover, by Corollary~\ref{cor:rml-upper-ineq-limit}, it satisfies Constraint \ref{item:cond-xi-upper}. To prove Constraint~\ref{item:cond-xi-normal} is satisfied, we need to show that
    \begin{equation}\label{eq:xi-normal-feasible}
        \begin{aligned}
        \sum_{a\in\Gamma_{<a_t}} \xi_t^b(q_1, \dots, a_{t-1}, q_t, a) &+\sum_{a\in\Gamma_{>a_t}\cup \{a_t\}} \rml_t^{b}(q_1, \dots, a_{t-1}, q_t, a)\leq \delta\phi_{t-1}^b(q_1, \dots, a_{t-1}),
    \end{aligned}
    \end{equation}
    We prove this by induction on $a_t\in A$:

    \medskip\noindent\underline{Base Case:} 
    Suppose $a_t$ is the least element of $A$. Then, $\Gamma_{<a_t}$ is empty, reducing our goal to showing that 
    \begin{align*} 
        \sum_{\substack{a\in A}} \rml_t^{b}(q_1, \dots, a_{t-1}, q_t, a)\leq \delta\phi_{t-1}^b(q_1, \dots, a_{t-1}).
    \end{align*}
    We know that $\phi_{t-1}(q_1, \dots, a_{t-1})$ satisfies condition \ref{item:cond1}. (see Remark~\ref{rem:inductive-hypothesis}.) Thus, by Lemma~\ref{lem:rml-geq-sum-rml-limit},
    \begin{align*}
        \delta\phi_{t-1}^b(q_1, \dots, a_{t-1})&\geq \rml_{t-1}^b(q_1, \dots, a_{t-1}) \geq \sum_{\substack{a\in A}} \rml_t^{b}(q_1, \dots, a_{t-1}, q_t, a),
    \end{align*}
    finishing the proof for the base case.

    \noindent\underline{Inductive Step:}
    Let $a^*$ denote the immediate predecessor of $a_t$. By inductive hypothesis, $\xi_t(q_1, \dots, a_{t-1}, q_t, a^*)$ exists and satisfies Constraint~\ref{item:cond-xi-upper}. Since $\Gamma_{<a_t} = \Gamma_{<a^*} \cup \{a^*\}$, Constraints \ref{item:cond-xi-normal} for defining $\xi_t(q_1, \dots, a_{t-1}, q_t, a^*)$ is identical to (\ref{eq:xi-normal-feasible}). Thus, by inductive hypothesis, inequality~(\ref{eq:xi-normal-feasible}) holds, completing the induction.

    It remains to prove that Constraint~\ref{item:cond-xi-extra} is satisfied. By Lemma~\ref{lem:rml-leq-mu-limit}, $\rml_t^b(q_1, \dots, a_t)\leq \mu_t^b(q_1, \dots, a_t)$. Also, as $\rml_t^{1-b}$ is non-negative, $\rml_t^b(q_1, \dots, a_t)\leq \rml_t^0(q_1, \dots, a_t)+\rml_t^1(q_1, \dots, a_t)$. Thus, 
    $$\rml_t^b(q_1, \dots, a_t)\leq \min\{ \mu_t^b(q_1, \dots, a_t), \rml_t^0(q_1, \dots, a_t)+\rml_t^1(q_1, \dots, a_t)\},$$
    and Constraint~\ref{item:cond-xi-extra} is satisfied.

    \medskip\noindent\textbf{(ii):}
    For every $(q_1, \dots, a_{t-1}, q_t)$, the values of the function $\xi_t(q_1, \dots, a_{t-1}, q_t, \cdot)$ is uniquely determined by the preceding values of this function in the well-order $<_A$. Thus, by the transfinite recursion theorem, the function $\xi_t(q_1, \dots, a_{t-1}, q_t, \cdot)$ exists and is uniquely defined. Consequently, the function $\xi_t$ exists and is uniquely defined.
\end{proof}

\subsubsection{Defining $\phi_t$ and showing its existence.}\label{subsubsec:construction-phi}
To define the function $\phi_t:(Q\times A)^t\to [0,1]^2$, we fix $(q_1, \dots, a_{t-1}, q_t)\in (Q\times A)^{t-1}\times Q$ and recursively define $\phi_t(q_1, \dots, a_{t-1}, q_t, \cdot)$ for each answer $a_t\in A$ based on the $<_A$ order. Similar to the definition of $\xi_t$, for each $(q_1, \dots, a_t)$, we determine $\phi_t(q_1, \dots, a_t)$ using a set of constraints. These constraints involve the known functions $\xi_t$, $\mu_t$, $\phi_{t-1}$, $\rml_t$, and the known values $\phi_t(q_1,\dots, a_{t-1}, q_t, a)$ for every $a\in\Gamma_{<a_t}$.

Consider the following set of constraints for the variables $x_0$ and $x_1$:

\begin{enumerate}[label={($\phi$.\arabic*)}]
    \item \label{item:cond-phi-geq-xi} For each $b\in\zo$, 
    $$\delta x_b\geq \xi_t^b(q_1, \dots, a_{t}).$$
    \item \label{item:cond-phi-normal} For each $b\in\zo$,  
    \begin{align*}
        \sum_{a\in\Gamma_{<a_t}} &\phi_t^b(q_1, \dots, a_{t-1}, q_t, a) + x_b +\sum_{a\in\Gamma_{>a_t}}{\frac{1}{\delta}} \xi_t^b(q_1, \dots, a_{t-1}, q_t, a) \leq \phi_{t-1}^b(q_1, \dots, a_{t-1})
    \end{align*}
    \item \label{item:cond-phi-upper} For each $b\in\zo$,  
    \begin{align*}
        \delta x_b- e^{-\eps}\delta x_{1-b}
        \leq \mu_t^b(q_1, \dots, a_{t}) - e^{-\eps} \mu_t^{1-b}(q_1, \dots, a_{t})
    \end{align*} 
\end{enumerate}

Let $\mathcal{C}_{q_1, \dots, a_t}$ denote the set of feasible solutions $(x_0, x_1)$ satisfying the constraints above. As in Section~\ref{subsubsec:construction-xi}, one can show that $\mathcal{C}_{q_1, \dots, a_t}$ is compact. Therefore, if $\mathcal{C}_{q_1, \dots, a_t}$ is nonempty, then by Lemma~\ref{lem:compact-maximal}, the lexicographically maximum pair $(x_0^*, x_1^*)$ of $\mathcal{C}_{q_1, \dots, a_t}$ exists, belongs to $\mathcal{C}_{q_1, \dots, a_t}$, and is maximal in $\mathcal{C}_{q_1, \dots, a_t}$.

Define $\mathcal{G}_{q_1, \dots, a_t} \subseteq \mathcal{C}_{q_1, \dots, a_t}$ to be the set of all maximal elements in $\mathcal{C}_{q_1, \dots, a_t}$. By the definition of maximality, this set either contains no pair of the form $(z, z)$ with identical components, or it contains exactly one such pair, denoted $(z^*, z^*)$. We define $\phi_t(q_1, \dots, a_t)=(z^*, z^*)$ if there exists $(z^*, z^*)\in \mathcal{G}_{q_1, \dots, a_t}$ and $\phi_t(q_1, \dots, a_{t})=(x_0^*, x_1^*)$ otherwise. In either case, $\phi_t(q_1, \dots, a_{t})$ is a maximal solution for conditions~\ref{item:cond-phi-geq-xi} to \ref{item:cond-phi-upper}. 

\begin{claim}\label{cla:phi-exists}
    The function $\phi_t$ is well-defined and exists.
\end{claim}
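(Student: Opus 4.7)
The plan is to mirror the proof of Claim~\ref{cla:xi-exists}. Fix a transcript $(q_1,\ldots,a_{t-1},q_t)$, and define $\phi_t(q_1,\ldots,a_{t-1},q_t,\cdot)$ by transfinite recursion on $A$ in the well-order $<_A$. For each $a_t \in A$, I will show that the feasible set $\mathcal{C}_{q_1,\ldots,a_t} \subseteq [0,1]^2$ cut out by \ref{item:cond-phi-geq-xi}--\ref{item:cond-phi-upper} is non-empty. Since each of the three constraints is a non-strict inequality in a continuous function of $(x_0,x_1)$, the set $\mathcal{C}_{q_1,\ldots,a_t}$ is closed, and being a subset of the compact box $[0,1]^2$ it is itself compact. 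Lemma~\ref{lem:compact-maximal} then produces the unique lexicographic maximum $(x_0^{*},x_1^{*})$, which is moreover maximal in $\mathcal{C}_{q_1,\ldots,a_t}$; the set $\mathcal{G}_{q_1,\ldots,a_t}$ of maximal elements contains at most one diagonal pair (two distinct diagonal pairs would be componentwise comparable, violating maximality of one of them), so $\phi_t(q_1,\ldots,a_t)$ is uniquely determined in both construction cases. Applying Lemma~\ref{lem:transfinite} along $<_A$ then gives existence and uniqueness of $\phi_t(q_1,\ldots,a_{t-1},q_t,\cdot)$, and hence of $\phi_t$.

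The heart of the argument is the non-emptiness of $\mathcal{C}_{q_1,\ldots,a_t}$, which I would establish by exhibiting the explicit candidate $x_b = \xi_t^b(q_1,\ldots,a_t)/\delta$ (assuming $\delta>0$; when $\delta=0$, the inductive hypothesis $\delta\phi_{t-1}^b\geq\rml_{t-1}^b\geq\sum_a\rml_t^b$ forces $\rml_t\equiv 0$ and hence $\xi_t\equiv 0$ by \ref{item:cond-xi-geq-rml} and \ref{item:cond-xi-extra}, and one can then check directly that $\phi_t=\mu_t$ satisfies all three $\phi$-constraints). This candidate saturates \ref{item:cond-phi-geq-xi} and satisfies \ref{item:cond-phi-upper} by multiplying \ref{item:cond-xi-upper} of $\xi_t$ through by $\delta$. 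Constraint \ref{item:cond-phi-normal} is verified by induction on $a_t$ along $<_A$: whenever $a_t$ has an immediate predecessor $a^{*}$, which is the case for every non-least element by Remark~\ref{rem:smallest-or-predecessor}, the inductive hypothesis that $\phi_t^b(q_1,\ldots,a^{*})$ was chosen to satisfy its own instance of \ref{item:cond-phi-normal} telescopes, using $\Gamma_{<a_t}=\Gamma_{<a^{*}}\cup\{a^{*}\}$ and $\Gamma_{>a^{*}}=\{a_t\}\cup\Gamma_{>a_t}$, to exactly the bound required at $a_t$ for the candidate $\xi_t^b(a_t)/\delta$.

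The main obstacle is the base case of this induction, when $a_t$ is the least element of $A$: feasibility then collapses to the single global inequality $\sum_{a\in A}\xi_t^b(q_1,\ldots,a_{t-1},q_t,a)\leq \delta\,\phi_{t-1}^b(q_1,\ldots,a_{t-1})$, which is tighter by a factor of $\delta$ than the bound $\sum_a \xi_t^b\leq \phi_{t-1}^b$ that \ref{item:cond-xi-normal} alone delivers. My plan for closing this gap is to revisit the proof of Claim~\ref{cla:xi-exists}, which actually establishes the stronger inequality $\sum_{a\leq a_t}\xi_t^b+\sum_{a>a_t}\rml_t^b\leq \delta\phi_{t-1}^b$ for the base candidate $\rml_t^b$, and then transport this bound to the lex-max $\xi_t^b$ by combining the pointwise bound \ref{item:cond-xi-extra} (namely $\xi_t^b(a)\leq \rml_t^0(a)+\rml_t^1(a)$, together with the observation that, for $\eps>0$, the inequalities $\mu_t^0(a)>e^\eps\mu_t^1(a)$ and $\mu_t^1(a)>e^\eps\mu_t^0(a)$ cannot both hold, so at most one of $\rml_t^0(a),\rml_t^1(a)$ is nonzero) with Lemma~\ref{lem:rml-geq-sum-rml-limit} to get $\sum_a\rml_t^c\leq\rml_{t-1}^c$ and the inductive hypothesis \ref{item:cond1} giving $\rml_{t-1}^c\leq\delta\phi_{t-1}^c$. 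Producing the per-$b$ bound rather than a symmetric average over $b\in\zo$ is the delicate point; it should follow by case-splitting on which of $\mu_t^b,\mu_t^{1-b}$ dominates pointwise and chaining \ref{item:cond-xi-upper} to transfer estimates between the two indices.
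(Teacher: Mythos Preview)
Your overall scheme is exactly the paper's: exhibit the candidate $x_b=\xi_t^b(q_1,\ldots,a_t)/\delta$, check \ref{item:cond-phi-geq-xi} trivially and \ref{item:cond-phi-upper} from \ref{item:cond-xi-upper}, handle \ref{item:cond-phi-normal} by induction along $<_A$ with the telescoping inductive step, and finish via transfinite recursion. Your treatment of unique determination (compactness, Lemma~\ref{lem:compact-maximal}, at most one diagonal maximal pair) also matches the paper.

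The only discrepancy is the base case of \ref{item:cond-phi-normal}. You correctly identify the target $\sum_{a\in A}\xi_t^b\leq\delta\,\phi_{t-1}^b$, but view it as a genuine gap, a factor of $\delta$ stronger than what \ref{item:cond-xi-normal} appears to give. You have been misled by a typo in the statement of \ref{item:cond-xi-normal}: its right-hand side should be $\delta\,\phi_{t-1}^b$, not $\phi_{t-1}^b$, as is clear from equation~(\ref{eq:xi-normal-feasible}) and from the proof of Claim~\ref{cla:xi-exists}, which uses $\delta\phi_{t-1}^b$ throughout and explicitly declares that \ref{item:cond-xi-normal} at the predecessor is identical to~(\ref{eq:xi-normal-feasible}). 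With the corrected constraint the paper's base case is one line: apply \ref{item:cond-xi-normal} at each $a_n'\in A$, drop the nonnegative tail $\sum_{a>a_n'}\rml_t^b$, get $\sum_{a\leq a_n'}\xi_t^b\leq\delta\phi_{t-1}^b$, and let $n\to\infty$. Your proposed workaround via \ref{item:cond-xi-extra} is therefore unnecessary, and also contains an error: the claim that at most one of $\rml_t^0(a),\rml_t^1(a)$ is nonzero holds only for the base-level expression $\rml_{T,T}^b=\max\{0,\mu_T^b-e^\eps\mu_T^{1-b}\}$, not for the limit $\rml_t^b$, since the suprema over future queries in Definition~\ref{def:rml-T} are taken separately for $b=0$ and $b=1$ and can both be positive.
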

\begin{proof}
    The proof consists of two steps: (i) proving that $\mathcal{C}_{q_1, \dots, a_t}$ is non-empty, and (ii) applying the transfinite recursion theorem (Lemma~\ref{lem:transfinite}) and concluding that $\phi_t$ exists.

    \medskip\noindent\textbf{(i):}
    Fix any $(q_1, \dots, a_{t-1}, q_t)\in (Q\times A)^{t-1}\times Q$. By Remark~\ref{rem:smallest-or-predecessor}, each answer in $A$ is either the smallest element of $A$ or has an immediate predecessor. We will inductively show that for every $a_t\in A$, the values $\frac{1}{\delta}\xi_t^0(q_1, \dots, a_t)$ and $\frac{1}{\delta}\xi_t^1(q_1, \dots, a_t)$ satisfy all the constraints for the optimization problem defining $\phi_t(q_1, \dots, a_t)$, and consequently, $\mathcal{C}_{q_1, \dots, a_t}$ is not empty. 

    First, note that $\frac{1}{\delta}\xi_t^b(q_1, \dots, a_t)$ trivially satisfies Constraint \ref{item:cond-phi-geq-xi}. Moreover, by Constraint~\ref{item:cond-xi-upper}, it satisfies Constraint \ref{item:cond-phi-upper}. To prove Constraint \ref{item:cond-phi-normal} is satisfied, it suffices to show that
    \begin{equation}\label{eq:phi-normal-feasible}
        \begin{aligned}
        \sum_{a\in\Gamma_{<a_t}} \phi_t^b(q_1, \dots, a_{t-1}, q_t, a) &+\sum_{a\in\Gamma_{>a_t}\cup \{a_t\}} \frac{1}{\delta}\xi_t^{b}(q_1, \dots, a_{t-1}, q_t, a)\leq \phi_{t-1}^b(q_1, \dots, a_{t-1}),
    \end{aligned}
    \end{equation}
    We prove this by induction on $a_t\in A$:

    \medskip\noindent\underline{Base Case:} 
    Suppose $a_t$ is the least element of $A$. Then, $\Gamma_{<a_t}$ is empty, reducing our goal to showing that
    \begin{align*} 
        \sum_{\substack{a\in A}} \xi_t^{b}(q_1, \dots, a_{t-1}, q_t, a)\leq \delta\phi_{t-1}^b(q_1, \dots, a_{t-1}).
    \end{align*}
    Let $a+t=a_1'<_A a_2'<_A \dots$ denote the ordered elements of $A$. As $\rml_t$ is non-negative, by Constraint~\ref{item:cond-xi-normal}, for every $n\in \N$, we have
    \begin{align*} 
        \sum_{\substack{a\in \Gamma_{<a_n'}\cup\{a_n'\}}} \xi_t^{b}(q_1, \dots, a_{t-1}, q_t, a)\leq \delta\phi_{t-1}^b(q_1, \dots, a_{t-1}).
    \end{align*}
    Taking $\lim_{n\to \infty}$ when $A$ is infinite or setting $n=|A|$ when $A$ is finite gives inequality~(\ref{eq:phi-normal-feasible}).
    
    \noindent\underline{Inductive Step:}
    Let $a^*$ be the immediate predecessor of $a_t$. Since $\Gamma_{<a_t} = \Gamma_{<a^*} \cup \{a^*\}$, Constraints \ref{item:cond-phi-normal} for defining $\phi_t(q_1, \dots, a_{t-1}, q_t, a^*)$ is identical to (\ref{eq:phi-normal-feasible}). By inductive hypothesis, $\phi_t(q_1, \dots, a_{t-1}, q_t, a^*)$ exists and satisfies \ref{item:cond-phi-normal}. Thus, inequality (\ref{eq:phi-normal-feasible}) holds, completing the induction.

    \medskip\noindent\textbf{(ii):}
    For every $(q_1, \dots, a_{t-1}, q_t)$, the values of the function $\phi_t(q_1, \dots, a_{t-1}, q_t, \cdot)$ is uniquely determined by the preceding values of this function in the well-order $<_A$. Thus, by the transfinite recursion theorem, the function $\phi_t(q_1, \dots, a_{t-1}, q_t, \cdot)$ exists and is uniquely defined. Consequently, the function $\phi_t$ exists and is uniquely defined.
\end{proof}

\subsubsection{Satisfying Conditions \ref{item:cond1} to \ref{item:cond4}}\label{subsubsec:construction-conditions}
Our objective was to construct $\phi_t^b$ so that it fulfills conditions \ref{item:cond1} to \ref{item:cond4}. Combination of Constraints \ref{item:cond-phi-geq-xi} and \ref{item:cond-xi-geq-rml} gives condition \ref{item:cond1}. Moreover, Constraint~\ref{item:cond-phi-upper} directly implies condition~\ref{item:cond2}. The proof of the following claim is identical to a proof in \cite{lyu2022composition}:
\begin{claim}
    $\phi_t$ satisfies condition \ref{item:cond3}.
\end{claim}
\begin{proof}
    For each $b\in\zo$ and every transcript $(q_1, \dots, a_{t-1}, q_t)$, by Constraint \ref{item:cond-phi-normal} and non-negativity of $\xi^b$, for every $a_t$,
    $$\sum_{a\in \Gamma_{<a_t}\cup \{a_t\}} \phi_t^b(q_1, \dots, a_t) \leq \phi_{t-1}^b(q_1, \dots, a_{t-1}),$$
    which in both cases where $A$ is finite and countable implies that
    \begin{align*}
        \sum_{a\in A} \phi_t^b(q_1, \dots, a_{t-1}, q_t,a) \leq \phi_{t-1}^b(q_1, \dots, a_{t-1}).
    \end{align*}
    For the sake of contradiction, suppose there exists $b\in\zo$ and $(q_1, \dots, a_{t-1}, q_t)$ such that 
    \begin{align}\label{eq:contradiction}
        \sum_{a\in A} \phi_t^b(q_1, \dots, a_{t-1}, q_t,a) < \phi_{t-1}^b(q_1, \dots, a_{t-1}).
    \end{align}
    Fix $b$. As $\phi_t^b$ is lower bounded by $\xi_t^b/\delta$ and the values of $\phi_t$ are maximal solutions, this strict inequality implies that for every $a_t \in A $, Constraint \ref{item:cond-phi-normal} for $b$ is not tight. Consequently, for every $a_t \in A$, we must have had
    \begin{align*}
        \delta\phi_t^{b}(q_1, \dots, a_t)- e^{-\eps}\delta\phi_t^{1-b}(q_1, \dots, a_t)
        = \mu_t^{b}(q_1, \dots, a_{t})-e^{-\eps}\mu_t^{1-b}(q_1, \dots, a_{t})
    \end{align*}
    otherwise, the value of $\phi_t^b(q_1, \dots, a_t)$ could have increased, which contradicts the maximality.
    Taking a sum over all $a_t\in A$, we get
    \begin{equation}\label{eq:tight}
        \begin{aligned}
            &\sum_{a_t\in A} \delta \phi_t^{b}(q_1, \dots, a_t)- e^{-\eps} \delta\sum_{a_t\in A} \phi_t^{1-b}(q_1, \dots, a_t) = \mu_{t-1}^{b}(q_1, \dots, a_{t-1})-e^{-\eps}\mu_{t-1}^{1-b}(q_1, \dots, a_{t-1})
        \end{aligned}
    \end{equation}
    By \ref{item:cond2} and Remark~\ref{rem:inductive-hypothesis}, we know that the RHS is at least
    $$\delta\phi_{t-1}^b(q_1, \dots, a_{t-1}) - e^{-\eps} \delta \phi_{t-1}^{1-b}(q_1, \dots, a_{t-1})$$
    Plugging this to (\ref{eq:tight}) and reformulating it, we get
    \begin{align*}
        \sum_{a_t\in A} \phi_t^{b}(q_1, \dots, a_t)&-\phi_{t-1}^{b}(q_1, \dots, a_{t-1}) \geq  e^{-\eps}(\sum_{a_t\in A} \phi_t^{1-b}(q_1, \dots, a_t)- \phi_{t-1}^{1-b}(q_1, \dots, a_{t-1}))
    \end{align*}
    By (\ref{eq:contradiction}), the LHS is negative, implying that the RHS is negative too. 
    Thus, $\sum_{a\in A} \phi_t^{1-b}(q_1, \dots, a_{t-1}, q_t,a) < \phi_{t-1}^{1-b}(q_1, \dots, a_{t-1})$. Hence, by an identical argument, we have
    \begin{align*}
        \sum_{a_t\in A} \phi_t^{1-b}(q_1, \dots, a_t)&-\phi_{t-1}^{1-b}(q_1, \dots, a_{t-1}) \geq  e^{-\eps}(\sum_{a_t\in A} \phi_t^{b}(q_1, \dots, a_t)- \phi_{t-1}^{b}(q_1, \dots, a_{t-1})),
    \end{align*}
    which is a contradiction when $\eps>0$, implying that condition \ref{item:cond3} holds.
\end{proof}
To show $\phi_t$ satisfies conditions \ref{item:cond4}, we need the following lemma:
\begin{lemma}\label{lem:xi-eq-min}
    For any query sequence $(q_1, \dots, q_t)\in Q^t$, if $\mu_t^0(q_1, \dots, a_t)=\mu_t^1(q_1, \dots, a_t)$ for every answer sequence $(a_1, \dots, a_t)\in A^t$, then 
    $$\xi_t^0(q_1, \dots, a_t)=\xi_t^1(q_1, \dots, a_t)=\min\{\mu_t^0(q_1, \dots, a_t), \rml_t^0(q_1, \dots, a_t)+\rml_t^1(q_1, \dots, a_t)\}$$
    for every answer sequence $(a_1, \dots, a_t)\in A^t$.
\end{lemma}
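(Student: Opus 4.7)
The plan is to prove Lemma~\ref{lem:xi-eq-min} by transfinite induction on $a_t \in A$ under the well-order $<_A$, fixing the query sequence $(q_1,\dots,q_t)$ from the hypothesis. Abbreviate $M(a):=\mu_t^0(q_1,\dots,a_{t-1},q_t,a)=\mu_t^1(q_1,\dots,a_{t-1},q_t,a)$, $R_b(a):=\rml_t^b(q_1,\dots,a_{t-1},q_t,a)$, and $v^*(a):=\min\{M(a),\,R_0(a)+R_1(a)\}$. The inductive hypothesis is $\xi_t^0(q_1,\dots,a_{t-1},q_t,a)=\xi_t^1(q_1,\dots,a_{t-1},q_t,a)=v^*(a)$ for every $a <_A a_t$, and the goal at step $a_t$ is to establish the same equality at $a_t$.

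Before entering the induction, I would extract two symmetries forced by the hypothesis. Iterating Lemma~\ref{lem:sum-mu-eq-mu} shows that $\mu_{t'}^0=\mu_{t'}^1$ at every prefix for every $t'\leq t$; write $\mu_{t-1}:=\mu_{t-1}^0(q_1,\dots,a_{t-1})=\mu_{t-1}^1(q_1,\dots,a_{t-1})$. Then Remark~\ref{rem:inductive-hypothesis} together with condition~\ref{item:cond4} applied to $\phi_{t-1}$ yields $\phi_{t-1}^0(q_1,\dots,a_{t-1})=\phi_{t-1}^1(q_1,\dots,a_{t-1})=:\Phi$ along with the lower bound $\Phi\geq w^*:=\min\{\mu_{t-1},\,\rml_{t-1}^0+\rml_{t-1}^1\}$ (writing $\rml_{t-1}^b$ as shorthand for $\rml_{t-1}^b(q_1,\dots,a_{t-1})$).

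The core of the proof is to show that the pair $(v^*(a_t),v^*(a_t))$ is feasible for the constraint set $\mathcal{C}_{q_1,\dots,a_t}$ defining $\xi_t(q_1,\dots,a_t)$. Once this is established, constraint~\ref{item:cond-xi-extra} already caps both $x_0$ and $x_1$ at $v^*(a_t)$, so the feasible pair $(v^*(a_t),v^*(a_t))$ is forced to be the lex-max of $\mathcal{C}_{q_1,\dots,a_t}$ and the inductive step is complete. Constraints~\ref{item:cond-xi-geq-rml} and~\ref{item:cond-xi-upper} are routine: the former follows from $v^*(a_t)\geq R_b(a_t)$, which uses $R_b\leq M$ (Lemma~\ref{lem:rml-leq-mu-limit}) together with $R_b\leq R_0+R_1$, while the latter reduces to the inequality $v^*(a_t)(1-e^{-\eps})\leq M(a_t)(1-e^{-\eps})$, which holds because $v^*(a_t)\leq M(a_t)$.

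The real work --- and the main obstacle --- is constraint~\ref{item:cond-xi-normal}. After substituting the inductive hypothesis, one must show
\[
\sum_{a\leq_A a_t} v^*(a) + \sum_{a>_A a_t} R_b(a) \leq \Phi \qquad \text{for both } b\in\{0,1\}.
\]
The subtlety is that the two versions ($b=0$ and $b=1$) of this inequality are genuinely asymmetric in $R_0$ and $R_1$, even though condition~\ref{item:cond4} has collapsed their right-hand sides to the common value $\Phi$. My resolution is to double-bound the LHS: applying $v^*(a)\leq R_0(a)+R_1(a)$ inside the first sum gives LHS $\leq\sum_{a}R_0(a)+\sum_{a}R_1(a)\leq \rml_{t-1}^0+\rml_{t-1}^1$ via Lemma~\ref{lem:rml-geq-sum-rml-limit}, while applying $v^*(a)\leq M(a)$ in the first sum together with $R_b(a)\leq M(a)$ (Lemma~\ref{lem:rml-leq-mu-limit}) in the second sum gives LHS $\leq\sum_{a} M(a)=\mu_{t-1}$ via Lemma~\ref{lem:sum-mu-eq-mu}. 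Taking the minimum of these two bounds yields LHS $\leq w^*\leq\Phi$, which verifies constraint~\ref{item:cond-xi-normal} for both $b$ and closes the induction.
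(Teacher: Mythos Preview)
Your proposal is correct and follows essentially the same route as the paper: fix the prefix, induct on $a_t$ under $<_A$, show that $(v^*(a_t),v^*(a_t))$ is feasible, and then use constraint~\ref{item:cond-xi-extra} to conclude it is the lex-max. The only cosmetic difference is in verifying~\ref{item:cond-xi-normal}: the paper first replaces the tail sum $\sum_{a>_A a_t}R_b(a)$ by the larger $\sum_{a>_A a_t}v^*(a)$ and then applies $\sum_a\min\{M(a),R_0(a)+R_1(a)\}\leq\min\{\sum_a M(a),\sum_a(R_0(a)+R_1(a))\}$, whereas you derive the two bounds separately and take their minimum---these are equivalent.
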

\begin{proof}
    Fix $(q_1, \dots, a_{t-1}, q_t)\in (Q\times A)^{t-1}\times Q$. We will inductively show that for every $a_t\in A$, 
    $$\xi_t^0(q_1, \dots, a_t)=\xi_t^1(q_1, \dots, a_t)=\min\{\mu_t^0(q_1, \dots, a_t), \rml_t^0(q_1, \dots, a_t)+\rml_t^1(q_1, \dots, a_t)\}.$$
    
    Recall $\xi_t(q_1, \dots, a_t)$ is defined as the lexicographically maximum feasible solution $(x_0^*, x_1^*)$. By Constraint~\ref{item:cond-xi-extra}, for each $b\in\zo$, we have
    $$x_b^*\leq\min\{\mu_t^b(q_1, \dots, a_t), \rml_t^0(q_1, \dots, a_t)+\rml_t^1(q_1, \dots, a_t)\}.$$
    Therefore, to prove the lemma, it suffices to show
    $$x_0=x_1=\min\{\mu_t^0(q_1, \dots, a_t), \rml_t^0(q_1, \dots, a_t)+\rml_t^1(q_1, \dots, a_t)\}$$
    is a feasible solution:

    \begin{itemize}[left=0pt]
        \item \ref{item:cond-xi-geq-rml}: By Lemma~\ref{lem:rml-leq-mu-limit} and non-negativity of $\xi_t$, this constraint is satisfied.
        \item \ref{item:cond-xi-normal}: To show Constraint~\ref{item:cond-xi-normal} is satisfied, we need to show for each $b\in \zo$,
        \begin{align*}
            \sum_{a\in\Gamma_{<a_t}\cup\{a_t\}} &\min\{\mu_t^b(q_1, \dots, a_{t-1}, q_t, a), \rml_t^0(q_1, \dots, a_{t-1}, q_t, a)+\rml_t^1(q_1, \dots, a_{t-1}, q_t, a)\}\\
            &+\sum_{a\in\Gamma_{>a_t}}\rml_t^b(q_1, \dots, a_{t-1}, q_t, a) \leq \phi_{t-1}^b(q_1, \dots, a_{t-1})
        \end{align*}
        Since 
        $$\rml_t^b(q_1, \dots, a_{t-1}, q_t, a)\leq \min\{\mu_t^b(q_1, \dots, a_{t-1}, q_t, a), \rml_t^0(q_1, \dots, a_{t-1}, q_t, a)+\rml_t^1(q_1, \dots, a_{t-1}, q_t, a)\},$$ 
        it suffices to prove
        \begin{align*}
            \sum_{a\in A} &\min\{\mu_t^b(q_1, \dots, a_{t-1}, q_t, a), \rml_t^0(q_1, \dots, a_{t-1}, q_t, a)+\rml_t^1(q_1, \dots, a_{t-1}, q_t, a)\} \leq \phi_{t-1}^b(q_1, \dots, a_{t-1})
        \end{align*}
        By condition~\ref{item:cond4} and Remark~\ref{rem:inductive-hypothesis}, 
        \begin{align*}
            \phi_{t-1}^b(q_1, \dots, a_{t-1})\geq \min\{\mu_{t-1}^b(q_1, \dots, a_{t-1}), \rml_{t-1}^0(q_1, \dots, a_{t-1})+\rml_{t-1}^1(q_1, \dots, a_{t-1})\}
        \end{align*}
        Therefore, by Lemma~\ref{lem:sum-mu-eq-mu} and Lemma~\ref{lem:rml-geq-sum-rml-limit},
        \begin{align*}
            &\phi_{t-1}^b(q_1, \dots, a_{t-1})\geq \min\{\mu_{t-1}^b(q_1, \dots, a_{t-1}), \rml_{t-1}^0(q_1, \dots, a_{t-1})+\rml_{t-1}^1(q_1, \dots, a_{t-1})\}\\
            &\geq \min\left\{\sum_{a\in A} \mu_t^b(q_1, \dots, a_{t-1}, q_t, a), \sum_{a\in A} \rml_t^0(q_1, \dots, a_{t-1}, q_t, a)+ \sum_{a\in A} \rml_t^1(q_1, \dots, a_{t-1}, q_t, a)\right\}\\
            &\geq \sum_{a\in A} \min\{\mu_t^b(q_1, \dots, a_{t-1}, q_t, a), \rml_t^0(q_1, \dots, a_{t-1}, q_t, a)+\rml_t^1(q_1, \dots, a_{t-1}, q_t, a)\},
        \end{align*}
        completing the proof.
        \item \ref{item:cond-xi-upper}: Recall that $\mu_t^0(q_1, \dots, a_t)=\mu_t^1(q_1, \dots, a_t)$ for every answer sequence $(a_1, \dots, a_t)\in A^t$. To prove this constraint is satisfied, we need to show
        $$(1-e^{-\eps})\min\{\mu_t^0(q_1, \dots, a_t), \rml_t^0(q_1, \dots, a_t)+\rml_t^1(q_1, \dots, a_t)\}\leq (1-e^{-\eps}) \mu_t^0(q_1, \dots, a_t),$$
        which is trivially true.
        \item \ref{item:cond-xi-extra}: This constraint is trivially satisfied as equality holds.
    \end{itemize}
\end{proof}
\begin{claim}
    $\phi_t$ satisfies conditions \ref{item:cond4}.
\end{claim}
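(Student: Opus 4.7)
The plan is to use the outer induction hypothesis already in place by Remark~\ref{rem:inductive-hypothesis}. Fix a query sequence $(q_1,\dots,q_t)$ with $\mu_t^0 \equiv \mu_t^1$ on every answer sequence. By Lemma~\ref{lem:sum-mu-eq-mu} we also get $\mu_{t-1}^0 \equiv \mu_{t-1}^1$, and applying condition~\ref{item:cond4} to $\phi_{t-1}$ gives $\phi_{t-1}^0(q_1,\dots,a_{t-1}) = \phi_{t-1}^1(q_1,\dots,a_{t-1})$ for every prefix. Lemma~\ref{lem:xi-eq-min} then yields $\xi_t^0(q_1,\dots,a_t) = \xi_t^1(q_1,\dots,a_t) = \phi^*(a_t)$, where I write $\phi^*(a_t) := \min\{\mu_t^0(q_1,\dots,a_t),\, \rml_t^0(q_1,\dots,a_t) + \rml_t^1(q_1,\dots,a_t)\}$. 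The main task is then a transfinite induction on $a_t$ in the well-order $<_A$, with the inductive hypothesis that $\phi_t^0(q_1,\dots,q_t,a) = \phi_t^1(q_1,\dots,q_t,a)$ for every $a <_A a_t$.

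Under this full symmetry, every constraint defining $\phi_t(q_1,\dots,a_t)$ becomes invariant under swapping $x_0$ and $x_1$: \ref{item:cond-phi-geq-xi} reads $x_b \geq \phi^*(a_t)/\delta$, \ref{item:cond-phi-normal} reads $x_b \leq U$ for a common upper bound $U$, and \ref{item:cond-phi-upper} becomes $\delta x_b - e^{-\eps}\delta x_{1-b} \leq (1-e^{-\eps})\mu_t^0(q_1,\dots,a_t)$. Set $z^* := \min\{U,\, \mu_t^0(q_1,\dots,a_t)/\delta,\, 1\}$, dropping the middle term when $\delta = 0$ (in which case $\phi^* = 0$ and the target inequality is vacuous). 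Feasibility of $(z^*, z^*)$ in $\mathcal{C}_{q_1,\dots,a_t}$ follows from the fact established in the proof of Claim~\ref{cla:phi-exists} that $(\phi^*/\delta, \phi^*/\delta)$ is itself feasible, which forces $\phi^*/\delta \leq U$ and $\phi^*/\delta \leq 1$, and together with $\phi^* \leq \mu_t^0$ gives $\phi^*/\delta \leq z^*$ so that \ref{item:cond-phi-geq-xi} holds; the remaining constraints reduce to $z^* \leq U$ and $z^* \leq \mu_t^0/\delta$ (using $1 - e^{-\eps} > 0$).

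The hard part will be showing that $(z^*, z^*)$ is \emph{maximal} in $\mathcal{C}_{q_1,\dots,a_t}$, so that the construction rule assigns $\phi_t(q_1,\dots,a_t) = (z^*, z^*)$. For any $(y_0, y_1) \in \mathcal{C}_{q_1,\dots,a_t}$ with $y_0, y_1 \geq z^*$, WLOG $y_1 \geq y_0$. The cases $z^* = U$ and $z^* = 1$ are immediate from the common upper bound. The delicate case is $z^* = \mu_t^0/\delta$: applying \ref{item:cond-phi-upper} with $b = 1$ gives $\delta y_1 - e^{-\eps}\delta y_0 \leq (1-e^{-\eps})\mu_t^0$, and substituting $y_0 \leq y_1$ on the left yields $\delta y_1(1-e^{-\eps}) \leq (1-e^{-\eps})\mu_t^0$; cancelling the strictly positive factor $1 - e^{-\eps}$ (this is where $\eps > 0$ is essential) gives $y_1 \leq \mu_t^0/\delta = z^*$, hence $y_0 = y_1 = z^*$. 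Thus $(z^*, z^*) \in \mathcal{G}_{q_1,\dots,a_t}$, the construction rule selects it, and $z^* \geq \phi^*(a_t)/\delta \geq \phi^*(a_t)$ (since $\delta \leq 1$) closes the transfinite induction and with it the outer induction on $t$.
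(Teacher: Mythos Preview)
Your proof is correct and follows essentially the same approach as the paper: establish that the constraint set $\mathcal{C}_{q_1,\dots,a_t}$ is symmetric under swapping $x_0 \leftrightarrow x_1$, exhibit a maximal point $(z^*,z^*)$ on the diagonal, and invoke constraint~\ref{item:cond-phi-geq-xi} for the lower bound. You are more explicit than the paper in two places: (i) you make the inner transfinite induction on $a_t$ visible, which is genuinely needed since \ref{item:cond-phi-normal} involves the previously assigned values $\phi_t^b(q_1,\dots,q_t,a)$ for $a<_A a_t$; and (ii) you construct $z^*=\min\{U,\mu_t^0/\delta,1\}$ concretely and verify maximality by case analysis, whereas the paper simply asserts ``there exists a maximal feasible solution $(z^*,z^*)$'' from symmetry (implicitly relying on convexity of $\mathcal{C}$, since the constraints are linear, to pass from symmetry to a diagonal maximal point). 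Both arguments land in the same place.
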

\begin{proof}
    For any query sequence $(q_1, \dots, q_t)\in Q^t$, if 
    \begin{equation}\label{eq:mu-0-eq-mu-1}
        \mu_t^0(q_1, \dots, a_t)=\mu_t^1(q_1, \dots, a_t)\tag{$\star$}
    \end{equation}
    for every answer sequence $(a_1, \dots, a_t)\in A^t$, then by Lemma~\ref{lem:xi-eq-min}, 
    \begin{equation}\label{eq:xi-0-eq-xi-1}
        \xi_t^0(q_1, \dots, a_t)=\xi_t^1(q_1, \dots, a_t)=\min\{\mu_t^0(q_1, \dots, a_t), \rml_t^0(q_1, \dots, a_t)+\rml_t^1(q_1, \dots, a_t)\}\tag{$\star\star$}
    \end{equation}
    for all $(a_1, \dots, a_t)\in A^t$.
    Moreover, by Lemma~\ref{lem:sum-mu-eq-mu} and \ref{eq:mu-0-eq-mu-1}, $\mu_{t-1}^0(q_1, \dots, a_{t-1})=\mu_{t-1}^1(q_1, \dots, a_{t-1})$ for all $(a_1, \dots, a_t)\in A^t$. Thus, by Remark~\ref{rem:inductive-hypothesis} and condition~\ref{item:cond4}, 
    \begin{equation}\label{eq:phi-0-eq-phi-1}
        \phi_{t-1}^0(q_1, \dots, a_{t-1})=\phi_{t-1}^1(q_1, \dots, a_{t-1}).\tag{$\star\star\star$}
    \end{equation}
    Therefore, by equations (\ref{eq:mu-0-eq-mu-1}), (\ref{eq:xi-0-eq-xi-1}) and (\ref{eq:phi-0-eq-phi-1}), Constraints~\ref{item:cond-phi-geq-xi}, \ref{item:cond-phi-normal}, and \ref{item:cond-phi-upper} are symmetric for $x_0$ and $x_1$. Hence, there exists a maximal feasible solution $(z^*, z^*)$ satisfying all the constraints. By definition of $\phi_t(q_1, \dots, a_t)$, we have $\phi_t(q_1, \dots, a_t)=(z^*, z^*)$. Also, by Constraint~\ref{item:cond-phi-geq-xi}, $z^*$ is at least $\min\{\mu_t^0(q_1, \dots, a_t), \rml_t^0(q_1, \dots, a_t)+\rml_t^1(q_1, \dots, a_t)\}$. 
    
\end{proof}

\section{$f^{\delta}_{\RR}$-concurrent composition}\label{sec:filter-comp-rr}
In this section, we prove: 
\begin{lemma*}[Restatement of Lemma~\ref{lem:filter-comp-eps-eq-zero}]
    For every $0\leq\delta\leq 1$, the $f^\delta_{\RR}$-concurrent composition of continual mechanisms is $(0, \delta)$-differentially private.
\end{lemma*}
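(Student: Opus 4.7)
The plan is to use a coupling argument that exploits the structure of $\RR_{0,\delta}$. By Definition~\ref{def:rr} with $\eps=0$, on input $c\in\zo$ the mechanism $\RR_{0,\delta_j'}$ outputs $(\top, c)$ with probability $\delta_j'$ and otherwise outputs $(\bot, U_j)$ where $U_j$ is a uniform bit independent of $c$. So the secret bit $b$ only ``leaks'' through the rare event that some $\RR_{0,\delta_j'}$ outputs $\top$, and once the total probability of any such leakage is bounded by $\delta$, $(0,\delta)$-DP follows immediately.

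Fix an adversary $\cA$ (WLOG deterministic) interacting with $\V[f^\delta_\RR]\circstar\I(b)\circstar\extconcomp$, and index the created mechanisms $\RR_{0,\delta_1'},\RR_{0,\delta_2'},\dots$ in creation order. I would couple the $b=0$ and $b=1$ executions by sampling, for each $j$, a Bernoulli variable $Y_j$ with parameter $\delta_j'$ and an independent uniform bit $U_j$; the $j$-th mechanism's output on input $c$ is then $(\top, c)$ if $Y_j=\top$ and $(\bot, U_j)$ otherwise. Let $E=\bigcup_j\{Y_j=\top\}$. On $E^c$, every response is of the form $(\bot, U_j)$ with $U_j$ independent of $b$, so the view of $\cA$ is a deterministic function of the $U_j$'s alone and is identically distributed under $b=0$ and $b=1$. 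Consequently, for every measurable set $S$,
\[
\bigl|\Pr[\View(\cA, \V[f^\delta_\RR]\circstar\I(0)\circstar\extconcomp)\in S] - \Pr[\View(\cA, \V[f^\delta_\RR]\circstar\I(1)\circstar\extconcomp)\in S]\bigr|\leq \Pr[E].
\]

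To bound $\Pr[E]\leq\delta$, I would let $F_{k-1}$ denote the filtration generated by the outputs of the first $k-1$ mechanisms, so that the adaptively chosen $\delta_k'$ is $F_{k-1}$-measurable and $Y_k\mid F_{k-1}\sim\mathrm{Bernoulli}(\delta_k')$. Assuming all $\delta_j'<1$ (the case $\delta_j'=1$ forces $\delta\geq 1$, where the claim is trivial), the process
\[
M_k \;=\; \frac{\mathbf{1}[\forall j\leq k:\,Y_j=\bot]}{\prod_{j=1}^{k}(1-\delta_j')}
\]
is a non-negative martingale with $M_0=1$. Condition~(iii) in Definition~\ref{def:ver-func-sum-delta-rr} deterministically enforces $\prod_{j\leq k}(1-\delta_j')\geq 1-\delta$ at every step, so the pointwise inequality $(1-\delta)M_k\leq \mathbf{1}[\forall j\leq k:\,Y_j=\bot]$ yields $\Pr[\forall j\leq k:\,Y_j=\bot]\geq 1-\delta$ upon taking expectations; letting $k$ grow (by monotone convergence) gives $\Pr[E]\leq\delta$, completing the proof.

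The main obstacle is the adaptivity of the $\delta_j'$'s: since the adversary may choose each new parameter based on past outputs and may create an unbounded number of mechanisms, neither independence of the $Y_j$'s nor a naive union or telescoping bound applies directly. The martingale framing sidesteps this by absorbing all of the adaptivity into the denominator $\prod_j(1-\delta_j')$, whose deterministic per-step lower bound is precisely the budget $1-\delta$ enforced by the verifier. A secondary minor point is that mechanisms may be created without ever being queried, but such $j$ can be given the vacuous value $Y_j=\bot$ without affecting the martingale calculation.
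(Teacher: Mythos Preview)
Your argument is correct and, at its core, follows the same coupling strategy as the paper, but the packaging differs in two pleasant ways worth noting.

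The paper reduces to deterministic adversaries, then proves a general lemma (Lemma~\ref{lem:tv-couple}) that bounds the total variation of any process whose conditional step-wise TV is controlled by functions $\Delta_i$ satisfying $1-\prod_i(1-\Delta_i)\le\delta$; the proof of that lemma builds a maximal coupling step by step and telescopes the product $\prod_i(1-\Delta_i)$. The infinite-horizon case is handled separately via a padding lemma (Lemma~\ref{lem:tv-extend-inf}). In contrast, you bypass both auxiliary lemmas by exploiting the concrete structure of $\RR_{0,\delta_j'}$: the explicit leak indicator $Y_j$ gives you a single coupling event $E^c$ on which the two views coincide exactly, and the martingale $M_k=\mathbf{1}[\forall j\le k:\,Y_j=\bot]/\prod_{j\le k}(1-\delta_j')$ is a slicker way to encode the same telescoping product. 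Your monotone-convergence step also absorbs the unbounded-horizon issue for free. So your route is more elementary and self-contained for this specific lemma, while the paper's route is more modular (its Lemma~\ref{lem:tv-couple} is stated abstractly and could be reused for other step-wise-TV-bounded compositions).

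One small technical wrinkle: you index by \emph{creation} order and take $F_{k-1}$ to be generated by the outputs of the first $k-1$ mechanisms, but mechanisms can be queried out of creation order, so some of those outputs may be revealed \emph{after} $\delta_k'$ is chosen. This does not break the argument (since any mechanism queried before the $k$-th creation necessarily has index $<k$, making $\delta_k'$ still $F_{k-1}$-measurable, and $Y_k$ is fresh randomness independent of $F_{k-1}$ given $\delta_k'$), but it would read more cleanly if you either indexed by query order or, equivalently, took the filtration to be the natural one generated by the transcript up to the $k$-th creation. The paper avoids this issue by indexing directly by the temporal sequence of responses.
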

In Section~\ref{sec:concomp-cm}, we assumed that all composed CMs have discrete answer distributions. Accordingly, we also prove Lemma~\ref{lem:filter-comp-eps-eq-zero} under this assumption. During this section, we assume the domain of all variables is finite or countable. For every random variable $Y$, we denote its support by $\supp(Y)$.

We begin by proving some intermediate results needed for the final proof. Vadhan et al.~\cite{vadhan2022concurrent} show that to guarantee an IM is differentially private, it suffices to consider only deterministic adversaries. Specifically, the following holds:

\begin{lemma}[\cite{vadhan2022concurrent}]
    An IM $\cM$ is $(\eps, \delta)$-DP w.r.t. a neighboring relation $\sim$ if and only if for every deterministic adversary $\cA$ and every pair of neighboring initial states $s\sim s'$, the random variables $\View(\cA, \cM(s))$ and $\View(\cA, \cM(s'))$ are $(\eps, \delta)$-indistinguishable.
\end{lemma}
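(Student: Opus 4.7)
The forward implication is immediate: every deterministic adversary is a special case of a randomized one, so if $(\eps,\delta)$-indistinguishability holds against all adversaries, it certainly holds against all deterministic adversaries. The substantive content is the reverse direction, which I will establish by a standard convex-combination argument over the adversary's random tape.

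Given a randomized adversary $\cA$, view its internal randomness $r_\cA$ as a random variable distributed according to some distribution $\cD$ on $R = \supp(r_\cA)$, which is countable by the section's standing assumption. For each fixed $r \in R$, let $\cA_r$ denote the deterministic adversary obtained by hardwiring $\cA$'s random tape to $r$. The hypothesis then gives, for every $r$ and every pair $s \sim s'$ of neighboring initial states,
\[
\Pr[\View(\cA_r, \cM(s)) \in T] \;\leq\; e^\eps \Pr[\View(\cA_r, \cM(s')) \in T] + \delta
\]
for every $T$ in the transcript space of $\cA_r$.

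The key step is to relate the view of the randomized $\cA$ to the views of the deterministic $\cA_r$'s. By Definition~\ref{def:view}, $\View(\cA,\cM(s))$ is of the form $(r_\cA, m_1, m_2, \dots)$, so any event $S$ in its range decomposes as a countable disjoint union $S = \bigcup_{r \in R} \{r\} \times S_r$, where $S_r$ is the corresponding event in transcript space. Conditioning on $r_\cA = r$ makes the joint distribution of transcripts identical to that generated by $\cA_r$ interacting with $\cM(s)$, yielding
\[
\Pr[\View(\cA, \cM(s)) \in S] \;=\; \sum_{r \in R} \Pr_\cD[r] \cdot \Pr[\View(\cA_r, \cM(s)) \in S_r].
\]
Applying the deterministic-adversary bound term by term, and using the analogous identity for $s'$, yields
\[
\Pr[\View(\cA, \cM(s)) \in S] \;\leq\; e^\eps \Pr[\View(\cA, \cM(s')) \in S] + \delta,
\]
with the symmetric inequality following the same way.

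I do not foresee a substantive obstacle here: the only thing to be careful about is that $R$ is countable so that the sum over $r$ is well-defined and $\delta$ does not blow up when aggregated across the mixture. Under the standing countability assumption in this section this is automatic, and the proof reduces to bookkeeping.
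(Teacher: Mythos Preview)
The paper does not supply its own proof of this lemma: it is cited from \cite{vadhan2022concurrent} and used as a black box, with the subsequent corollary simply specializing it to continual mechanisms. Your argument is correct and is exactly the standard one---condition on the adversary's random tape, reduce to the deterministic case, and average using that $\sum_r \Pr_\cD[r]=1$ so the additive $\delta$ does not blow up.
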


By Definition~\ref{def:dp-cm}, this result extends naturally to CMs:

\begin{corollary}\label{cor:dp-deterministic-adversary-cm}
    A CM $\cM$ is $(\eps, \delta)$-DP w.r.t. a verification function $f$ if and only if for every deterministic adversary $\cA$, the random variables $\View(\cA, \V[f]\circstar\I(0)\circstar\cM)$ and $\View(\cA, \V[f]\circstar\I(1)\circstar\cM)$ are $(\eps, \delta)$-indistinguishable.
\end{corollary}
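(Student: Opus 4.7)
By Corollary~\ref{cor:dp-deterministic-adversary-cm}, fix a deterministic adversary $\cA$; it suffices to show that the two views $V_b := \View(\cA,\V[f^{\delta}_{\RR}]\circstar\I(b)\circstar\extconcomp)$, $b\in\zo$, are $(0,\delta)$-indistinguishable. The plan is to construct a coupling of $(V_0,V_1)$ together with an event $E$ such that $V_0=V_1$ on $\neg E$ and $\Pr[E]\leq \delta$; the inequality $\Pr[V_b\in S]\leq\Pr[V_{1-b}\in S]+\Pr[E]$ then gives the claim.

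\paragraph{\bf The coupling.} With $\eps=0$, Definition~\ref{def:rr} admits a two-coin representation: $\RR_{0,\delta_j'}(b) = (\top,b)$ if $u_j<\delta_j'$ and $(\bot,c_j)$ otherwise, where $u_j\sim\mathrm{Uniform}[0,1]$ and $c_j\sim\mathrm{Uniform}(\zo)$ are drawn independently. The critical feature is that on a \emph{good} outcome ($u_j\ge\delta_j'$) the output is independent of the input $b$. I would run $\cA$ in parallel against $\I(0)$ and $\I(1)$ using the \emph{same} randomness $(u_j,c_j)_{j\geq 1}$, indexed by the order in which RR sub-mechanisms are created. As long as every previous RR has had a good outcome, both worlds have produced identical transcripts and $\cA$ has deterministically taken the same path, so the parameter $\delta_{j+1}'$ of the next RR is the same function of $(c_1,\dots,c_j)$ in both worlds.

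\paragraph{\bf Bounding $\Pr[E]$.} Take $E$ to be the event that at least one RR has a \emph{bad} outcome in world $0$ (equivalently, the transcript $V_0$ contains some answer whose first coordinate is $\top$). If no bad outcome ever occurs in world $0$, then none occurs in the synchronized phase of the coupling, so the two worlds never diverge and world $1$ also produces only $(\bot,c_j)$'s; hence $V_0=V_1$ on $\neg E$. To bound $\Pr[E]$, consider the marginal world-$0$ execution and condition on the ``hypothetical good path'' determined by $(c_1,c_2,\dots)$: the adversary creates RRs with deterministic parameters $\delta_1^B(c),\dots,\delta_{K(c)}^B(c)$ (allowing $K(c)=\infty$). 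Because $\V[f^{\delta}_{\RR}]$ halts at the first violation of Definition~\ref{def:ver-func-sum-delta-rr}(iii), one has $1-\prod_{j\leq k}(1-\delta_j^B(c))\leq\delta$ for every $k\leq K(c)$. Since the $u_j$'s are independent of $c$ and the execution follows this path precisely when $u_j\geq\delta_j^B(c)$ for all $j\leq K(c)$,
\[
\Pr[\neg E\mid c]\ =\ \prod_{j=1}^{K(c)}\bigl(1-\delta_j^B(c)\bigr)\ \geq\ 1-\delta,
\]
(interpreting the infinite product as a monotone limit when needed), so $\Pr[E]\leq\delta$.

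\paragraph{\bf Anticipated difficulty.} The main subtlety is handling the adaptive, random parameters $\delta_j'$ together with the possibility of unbounded interactions. Decoupling the threshold randomness $u_j$ from the path-choosing bit $c_j$ makes the argument modular: after conditioning on $c$, the parameters along the good path are deterministic, and the filter condition (iii) translates directly into the required product inequality; marginalizing over $c$ then concludes without any further adaptivity bookkeeping. A minor check that I would verify explicitly is that once the verifier halts, the remaining transcript is a deterministic function of the history, so both worlds terminate at the same point in the synchronized regime.
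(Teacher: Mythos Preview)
Your proposal does not prove the stated corollary at all. The statement to be proved is Corollary~\ref{cor:dp-deterministic-adversary-cm}: that for a CM $\cM$, $(\eps,\delta)$-DP w.r.t.\ a verification function $f$ is equivalent to the condition that \emph{deterministic} adversaries cannot distinguish $\V[f]\circstar\I(0)\circstar\cM$ from $\V[f]\circstar\I(1)\circstar\cM$. Instead, you \emph{invoke} this corollary in your first sentence and proceed to prove a completely different result, namely Lemma~\ref{lem:filter-comp-eps-eq-zero} (that the $f^{\delta}_{\RR}$-concurrent composition is $(0,\delta)$-DP). Nothing in your write-up addresses the content of the corollary: there is no argument that it suffices to consider deterministic adversaries, which is the only claim at stake.

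The paper's proof of the corollary is a one-line observation: by Corollary~\ref{cor:eq-def-dp-cm}, $(\eps,\delta)$-DP of $\cM$ w.r.t.\ $f$ is equivalent to $(\eps,\delta)$-DP of the interactive mechanism $\V[f]\circstar\I\circstar\cM$ w.r.t.\ $\sim_{01}$, and the preceding lemma (from~\cite{vadhan2022concurrent}) already establishes that for IMs it suffices to check deterministic adversaries. If you intended to prove Lemma~\ref{lem:filter-comp-eps-eq-zero}, your coupling argument is reasonable and close in spirit to the paper's approach (which also reduces to deterministic adversaries, builds a coupling, and uses the filter condition $1-\prod_j(1-\delta_j')\leq\delta$), though the paper routes through general total-variation lemmas (Lemmas~\ref{lem:tv-extend-inf} and~\ref{lem:tv-couple}) rather than the explicit two-coin representation you use. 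But as a proof of the corollary as stated, the proposal is simply off-target.
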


The definition of $(0, \delta)$-DP CMs can be equivalently expressed using the notion of \emph{total variation}:

\begin{definition}[Total Variation]\label{def:tv}
    Let $Y^0$ and $Y^1$ be two variables with the same domain $\calY$. The \emph{total variation distance} between $Y^0$ and $Y^1$, denoted $\tv(Y^0, Y^1)$, is defined as 
    $$\tv(Y^0, Y^1)=\sup_{S\subseteq \calY} \Pr[Y^0\in S]-\Pr[Y^1\in S].$$
\end{definition}

\begin{lemma}\label{lem:dp-eq-tv}
    Let $Y^0$ and $Y^1$ be two random variables over the same domain $\calY$. For every $0\leq \delta\leq 1$, $Y^0$ and $Y^1$ are $(0, \delta)$-indistinguishable if and only if $\tv(Y^0, Y^1)\leq \delta$.
\end{lemma}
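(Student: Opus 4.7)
The plan is to prove Lemma~\ref{lem:dp-eq-tv} by unwinding both definitions and observing that they are essentially the same condition, modulo a symmetrization argument. Both directions proceed directly from the definitions and together they amount to only a few lines of elementary set-theoretic manipulation.

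For the forward direction, I would assume that $Y^0$ and $Y^1$ are $(0,\delta)$-indistinguishable. By Definition~\ref{def:indistinguishability} with $\eps=0$, for every $S\subseteq\calY$ we have $\Pr[Y^0\in S]-\Pr[Y^1\in S]\le\delta$. Taking the supremum over all such $S$ immediately gives $\tv(Y^0,Y^1)\le\delta$ by Definition~\ref{def:tv}.

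For the reverse direction, I would assume $\tv(Y^0,Y^1)\le\delta$ and establish the two inequalities required by $(0,\delta)$-indistinguishability. The first, $\Pr[Y^0\in S]\le\Pr[Y^1\in S]+\delta$ for all $S\subseteq\calY$, follows directly from Definition~\ref{def:tv}. The second, $\Pr[Y^1\in S]\le\Pr[Y^0\in S]+\delta$, requires the symmetry of total variation. I would prove $\tv(Y^0,Y^1)=\tv(Y^1,Y^0)$ by the standard complement trick: for any $S$, $\Pr[Y^1\in S]-\Pr[Y^0\in S]=\Pr[Y^0\in S^c]-\Pr[Y^1\in S^c]$, so taking suprema and noting that $S\mapsto S^c$ is a bijection on subsets of $\calY$ gives the symmetry. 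Applying this symmetry yields the second inequality.

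There is no real obstacle here, this is a standard equivalence between the hypothesis-testing style definition of $(\eps,\delta)$-DP at $\eps=0$ and the statistical distance formulation. The only subtle point is to make sure the one-sided supremum definition of $\tv$ given in Definition~\ref{def:tv} is indeed symmetric, which is handled cleanly by the complement argument above. The proof will be essentially two short paragraphs, one per direction, with no need to invoke any other lemmas from the paper.
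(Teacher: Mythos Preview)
Your proposal is correct and follows essentially the same approach as the paper's proof, which simply observes that $(0,\delta)$-indistinguishability is equivalent to the one-sided condition $\Pr[Y^0\in S]-\Pr[Y^1\in S]\le\delta$ for all $S$, and hence to $\tv(Y^0,Y^1)\le\delta$. The only difference is that you spell out the complement argument justifying why the two-sided condition collapses to the one-sided one, whereas the paper absorbs this into the phrase ``by definition.''
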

\begin{proof}
    By definition, $Y^0$ and $Y^1$ are $(0, \delta)$-indistinguishable if and only if for every $S\subseteq\calY$, $\Pr[Y^0\in S]-\Pr[Y^1\in S]\leq \delta$. Therefore, $Y^0$ and $Y^1$ are $(0, \delta)$-indistinguishable if and only if $\tv(Y^0, Y^1)=\sup_{S\subseteq\calY} \Pr[Y^0\in S]-\Pr[Y^1\in S]\leq \delta$.
\end{proof}

\begin{lemma}\label{lem:tv-extend-inf}
    Let $\Sigma$ be a (discrete) set, and let $V^0$ and $V^1$ be random variables over $\Sigma^*$, i.e., finite-length sequences over $\Sigma$ that terminate with probability $1$. Let $\$$ be a symbol not in $\Sigma$. For $b\in\zo$, define the infinite sequence $W^b=(W_1^b, W_2^b, \dots)$ by padding $V^b$ with an infinite sequence of $\$$ symbols. Suppose that for some $0 \leq \delta \leq 1$, the following holds:
    \begin{align*}
        \forall k > 0, \quad \tv\left((W_1^0, \dots, W_k^0), (W_1^1, \dots, W_k^1)\right) \leq \delta.
    \end{align*}
    Then,
    \begin{align*}
        \tv(V^0, V^1) \leq \delta.
    \end{align*}
\end{lemma}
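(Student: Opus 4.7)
The plan is to prove the inequality $\Pr[V^0 \in S] - \Pr[V^1 \in S] \leq \delta$ for every $S \subseteq \Sigma^*$ by approximating the event $\{V^b \in S\}$ from below by events that depend only on the first $k+1$ coordinates of $W^b$, applying the hypothesis to those finite-dimensional projections, and then passing to the limit using the assumption that $V^b$ terminates with probability $1$.

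Concretely, I would first fix an arbitrary $S \subseteq \Sigma^*$ and, for each $k \in \N$, define the truncated set $S_k = \{v \in S : |v| \leq k\}$. The key observation is that, because $\$ \notin \Sigma$, the position of the first occurrence of $\$$ in $W^b$ equals $|V^b|+1$, so the event $\{V^b \in S_k\}$ is determined entirely by $(W_1^b, \dots, W_{k+1}^b)$: namely, $V^b \in S_k$ if and only if there exists some $\ell \leq k$ with $W_{\ell+1}^b = \$$, $W_j^b \neq \$$ for $j \leq \ell$, and $(W_1^b,\dots,W_\ell^b) \in S_k$. Hence $\{V^b \in S_k\}$ can be written as $\{(W_1^b,\dots,W_{k+1}^b) \in T_k\}$ for some $T_k \subseteq (\Sigma \cup \{\$\})^{k+1}$ that does not depend on $b$.

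Applying the hypothesis at index $k+1$ together with Definition~\ref{def:tv}, I then get
\begin{equation*}
\Pr[V^0 \in S_k] - \Pr[V^1 \in S_k] = \Pr[(W_1^0,\dots,W_{k+1}^0) \in T_k] - \Pr[(W_1^1,\dots,W_{k+1}^1) \in T_k] \leq \delta.
\end{equation*}
Since $V^b$ is almost surely finite, the events $\{V^b \in S_k\}$ increase monotonically to $\{V^b \in S\}$ as $k \to \infty$, so continuity of probability from below gives $\Pr[V^b \in S_k] \to \Pr[V^b \in S]$ for each $b$. Taking $k \to \infty$ in the displayed inequality yields $\Pr[V^0 \in S] - \Pr[V^1 \in S] \leq \delta$. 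Because $S$ was arbitrary, taking the supremum over $S$ and invoking Definition~\ref{def:tv} gives $\tv(V^0, V^1) \leq \delta$.

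There is no real obstacle here: the countability of $\Sigma$ makes all measurability issues trivial, and the only mild point to be careful about is the bookkeeping that $\{V^b \in S_k\}$ is indeed determined by the first $k+1$ (not just $k$) coordinates of $W^b$, because we need to see the terminating $\$$ to know that $V^b$ has length exactly $\ell \leq k$ rather than length $> k$. Everything else is a routine application of continuity of probability.
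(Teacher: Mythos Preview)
Your proof is correct and is actually cleaner than the paper's. The paper takes a more indirect route: it first shows $\tv(V^0,V^1)=\tv(W^0,W^1)$ via the bijection $\sigma\leftrightarrow\sigma\cdot\$^\omega$, then proves that $T_i=\tv(W_{1:i}^0,W_{1:i}^1)$ is non-decreasing and that $T-T_i\le \Pr[W_{i+1}^0\neq\$]+\Pr[W_{i+1}^1\neq\$]$, and finally uses almost-sure finiteness of $V^b$ to send the right-hand side to $0$ and conclude $T=\lim_i T_i\le\delta$. Your argument bypasses the passage through $\tv(W^0,W^1)$ entirely: by truncating an arbitrary test set $S$ to $S_k=\{v\in S:|v|\le k\}$ and observing that $\{V^b\in S_k\}$ is measurable with respect to $W_{1:k+1}^b$, you invoke the hypothesis once and then just let $k\to\infty$ using continuity from below. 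This is more elementary and avoids the auxiliary variables $X_i^b$ and the $T-T_i$ estimate; the paper's approach, on the other hand, makes the quantitative gap $T-T_i$ explicit, which is not needed here but could be useful if one wanted a rate.
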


\begin{proof}
    We have
    \begin{align*}
        \tv(V^0, V^1) &= \frac{1}{2}\sum_{\sigma\in \Sigma^*}\big|\Pr[V^0=\sigma]-\Pr[V^1=\sigma]\big|\\
        &= \frac{1}{2}\sum_{\sigma\in \Sigma^*}\big|\Pr[W^0=\sigma\cdot\$^\omega]-\Pr[W^1=\sigma\cdot\$^\omega]\big|\\
        &= \frac{1}{2}\sum_{\sigma'\in \Sigma^*\cdot\$^\omega}\big|\Pr[W^0=\sigma']-\Pr[W^1=\sigma']\big|= \tv(W^0, W^1)
    \end{align*}
    Thus, we need to show $\tv(W^0, W^1)\leq \delta$.

    For each $i\in \N$ and $b\in\zo$, let $W_{1:i}^b = (W_1^b, \dots, W_i^b)$ denote the prefix of length $i$, and let $W_{i:}^b = (W_i^b, W_{i+1}^b, \dots)$ denote the infinite suffix of $W^b$. Define:
    \begin{align*}
        T=\tv(W^0, W^1), \quad T_i = \tv(W_{1:i}^0, W_{1:i}^1), \quad \text{and} \quad \alpha(i) = \Pr[W_i^0 \neq \$] + \Pr[W_i^1 \neq \$].
    \end{align*}
    We will show that
    \begin{enumerate}
        \item $T_i$ is non-decreasing in $i$.
        \item For all $i\in\N$, we have $T-T_i\leq \alpha(i+1)$.
    \end{enumerate}
    Since each $T_i \leq 1$ and the sequence is non-decreasing, the limit $T^* = \lim_{i \to \infty} T_i$ exists. Moreover, since $W_i^0=\$$ if and only if $|V^b|<i$ and $V^b$ is finite with probability $1$, the probability $\Pr[W_i^b\neq\$]$ converges to zero as $i\to \infty$. Therefore, 
    $$T - T^*= \lim_{i \to \infty}T-T_i\leq \lim_{i \to \infty} \alpha(i)=0.$$
    Since $T_i \leq \delta$ for all $i$, we conclude that $T^* \leq \delta$. Thus, $T=\tv(W^0, W^1)\leq \delta$.
    
    It remains to prove (1) and (2). Since $W_{1:i}^0$ and $W_{1:i}^1$ in $T_i = \tv(W_{1:i}^0, W_{1:i}^1)$ equal $W_{1:i+1}^0$ and $W_{1:i+1}^1$ in $T_i = \tv(W_{1:i+1}^0, W_{1:i+1}^1)$ with their last entries removed, the data-processing inequality implies $T_i\leq T_{i+1}$.
    For (2), 
    define $X_i^b=W_{1:i}^b\$^{\omega}$ for each $b\in\zo$. By definition, $T_i=\tv(X_i^0,X_i^1)$. We have
    \begin{align*}
        T-T_i&= \tv(W^0, W^1)-\tv(X_i^0,X_i^1)\\
        &= \sup_S \left(\Pr[W^0\in S]- \Pr[W^1\in S]\right)- \sup_S \left(\Pr[X^0\in S]- \Pr[X^1\in S]\right)\\
        &\leq \sup_S \left(\Pr[W^0\in S]- \Pr[W^1\in S] - \Pr[X^0\in S]+ \Pr[X^1\in S] \right)\\
        &\leq \sup_S \left(\Pr[W^0\in S]- \Pr[X^0\in S]\right)+\sup_S \left(\Pr[X^1\in S]- \Pr[W^1\in S] \right)\\
        &= \Pr[W^0\neq X_i^0]+\Pr[W^1\neq X_i^1]
    \end{align*}
    Since once an entry of $W^b$ equals $\$$ all subsequent entries remain $\$$, by definition of $X_i^b$, we have $W^b \neq X_i^b$ if and only if $W_{i+1}^b \neq \$$. Therefore,
    \begin{align*}
        T-T_i\leq \Pr[W_{i+1}^0\neq \$]+\Pr[W_{i+1}^1\neq \$]=\alpha(i+1),
    \end{align*}
    finishing the proof.
\end{proof}

\begin{lemma}[The Maximal Coupling Lemma]\label{lem:maximal-couple}
    Let $Y^0$ and $Y^1$ be random variables with domain $\calY$. There exists a joint distribution for a pair of random variables $(Z^0, Z^1)$ on $\calY\times \calY$, called a \emph{coupling}, such that:
    \begin{enumerate}[left=0pt]
        \item [(i)]\label{item:max-coupling-identical-marginals} For every $b\in \{0,1\}$, the marginal distribution of $Z^b$ is identical to the distribution of $Y^b$.
        \item [(ii)]\label{item:max-coupling-sum-mins} Let $S^0=\supp(Y^0)$ and $S^1=\supp(Y^1)$. The probability of the event $Z^0=Z^1$ is maximized and satisfies
        $$\Pr[Z^0=Z^1]=\sum_{y\in \calY}\min\left\{\Pr[Z^0=y], \Pr[Z^1=y]\right\}= \sum_{y\in S^0\cap S^1}\min\left\{\Pr[Z^0=y], \Pr[Z^1=y]\right\}.$$
    \end{enumerate}
\end{lemma}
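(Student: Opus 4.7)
The plan is to exhibit the classical maximal coupling as an explicit mixture of a ``matched'' component (where $Z^0 = Z^1$) and an independent-sample component, then verify the two required properties.

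First I would set $p(y) = \min\{\Pr[Y^0 = y], \Pr[Y^1 = y]\}$ for each $y \in \calY$ and $\gamma = \sum_{y \in \calY} p(y)$; note that $\gamma \in [0,1]$, with $\gamma = 1$ exactly when $Y^0$ and $Y^1$ are identically distributed. In the degenerate case $\gamma = 1$ I would simply take $Z^0 = Z^1 = Y$ for a single draw $Y$ from the common distribution, and both conclusions hold trivially. For $\gamma < 1$ I would define $(Z^0, Z^1)$ by the following two-step procedure: with probability $\gamma$, draw $y$ according to $p(y)/\gamma$ and set $Z^0 = Z^1 = y$; otherwise (with probability $1 - \gamma$), independently sample $Z^0 \sim \pi^0$ and $Z^1 \sim \pi^1$, where $\pi^b(y) = (\Pr[Y^b = y] - p(y))/(1-\gamma)$. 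Each $\pi^b$ is a well-defined probability distribution since $\Pr[Y^b = y] \geq p(y)$ pointwise and $\sum_y (\Pr[Y^b = y] - p(y)) = 1 - \gamma$.

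Next I would verify condition~(i). For any $y \in \calY$ and $b \in \{0,1\}$, marginalising yields
\[ \Pr[Z^b = y] = \gamma \cdot \tfrac{p(y)}{\gamma} + (1-\gamma) \cdot \tfrac{\Pr[Y^b = y] - p(y)}{1-\gamma} = \Pr[Y^b = y], \]
so $Z^b$ has the same distribution as $Y^b$.

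For condition~(ii), the construction already gives $\Pr[Z^0 = Z^1] \geq \gamma = \sum_y p(y)$ from the matched branch alone. For the matching upper bound, which holds for \emph{every} coupling of $Y^0$ and $Y^1$, I would use the trivial inclusion $\{Z^0 = Z^1 = y\} \subseteq \{Z^0 = y\} \cap \{Z^1 = y\}$, giving $\Pr[Z^0 = Z^1 = y] \leq \min\{\Pr[Z^0 = y], \Pr[Z^1 = y]\} = p(y)$; summing over $y$ yields $\Pr[Z^0 = Z^1] \leq \sum_y p(y)$, so the construction is optimal and achieves the claimed value. The second equality in~(ii) is immediate, since $\min\{\Pr[Z^0 = y], \Pr[Z^1 = y]\} = 0$ whenever $y \notin S^0 \cap S^1$. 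The argument is classical and presents no real obstacle; the only mild care needed is isolating the $\gamma = 1$ case to avoid division by zero in the residual construction.
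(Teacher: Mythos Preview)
Your proof is correct and is the standard construction of the maximal coupling. The paper, however, does not prove this lemma at all: it is stated as a classical result and used as a black box in the proof of Lemma~\ref{lem:tv-couple}. So there is no ``paper's own proof'' to compare against; your argument simply supplies the well-known justification that the paper omits.

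One minor remark: your lower bound $\Pr[Z^0=Z^1]\ge\gamma$ together with the universal upper bound $\Pr[Z^0=Z^1]\le\gamma$ already forces equality, so the argument is complete as written. (It is also true, and worth noting for intuition, that the residual distributions $\pi^0$ and $\pi^1$ have disjoint supports, so the independent branch never produces a coincidence $Z^0=Z^1$; but you do not need this observation for correctness.)
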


\begin{lemma}[The Coupling Lemma]\label{lem:coupling-ineq}
    Let $Y^0$ and $Y^1$ be random variables with domain $\calY$. For every (not necessarily maximal) coupling $(Z^0, Z^1)$ of $Y^0$ and $Y^1$, where the marginals $Z^0$ and $Z^1$ are distributed as $Y^0$ and $Y^1$, we have
    $$\tv(Y^0, Y^1)\leq \Pr[Z^0\neq Z^1].$$
\end{lemma}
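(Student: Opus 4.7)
The plan is to prove the coupling inequality by directly bounding the quantity $\Pr[Y^0 \in S] - \Pr[Y^1 \in S]$ for an arbitrary subset $S \subseteq \calY$, and then taking the supremum over $S$ to obtain the total variation distance. Since the marginals of $Z^0$ and $Z^1$ match those of $Y^0$ and $Y^1$, it suffices to work entirely with the coupled pair $(Z^0,Z^1)$.

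The key step is a simple set-theoretic decomposition. For any $S \subseteq \calY$, I would write
\[
\Pr[Z^0 \in S] \;=\; \Pr[Z^0 \in S,\, Z^1 \in S] + \Pr[Z^0 \in S,\, Z^1 \notin S],
\]
and similarly split $\Pr[Z^1 \in S]$. Subtracting these two identities cancels the common term $\Pr[Z^0 \in S,\, Z^1 \in S]$, leaving
\[
\Pr[Z^0 \in S] - \Pr[Z^1 \in S] \;=\; \Pr[Z^0 \in S,\, Z^1 \notin S] - \Pr[Z^0 \notin S,\, Z^1 \in S].
\]

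Since the second term on the right is non-negative, I can drop it to obtain the upper bound $\Pr[Z^0 \in S,\, Z^1 \notin S]$. On the event $\{Z^0 \in S,\, Z^1 \notin S\}$ we certainly have $Z^0 \neq Z^1$, so $\Pr[Z^0 \in S,\, Z^1 \notin S] \leq \Pr[Z^0 \neq Z^1]$. Combining these inequalities and using Property~(i) of the coupling, we get $\Pr[Y^0 \in S] - \Pr[Y^1 \in S] \leq \Pr[Z^0 \neq Z^1]$ for every $S$. Taking the supremum over $S \subseteq \calY$ and applying Definition~\ref{def:tv} yields $\tv(Y^0, Y^1) \leq \Pr[Z^0 \neq Z^1]$, completing the proof.

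There is no substantive obstacle here: the argument is a standard two-line manipulation on the joint distribution and does not rely on any of the structural assumptions (discreteness, maximality) that appear elsewhere in the section. The only minor care needed is to note that the inequality holds for \emph{every} coupling, not just the maximal coupling of Lemma~\ref{lem:maximal-couple}, so the proof must not invoke any property beyond matching marginals.
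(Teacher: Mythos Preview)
Your proof is correct and is the standard argument for the coupling inequality. The paper does not actually supply its own proof of Lemma~\ref{lem:coupling-ineq}; it is stated as a known result (``The Coupling Lemma'') and used as a black box in the proof of Lemma~\ref{lem:tv-couple}, so there is nothing to compare against.
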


\begin{lemma}\label{lem:tv-couple}
    Let $k\in \N$, and let $Y^0=(Y_1^0, \dots, Y_k^0)$ and $Y^1=(Y_1^1, \dots, Y_k^1)$ be random variables. For $i\in[k-1]$ and $b\in\zo$, set $S_i^b=\supp(W_{1:i}^b)$ and $S_i=S_i^0\cap S_i^1$, with $S_0=S_0^b=\emptyset$. For $i\in[k]$, let $\Delta_i:S_{i-1}\to [0,1]$ be a function, and fix $\gamma\in [0,1]$. Suppose the following conditions hold:
    \begin{itemize}
        \item For every $i\in [k]$ and every $y_{1:i-1}\in S_{i-1}$, 
        $$\tv\left((Y_i^0|Y_{1:i-1}^0=y_{1:i-1}), (Y_i^1|Y_{1:i-1}^1=y_{1:i-1})\right)\leq \Delta_i(w_{1:i-1}).$$
        \item For every $y_{1:k-1}\in S_{k-1}$, $1-\prod_{j=1}^i \left(1-\Delta_j(w_{1:j})\right)\leq \gamma$.
    \end{itemize}
    Then, 
    $$\tv(Y^0, Y^1)\leq \gamma.$$
\end{lemma}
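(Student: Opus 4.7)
The plan is to prove Lemma~\ref{lem:tv-couple} by induction on $k$, using a maximal coupling that is peeled off one coordinate at a time. The base case $k=1$ follows directly: the first hypothesis yields $\tv(Y_1^0, Y_1^1) \leq \Delta_1(\emptyset)$, and the second hypothesis (read with the convention that the empty history lies in $S_0$) gives $\Delta_1(\emptyset) = 1 - (1-\Delta_1(\emptyset)) \leq \gamma$, so $\tv(Y^0,Y^1) \leq \gamma$.

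For the inductive step I would construct a coupling $(Z^0, Z^1)$ of $(Y^0, Y^1)$ as follows. First, apply Lemma~\ref{lem:maximal-couple} to maximally couple $(Y_1^0, Y_1^1)$, obtaining $\Pr[Z_1^0 = Z_1^1] = 1 - \tv(Y_1^0, Y_1^1) \geq 1 - \Delta_1(\emptyset)$; by Lemma~\ref{lem:maximal-couple}(ii) any agreement is necessarily at some $y_1 \in S_1$. Conditioned on the event $\{Z_1^0 = Z_1^1 = y_1\}$, use the inductive hypothesis, applied to the conditional pair $(Y_{2:k}^0 \mid Y_1^0 = y_1)$ and $(Y_{2:k}^1 \mid Y_1^1 = y_1)$, together with Lemma~\ref{lem:maximal-couple}, to produce a coupling of the tails that agrees with high probability. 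Conditioned on $\{Z_1^0 \neq Z_1^1\}$, sample $Z_{2:k}^0$ and $Z_{2:k}^1$ independently from the corresponding conditional distributions. A routine check shows that this recipe preserves the marginals $Z^b \sim Y^b$.

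The key algebraic step is that the inductive hypothesis applies to the conditional pair with the shifted parameter $\gamma' := (\gamma - \Delta_1(\emptyset))/(1 - \Delta_1(\emptyset))$. Rewriting
\[
\prod_{j=1}^k\bigl(1-\Delta_j(y_{1:j-1})\bigr) = \bigl(1-\Delta_1(\emptyset)\bigr)\cdot\prod_{j=2}^k\bigl(1-\Delta_j(y_{1:j-1})\bigr) \geq 1-\gamma
\]
yields $\prod_{j=2}^k(1-\Delta_j(y_{1:j-1})) \geq 1-\gamma'$ for every $(y_1, y_{2:k-1}) \in S_{k-1}$, while the first hypothesis for the conditional pair is inherited directly from the original one by restriction. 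Induction then gives $\tv\bigl((Y_{2:k}^0 \mid Y_1^0=y_1),\,(Y_{2:k}^1 \mid Y_1^1=y_1)\bigr) \leq \gamma'$ for every $y_1 \in S_1$, so the tail-coupling agrees with probability at least $1-\gamma'$ on the event $\{Z_1^0 = Z_1^1\}$. Combining,
\[
\Pr[Z^0 = Z^1] \;\geq\; \Pr[Z_1^0 = Z_1^1]\cdot(1-\gamma') \;\geq\; \bigl(1-\Delta_1(\emptyset)\bigr)(1-\gamma') \;=\; 1-\gamma,
\]
and Lemma~\ref{lem:coupling-ineq} concludes $\tv(Y^0, Y^1) \leq 1 - \Pr[Z^0 = Z^1] \leq \gamma$.

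The only delicate point is the degenerate case $\Delta_1(\emptyset) = 1$, where the denominator in $\gamma'$ vanishes; here the hypothesis forces $\gamma = 1$ and the conclusion is trivial, so one simply handles it separately. The restriction of the good event to histories $y_1 \in S_1$ is automatic from Lemma~\ref{lem:maximal-couple}(ii), and the identical-marginals condition needed for Lemma~\ref{lem:coupling-ineq} is verified by the explicit construction.
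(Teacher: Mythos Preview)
Your proof is correct. Both you and the paper argue via a coupling and Lemma~\ref{lem:coupling-ineq}, but the organization differs. The paper builds the coupling coordinate-by-coordinate (maximally couple $Y_i^0,Y_i^1$ whenever the histories agree, independently otherwise) and then bounds $\Pr[Z^0=Z^1]$ by a direct chain-rule expansion, using the product hypothesis in the form $1-\Delta_k(y_{1:k-1})\ge (1-\gamma)/\prod_{i<k}(1-\Delta_i)$ to peel off the \emph{last} coordinate and iterate backward. You instead induct on $k$ and peel off the \emph{first} coordinate, rescaling to $\gamma'=(\gamma-\Delta_1(\emptyset))/(1-\Delta_1(\emptyset))$ so the inductive hypothesis applies to the conditional tails. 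The two arguments are dual unrollings of the same telescoping bound $(1-\Delta_1)\cdots(1-\Delta_k)\ge 1-\gamma$; your inductive packaging is a bit cleaner and sidesteps the explicit double sum, while the paper's version makes the step-by-step coupling construction fully explicit. Your handling of the degenerate case $\Delta_1(\emptyset)=1$ and the observation that agreement at step one forces $y_1\in S_1$ (so the inductive hypothesis is applicable) are both in order.
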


\begin{proof}
    We construct a coupling $\left(Z^0=(Z_1^0, \dots, Z_k^0), Z^1=(Z_1^1, \dots, Z_k^1)\right)$ for $(Y^0, Y^1)$ and show that $\Pr[Z^0 \neq Z^1] \leq \gamma$. By Lemma~\ref{lem:coupling-ineq}, this implies $\tv(Y^0,Y^1)\leq \gamma$. Throughout the proof, let $\calY$ denote the union of all possible outcomes of $Y_i^b$ over $i\in[k]$ and $b\in\zo$.

    Define $(Z_1^0,Z_1^1)$ to be the maximal coupling of $Y_1^0$ and $Y_1^1$ given by Lemma~\ref{lem:maximal-couple}. For $i=2$ to $k$, given samples $(Z_{1:i-1}^0,Z_{1:i-1}^1)$, we define $(Z_i^0,Z_i^1)$ as follows:
    \begin{itemize}[left=0pt]
        \item If the histories are identical, i.e.\ $Z_{1:i-1}^0=Z_{1:i-1}^1=y_{1:i-1}$, then $(Z_i^0,Z_i^1)$ is drawn from the maximal coupling of teh distributions of ${Y_i^0\mid Y_{1:i-1}^0=y_{1:i-1}}$ and ${Y_i^1\mid Y_{1:i-1}^1=y_{1:i-1}}$, given by Lemma~\ref{lem:maximal-couple}.
        \item Otherwise, if $Z_{1:i-1}^0=y_{1:i-1}^0$ and $Z_{1:i-1}^1=y_{1:i-1}^1$ differ, then $Z_i^0$ and $Z_i^1$ are drawn independently from their respective conditional distributions ${Y_i^0\mid Y_{1:i-1}^0=y_{1:i-1}^0}$ and ${Y_i^1\mid Y_{1:i-1}^1=y_{1:i-1}^1}$.
    \end{itemize}
    By the chain rule, the probability $\Pr[Z^0=Z^1]=\sum_{y_{1:k}\in S_k}\Pr[Z^0= Z^1\nospaceeq y_{1:k}]$ equals
    \begin{align*}
        &\sum_{y_{1:k}\in S_k} \prod_{i=1}^k \Pr\!\left[ Z_i^0\nospaceeq Z_i^1\nospaceeq y_i \,\middle|\, Z_{1:i-1}^0\nospaceeq Z_{1:i-1}^1\nospaceeq y_{1:i-1}\right]=\\
        &\sum_{y_{1:k-1}\in S_{k-1}} \prod_{i=1}^{k-1} \Pr\!\left[ Z_i^0\nospaceeq Z_i^1\nospaceeq y_i \,\middle|\, Z_{1:i-1}^0\nospaceeq Z_{1:i-1}^1\nospaceeq y_{1:i-1}\right]\negspace\sum_{\substack{y_k\in \calY\\\text{s.t. } y_{1:k}\in S_k}}\negspace\Pr\!\left[ Z_k^0\nospaceeq Z_k^1\nospaceeq y_k \,\middle|\, Z_{1:k-1}^0\nospaceeq Z_{1:k-1}^1\nospaceeq y_{1:k-1}\right].
    \end{align*}
    Conditioning on $Z_{1:k-1}^0=Z_{1:k-1}^1= y_{1:k-1}$, the pair $(Z_k^0, Z_k^1)$ is the maximal coupling of ${Y_i^0\mid Y_{1:i-1}^0=y_{1:i-1}^0}$ and ${Y_i^1\mid Y_{1:i-1}^1=y_{1:i-1}^1}$. The innermost sum is taken over the common support of these conditional random variables. Thus, by Lemma~\ref{lem:maximal-couple}, this summation equals
    $$1-\tv\left((Y_{k}^0\mid Y_{1:k-1}^0=y_{1:k-1}), (Y_{k}^1\mid Y_{1:k-1}^1=y_{1:k-1})\right).$$
    Hence, by the first assumption, 
    \begin{align}\label{eq:sum-k-geq-delta-k}
        \sum_{\substack{y_k\in \calY\\\text{s.t. } y_{1:k}\in S_k}}\negspace\Pr\!\left[ Z_k^0\nospaceeq Z_k^1\nospaceeq y_k \,\middle|\, Z_{1:k-1}^0\nospaceeq Z_{1:k-1}^1\nospaceeq y_{1:k-1}\right]\geq 1-\Delta_k(y_{1:k-1})
    \end{align}
    By the second assumption, we know
    $$1-\Delta_k(y_{1:k-1})\geq \frac{1-\gamma}{\prod_{i=1}^{k-1}(1-\Delta_i(y_{1:i-1}))}.$$
    Therefore,
    \begin{align*}
        \Pr[Z^0=Z^1]
        &\geq (1-\gamma) \sum_{y_{1:k-1}\in S_{k-1}} \prod_{i=1}^{k-1} \frac{\Pr\!\left[ Z_i^0\nospaceeq Z_i^1\nospaceeq y_i \,\middle|\, Z_{1:i-1}^0\nospaceeq Z_{1:i-1}^1\nospaceeq y_{1:i-1}\right]}{1-\Delta_{i}(y_{1:i-1})}\\
        &= (1-\gamma) \sum_{y_{1:k-2}\in S_{k-2}} \prod_{i=1}^{k-2} \frac{\Pr\!\left[ Z_i^0\nospaceeq Z_i^1\nospaceeq y_i \,\middle|\, Z_{1:i-1}^0\nospaceeq Z_{1:i-1}^1\nospaceeq y_{1:i-1}\right]}{1-\Delta_{i}(y_{1:i-1})}\\
        &\qquad\qquad\qquad\qquad\quad\sum_{\substack{y_{k-1}\in \calY\\\text{s.t. } y_{1:k-1}\in S_{k-1}}}\negspace\frac{\Pr\!\left[ Z_{k-1}^0\nospaceeq Z_{k-1}^1\nospaceeq y_{k-1} \,\middle|\, Z_{1:k-2}^0\nospaceeq Z_{1:k-2}^1\nospaceeq y_{1:k-1}\right]}{1-\Delta_{k-1}(y_{1:k-2})}.
    \end{align*}
    The term $\Delta_{k-1}(y_{1:k-2})$ is independent of $y_{k-1}$. Moreover, by an argument identical to the one for Inequality~\ref{eq:sum-k-geq-delta-k}, we have
    $$\sum_{\substack{y_{k-1}\in \calY\\\text{s.t. } y_{1:k-1}\in S_{k-1}}}\negspace\Pr\!\left[ Z_{k-1}^0\nospaceeq Z_{k-1}^1\nospaceeq y_{k-1} \,\middle|\, Z_{1:k-2}^0\nospaceeq Z_{1:k-2}^1\nospaceeq y_{1:k-1}\right]\geq 1-\Delta_{k-1}(y_{1:k-2}).$$
    Thus,
    \begin{align*}
        \Pr[Z^0=Z^1] \geq 
        (1-\gamma) \sum_{y_{1:k-2}\in S_{k-2}} \prod_{i=1}^{k-2} \frac{\Pr\!\left[ Z_i^0\nospaceeq Z_i^1\nospaceeq y_i \,\middle|\, Z_{1:i-1}^0\nospaceeq Z_{1:i-1}^1\nospaceeq y_{1:i-1}\right]}{1-\Delta_{i}(y_{1:i-1})}
    \end{align*}
    A backward iteration with the same proof shows that $\Pr[Z^0=Z^1]\geq 1-\gamma$. Consequently, $\Pr[Z^0\neq Z^1]\leq \gamma$, and by Lemma~\ref{lem:coupling-ineq}, $\tv(Y^0, Y^1)\leq \gamma$. 
\end{proof}

\begin{proof}[Proof of Lemma~\ref{lem:filter-comp-eps-eq-zero}]
    By Corollary~\ref{cor:dp-deterministic-adversary-cm} and the definition of  concurrent composition using the continual mechanism $\extconcomp$, it suffices if we show that for every deterministic adversary $\cA$, the random variables $\View(\cA, \V[f^{\delta}_{\RR}]\circstar\I(0)\circstar\extconcomp)$ and $\View(\cA, \V[f^{\delta}_{\RR}]\circstar\I(1)\circstar\extconcomp)$ are $(0, \delta)$-indistinguishable. Fix a deterministic adversary $\cA$.
    
    Since $\cA$ is deterministic, its sequence of random coins is empty (or predetermined), and its $i$-th query message to the verifier $\V[f^{\delta}_{\RR}]$ can be determined by the first $i-1$ responses from the verifier. Therefore, there exists a post-processing function $g$ such that for every $b\in\zo$, $g(V^b)$ is equivalently distributed as $\View(\cA, \V[f^{\delta}_{\RR}]\circstar\I(b)\circstar\extconcomp)$, where $V^b$ is the (finite) sequence of responses $V^b_i$ sent by $\V[f^{\delta}_{\RR}]\circstar\I(b)\circstar\extconcomp$ to $\cA$. By the (non-interactive) post-processing lemma, it suffices to prove that the random variables $V^0$ and $V^1$ are $(0, \delta)$-indistinguishable.

    For each $b\in\zo$, define $W^b=(W_1^b, W_2^b, \dots)$ as the sequence $V^b$ padded with an infinite number of $\$$ symbols. For short, we denote the subsequences $(W_1^b, \dots, W_i^b)$ and $(W_i^b, W_{i+1}^b, \dots)$ by $W_{1:i}^b$ and $W_{i:}^b$, respectively. We also define $S_i^b=\supp(W_{1:i}^b)$. By Lemma~\ref{lem:dp-eq-tv}, to show $V^0$ and $V^1$ are $(0, \delta)$-indistinguishable, we must prove $\tv(V^0, V^1)\leq \delta$. Moreover, by Lemma~\ref{lem:tv-extend-inf}, to prove $\tv(V^0, V^1)\leq \delta$, it suffices to prove for every $k\in\N$, $\tv(W_{1:k}^0, W_{1:k}^1)\leq \delta$. Fix $k\in\N$. In the rest of the proof, we will show that $\tv(W_{1:k}^0, W_{1:k}^1)\leq \delta$.
    
    As $\cA$ is deterministic, for every $i\in\N$ and $w_{1:i-1}=(w_1, \dots, w_{i-1})\in S_{i-1}^b$, the random variable $W_i^b$ conditioned on $W_{1:i-1}=w_{1:i-1}$ is distributed as follows:
    \begin{itemize}
        \item If $w_{i-1}=\$$ (i.e., the communication has halted before) or the design of $\cA$ is such that it halts the communication after receiving the $(i-1)$-th response $w_{i-1}$ given the history $w_{1:i-1}$, then $W_{1:i-1}=\$$ with probability $1$.
        \item Otherwise, $W_{1:i-1}$ is equivalently distributed to $\RR_{0, \delta_i'}(b)$, where $\delta_i'=\delta_i'(w_{1:i-1}, \cA)$ is determined by the history $w_{1:i-1}$ and the design of $\cA$.
    \end{itemize}
    Define $\Delta_i(w_{1:i-1})=0$ in the former case and $\Delta_i(w_{1:i-1})=\delta_i'(w_{1:i-1}, \cA)$ in the latter. By Lemmas~\ref{lem:dp-eq-tv} and~\ref{lem:rr-is-dp}, for every $i\in\N$ and every $w_{1:i-1}\in S_{i-1}^b$, 
    $$\tv\left((W_i^0|W_{1:i-1}^0=w_{1:i-1}), (W_i^1|W_{1:i-1}^1=w_{1:i-1})\right)\leq \Delta_i(w_{1:i-1}).$$
    Fix $w_{1:k-1}\in S_{k-1}^0$. Let $j$ be the smallest index such that $w_j=\$$. If there is no such index, set $j=k$. By Definition~\ref{def:ver-func-sum-delta-rr}~(iii), $1-\prod_{i=1}^{j}(1-\Delta_i(w_{1:i-1}))\leq \delta$ for every $w_{1:i-1}\in S_{i-1}^0\cap S_{i-1}^1$. Moreover, by the definition of the functions $\Delta_i$, we have $\prod_{i=j+1}^{k}(1-\Delta_i(w_{1:i-1}))=1$. Therefore, $1-\prod_{i=1}^k (1-\Delta_i(w_{1:i-1}))\leq \delta$. Thus, by Lemma~\ref{lem:tv-couple},
    \begin{align*}
        \tv(W_{1:k}^0, W_{1:k}^1)\leq \delta,
    \end{align*}
    finishing the proof.
\end{proof}

\section{Concurrent Filter Composition}\label{sec:filter}
In the parallel composition of continual mechanisms, the adversary sends pairs of identical queries to all but $k$ mechanisms with predetermined privacy parameters. In this section, we study a concurrent composition where the adversary can ask pairs of non-identical queries from all created mechanisms and choose the privacy parameters of all mechanisms adaptively. To limit the adversary’s power in this more general setting, we impose restrictions on the sequence of adaptively chosen privacy parameters. These restrictions are captured using a class of functions known as \emph{filters}:

\begin{definition}\label{def:filter}
    A \emph{filter} is a function $\filt:\left([0, \infty)\times [0, 1]\right)^*\to \{\top, \bot\}$ that takes a finite sequence of privacy parameters $(\eps_i, \delta_i)$ and returns either $\top$ or $\bot$. The output $\bot$ indicates that the privacy budget has been exhausted, and the interaction must be halted.
\end{definition}

Given a filter $\filt$, we define a verification function $f^\filt$ as follows and refer to the $f^\filt$-concurrent composition of CMs as the concurrent $\filt$-filter composition of CMs. The verification function $f^\filt$ verifies the validity of message formats, extracts the privacy parameters from the creation queries, and applies $\filt$ on them.

\begin{definition}\label{def:ver-func-filter}
    Let $\filt$ be a filter. For every $t\in\N$ and every input sequence of messages $m_1, \dots, m_t$, the verification function $f^\filt$ returns $\top$ if and only if all of the following conditions hold:
    \begin{enumerate}[left=0pt]
        \item [(i)] The sequence $m_1, \dots, m_t$ is suitable.
        \item [(ii)] Let $\ell$ denote the number of pairs of identical creation queries in the sequence $m_1, \dots, m_t$, and let $\sigma=\left((\eps_j', \delta_j')\right)_{j=1}^\ell$ be the corresponding sequence of privacy parameters for the mechanisms created in those pairs. It must hold that $\filt(\sigma)=\top$.
    \end{enumerate}
\end{definition}

Rogers et al.~\cite{rogers2016privacy} studied the $\filt$-filter composition of NIMs for the first time. More recently, Haney et al.~\cite{haney2023concurrent} showed that if the $\filt$-filter composition of NIMs is $(\eps, \delta)$-DP, then the same guarantee extends to the concurrent $\filt$-filter composition of IMs. We further generalize this result to the concurrent $\filt$-filter composition of CMs. To state our result, we need to define the $\filt$-filter composition of randomized response mechanisms:

\begin{definition}[$f^\filt_{\RR}$]
    Let $\filt$ be a filter. For every $t\in\N$ and every input sequence of messages $m_1, \dots, m_t$, the function $f^\filt_{\RR}$ returns $\top$ if and only if all of the following conditions hold:
    \begin{enumerate}[left=0pt]
        \item [(i)] $f^\filt(m_1, \dots, m_t)=\top$.
        \item [(ii)] All mechanisms created in $m_1, \dots, m_t$ are randomized response mechanisms.
    \end{enumerate}
\end{definition}
We refer to the $f^\filt_{\RR}$-concurrent composition of CMs as the $\filt$-filter composition of randomized response mechanisms.

\begin{theorem}\label{thm:filter-comp-cm}
    For every filter $\filt$, if the $\filt$-filter composition of randomized response mechanisms is $(\eps, \delta)$-DP, then the concurrent $\filt$-filter composition of CMs is also $(\eps, \delta)$-DP.
\end{theorem}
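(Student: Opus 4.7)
The plan is to reduce the $\filt$-filter composition of continual mechanisms to the $\filt$-filter composition of randomized response mechanisms, following the same template as the proofs of Theorem~\ref{thm:comp-fixed-cm}, Theorem~\ref{thm:comp-fixed-param-cm}, and Lemma~\ref{lem:basic-comp-fixed-and-filter}. Concretely, I would construct a deterministic IPM $\cP$ with a singleton initial state and prove that, for each $b\in\zo$, the interactive mechanisms $\V[f^{\filt}]\circstar\I(b)\circstar\extconcomp$ and $\V[f^{\filt}]\circstar\cP\circstar\V[f^{\filt}_{\RR}]\circstar\I(b)\circstar\extconcomp$ are equivalent. Once equivalence is established, Lemma~\ref{lem:post-im} applied to the IPM $\V[f^{\filt}]\circstar\cP$ (which has a singleton initial state) and the $(\eps,\delta)$-DP mechanism $\V[f^{\filt}_\RR]\circstar\I\circstar\extconcomp$ immediately yields the desired $(\eps,\delta)$-DP guarantee for $\V[f^{\filt}]\circstar\I\circstar\extconcomp$.

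The construction of $\cP$ mirrors the one in Theorem~\ref{thm:comp-fixed-param-cm}. When $\cP$ receives from the left a pair of identical creation queries $(\cM_j,s_j,\eps_j,\delta_j,f_j)^2$, it obtains via Lemma~\ref{lem:cm-post-rr-lyu} an IPM $\cT_j$ satisfying $\cT_j\circstar \RR_{\eps_j,\delta_j}(b)\equiv \V[f_j]\circstar\I(b)\circstar\cM_j$, stores $\cT_j$ together with its initial state and the parameters $(\eps_j,\delta_j)$, and sends to the right a pair of identical creation queries instantiating $\RR_{\eps_j,\delta_j}$ with the verification function that witnesses its $(\eps_j,\delta_j)$-DP. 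The acknowledgment $\top$ is forwarded back. When $\cP$ receives $\big((q_0,j),(q_1,j)\big)$ from the left, it runs $\cT_j$ on the left message $(q_0,q_1)$; by Remark~\ref{rem:at-most-one-interaction}, $\cT_j$ asks its right mechanism at most once, and when it does, $\cP$ sends $\big((0,j'),(1,j')\big)$ to the right, where $j'$ indexes the RR mechanism corresponding to $\cM_j$, and forwards the response back to $\cT_j$. The final left message emitted by $\cT_j$ is returned to the left. By construction, for each $b\in\zo$ the random variable $\RR_{\eps_j,\delta_j}(b)$ is sampled exactly once per created mechanism, so by Lemma~\ref{lem:cm-post-rr-lyu} each simulated branch is identically distributed to $\V[f_j]\circstar\I(b)\circstar\cM_j$, yielding the required equivalence.

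The main technical step, and the one I expect to require most care, is checking that the message sequence $\cP$ sends to the right is always $f^{\filt}_{\RR}$-valid, so that the inserted verifier $\V[f^{\filt}_\RR]$ never halts. Suitability of the right-side sequence follows directly from the design of $\cP$ (creation queries are sent as identical pairs; RR query pairs have the form $\big((0,j'),(1,j')\big)$ and point to existing RR mechanisms in the right order). For the $\filt$ condition, observe that the multiset of privacy parameters of the RR mechanisms $\cP$ instantiates is, by construction, exactly the multiset of parameters of the CMs created on the left, in the same order; hence if the left sequence is $f^{\filt}$-valid then the right sequence is $f^{\filt}_{\RR}$-valid. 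A secondary subtlety, inherited from Lemma~\ref{lem:cm-post-rr-lyu}, is the discrete-answer-distribution hypothesis on each created $\cM_j$; I would either fold this requirement into the verification function $f^{\filt}$ (as in Definition~\ref{def:ver-func-fixed-param-comp-cm}(ii)) or, if one wishes to remove it, substitute Corollary~\ref{cor:dp-cm-post-irr} and $\irr_{\eps_j,\delta_j}$ in place of $\RR_{\eps_j,\delta_j}$, with a correspondingly adjusted right-side verification function, arguing as in the proof of Theorem~\ref{thm:parallel-comp-approx}.
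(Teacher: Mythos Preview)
Your proposal is correct and follows essentially the same route as the paper: both construct an IPM $\cP$ that, upon each creation query $(\cM_j,s_j,\eps_j,\delta_j,f_j)^2$, instantiates $\RR_{\eps_j,\delta_j}$ on the right and simulates $\V[f_j]\circstar\I(b)\circstar\cM_j$ via the $\cT_j$ of Lemma~\ref{lem:cm-post-rr-lyu}, then argue that $\V[f^{\filt}_{\RR}]$ never halts because the right-side parameter sequence coincides with the left-side one. The only cosmetic difference is that the paper eagerly queries each $\RR_{\eps_j,\delta_j}$ at creation time and stores $r_j$, whereas you defer this until $\cT_j$ requests it; both are equivalent.
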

\begin{proof}
    Let $f=f^\filt$ and $f_{\RR}=f^\filt_{\RR}$. This proof is identical to the proof of Theorem~\ref{thm:comp-fixed-cm}, except for the following modifications: First, the right mechanism of the IPM $\cP$ is $\V[f_{\RR}]\circstar\I(b)\circstar\extconcomp$ instead of $\comp[\RR_{\eps_1, \delta_1}, \dots, \RR_{\eps_k, \delta_k}]\left((b)_{i=1}^k\right)$. Second, the variable $\sigma_1$ is set differently. In Theorem~\ref{thm:comp-fixed-cm}, $\cP$ initially interacted with its right mechanism, $\comp[\RR_{\eps_1, \delta_1}, \dots, \RR_{\eps_k, \delta_k}]\left((b)_{i=1}^k\right)$, and set $\sigma_1$ to its output sequence $(r_1, \dots, r_k)$, without ever changing it. Here, $\cP$ interacts with its right mechanism, $\V[f_{\RR}]\circstar\I(b)\circstar\extconcomp$, each time it receives a pair of creation queries as a left message and appends a randomized response $r_i$ to $\sigma_1$.
    
    Specifically, as before, the state of $\cP$ consists of two sequences $\sigma_1$ and $\sigma_2$, both initialized empty. Upon receiving a left message of the form $(\cM_i, s_i, \eps_i, \delta_i, f_i)^2$, $\cP$ takes the following steps: First, $\cP$ appends the pair $(\cT_i, t_i)$ to the sequence $\sigma_2$, where $\cT_i$ is the IPM corresponding to $\cM_i$ from Lemma~\ref{lem:cm-post-rr-lyu} and $t_i$ is its initial state. Then, it sends the message $(\beta_{\eps_i, \delta_i}, \beta_{\eps_i, \delta_i})$ to its right mechanism $\V[f_{\RR}]\circstar\I(b)\circstar\extconcomp$, where $\beta_{\eps_i, \delta_i}$ is the creation query corresponding to $\RR_{\eps_i, \delta_i}$. Upon receiving the acknowledgment response $\top$ from $\V[f_{\RR}]\circstar\I(b)\circstar\extconcomp$, $\cP$ discards it and sends the message $\left((0,i), (1,i)\right)$ to its right mechanism again. This message will be transformed by $\I(b)$ to $(b, i)$. Receiving $(b, i)$, the mechanism $\extconcomp$ sends $b$ to $\RR_{\eps_i, \delta_i}$ and returns its response $r_i$, which will be forwarded unchanged to $\cP$. $\cP$ then appends $r_i$ to $\sigma_1$. Finally, $\cP$ returns the acknowledgment message $\top$ to its left mechanism.
    
    Upon receiving a left message $\left((q_0, i), (q_1, i)\right)$, exactly the same as in the proof of Theorem~\ref{thm:comp-fixed-cm}, $\cP$ uses $r_i$ and $(\cT_i, t_i)$ to generate a left response.
    To ensure that the privacy argument from Theorem~\ref{thm:comp-fixed-cm} still holds under this modified construction, we must verify that $\cP$ can always provide each $\cT_i$ with a fresh output of $\RR_{\eps_i, \delta_i}(b)$. This follows directly from the definitions of $f$ and $f_{\RR}$ and the design of $\cP$. Specifically, the verifier $\V[f_{\RR}]$ never halts the communication and always returns the requested randomized response.
\end{proof}

\section{R\'enyi DP}\label{sec:rdp}
In this section, we show concurrent composition theorems for R\'enyi differential privacy (RDP), defined as follows.

\begin{definition}[R\'enyi Divergence]
    For $\alpha > 1$, the $\alpha$-R\'enyi divergence between two distributions $P$ and $Q$ is defined as
    \[
    D_{\alpha}(P \| Q)
    = \frac{1}{\alpha - 1} \log \mathbb{E}_{y \sim Q}
    \left[
      \left(\frac{P(y)}{Q(y)}\right)^{\alpha}
    \right].
    \]
    By taking the limit as $\alpha\to\infty$, we obtain the $\infty$-R\'enyi divergence,
    \[
    D_{\infty}(P \| Q)
    = \lim_{\alpha\to\infty} D_{\alpha}(P \| Q)
    = \log \max_{y} \frac{P(y)}{Q(y)}.
    \]
\end{definition}

\begin{definition}[R\'enyi Differential Privacy (RDP)~\cite{mironov2017renyi}]\label{def:rdp-nim}
    For $1<\alpha \leq \infty$ and $\eps \geq 0$, a NIM $\cM : \calX \to \calY$ is said to be $(\alpha, \eps)$-R\'enyi differentially private (or $(\alpha, \eps)$-RDP) w.r.t. a neighbor relation $\sim$ if for all neighboring datasets $x_0, x_1 \in \calX$, the $\alpha$-R\'enyi divergence between the output distributions satisfies
    \[
    D_{\alpha}\left(\cM(x_0), \cM(x_1)\right) \leq \eps \qquad \text{ and }\qquad D_{\alpha}\left(\cM(x_1), \cM(x_0)\right)\leq \eps.
    \]
\end{definition}

Note that the above definition includes pure DP as a special case. Lyu~\cite{lyu2022composition} and Vadhan et al.~\cite{vadhan2022concurrent} extend this definition to IMs, which we further generalize to CMs:

\begin{definition}[RDP for CMs]\label{def:rdp-cm}
    For $1<\alpha \leq \infty$ and $\eps\geq 0$, a continual mechanism $\cM$ is said to be $(\alpha, \eps)$-R\'enyi differentially private (or $(\alpha, \eps)$-RDP) w.r.t. a verification function $f$ against adaptive adversaries if for every adversary $\cA$, the random variables $V^0=\View(\cA, \V[f]\circstar\I(0)\circstar\cM)$ and $V^1=\View(\cA, \V[f]\circstar\I(1)\circstar\cM)$ satisfy
    \[
    D_{\alpha}(V^0, V^1) \leq \eps \qquad \text{ and }\qquad D_{\alpha}(V^1, V^0)\leq \eps.
    \]
\end{definition}

Recall the concurrent composition of $k$ CMs with fixed privacy parameters $\{(\eps_i, \delta_i)\}_{i=1}^k$, which was formally defined via a verification function $f^{\eps_1, \delta_1, \dots, \eps_k, \delta_k}$ in Definition~\ref{def:ver-func-fixed-param-comp-cm}. This verification function ensures that every creation query $(\cM_j', s_j', \eps_j', \delta_j', f_j')$ is truthful, meaning that the mechanism $\cM_j'$ is $(\eps_j', \delta_j')$-DP w.r.t. the verification function $f_j'$, and the multiset of parameters $(\eps_j', \delta_j')$ for all created mechanisms forms a sub-multiset of the predetermined parameters $\{(\eps_i, \delta_i)\}_{i=1}^k$.

For RDP, creation queries are of the form $(\cM_j', s_j', \alpha_j', \eps_j', f_j')$, and the corresponding concurrent composition verification function must ensure that each $\cM_j'$ is $(\alpha_j', \eps_j')$-RDP with respect to $f_j'$. Fixing $\alpha$ and $\eps_1, \dots, \eps_k$, this verification function must check that all created mechanisms satisfy $\alpha_j' = \alpha$, and that the multiset of parameters $\eps_j'$ is a sub-multiset of the prefix multiset $\{\eps_i\}_{i=1}^k$. With these modifications to Definition~\ref{def:ver-func-fixed-param-comp-cm}, we denote the resulting verification function by $f^{\eps_1, \dots, \eps_k}_{\mathrm{RDP}, \alpha}$ and refer to the $f^{\eps_1, \dots, \eps_k}_{\mathrm{RDP}, \alpha}$-concurrent composition of CMs as the \emph{$\alpha$-RDP concurrent composition} of CMs with parameters $\eps_1, \dots, \eps_k$.

Lyu~\cite{lyu2022composition} analyzed the concurrent composition of $k$ fixed IMs $\cM_1, \dots, \cM_k$, where each $\cM_i$ is $(\alpha, \eps_i)$-RDP w.r.t. a neighbor relation $\sim_i$. The analysis considers an adversary that concurrently interacts with $\cM_1, \dots, \cM_k$, initialized with states (datasets) $s_1, \dots, s_k$, and compares this interaction to one where the mechanisms are initialized with neighboring states $s_1', \dots, s_k'$, such that $s_i \sim_i s_i'$ for all $i \in [k]$. Lyu shows that the $\alpha$-R\'enyi divergence between the resulting adversarial view distributions is at most $\eps = \sum_{i=1}^k \eps_i$. This result was originally proved to hold assuming that the number of queries made by the adversary is bounded by a fixed integer $T$. However, Haney et al.~\cite{haney2023concurrent} later demonstrated that this assumption can be removed. (See the application of the result of Van Erven and Harremoës~\cite{van2014Renyi} in their paper.)

We can replace each IM $\cM_i$ by a \emph{universal interactive mechanism} $\mathcal{UM}_i$ whose initial state space is $\zo$. Given an initial state $b \in \zo$, the mechanism $\mathcal{UM}_i$ receives as its first message a creation query of the form $(\cM_i', s_i', \alpha, \eps_i, f_i)$ and executes $\V[f_i] \circstar \I(b) \circstar \cM_i'(s_i')$ from that point on. If the initial message is malformed or if $\cM_i'$ does not satisfy $(\alpha, \eps_i)$-RDP w.r.t. $f_i$, then $\mathcal{UM}_i$ immediately halts the interaction. It is straightforward to verify that $\mathcal{UM}_i$ is $(\alpha, \eps_i)$-RDP w.r.t. the neighbor relation $\sim_{01}$ defined by $0 \sim_{01} 1$. By construction, the concurrent composition of the universal mechanisms $\mathcal{UM}_1, \dots, \mathcal{UM}_k$ is equivalent to the $\alpha$-RDP concurrent composition of CMs with parameters $\eps_1, \dots, \eps_k$. Hence, the result of Lyu~\cite{lyu2022composition} for IMs implies the following.

\begin{corollary}\label{cor:comp-fixed-param-rdp}
    For $1<\alpha \leq \infty$, $k\in\N$, and $\eps_1, \dots, \eps_k\geq 0$, the $\alpha$-RDP concurrent composition of CMs with parameters $\eps_1, \dots, \eps_k$ is $(\alpha, \sum_{i=1}^k\eps_i)$-RDP.
\end{corollary}

Previously, the concurrent \emph{parallel} composition of CMs was defined via the verification function $f^{\eps_1, \delta_1, \dots, \eps_k, \delta_k}_\infty$ in Definition~\ref{def:ver-func-par-fixed-param-comp-cm}. Analogously, we define the verification function $f^{\eps_1, \dots, \eps_k}_{\infty, \mathrm{RDP}, \alpha}$ for RDP with the following modifications: (i) Creation queries are of the form $(\cM_j', s_j', \alpha_j', \eps_j', f_j')$ rather than $(\cM_j', s_j', \eps_j', \delta_j', f_j')$. (ii) The verification function $f^{\eps_1, \dots, \eps_k}_{\infty, \mathrm{RDP}, \alpha}$ ensures that each $\cM_j'$ is $(\alpha_j', \eps_j')$-RDP w.r.t. $f_j'$, and that all $\alpha_j'$ equal the fixed $\alpha$. We refer to the $f^{\eps_1, \dots, \eps_k}_{\infty, RDP, \alpha}$-concurrent composition as \emph{the $\alpha$-RDP concurrent $k$-sparse parallel composition of CMs with privacy parameters $\eps_1, \dots, \eps_k$}. The following theorem shows that this concurrent composition is not necessarily R\'enyi private:

\begin{theorem}\label{thm:counter-example-rdp}
    For every $1<\alpha < \infty$ and $\eps \geq 0$, the $f^{\eps}_{\infty, RDP, \alpha}$-concurrent composition is not $(\alpha, \eps')$-DP for every $\eps'\geq 0$.
\end{theorem}

\begin{proof}
    Recall the proof of Theorem~\ref{thm:counter-example}, in which an adversary $\cA_\delta$ repeatedly created instances of a $(0, \delta)$-DP CM, denoted $\cM_\delta$, with a pair of zeros as its first input pair until one mechanism returned $\bot$ as response. At that point, $\cA_\delta$ sent the input pair $(0,1)$ to the last mechanism, and the mechanism responded with the bit held by the identifier, thereby revealing it. In that construction, $\cM_\delta$ was $(0, \delta)$-DP w.r.t. a verification function $g$ that was defined to accept only message sequences of length at most two, where each message was a pair of bits and at most one pair included different bits. 

    This proof is identical to the proof of Theorem~\ref{thm:counter-example}, except that the adversary $\cA_{\alpha, \eps, \eps'}$ creates instances of a mechanism $\cM_{\alpha, \eps, \eps'}$ that is $(\alpha, \eps)$-RDP w.r.t. the (same) verification function $g$, and instead of a full disclosure of the identifier's bit, we will show that the $f^{\eps}_{\infty, RDP, \alpha}$-concurrent composition fails to satisfy $(\alpha, \eps')$-DP.

    \medskip\noindent\underline{Definition of $\cM_{\alpha, \eps, \eps'}$:}
    Like $\cM_\delta$, the mechanism $\cM_{\alpha, \eps, \eps'}$ receives single-bit inputs and outputs a response in $\{0,1,\top, \bot\}$. Upon receiving its first input bit, $\cM_{\alpha, \eps, \eps'}$ outputs $\bot$ with probability
    \[
        p = \min\left\{\frac{1}{2}, \eps\cdot(\alpha-1)\cdot6^{-\alpha} \cdot 2^{-\eps'\cdot(\alpha-1)}\right\},
    \]
    and outputs $\top$ otherwise. 
    If the first response was $\top$, then on its second input, $\cM_{\alpha, \eps, \eps'}$ again outputs $\bot$ with probability $p$ and $\top$ otherwise. However, if the first response was $\bot$, then on its next input it outputs a bit correlated with the input bit. In the $(0, \delta)$-DP construction, $\cM_\delta$ revealed the input bit exactly, resulting in a full disclosure of the private bit. In contrast, $\cM_{\alpha, \eps, \eps'}$ outputs the true input bit with probability
    \[
        q = 1 - \frac{3}{4}2^{-2-\eps'\frac{\alpha-1}{\alpha}},
    \]
    and outputs the opposite bit otherwise. Note that $q\geq 1-\frac{3}{4}2^{-2}> \frac{3}{4}$ and $q \leq 1$.

    \medskip\noindent\underline{$\cM_{\alpha, \eps, \eps'}$ is $(\alpha, \eps)$-RDP w.r.t. $g$:}
    By the definition of the verification function $g$, the mechanism $\V[g] \circstar \I \circstar \cM_{\alpha, \eps, \eps'}$ interacts with an adversary at most twice and accepts a pair of bits each time, where at most one pair contains differing bits. By construction of $\cM_{\alpha, \eps, \eps'}$, the $\alpha$-R\'enyi divergence between the adversary’s views is maximized when the adversary first sends identical bits, and when observing the response $\bot$, it sends a pair of differing bits as the second query. Let $\cA$ denote such an adversary. For $b \in \zo$, we denote the view of $\cA$ interacting with $\V[g] \circstar \I(b) \circstar \cM_{\alpha, \eps, \eps'}$ by $V_\cA^b$.

    For every view $v$ where the first response of $\cM_{\alpha, \eps, \eps'}$ is $\top$, the probabilities $\Pr[V_\cA^0 = v]$ and $\Pr[V_\cA^1 = v]$ are equal; hence,
    $$\left(\frac{\Pr[V_\cA^0 = v]}{\Pr[V_\cA^1 = v]}\right)^\alpha=1.$$
    For every view $v$ whose first response is $\bot$, the term $\bigl(\frac{\Pr[V_\cA^0 = v]}{\Pr[V_\cA^1 = v]}\bigr)^\alpha$ equals $(\frac{q}{1-q})^\alpha$ if the second response is $0$, and $(\frac{1-q}{q})^\alpha$ if the second response is $1$. Since $q\in [3/4,1]$, we have $\frac{1-q}{q}\leq \frac{1}{1-q}$. Thus, both $(\frac{q}{1-q})^\alpha$ and $(\frac{1-q}{q})^\alpha$ are upper bounded by $(\frac{1}{1-q})^\alpha$. The probability that the first response in the random variable in $V_\cA^0$ equals $\bot$ is $p$. Thus, for the worst-case adversary $\cA$ we have
    \begin{align*}
        D_\alpha(V_\cA^0\|V_\cA^1)
        &\leq \frac{1}{\alpha-1}\log\left((1-p)\cdot 1+ p\cdot\left(\frac{1}{1-q}\right)^\alpha\right)\\
        &\leq \frac{1}{\alpha-1}\log\left(1+ p\cdot\left(\frac{4}{3}2^{2+\eps'\frac{\alpha-1}{\alpha}}\right)^\alpha\right)\\
        &\leq \frac{1}{\alpha-1}\log\left(1+ p\cdot 6^\alpha \cdot 2^{\eps'(\alpha-1)}\right)\\
        &\leq \frac{1}{\alpha-1}\log\left(1+ \eps\cdot(\alpha-1)\right)\\
        &\leq \eps 
    \end{align*}
    In the last inequality, we use the fact that $\log(1+x)\leq x$ for every $x\geq 0$.
    By symmetry, $D_\alpha(V_\cA^1\| V_\cA^0) \leq \eps$ as well, and therefore $\cM_{\alpha, \eps, \eps'}$ satisfies $(\alpha, \eps)$-RDP w.r.t. $g$.

    \medskip\noindent\underline{The $f^{\eps}_{\infty, \mathrm{RDP}, \alpha}$-concurrent composition is not $(\alpha, \eps')$-DP:}
    For $b \in \zo$, let $V^b$ denote the view of $\cA_{\alpha, \eps, \eps'}$ interacting with 
    $\V[f^{\eps}_{\infty, \mathrm{RDP}, \alpha}] \circstar \I(b) \circstar \extconcomp$. 
    We show that $D_\alpha(V^0 \| V^1) > \eps'$.

    By definition, $\cA_{\alpha, \eps, \eps'}$ eventually receives a $\bot$ response from an instance of $\cM_{\alpha, \eps, \eps'}$. It then sends $(0,1)$ to that instance and receives an answer $a \in \zo$. For every view $v$ where $a = 0$,
    \[
    \left(\frac{\Pr[V^0=v]}{\Pr[V^1=v]}\right)^{\alpha}
    = \left(\frac{q}{1-q}\right)^{\alpha},
    \]
    and for every view $v$ with $a = 1$, 
    \[
    \left(\frac{\Pr[V^0=v]}{\Pr[V^1=v]}\right)^{\alpha}
    = \left(\frac{1-q}{q}\right)^{\alpha}.
    \]
    When sampling $v$ from $V^0$, the event $a = 0$ occurs with probability $q$. Hence,
    \begin{align*}
        D_\alpha(V^0\|V^1)
        &= \frac{1}{\alpha-1}
           \log\left(
              q \cdot \left(\frac{q}{1-q}\right)^{\alpha}
              + (1-q)\cdot \left(\frac{q}{q}\right)^{\alpha}
           \right)\\
        &\ge \frac{1}{\alpha-1}
           \log\left(
              q \left(\frac{q}{1-q}\right)^{\alpha}
           \right)\\
        &\ge \frac{1}{\alpha-1}
           \log\left(
              \frac{3}{4}\cdot
              \left(\frac{3}{4(1-q)}\right)^{\alpha}
           \right)\\
        &= \frac{1}{\alpha-1}
           \log\left(
              \frac{3}{4}\cdot 
              \left(
                 \frac{3}{4}\cdot
                 \frac{1}{\frac{3}{4}\cdot 2^{-2-\eps'\frac{\alpha-1}{\alpha}}}
              \right)^{\alpha}
           \right)\\
        &= \frac{1}{\alpha-1}
           \log\left(
              \frac{3}{4}\cdot 2^{2+\eps'(\alpha-1)}
           \right)\\
        &> \eps'.
    \end{align*}
\end{proof}

As in Section~\ref{subsec:parallel-restricted-ver-func}, to overcome this lower bound, we modify the verification function $f^{\eps_1, \dots, \eps_k}_{\infty, \mathrm{RDP}, \alpha}$ to only create mechanisms $\cM_j'$ whose associated verification function $f_j'$ is first-pair consistent (see Definition~\ref{def:fpc}). Let $f^{\eps_1, \dots, \eps_k}_{\infty, \mathit{FPC}, \mathrm{RDP}, \alpha}$ denote this verification function.

\begin{theorem}\label{thm:parallel-comp-fpc-rdp}
    For $\alpha >1$, $k\in\N$, and $\eps_1, \dots, \eps_k\geq 0$, the $f^{\eps_1, \dots, \eps_k}_{\infty, \mathit{FPC}, \mathrm{RDP}, \alpha}$-concurrent composition of CMs is $(\alpha, \sum_{i=1}^k\eps_i)$-RDP.
\end{theorem}

\begin{proof}
    The proof is identical to the proof of Theorem~\ref{thm:parallel-fixed-param-comp-fpc}, with a few modifications in the construction of $\cP$. Recall that in the proof of Theorem~\ref{thm:parallel-fixed-param-comp-fpc}, receiving the $j$-th pair of identical creation queries $(\cM_j', s_j', \eps_j', \delta_j', f_j')^2$, the IPM $\cP$ temporarily stored the creation query until it received the first query pair for mechanism~$\cM_j$, i.e., a message of the form $\left((q_0,j),(q_1,j)\right)$. Omitting the implementation details, the idea was that $\cP$ instantiated either an IM $\V[f_j']\circstar\I(0)\circstar\cM_j'$ if $q_0=q_1$ on the first query pair or an IPM~$\cT_j$ otherwise, where $\cT_j$ was the IPM corresponding to~$\cM_j'$ from Lemma~\ref{lem:cm-post-rr-lyu}. This lemma constructed $\cT_j$ so that the IMs $\cT_j\circstar \RR_{\eps_j',\delta_j'}(b)$ and $\V[f_j']\circstar\I(b)\circstar\cM_j'$ are equivalent for every $b\in\{0,1\}$. We will call $\cM_j'$ a \emph{type-1} CM in the former case and a \emph{type-2} CM in the latter. Upon receiving a left message of the form $\left((q_0,j),(q_1,j)\right)$, the IPM $\cP$ forwarded $(q_0,q_1)$ either to $\V[f_j']\circstar\I(0)\circstar\cM_j'$ (type~1) or to $\cT_j$ (type~2), and returned the response as a left message.

    The IPM $\cT_j$ might want to interact with its right mechanism, requesting the outcome of $\RR_{\eps_j', \delta_j'}(b)$. To supply answers, $\cP$ initially interacted with $\comp(\RR_{\eps_1,\delta_1},\ldots,\RR_{\eps_k,\delta_k})$, stored its response $(r_1,\dots,r_k)$, and later used these $r_i$'s as right responses to the IPMs $\cT_j$. This reduced the $f_{\infty,\mathit{FPC}}^{\eps_1,\delta_1,\dots,\eps_k,\delta_k}$-concurrent composition of CMs to the composition of the noninteractive mechanisms $\RR_{\eps_1,\delta_1},\dots,\RR_{\eps_k,\delta_k}$.

    In this proof we make the following changes:
    \begin{enumerate}
        \item Creation queries now have the form $(\cM_j', s_j', \alpha_j', \eps_j', f_j')$ instead of $(\cM_j', s_j', \eps_j', \delta_j', f_j')$.
        \item The right mechanism of $\cP$ is  
        \[
            \V[f^{\eps_1,\dots,\eps_k}_{\mathrm{RDP},\alpha}]\circstar\I(b)\circstar\extconcomp,
        \]
        instead of $\comp(\RR_{\eps_1,\delta_1},\dots,\RR_{\eps_k,\delta_k})$.
        \item Rather than instantiating $\cT_j$ via Lemma~\ref{lem:cm-post-rr-lyu} when $q_0=q_1$ on the first query pair, $\cP$ now forwards the pair of identical creation queries $(\cM_j', s_j', \alpha_j', \eps_j', f_j')^2$ to its right mechanism and stores the index $i_j$ indicating that $\cM_j'$ corresponds to the $i_j$-th mechanism created by the right mechanism of $\cP$.
        \item Upon receiving a left message of the form $\left((q_0,j),(q_1,j)\right)$, if $\cM_j'$ is a type-2~CM, then $\cP$ forwards the message $\left((q_0,i_j),(q_1,i_j)\right)$ to its right mechanism and returns that response to the left. If $\cM_j'$ is a type-1~CM, then as before $\cP$ forwards $(q_0,q_1)$ to $\V[f_j']\circstar\I(0)\circstar\cM_j'$ and returns the resulting answer.
    \end{enumerate}

    The exact same argument as in the proof of Theorem~\ref{thm:parallel-fixed-param-comp-fpc} shows that the $f^{\eps_1,\ldots,\eps_k}_{\infty,\mathit{FPC},\mathrm{RDP},\alpha}$-concurrent composition of CMs satisfies the privacy guarantees of the $f^{\eps_1,\ldots,\eps_k}_{\mathrm{RDP},\alpha}$-concurrent composition of CMs. By Theorem~\ref{cor:comp-fixed-param-rdp}, the $f^{\eps_1,\ldots,\eps_k}_{\mathrm{RDP},\alpha}$-concurrent composition of CMs is $(\alpha, \sum_{i=1}^k \eps_i)$-RDP. Therefore, the $f^{\eps_1,\ldots,\eps_k}_{\infty,\mathit{FPC},\mathrm{RDP},\alpha}$-concurrent composition of CMs is also $(\alpha, \sum_{i=1}^k \eps_i)$-RDP, completing the proof.
\end{proof}

As discussed in Section~\ref{subsubsec:parallel-comp-im}, IMs satisfy the first-pair consistent condition. Thus, by Theorem~\ref{thm:parallel-comp-fpc-rdp}, we have: 

\begin{corollary}\label{cor:parallel-fixed-param-comp-im-rdp}
    For $\alpha >1$, $k\in\N$, and $\eps_1, \dots, \eps_k\geq 0$, the $\alpha$-RDP concurrent $k$-sparse parallel composition of \emph{IMs} with privacy parameters $\eps_1, \dots, \eps_k$ is $(\alpha, \sum_{i=1}^k\eps_i)$-RDP.
\end{corollary}

Finally, to define concurrent filter composition of RDP CMs, we modify Definition~\ref{def:filter} so that a filter $\filt$ maps a finite sequence of nonnegative reals in $[0,\infty)^*$ (instead of a sequence in $\left([0,\infty)\times [0,1]\right)^*$) to an element in $\{\top, \bot\}$. Fix $\alpha > 1$. As before, we obtain an RDP version of the verification function $f^\filt$ in Definition~\ref{def:ver-func-filter} by replacing creation queries to use RDP parameters, ensuring that each created mechanism with parameters $(\alpha_j, \eps_j)$ is $(\alpha_j, \eps_j)$-RDP, and checking whether all $\alpha_j = \alpha$. We denote this verification function by $f^{\filt}_{\mathrm{RDP}, \alpha}$.

For parameters $\alpha > 1$ and $\eps \ge 0$, we define the filter $\filt_{\mathrm{RDP}, \alpha, \eps}$ as follows. For every $t \in \N$ and every $\eps_1, \dots, \eps_t \in [0,\infty)^t$,
\[
    \filt_{\mathrm{RDP}, \alpha, \eps}(\eps_1, \dots, \eps_t)
    =
    \begin{cases}
        \top, & \text{if } \sum_{i=1}^t \eps_i \le \eps,\\[4pt]
        \bot, & \text{otherwise.}
    \end{cases}
\]
We refer to the $f^{\filt_{\mathrm{RDP}, \alpha, \eps}}$-concurrent composition of CMs as the \emph{concurrent $\filt_{\mathrm{RDP}, \alpha, \eps}$-filter composition} of CMs.

Haney et al.~\cite{haney2023concurrent} show that the concurrent $\filt_{\mathrm{RDP}, \alpha, \eps}$-filter composition of IMs is $(\alpha, \eps)$-RDP. Following the same argument as for the concurrent composition of $k$ fixed RDP CMs, this result extends to CMs by replacing each IM with the universal IM with the same privacy parameters:

\begin{corollary}\label{cor:filter-comp-rdp}
    For every $\alpha > 1$ and $\eps \ge 0$, the concurrent $\filt_{\mathrm{RDP}, \alpha, \eps}$-filter composition of CMs is $(\alpha, \eps)$-RDP against adaptive adversaries. 
\end{corollary}

\section{$f$-DP}\label{sec:f-dp}
In this section, we show concurrent composition theorems for $f$-differential privacy ($f$-DP), which generalizes $(\eps, \delta)$-DP. $f$-DP takes a statistical point of view and uses trade-off functions to measure how indistinguishable the output distributions of a mechanism are for two neighboring datasets.

\begin{remark}
    To avoid ambiguity, throughout this section we use the symbol $g$ to denote verification functions and the symbol $f$ to denote privacy functions.
\end{remark}

Consider a NIM $\cM:\calX\to\calY$ and two neighboring datasets $x_0,x_1\in \calX$. Observing the outcome of $\cM$, an analyst/adversary $\cA$ must decide whether the input dataset was $x_0$ (null hypothesis $H_0$) or $x_1$ (alternative hypothesis $H_1$). A {\em rejection rule} $\phi:\calY\to \zo$ is a function that takes the answer of $\cM$ as input and decides whether to reject the null hypothesis or not. Thus, for $b\in\zo$, $\cA$ chooses dataset $x_b$ when $\phi$ returns $b$. To quantify the indistinguishability between $\cM(x_0)$ and $\cM(x_1)$, $f$-DP evaluates the error made by the analyst in selecting the correct hypothesis. Specifically, given a rejection rule $\phi$, the \emph{type I error} $\alpha_\phi=\E[\phi(\cM(x_0))]$ is the probability of rejecting the null hypothesis $H_0$ while $H_0$ is true (i.e., guessing $x_1$ when the dataset is $x_0$). Similarly, the \emph{type II error} $\beta_\phi=1-\E[\phi(\cM(x_1))]$ is the probability of not rejecting $H_0$ while $H_1$ is true (i.e., guessing $x_0$ when the dataset is $x_1$). Although an analyst could achieve zero type II error by always rejecting the null hypothesis, when an upper bound on the type I error is set, the minimum achievable type II error reflects the difficulty of distinguishing between $\cM(x_0)$ and $\cM(x_1)$. The following function measures the optimal trade-off between the type I and type II errors.

\begin{definition}[Trade-off Function]\label{def:trade-off-func}
    Let $Y_0$ and $Y_1$ be two random variables on the same domain $\calY$. The trade-off function between $Y_0$ and $Y_1$, denoted $T(Y_0,Y_1):[0,1]\to[0,1]$, is defined as:
    $$T(Y_0,Y_1)(\alpha)=\inf_\phi\{\beta_\phi: \alpha_\phi\le \alpha\},$$
    where the infimum is taken over all possible rejection rules. Given a function $f:[0,1]\to[0,1]$, we write $T(Y_0,Y_1)\geq f$ if $T(Y_0,Y_1)(\alpha)\geq f(\alpha)$ for every $\alpha\in [0,1]$.
\end{definition}

The (meta-)function $T$ in Definition \ref{def:trade-off-func} is indeed a function that maps a pair of random variables to a trade-off function between these random variables. 

\begin{definition}[$f$-DP for NIMs~\cite{dong2022gaussian}]\label{def:f-dp-nim}
    Let $f:[0,1]\to[0,1]$ be a function. A NIM $\cM:\calX\to \calY$ is $f$-differentially private (or $f$-DP) w.r.t. a neighbor relation $\sim$ if for every two neighboring datasets $x_0,x_1\in \calX$,
    $$T(\cM(x_0), \cM(x_1))\geq f,$$
    where $T(\cM(x_0), \cM(x_1))$ denotes the trade-off function defined in Definition~\ref{def:trade-off-func}. 
\end{definition}

For $\eps\geq 0$ and $0\leq \delta\leq 1$, define the function $f_{\eps, \delta}:[0,1]\to[0,1]$ as $f_{\eps,\delta}(\alpha)=\max\{0, 1-\delta-\exp(\eps)\alpha, \exp(-\eps)(1-\delta-\alpha)\}$ for every $\alpha\in [0,1]$. Dong, Roth, and Su~\cite{dong2022gaussian} show that $f_{\eps,\delta}$-DP is equivalent to $(\eps,\delta)$-DP. They also show that if a NIM $\cN$ satisfies $f$-DP w.r.t. a neighbor relation $\sim$, where $f$ is the trade-off function of two random variables $Z_0$ and $Z_1$, then for every pair of neighboring datasets $x_0 \sim x_1$, there exists a randomized post-processing function $\textsc{Post}$ such that, for each $b\in\{0,1\}$, the distributions of $\cN(x_b)$ and $\textsc{Post}(Z_b)$ are identical.

Vadhan and Zhang~\cite{vadhan2022concurrent} prove an analogue of Lemma~\ref{lem:cm-post-rr-lyu} for $f$-DP under an additional \emph{finite communication} assumption, defined below.

\begin{definition}[Finite Communication for IMs]\label{def:finite-communication}
    An IM $\cM$ is said to have \emph{finite communication} if there exists a constant $c$ such that both its query space $Q_\cM$ and answer space $A_\cM$ have cardinality at most $c$, and $\cM$ terminates after receiving more than $c$ queries.
\end{definition}

\begin{lemma}[\cite{vadhan2022concurrent}]\label{lem:post-im-f-dp}
    Let $f$ be a privacy function, and let $\cM$ be an IM that is $f$-DP w.r.t. a neighbor relation $\sim$ and has finite communication. Then, for every pair of neighboring datasets $x_0\sim x_1$, there exists an IPM $\cP$ with a single initial state and a NIM $\cN$ taking a bit as input such that the IMs $\cP\circstar\cN(b)$ and $\cM(x_b)$ are equivalent for every $b\in\zo$, and the mechanism $\cN$ is $f$-DP w.r.t. the neighbor relation $\sim_{01}$ defined by $0 \sim_{01} 1$.
\end{lemma}

We now extend the definition of $f$-DP from IMs to CMs:

\begin{definition}[$f$-DP for CMs]\label{def:f-dp-cm}
    Let $f:[0,1]\to[0,1]$ be a function. A continual mechanism $\cM$ is said to be \emph{$f$-differentially private} (or \emph{$f$-DP}) w.r.t. a verification function $g$ against adaptive adversaries if, for every adversary $\cA$, the random variables $V^0=\View(\cA, \V[f]\circstar\I(0)\circstar\cM)$ and $V^1=\View(\cA, \V[f]\circstar\I(1)\circstar\cM)$ satisfy
    $$T(V^0, V^1)\geq f,$$
    where $T(V^0, V^1)$ denotes the trade-off function defined in Definition~\ref{def:trade-off-func}.
\end{definition}

In the DP setting, creation queries were of the form $(\cM, s, \eps, \delta, f)$, where $\cM_i$ is a CM with the initial state $s$ that is $(\eps, \delta)$-DP w.r.t. the verification function $f$. In the $f$-DP setting, creation queries are of the form $(\cM, s, f, g)$, where $\cM$ is $f$-DP w.r.t. the verification function $g$. We emphasize that the symbol $f$ in this section denotes a \emph{privacy function}, whereas $f$ in the previous DP sections denoted a \emph{verification function}.

Recall that the concurrent composition of $k$ CMs with fixed privacy parameters $\{(\eps_i, \delta_i)\}_{i=1}^k$ was defined via a verification function $f^{\eps_1, \delta_1, \dots, \eps_k, \delta_k}$ in Definition~\ref{def:ver-func-fixed-param-comp-cm}. For the $f$-DP setting, we modify this verification function to accommodate creation queries of the form $(\cM_j', s_j', f_j', g_j')$, rather than $(\cM_j', s_j', \eps_j', \delta_j', f_j')$.  
Specifically, the new verification function must ensure that each created mechanism $\cM_j'$ is $f_j'$-DP w.r.t. $g_j'$ and has \emph{finite communication}. Given privacy functions $f_1, \dots, f_k$, the verification function must verify that the multiset of the privacy functions $f_j'$ is a sub-multiset of the prefix functions $\{f_i\}_{i=1}^k$. With these modifications to Definition~\ref{def:ver-func-fixed-param-comp-cm}, 
we denote the resulting verification function by $g^{f_1, \dots, f_k}_{\mathrm{FDP}}$ and refer to the $g^{f_1, \dots, f_k}_{\mathrm{FDP}}$-concurrent composition of CMs as the \emph{concurrent composition of CMs with privacy functions $f_1, \dots, f_k$}.

An argument identical to that of Theorem~\ref{thm:comp-fixed-param-cm}, with Lemma~\ref{lem:post-im-f-dp} being used instead of Lemma~\ref{lem:post-im} implies:

\begin{corollary}\label{cor:comp-fixed-param-f-dp}
    For $k\in\N$, let $f, f_1, \dots, f_k: [0,1] \to [0,1]$ be functions. Suppose that for every $k$ NIMs $\cN_1, \dots, \cN_k$, where each $\cN_i$ is $f_i$-DP w.r.t. a neighbor relation $\sim_i$, the composition of $\cN_1, \dots, \cN_k$ is $f$-DP. Then, the concurrent composition of CMs with privacy functions $f_1, \dots, f_k$ is also $f$-DP.
\end{corollary}
\begin{proof}
    The proof is exactly the same as the proof of Theorem~\ref{thm:comp-fixed-param-cm}, with Lemma~\ref{lem:post-im-f-dp} applied instead of Lemma~\ref{lem:post-im}. For completeness, we outline how the additional finite communication requirement of Lemma~\ref{lem:post-im-f-dp} is satisfied.

    Recall that in Theorem~\ref{thm:comp-fixed-param-cm}, the $i$-th created mechanism, $\cM_i$, was $(\eps_i, \delta_i)$-DP w.r.t. a verification function $f_i$, which we denote here by $g_i$ to avoid confusion. The proof of that theorem showed that for each $b \in \{0,1\}$, the IMs $\V[f^{\eps_1, \delta_1, \dots, \eps_k, \delta_k}]\circstar\I(b)\circstar\extconcomp$ and $\V[f^{\eps_1, \delta_1, \dots, \eps_k, \delta_k}]\circstar\iecc(b)$ are equivalent, where $\iecc(b)$ was an IM internally executing the mechanism $\V[g_i]\circstar\I(b)\circstar\cM_i$ for each created mechanism $\cM_i$. The argument then replaced each instance $\V[g_i] \circstar \I(b) \circstar \cM_i$ by $\cT_i \circstar \RR_{\eps_i, \delta_i}(b)$, where $\cT_i$ is the IPM given by Lemma~\ref{lem:post-im}.

    In the $f$-DP setting, by the definition of $g^{f_1, \dots, f_k}_{\mathrm{FDP}}$, each created mechanism $\cM_i$ is required to have \emph{finite communication}. Hence, its query space $Q_{\cM_i}$ and answer space $A_{\cM_i}$ are both finite, and $\cM_i$ halts before a certain number of interactions. Thus, the mechanism $\V[g_i]\circstar\I(b)\circstar\cM_i$ also has a finite answer space and terminates before the same number of steps. Although the verifier $\V[g_i]$ may, in principle, have an infinite query space, the outer verifier $\V[g^{f_1, \dots, f_k}_{\mathrm{FDP}}]$ in $\V[g^{f_1, \dots, f_k}_{\mathrm{FDP}}]\circstar\iecc$ ensures that all messages sent to $\cM_i$ are $g_i$-valid. Consequently, the set of admissible query messages to $\V[g_i]\circstar\I(b)\circstar\cM_i$ is restricted to $Q_{\cM_i}\times Q_{\cM_i}$, which is finite. Therefore, we can assume that this mechanism has finite communication, satisfying the condition of Lemma~\ref{lem:post-im-f-dp}, and the rest of the proof follows as the proof of Theorem~\ref{thm:comp-fixed-param-cm}.
\end{proof}

Similar to the concurrent composition of $k$ CMs with fixed privacy parameters, we define, for given privacy functions $f_1, \dots, f_k$, the verification function $g^{f_1, \dots, f_k}_{\mathrm{FDP}, \infty}$ as the $f$-DP analogue of $f^{\eps_1, \delta_1, \dots, \eps_k, \delta_k}_\infty$ from Definition~\ref{def:ver-func-par-fixed-param-comp-cm}. Since $(\eps, \delta)$-DP is a special case of $f$-DP, Theorem~\ref{thm:counter-example} immediately yields a counterexample: there exists a privacy function $f_1$ and an adversary $\cA$ such that the views of $\cA$ when interacting with $\V[g^{f_1}_{\mathrm{FDP}, \infty}] \circstar \I(b) \circstar \extconcomp$ have disjoint supports for $b = 0$ and $b = 1$. Equivalently, the $g^{f_1}_{\mathrm{FDP}, \infty}$-concurrent composition of CMs is not private. 

To overcome the lower bound, as in Section~\ref{subsec:parallel-restricted-ver-func}, we can define the verification function $g^{f_1, \dots, f_k}_{\mathrm{FDP}, \infty, \mathit{FPC}}$ identical to $g^{f_1, \dots, f_k}_{\mathrm{FDP}, \infty}$ with an additional condition that checks whether the created mechanisms are first-pair consistent. A proof identical to the proof of Theorem~\ref{thm:parallel-fixed-param-comp-fpc} shows:

\begin{corollary}\label{cor:parallel-comp-fpc-f-dp}
    For $k\in\N$, let $f, f_1, \dots, f_k: [0,1] \to [0,1]$ be functions. Suppose that for every $k$ NIMs $\cN_1, \dots, \cN_k$, where each $\cN_i$ is $f_i$-DP w.r.t. a neighbor relation $\sim_i$, the composition of $\cN_1, \dots, \cN_k$ is $f$-DP. Then, the $g^{f_1, \dots, f_k}_{\mathrm{FDP}, \infty, \mathit{FPC}}$-concurrent composition of CMs is also $f$-DP.
\end{corollary}

We know that IMs satisfy the first-pair consistent condition. Define \emph{the concurrent $k$-sparse parallel composition of IMs with privacy functions} $f_1, \dots, f_k$ as in Section~\ref{subsubsec:parallel-comp-im}. We have:

\begin{corollary}\label{cor:parallel-comp-f-dp-im}
    For $k\in\N$, let $f, f_1, \dots, f_k: [0,1] \to [0,1]$ be functions. Suppose that for every $k$ NIMs $\cN_1, \dots, \cN_k$, where each $\cN_i$ is $f_i$-DP w.r.t. a neighbor relation $\sim_i$, the composition of $\cN_1, \dots, \cN_k$ is $f$-DP. Then, the concurrent $k$-sparse parallel composition of IMs with privacy functions $f_1, \dots, f_k$ is also $f$-DP.
\end{corollary}

Finally, we extend the definition of \emph{concurrent filter composition of CMs} to the $f$-DP setting. To this end, we modify Definition~\ref{def:filter} so that a filter $\filt$ maps a finite sequence of functions $f_1, \dots, f_t : [0,1] \to [0,1]$ to an element of $\{\top, \bot\}$. As before, we construct the $f$-DP version of the verification function $f^\filt$ in Definition~\ref{def:ver-func-filter} by replacing the DP parameters with privacy functions and ensuring that every created CM has finite communication. We denote this verification function by $g^\filt_{\mathrm{FDP}}$.

\begin{corollary}\label{cor:filter-comp-f-dp}
    Let $\filt$ be a filter and $f:[0,1]\to[0,1]$ a function. If the $\filt$-filter composition of NIMs is $f$-DP, then the $g^\filt_{\mathrm{FDP}}$-concurrent composition of CMs is also $f$-DP. Equivalently, the concurrent $\filt$-filter composition of CMs is $f$-DP.
\end{corollary}
\section*{Acknowledgments}
\noindent
\textsuperscript{1}Salil Vadhan was supported by NSF grant BCS-2218803, a grant from the Sloan Foundation, and a Simons Investigator Award. Work began while a Visiting Researcher at the Bocconi University Department of Computing Sciences, supported by Luca Trevisan’s ERC Project GA-834861.
\vspace{0.75em}

\noindent
\textsuperscript{2}Monika Henzinger and Roodabeh Safavi were supported by the European Research Council (ERC) under the European Union's Horizon 2020 research and innovation programme (Grant agreement No.\ 101019564), and the Austrian Science Fund (FWF) under grants DOI 10.55776/Z422, DOI 10.55776/I5982, and DOI 10.55776/P33775. For open access purposes, the author has applied a CC BY public copyright license to any author-accepted manuscript version arising from this submission.
\begin{wrapfigure}[4]{r}{4cm}
  \vspace{-5pt}
  \centering
  \includegraphics[width=4cm]{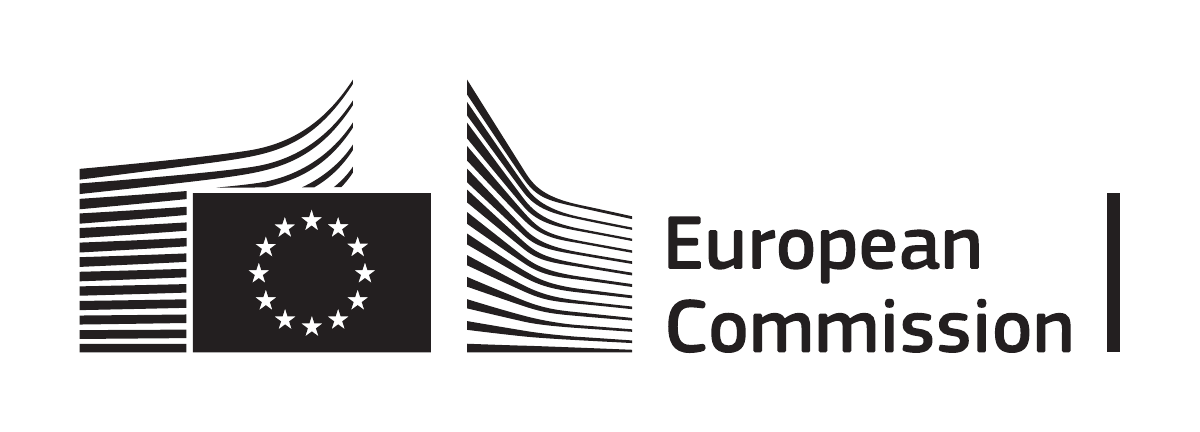}
  \vspace{-10pt}
\end{wrapfigure}
Views and opinions expressed are however those of the author(s) only and do not necessarily reflect those of the European Union or the European Research Council Executive Agency. Neither the European Union nor the granting authority can be held responsible for them.

\bibliographystyle{plain}
\bibliography{ref}
\appendix
\section{Figures}\label{app:figures}
\begin{figure}[ht] 
    \centering   
    \scalebox{0.52}{
    \begin{tikzpicture}[
    node distance=2cm and 2cm,
    db/.style={cylinder, shape border rotate=90, draw, aspect=0.25, minimum width=1cm, minimum height=0.8cm, font=\large},
    rect/.style={rectangle, draw, minimum width=2.5cm, minimum height=5.5cm, align=center, font=\large},
    arrow/.style={-{Stealth}, thick}
]


\node[rect] (im) {\LARGE IM};
\node[db, anchor=south east, xshift=-0.3cm, yshift=0.1cm] at (im.south east) (db2) {\LARGE \textcolor{blue}{$D$}};
\node[rect, left=3cm of im] (adv1) {\LARGE $\mathcal{A}$};

\draw[arrow] ([yshift=2.45cm]adv1.east) -- node[above,yshift=-3.5pt] { $q_1$} ([yshift=2.45cm]im.west);
\draw[arrow] ([yshift=1.95cm]im.west) -- node[above,yshift=-3.5pt] { $\tilde{q}_1(D)$} ([yshift=1.95cm]adv1.east);
\draw[arrow] ([yshift=1.45cm]adv1.east) -- node[above,yshift=-3.5pt] { $q_2$} ([yshift=1.45cm]im.west);
\draw[arrow]  ([yshift=0.95cm]im.west) -- node[above,yshift=-3.5pt] { $\tilde{q}_2(D)$} ([yshift=0.95cm]adv1.east);
\draw[arrow] ([yshift=0.45cm]adv1.east) -- node[above,yshift=-3.5pt] { $q_3$} ([yshift=0.45cm]im.west);
\draw[arrow] ([yshift=-0.05cm]im.west) -- node[above,yshift=-3.5pt] {$\tilde{q}_3(D)$} ([yshift=-0.05cm]adv1.east);
\draw[arrow] ([yshift=-0.55cm]adv1.east) -- node[above,yshift=-3.5pt] {$q_4$} ([yshift=-0.55cm]im.west);
\draw[arrow] ([yshift=-1.05cm]im.west) -- node[above,yshift=-3.5pt] {$\tilde{q}_4(D)$} ([yshift=-1.05cm]adv1.east);
\draw[arrow] ([yshift=-1.55cm]adv1.east) -- node[above,yshift=-3.5pt] {$q_5$} ([yshift=-1.55cm]im.west);
\draw[arrow] ([yshift=-2.05cm]im.west) -- node[above,yshift=-3.5pt] {$\tilde{q}_5(D)$} ([yshift=-2.05cm]adv1.east);
\node (dots1) at ([xshift=1.5cm,yshift=-2.4cm]adv1.east) {$\vdots$};

\node[rect, right=1cm of im] (adv3) {\LARGE $\mathcal{A}$};
\node[rect, right=3cm of adv3] (co) {\LARGE CO};
\node[db, anchor=south east, xshift=-0.3cm, yshift=0.1cm] at (co.south east) (db4) {\LARGE \textcolor{blue}{$D_t$}};

\draw[arrow] ([yshift=2.45cm]adv3.east) -- node[above,yshift=-3.5pt] { $u_1$} ([yshift=2.45cm]co.west);
\draw[arrow] ([yshift=1.95cm]co.west) -- node[above,yshift=-3.5pt] { $\tilde{q}(D_1)$} ([yshift=1.95cm]adv3.east);
\draw[arrow] ([yshift=1.45cm]adv3.east) -- node[above,yshift=-3.5pt] { $u_2$} ([yshift=1.45cm]co.west);
\draw[arrow]  ([yshift=0.95cm]co.west) -- node[above,yshift=-3.5pt] { $\tilde{q}(D_2)$} ([yshift=0.95cm]adv3.east);
\draw[arrow] ([yshift=0.45cm]adv3.east) -- node[above,yshift=-3.5pt] { $u_3$} ([yshift=0.45cm]co.west);
\draw[arrow] ([yshift=-0.05cm]co.west) -- node[above,yshift=-3.5pt] {$\tilde{q}(D_3)$} ([yshift=-0.05cm]adv3.east);
\draw[arrow] ([yshift=-0.55cm]adv3.east) -- node[above,yshift=-3.5pt] {$u_4$} ([yshift=-0.55cm]co.west);
\draw[arrow] ([yshift=-1.05cm]co.west) -- node[above,yshift=-3.5pt] {$\tilde{q}(D_4)$} ([yshift=-1.05cm]adv3.east);
\draw[arrow] ([yshift=-1.55cm]adv3.east) -- node[above,yshift=-3.5pt] {$u_5$} ([yshift=-1.55cm]co.west);
\draw[arrow] ([yshift=-2.05cm]co.west) -- node[above,yshift=-3.5pt] {$\tilde{q}(D_5)$} ([yshift=-2.05cm]adv3.east);
\node (dots1) at ([xshift=1.5cm,yshift=-2.4cm]adv3.east) {$\vdots$};

\node[rect, right=1cm of co] (adv2) {\LARGE $\mathcal{A}$};
\node[rect, right=3cm of adv2] (cm) {\LARGE CM};
\node[db, anchor=south east, xshift=-0.3cm, yshift=0.1cm] at (cm.south east) (db3) {\LARGE \textcolor{blue}{$D_t$}};

\draw[arrow] ([yshift=2.45cm]adv2.east) -- node[above,yshift=-3.5pt] { $m_1=u$} ([yshift=2.45cm]cm.west);
\draw[arrow] ([yshift=1.95cm]cm.west) -- node[above,yshift=-3.5pt] { $m_1'=\text{ack}$} ([yshift=1.95cm]adv2.east);
\draw[arrow] ([yshift=1.45cm]adv2.east) -- node[above,yshift=-3.5pt] { $m_2=u'$} ([yshift=1.45cm]cm.west);
\draw[arrow]  ([yshift=0.95cm]cm.west) -- node[above,yshift=-3.5pt] { $m_2'=\text{ack}$} ([yshift=0.95cm]adv2.east);
\draw[arrow] ([yshift=0.45cm]adv2.east) -- node[above,yshift=-3.5pt] { $m_3=q$} ([yshift=0.45cm]cm.west);
\draw[arrow] ([yshift=-0.05cm]cm.west) -- node[above,yshift=-3.5pt] {$m_3'=\tilde{q}(D_3)$} ([yshift=-0.05cm]adv2.east);
\draw[arrow] ([yshift=-0.55cm]adv2.east) -- node[above,yshift=-3.5pt] {$m_4=u''$} ([yshift=-0.55cm]cm.west);
\draw[arrow] ([yshift=-1.05cm]cm.west) -- node[above,yshift=-3.5pt] {$m_4'=\text{ack}$} ([yshift=-1.05cm]adv2.east);
\draw[arrow] ([yshift=-1.55cm]adv2.east) -- node[above,yshift=-3.5pt] {$m_5=q^*$} ([yshift=-1.55cm]cm.west);
\draw[arrow] ([yshift=-2.05cm]cm.west) -- node[above,yshift=-3.5pt] {$m_5'=\tilde{q}^*(D_5)$} ([yshift=-2.05cm]adv2.east);
\node (dots1) at ([xshift=1.5cm,yshift=-2.4cm]adv2.east) {$\vdots$};

\end{tikzpicture}
    }
    \caption{ Comparison between interactive mechanism (IMs), continual observation (CO), and continual mechanisms (CMs)} \label{fig:im-co-cm}
    \Description{Comparison between interactive mechanism (IMs), continual observation (CO), and continual mechanisms (CMs)}
\end{figure}



\section{Variants of Concurrent Composition}\label{app:variants-concomp}
Here, we elaborate on the different types of concurrent composition that we study in this paper:

    (1) \textbf{Concurrent Composition of Continual Mechanisms:} In the \emph{concurrent composition of fixed continual mechanism}, $k$ continual mechanisms $\cM_1, \dots, \cM_k$ are fixed at the beginning, and an adversary adaptively issues pairs of messages for them such that for each $\cM_i$, the sequences formed by the first messages of all message pairs and the sequence formed by the second messages of all message pairs  are neighboring. A secret bit $b \in \zo$ determines whether each mechanism receives the first or second sequence from each pair. The adversary's goal is to guess $b$ using the answers of all $k$ mechanisms. Theorem~\ref{thm:comp-fixed-cm}, which is the formal statement of Theorem~\ref{thm:extconcomp-intro} above, analyzes the privacy of this composition.

    A natural extension is the \emph{concurrent composition of continual mechanism with fixed parameters}, where instead of fixing the mechanisms themselves, $k$ privacy parameter pairs $(\eps_1, \delta_1), \dots, (\eps_k, \delta_k)$ are fixed in advance. The adversary then adaptively selects mechanisms $\cM_1, \dots, \cM_k$ over time, ensuring that, for each $1\leq i \leq k$, the mechanism $\cM_i$ is $(\eps_i, \delta_i)$-DP. The privacy guarantee for this variant is the same as Theorem~\ref{thm:comp-fixed-cm} and is stated in Theorem~\ref{thm:comp-fixed-param-cm}.

    (2) \textbf{Concurrent $k$-Sparse Parallel Composition of Continual Mechanisms:} This composition generalizes the concurrent composition of continual mechanisms with $k$ fixed parameters by permitting the creation of an \textit{arbitrary} number of continual mechanisms. 
    As before, $k$ privacy parameters $(\eps_1, \delta_1), \dots, (\eps_k, \delta_k)$ are fixed at the beginning. The adversary adaptively creates a potentially unlimited number of \emph{continual} mechanisms and issues pairs of messages to each mechanism such that for all but at most $k$ of these mechanisms, each message pair consists of identical messages. Let $\cM_1, \dots, \cM_k$ denote the exceptional mechanisms, which may receive differing message sequences. For each $1\leq i \leq k$, the mechanism $\cM_i$ must be $(\eps_i, \delta_i)$-DP, and the sequences of the first and second messages for $\cM_i$ must be neighboring. We note the choice of mechanisms $\cM_1, \dots, \cM_k$ is made adaptively over time. Specifically, the adversary decides whether to issue a pair of differing messages for a mechanism based on the previous answers of all mechanisms, including the outputs of that mechanism.
    
    As in previous compositions, a secret bit $b$ determines whether each mechanism receives the first or second message from each pair, and the adversary seeks to guess $b$ based on the outputs. 
    Let $\cM_j'$ denote the $j$-th created mechanism, and let $(\eps_j', \delta_j')$ be its privacy parameter. Theorem~\ref{thm:parallel-comp-approx}, informally stated in Theorem~\ref{thm:parallelcomp-approxDP-intro}, analyzes the privacy of the concurrent $k$-sparse parallel composition of continual mechanisms when $\sum\delta_j'$ is upper bounded by a predetermined value.
    Corollary~\ref{cor:parallel-comp-pure}, informally stated in Corollary~\ref{cor:parallelcomp-pureDP-intro}, sets this upper bound to $0$ and provides privacy guarantees for the concurrent $k$-sparse parallel composition of purely differentially private continual mechanisms.

    (3) \textbf{Concurrent $k$-Sparse Parallel Composition of Interactive Mechanisms:} In this composition, the adversary creates an unbounded number of \emph{interactive} mechanisms over time, instead of continual ones. As before, $k$ privacy parameter pairs $(\eps_1, \delta_1), \dots, (\eps_k, \delta_k)$ are fixed in advance. Each time the adversary creates an interactive mechanism, they also select a pair of neighboring datasets for it. Instead of message pairs, the adversary adaptively sends (single) queries to each mechanism. A secret bit $b$ determines whether the mechanism is run on the first or second dataset of its respective pair. The adversary then tries to infer $b$ from the responses of all mechanisms. Theorem~\ref{thm:concurrent-parall-comp-IM-intro} analyzes the privacy of this composition when all $\cM_i$ are $(\eps_0, \delta_0)$-DP. The more general case for mechanisms with possibly different $(\eps_i, \delta_i)$ parameters is stated in Theorem~\ref{thm:parallel-fixed-param-comp-fpc}.

\end{document}